\newcommand{\blind}{0}
\def\spacingset#1{\renewcommand{\baselinestretch}%
{#1}\small\normalsize} \spacingset{1}
\newcommand{\M}{\mathcal{M}}
\theoremstyle{plain}
\newtheorem{thm}{Theorem}[section]
\newtheorem{prop}{Proposition}[section]
\newtheorem{definition}[thm]{Definition}
\theoremstyle{remark}
\newcommand{\by}{\boldsymbol{y}}
\newcommand{\bu}{\boldsymbol{u}}
\newcommand{\bb}{\boldsymbol{\beta}}
\newcommand{\bdt}{\boldsymbol{\theta}}
\newcommand{\norm}[1]{\lVert \, #1 \, \rVert}
\newcommand{\commentt}[1]{}
\begin{document}

 \date{}
 

\if0\blind
{
\title{\LARGE\bf Fully Bayesian  Spectral Clustering and Benchmarking with  Uncertainty Quantification  for Small Area Estimation}
  \author{Jairo F\'uquene-Pati\~no\\ 
Department of Statistics,   University of California, Davis}
  \maketitle
} \fi

\if1\blind
{
  \bigskip
  \bigskip
  \bigskip
  \begin{center}
    {\LARGE\bf Fully Bayesian  Spectral Clustering and Benchmarking with  Uncertainty Quantification  for Small Area Estimation}
\end{center}
  \medskip
} \fi

\bigskip

\begin{abstract}
 In this work, inspired by machine learning techniques, we propose a new Bayesian model for Small Area Estimation (SAE), the Fay-Herriot model with Spectral Clustering (FH-SC). Unlike traditional approaches, clustering in FH-SC is based on spectral clustering algorithms that utilize external covariates, rather than geographical or administrative criteria. A major advantage of the FH-SC model is its flexibility in integrating existing SAE approaches, with or without clustering random effects. To enable benchmarking, we leverage the theoretical framework of posterior projections for constrained Bayesian inference and derive closed form expressions for the new Rao-Blackwell (RB) estimators of the posterior mean under the FH-SC model. Additionally, we introduce a novel measure of uncertainty for the benchmarked estimator, the Conditional Posterior Mean Square Error (CPMSE), which is generalizable to other Bayesian SAE estimators. We conduct model-based and data-based simulation studies to evaluate the frequentist properties of the CPMSE. The proposed methodology is motivated by a real case study involving the estimation of the proportion of households with internet access in the municipalities of Colombia. Finally, we also illustrate the advantages of FH-SC over existing Bayesian and frequentist approaches through our case study.

\end{abstract}

\noindent%
{\it Keywords:}  Small Area Estimation (SAE), Fay-Herriot model with Spectral Clustering (FH-SC), Rao-Blackwell (RB) Benchmarked Estimator, Conditional Posterior Mean Square Error (CPMSE), Benchmarking, Posterior Projections, Proportion of Households with Internet Access (PHIA).


\newpage
\spacingset{1.8} 



\section{Introduction}
\label{sec:intro}

As part of the 2030 agenda for Sustainable Development of the United Nations (UN) \citep{UN2015},  NSOs (National Statistical Offices) in low- and middle-income countries  have advocated for the implementation of  Small Area Estimation (SAE) models  to estimate key official indicators when sample sizes in surveys at subnational (e.g., county, municipality, department, state) or subpopulation (gender, age, socio-economic class) levels are small. Reviews of SAE techniques are discussed  in  \cite{ghosh2020small},  \cite{pfeffermann2002small}, and more recently  in  \cite{molina2023historical}. To obtain accurate estimations of key indicators, practitioners in NSOs use Bayesian and frequentist models based on the Fay-Herriot (FH) model \citep{fay_1979}, with covariates provided from different sources (e.g., administrative sources, population census, surveys). The inclusion of additional information in SAE models using spatial dependence of small areas based on geographical and administrative criteria has been extensively studied  \cite{opsomer2008non, pratesi2009small2, marhuenda2013small, ross2015bayesian,  porter2014spatial, mercer2014comparison, mercer2015space, porter2015multivariate, steorts_etal_2019, chung2020bayesian}. In addition, clustering small areas to improve estimation has been explored in \cite{maiti2014clustering} and \cite{torkashvand2017clustering}. Specifically, small areas are first clustered, followed by the implementation of linear mixed models with clustering effects to account for non-homogeneity of the random effects.

On the other hand, machine learning techniques can be utilized to incorporate spatial/cluster dependence in SAE models.  
In particular, spectral clustering is a popular set of algorithms for clustering in the machine learning literature  \citep{von2007tutorial}. 
Comprehensive reviews of these algorithms are given in \cite{bach2003learning,   bach2006learning, von2007tutorial, hastie2009elements}. The combination of Spectral Clustering (SC)  based on undirected graphs and their estimation via standard regularization  are useful to build semi-supervised algorithms for classification and regression \cite{galway1992spline, hofmann2008kernel}. 
Our first contribution is motivated by the Regularized Task Kernel Learning  (RTKL) criterion \citep{evgeniou2005learning, miller2023additive}, and involves an objective function similar to that used in Laplacian Regularized Least Squares (LapRLS) estimation \citep{evgeniou2000regularization, scholkopf2002learning, belkin2005manifold, belkin2006manifold}. We propose a new SAE model called the \textit{Fay-Herriot model with Spectral Clustering (FH-SC)}.  The FH-SC model includes a Laplacian matrix to ensure smoothness across related small areas obtained with Spectral Clustering (SC) algorithms, as well as a cluster regularization penalty \cite{miller2023additive}. Importantly, our framework is innovative in SAE, as clustering is not based on geographical or administrative criteria but rather on SC algorithms relying on external covariates.


Our proposal shares some similarities with the frequentist approaches in  \cite{maiti2014clustering} and \cite{torkashvand2017clustering},  which attempt to allow non-homogeneity of the random effects to improve SAE. However, our framework is fundamentally different not only in its use of external covariates within the SC algorithm but also in the overall modeling approach. Specifically, we use a regularization method that incorporates a Laplacian matrix within the FH-SC model, allowing for parameter estimation with either smoothness or non-smoothness across small areas. This adaptive feature results in a general FH-SC model that can incorporate existing approaches with clustering random effects, such as those in \cite{maiti2014clustering} and \cite{torkashvand2017clustering}, as well as non-clustering effects, such as the FH model. Consequently, the models of \cite{maiti2014clustering} and \cite{torkashvand2017clustering} can be viewed as complementary to our proposed FH-SC model.

In addition to following the best procedures to include auxiliary information in SAE, practitioners in NSOs typically  need to 
include estimates with higher levels of aggregation in their official statistical reports. For instance, in Colombia, estimates at the
municipality level need to be aggregated according to external estimates at the departmental or national level. This procedure of adjusting small area estimates by imposing benchmarking constraints is known as benchmarking
\cite{datta_2011, fabrizi2012constrained, fabrizi2014mapping, berg2012benchmarked, ghosh_steorts2013, ghosh2015benchmarked}.  Important contributions using posterior projections to provide posterior summaries of benchmarked estimators were proposed by \cite{patra2019constrained, patra2024constrained}. Specifically, the computational and theoretical aspects of Bayesian posterior projections under parameter constrains have been studied in \citep{dunson2003bayesian2, gunn2005transformation, lin2014bayesian}. The recent work of \cite{patra2024constrained} provides a general framework of Bayesian posteriors projections with new theoretical and computational advances. Notably, the use of posterior projections in SAE was originally proposed by \cite{patra2019constrained} to obtain a posterior distribution of benchmarked estimators based on projected samples for area level models under the FH model.


In recent years, alternatives to assess the uncertainty of benchmarked estimators have been proposed in the Bayesian context. 
For instance, \cite{you2002benchmarking} propose the Posterior Mean Square Error (PMSE) for the estimation of census coverage. Another contribution is \cite{erciulescu2018benchmarking}, which uses the posterior variance of benchmarked posterior samples to measure estimator uncertainty. More recently, \cite{zhang2020fully} introduce a modified likelihood defined via benchmarking and carried out posterior inference by incorporating the benchmarking constraints in Metropolis-Hasting steps. Although measures of uncertainty for benchmarked estimators under the FH model have been proposed \cite{erciulescu2018benchmarking, you2002benchmarking, zhang2020fully}, and posterior summaries (e.g., credible intervals, quantiles, posterior probabilities) of benchmarked estimates under the FH model can be computed as in \cite{patra2019constrained}, no prior research has explored measures of uncertainty for benchmarked estimators obtained through posterior projections.

Therefore, our second contribution is the proposal of a new measure of uncertainty for benchmarked estimators in SAE problems. To this end, using the theory of posterior projections proposed in \cite{patra2019constrained} and \cite{patra2024constrained}, we build a framework of benchmarked estimators under linear equality constraints for the proposed FH-SC model. Due to the hierarchical structure of the FH-SC model, from a probabilistic perspective, the resulting benchmarked estimators are random variables. Consequently, we are able to compute Rao-Blackwell (RB) \citep{Rao1945, blackwell1947conditional} benchmarked estimators using the conditional expectation of the posterior mean obtained under posterior projections. Crucially, we propose a new measure of uncertainty for these estimators called the \textit{Conditional Posterior Mean Square Error (CPMSE)}.

The CPMSE is useful to approximate the Mean Square Error (MSE) of RB benchmarked estimators obtained under posterior projections and, most notably, has the potential to be applied to other SAE estimators obtained under a Bayesian framework. To compute the RB estimators and CPMSE, posterior samples under the FH-SC model are required. For this purpose, we develop a computational framework with new Markov chain Monte Carlo (MCMC) algorithms to produce posterior samples of the model parameters under the FH-SC model. The MCMC schemes for the FH-SC model are inspired by previous work in Bayesian spatial econometrics for the Simultaneously Autoregressive (SAR) and Spatial Probit (SP) models discussed in \cite{lesage2009introduction} and  \cite{wilhelm2013estimating}, respectively. 

Finally, our third contribution involves the use of the proposed methodology to estimate the Proportion of Households with Internet Access (PHIA) in the municipalities of Colombia in 2015. Universal access to the internet is a priority in low- and middle-income countries \citep{Bank}, aligning with the Human Rights Principles for Connectivity and Development \citep{Human, tully2014human}. Our proposed PHIA indicator closely relates to a Sustainable Development Goal in the UN 2030 Agenda \citep{UN2015}, which emphasizes the need for additional indicators to measure internet access at subnational levels in developing countries. Currently, no municipal-level estimates of the PHIA are available in Colombia. 
To the best of our knowledge, our work is the first to address the estimation of the PHIA at subnational levels in Colombia. Furthermore, we believe the proposed methodology has the potential to be implemented in other Latin American countries.

The remainder of the paper is organized as follows. In Section \ref{sec:FHSC}, we introduce the motivational case study and 
the SC algorithms to obtain a cluster classification based on external covariates. We also present
the new FH-SC model for SAE, find closed form expressions of the conditional expectation and variance (Theorem \ref{expectation}) and describe prior specifications under the FH-SC model. In Section \ref{sec:all_estimation_methods},  we define the Rao-Blackwell (RB) estimators utilized to produce small area estimates of PHIA and consider the theory of benchmarking and posterior projections (Proposition \ref{prop:benchmarking}  and Theorem \ref{thm:PP_problem}) to obtain a full posterior for benchmarking estimation. Importantly, we also propose the CPMSE (Proposition \ref{prop:def3}) and discuss model selection criteria that are useful for evaluating the performance of the various models in our case study.  In Section \ref{sec:application},  we estimate the PHIA in the municipalities of Colombia using the RB estimators and their uncertainty computed with the proposed CPMSE under the FH model, the Bayesian versions of the proposed models in \cite{maiti2014clustering, torkashvand2017clustering}, and three versions of the FH-SC model. 
In Section \ref{sec:application}, we illustrate how the cluster classification of covariates in the proposed FH-SC model allows us to reduce both the CPMSE and the Coefficient of Variation (CV) of posterior estimates of PHIA. Section \ref{sec:discussion} outlines potential future extensions. Simulation studies to evaluate the performance of the proposed RB benchmarked estimator and investigate the frequentist properties of the CPMSE are included in the supplement (Section \ref{sec:simula}).

\section{Model for SAE using Spectral Clustering}
\label{sec:FHSC}

To describe our proposed methodological framework, we first introduce our motivating case study. We are interested in estimating internet connectivity in the municipalities of Colombia using a fully Bayesian approach with spectral clustering and benchmarking.

\subsection{Motivating example}
\label{subsec:Motivating}

The Proportion of Households with Internet Access (PHIA) is a useful measure of internet access and its use. Unfortunately, indicators of internet connectivity at lower geographical levels are typically unavailable in low- and middle-income countries. This is the case in Colombia, where the NSO \citep{internet} and the Ministry of Health \citep{Profamilia} can only compute PHIA estimates at the national or departmental levels. While these estimates are useful, PHIA estimates at finer subnational levels (e.g., municipalities, rural and urban areas, socioeconomic classes) are crucial for effective policy-making.

Our goal is to estimate the PHIA in  $m=294$ municipalities of Colombia with high precision. We consider the  most recent  Demographic and Health Survey (DHS)  implemented in Colombia in 2015 \citep{Profamilia}, which is valuable to estimate important public health indicators of the Colombian population in the most representative municipalities and capital cities. To compute the direct estimate of the PHIA  and the corresponding direct variance for the $i$-th small area denoted as $y_i$ and $D_{i}$, respectively, we use the estimators given by \cite{hajek1971comment} and
the Generalized Variance Function (GVF)  \citep{wolter1985introduction} to smooth the direct variances. 

Due to the intrinsic socioeconomic characteristics in Colombia \citep{Profamilia, internet}, we expect the direct estimates of PHIA to correlate with poverty and education indicators as reported in Mexico \citep{mora2021internet} and developing countries in non-Mediterranean Africa \citep{frankfurter2020measuring}. For instance, we expect direct estimates of PHIA to be closely related with external covariates provided by the Educational Index \citep{roser2014human} and Multidimensional Poverty Index (MPI) \citep{alkire2015multidimensional}. In practice, these covariates are obtained from national censuses and/or administrative sources available through the National Statistical System (SEN, by its Spanish acronym) \cite{DANE2020}. Specifically, the Educational Index and MPI can be obtained from the 2005 Population Census conducted in Colombia \citep{DANE2005} and adjusted using variables from the 2014 Census of Agriculture \citep{CNA2014}. Next, we discuss spectral clustering with these type of external covariates.

\subsection{Spectral Clustering with external covariates for SAE}
\label{subsec:The_SC}

Clustering of small areas to improve precision in SAE has been previously explored. In \cite{maiti2014clustering} and \cite{torkashvand2017clustering}, small areas are clustered using the stochastic search algorithm \cite{booth2008clustering} and hierarchical clustering  \cite{ward1963hierarchical} according to the Euclidean distance between covariates. However, Spectral Clustering (SC) algorithms are popular in the machine learning literature \cite{bach2003learning,   bach2006learning, von2007tutorial, hastie2009elements}. As pointed out by \cite{ng2001spectral}, SC algorithms can be easily implemented by practitioners and often outperform traditional clustering algorithms such as \textit{k}-means \cite{macqueen1967some}, particularly, when the data has irregular shapes \cite{hastie2009elements}. To implement SC algorithms in practice, assumptions on specific mixture component densities \cite{fraley2002model,booth2008clustering} or 
asymptotic distributions for cluster-specific variances \cite{torkashvand2017clustering} are not required.  

Several spectral clustering algorithms with unnormalized or
normalized  Laplacian matrices are discussed in \cite{von2007tutorial}
when two variables are considered. To cluster the municipalities in our case study into $c=1,...,C$ clusters with $C\leq m$, we propose the spectral clustering Algorithm \ref{alg:SC}, detailed in the Supplement. 
Supplementary Algorithm \ref{alg:SC} follows similar steps to the algorithm proposed in \cite{von2007tutorial} but extends it by incorporating more than two variables and introducing a method to select the number of covariates and clusters. Specifically, the input of Algorithm \ref{alg:SC} considers the direct estimates of PHIA, $y_{i}$, and  $k=1,...,p^{*}$ external covariates where $\boldsymbol{x}^{*}_{k}=(x^{*}_{1,k},...,x^{*}_{m,k})^{T}$. In our case, $p^{*}=2$ and $x^{*}_{i,1}$ and $x^{*}_{i,2}$ represent the observed values of the Educational Index and MPI in the $i$-th municipality. 

\begin{figure}[h!]
\begin{center}
\begin{tabular}{ccc} \scriptsize
(a) Total Within-Cluster Sums of Squares \vspace{-0.2cm}& \hspace{-1cm}  \scriptsize (b)  Weighted adjacency matrix \vspace{-0.2cm}\\
\includegraphics[width=0.4\textwidth]{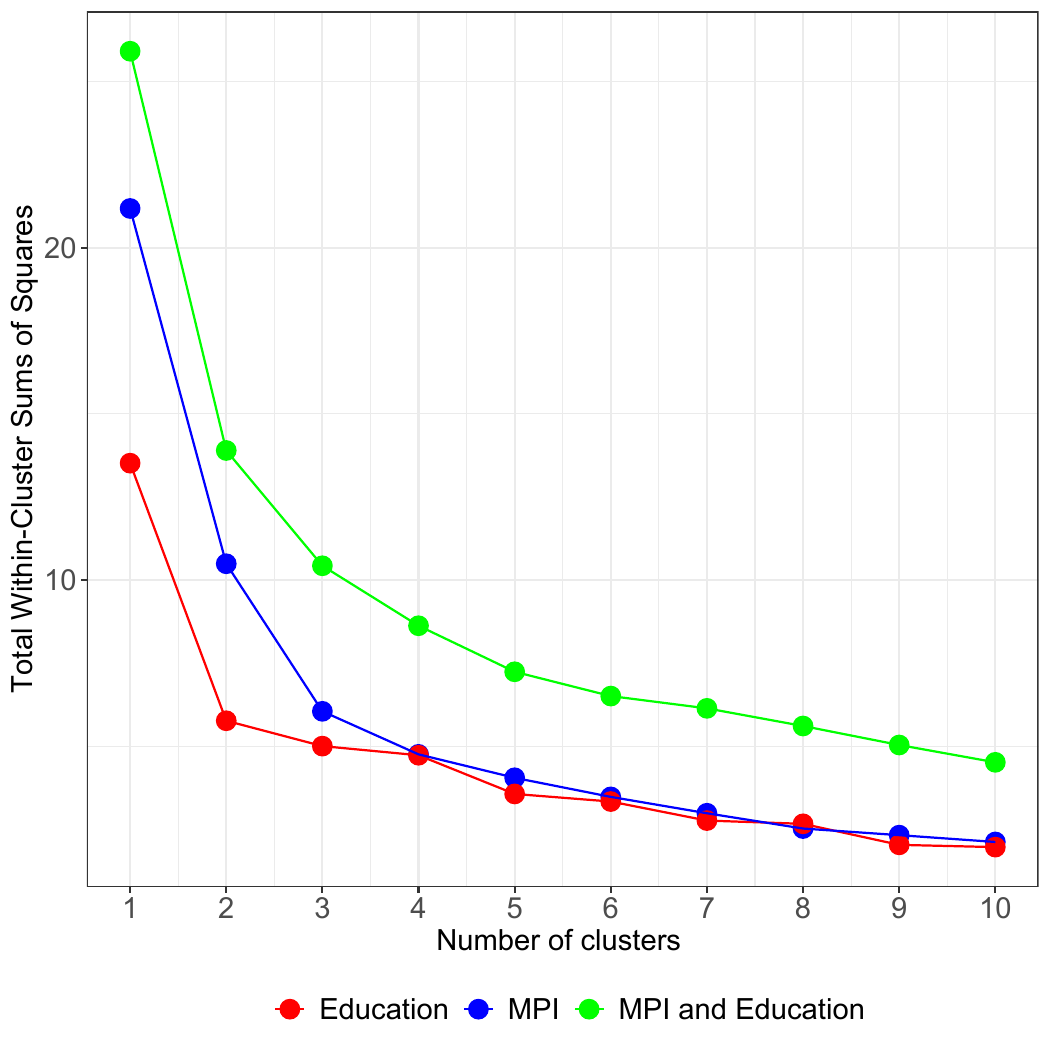} &\hspace{-0.8cm}
\includegraphics[width=0.45\textwidth]{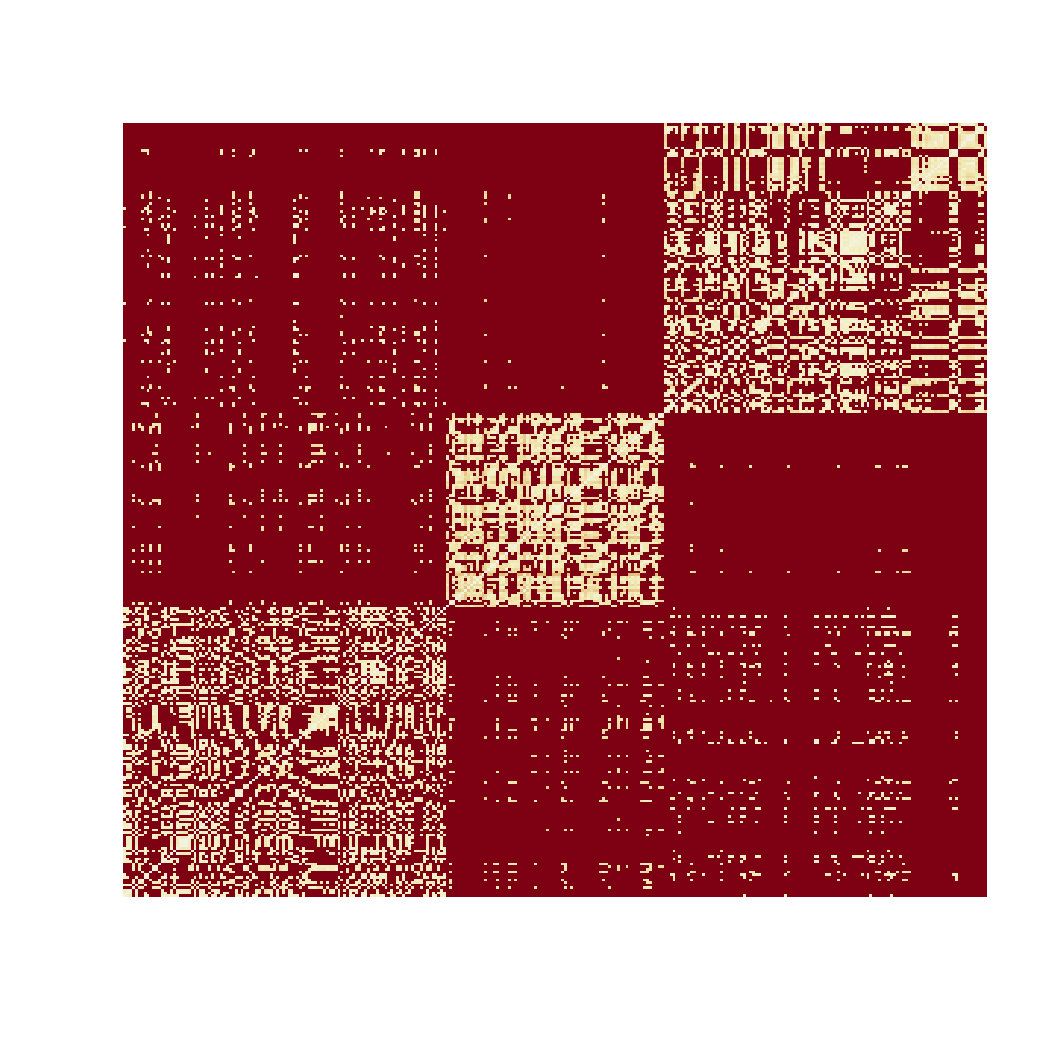} \\
\scriptsize (c)  Direct estimates of PHIA & \hspace{-1.4cm}  \scriptsize (d) Cluster classification  \vspace{-1cm}  \\
\hspace{-1cm} \includegraphics[width=0.55\textwidth]{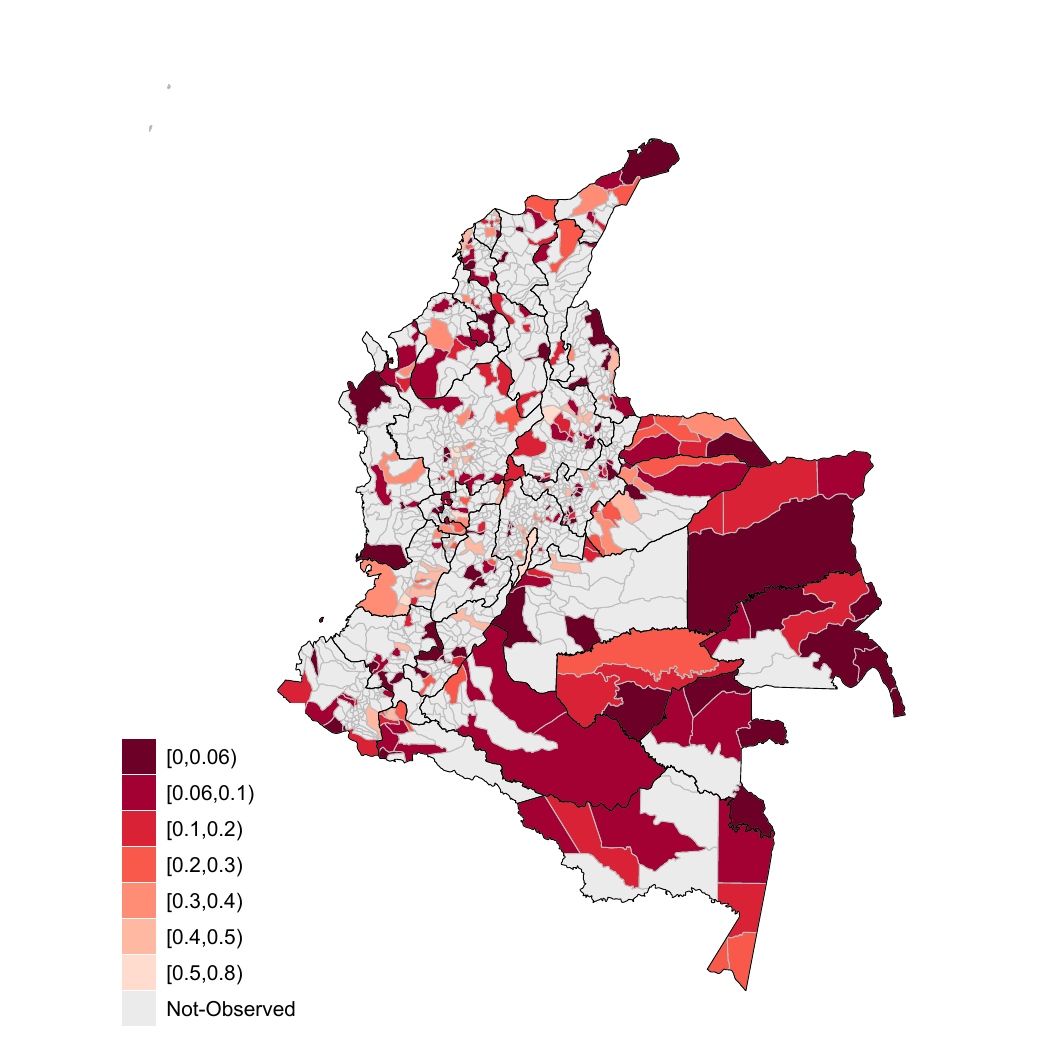} & \hspace{-0.3cm}
\includegraphics[width=0.47\textwidth]{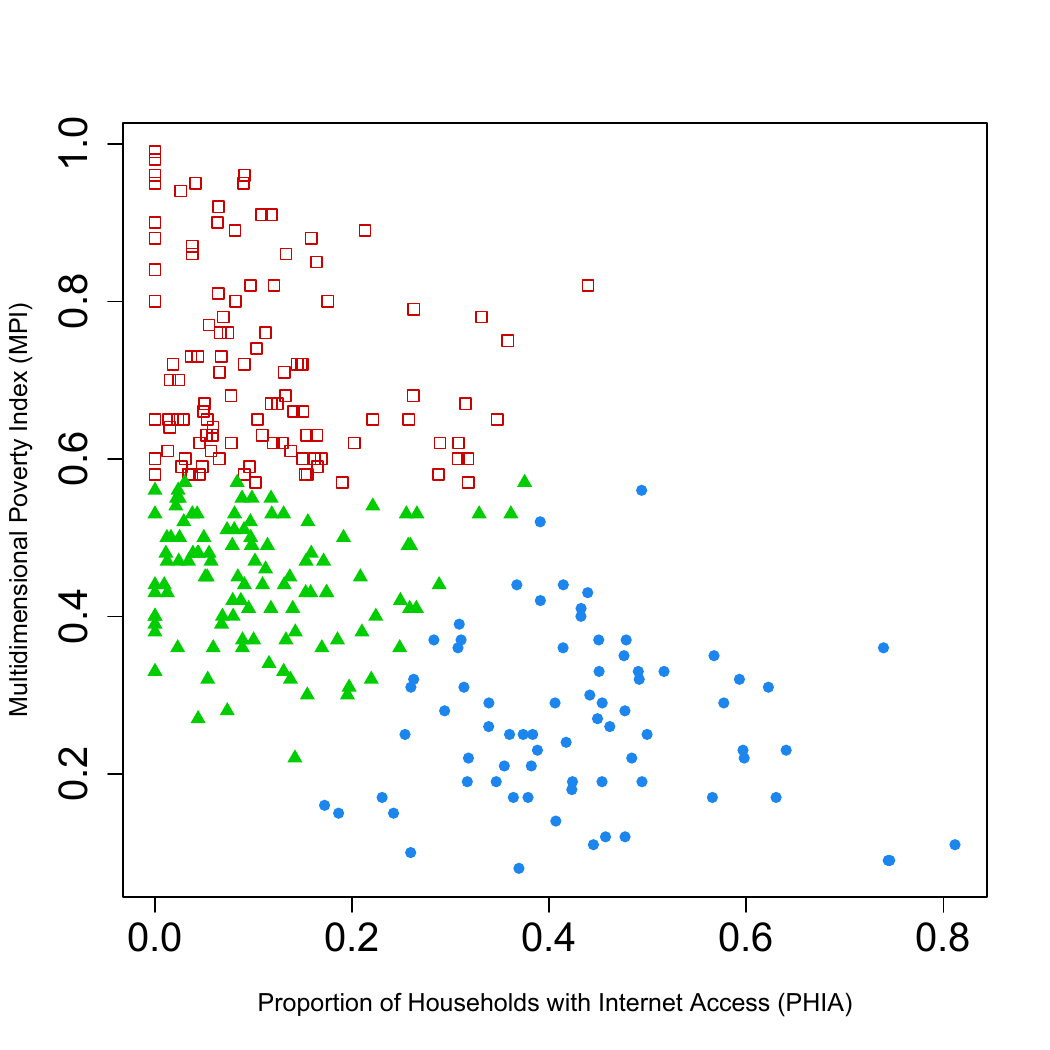}  \\
\end{tabular}
\end{center}
\vspace{-0.5cm}
\caption{ \footnotesize (a) Total Within-Cluster Sums of Squares for each combination of external covariates and clusters.
(b) Reordered weighted adjacency matrix associated to the cluster classification obtained with Algorithm \ref{alg:SC} using as input the direct estimates of PHIA, $C=3$ and the MPI as external covariate. (c) Direct estimates of PHIA in the municipalities of Colombia in 2015. (d) Cluster classification of direct estimates of PHIA and Multidimensional Poverty Index (MPI) obtained with \ref{alg:SC}. The different colors illustrate the classification of the two variables into three clusters.}
\label{fig:maps1}
\end{figure}

The output of Algorithm \ref{alg:SC} provides a cluster classification of the external covariates and the PHIA at the municipality level, along with
the simple graph Laplacian matrix, $\textsf{L}_{SC}= \text{blkdiag}(\{\textsf{L}_{c}\}_{c=1}^{C})$, where $\textsf{L}_{c}=n_c\mathbb I_{n_{c}} - \boldsymbol{1}_{n_c}\boldsymbol{1}_{n_c}^\textsf{T}$ represents the Laplacian matrix for cluster $c$,  $n_{c}$ denotes the number of small areas in cluster $c$,  $\mathbb I_{n_{c}}$ denotes the identity matrix of order $n_c$, 
and $\boldsymbol{1}_{n_c}$ denotes a vector of ones of length  $n_c$. The procedure to include $\textsf{L}_{SC}$ in the proposed SAE model is discussed in the next section. Additionally, to determine the number of clusters and external covariates, Algorithm \ref{alg:SC} computes the total within-cluster sum of squares for each combination of clusters and covariates. For brevity, the details and steps of Algorithm \ref{alg:SC} are provided in Section \ref{SC_algh} of the Supplementary Material.




Figure \ref{fig:maps1}-(a) shows that the MPI and Educational Index combined have the largest total within-cluster sum of squares. Since MPI is linked to socioeconomic characteristics and shows the largest drop from two to three clusters, we choose it as the external covariate to define $\textsf{L}_{SC}$ in Algorithm \ref{alg:SC}. Importantly, $C=3$ is also associated with the socioeconomic classes division in Colombia into three main categories: low, middle and high income. To graphically assess the clustering structure in our case study, we rearrange the rows and columns of the weighted adjacency matrix obtained from the algorithm. Figure \ref{fig:maps1}-(b) displays the reordered weighted adjacency matrix with dense blocks along the diagonal showing intra-cluster connections. As discussed throughout this paper, setting $C=3$ in our proposed SAE model leads to enhanced precision of the PHIA estimates.

Figure \ref{fig:maps1}-(c) displays the spatial patterns of direct estimates of PHIA and (d) shows the cluster classification of the PHIA and MPI using  Algorithm \ref{alg:SC}.  As we expected, smaller (larger) PHIA direct estimates correspond to larger (smaller) MPI values. We should note that the external covariates $\boldsymbol{x}_k^{*}$ are not directly included in the SAE model as covariates. 


\subsection{Parameter estimation with Spectral Clustering}
\label{subsec:The_FHSC1}

Following the clustering structure illustrated in Figure \ref{fig:maps1}-(b) and Figure \ref{fig:maps1}-(d), we expect the small area parameters associated with PHIA direct estimates to vary smoothly for related small areas in a SAE model. As such, in order to find the small area parameters, we propose the use of machine learning techniques. The Regularized Task Kernel Learning (RTKL) criterion \citep{evgeniou2005learning, miller2023additive} is designed to balance two competing objectives in machine learning and statistical modeling: goodness-of-fit and smoothness/complexity of a learned function. Laplacian Regularized Least Squares (LapRLS) estimation \citep{evgeniou2000regularization, scholkopf2002learning, belkin2005manifold, belkin2006manifold} applies the RTKL criterion by enforcing regularization through a Laplacian matrix, ensuring smoothness across connected data points. Specifically, LapRLS solves an optimization problem where the objective function combines the Squared Loss Function (SLF) with a graph-based Laplacian regularization penalty.

First, consider the following  general linear mixed model definition. 

\begin{definition} \footnotesize
  \label{def:LMM} 
Suppose the direct estimates $\by$ are obtained using the linear mixed model,
\begin{align}
\label{eq:general_LMM}
\begin{split}
    \by & = \bdt + \boldsymbol{e}\\
 \bdt &= \boldsymbol{\mu} + \boldsymbol{Z}\bu,
\end{split}
\end{align}
where $\by$ is the $m \times 1$ vector of direct estimates, $\boldsymbol{Z}$ is a known
 full rank $m \times h$ design matrix, and $\bdt$ is the small area parameter vector. The vectors $\boldsymbol{\mu}$ and $\boldsymbol{e}$ are of dimension $m \times 1$, while $\bu$ is of dimension $h \times 1$. The vectors $\boldsymbol{e}$ and $\bu$ are independently distributed with means $\boldsymbol{0}$ and covariance matrices $\boldsymbol{D}$ and $\boldsymbol{G}_{\boldsymbol{\varphi}}$, respectively. The covariance matrix $\boldsymbol{G}_{\boldsymbol{\varphi}}$ depends on some variance parameters $\boldsymbol{\varphi}$.  
 \end{definition}
 
Our proposal, described in Proposition \ref{prop:LapRLS}, involves finding the minimizer of the objective function, $\bdt^{\M\textsf{-SC}}$, to obtain the small area parameters. Proposition \ref{prop:LapRLS} is motivated by the RTKL criterion and considers an objective function similar to that of LapRLS estimation. Specifically, the objective function includes the SLF, $(\bdt^{\M} -  \bdt)^\textsf{T}    (\bdt^{\M} -  \bdt) $,  
to minimize the error between the new  small area parameter vector $\bdt^{\M}$ and the original parameter $\bdt$, along with the regularization term, $(\bdt^{\M})^\textsf{T}L_{SC}(\bdt^{\M})$, that ensures smoothness of the parameters for related small areas using the Laplacian matrix $L_{SC}$. Inspired by previous work for additive multi-task learning \cite{evgeniou2005learning, miller2023additive}, we also incorporate a Cluster Regularization Penalty (CRP), denoted as $\rho \in (0,1]$. This penalty controls the tradeoff between the closeness of  $\bdt^{\M}$ to $\bdt$  and the desired smoothness induced by the Laplacian matrix. 

 \begin{prop}
 \label{prop:LapRLS}  \footnotesize
Consider the convex differentiable objective function given by
\begin{equation}
\label{eq:LapRLS2}
\bdt^{\M\textsf{-SC}} =    \underset{\bdt^{\M}}{\text{minimize}} \quad    \rho(\bdt^{\M} -  \bdt)^\textsf{T}    (\bdt^{\M} -  \bdt) +  (1-\rho)(\bdt^{\M})^\textsf{T}\textit{L}_{SC}(\bdt^{\M}), 
\end{equation}
where  $\rho \in (0,1]$ is the cluster regularization penalty,  $\bdt  = \boldsymbol{\mu} + \boldsymbol{Z}\bu$ as in model (\ref{eq:general_LMM}), $\textsf{L}_{SC}= \text{blkdiag}(\{\textsf{L}_{c}\}_{c=1}^{C})$  is the  Laplacian matrix obtained from Algorithm \ref{alg:SC}, with $\textsf{L}_{c}=n_c\mathbb I_{n_{c}} - \boldsymbol{1}_{n_c}\boldsymbol{1}_{n_c}^\textsf{T}$ and  $n_{c}$ the number of small areas in cluster $c$.  
The objective function (\ref{eq:LapRLS2}) leads to the following solution 
\begin{align}
\label{eq:LapRLS_2}
\bdt^{\M\textsf{-SC}}&=\boldsymbol{A}_{\rho}^{-1}\bdt, &  \boldsymbol{A}_{\rho}&=( \mathbb I_m +((1-\rho)/\rho)  \textit{L}_{SC}),
\end{align}
 where $\mathbb I_m$ denotes the identity matrix of order $m$ and $\boldsymbol{A}_{\rho} \in \mathbb{R}^{m \times m}$ is a symmetric and positive definite matrix. 
 \label{def:LapRLS_d} 
 \end{prop}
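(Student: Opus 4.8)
The plan is to verify that the stated $\bdt^{\M\textsf{-SC}}$ is indeed the unique global minimizer of the quadratic objective in \eqref{eq:LapRLS2} by the standard first-order optimality argument for strictly convex functions, and then to record the two structural properties of $\boldsymbol{A}_\rho$. First I would write the objective as $f(\bdt^{\M}) = \rho\,(\bdt^{\M}-\bdt)^\textsf{T}(\bdt^{\M}-\bdt) + (1-\rho)\,(\bdt^{\M})^\textsf{T}\textsf{L}_{SC}\,\bdt^{\M}$ and expand it into the form $(\bdt^{\M})^\textsf{T}\big(\rho\,\mathbb I_m + (1-\rho)\,\textsf{L}_{SC}\big)\bdt^{\M} - 2\rho\,(\bdt^{\M})^\textsf{T}\bdt + \rho\,\bdt^\textsf{T}\bdt$. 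Taking the gradient with respect to $\bdt^{\M}$ and setting it to zero gives $2\big(\rho\,\mathbb I_m + (1-\rho)\,\textsf{L}_{SC}\big)\bdt^{\M} - 2\rho\,\bdt = \boldsymbol{0}$, i.e.\ $\big(\rho\,\mathbb I_m + (1-\rho)\,\textsf{L}_{SC}\big)\bdt^{\M} = \rho\,\bdt$. Dividing through by $\rho$ (legitimate since $\rho \in (0,1]$ is strictly positive) yields $\big(\mathbb I_m + ((1-\rho)/\rho)\,\textsf{L}_{SC}\big)\bdt^{\M} = \bdt$, which is exactly $\boldsymbol{A}_\rho\,\bdt^{\M} = \bdt$, so $\bdt^{\M\textsf{-SC}} = \boldsymbol{A}_\rho^{-1}\bdt$ once invertibility of $\boldsymbol{A}_\rho$ is established.

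Next I would establish the claimed properties of $\boldsymbol{A}_\rho$. Symmetry is immediate: $\mathbb I_m$ is symmetric, and each block $\textsf{L}_c = n_c\mathbb I_{n_c} - \boldsymbol{1}_{n_c}\boldsymbol{1}_{n_c}^\textsf{T}$ is symmetric, hence $\textsf{L}_{SC} = \text{blkdiag}(\{\textsf{L}_c\})$ is symmetric, and a nonnegative combination of symmetric matrices is symmetric. For positive definiteness, the key fact is that each graph Laplacian $\textsf{L}_c$ is positive semidefinite: for any $\boldsymbol{v}\in\R^{n_c}$, $\boldsymbol{v}^\textsf{T}\textsf{L}_c\boldsymbol{v} = n_c\,\boldsymbol{v}^\textsf{T}\boldsymbol{v} - (\boldsymbol{1}_{n_c}^\textsf{T}\boldsymbol{v})^2 \ge 0$ by Cauchy--Schwarz (equality only when $\boldsymbol{v}$ is constant). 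Consequently $\textsf{L}_{SC}$ is positive semidefinite, and since $(1-\rho)/\rho \ge 0$, we get $\boldsymbol{w}^\textsf{T}\boldsymbol{A}_\rho\boldsymbol{w} = \boldsymbol{w}^\textsf{T}\boldsymbol{w} + ((1-\rho)/\rho)\,\boldsymbol{w}^\textsf{T}\textsf{L}_{SC}\boldsymbol{w} \ge \boldsymbol{w}^\textsf{T}\boldsymbol{w} > 0$ for all $\boldsymbol{w}\neq\boldsymbol{0}$. This shows $\boldsymbol{A}_\rho$ is symmetric positive definite, hence invertible, which both justifies the division step above and confirms that the Hessian $2\boldsymbol{A}_\rho$ of $f$ is positive definite, so the objective is strictly convex and the stationary point is the unique global minimizer.

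I do not anticipate a genuine obstacle here; the result is a routine ridge-regularized least-squares computation. The only point requiring a little care is handling the boundary case $\rho = 1$ (where the penalty vanishes and $\boldsymbol{A}_\rho = \mathbb I_m$, giving $\bdt^{\M\textsf{-SC}} = \bdt$, consistent with the formula) and confirming that $\rho > 0$ strictly so that the factor $(1-\rho)/\rho$ is well-defined and nonnegative; the case $\rho = 0$ is excluded by the stated domain $\rho \in (0,1]$, which is exactly what makes $\boldsymbol{A}_\rho$ well-defined and positive definite. I would also remark briefly that positive semidefiniteness of $\textsf{L}_c$ (and hence well-posedness) does not require the clusters to be connected or nontrivial; the $+\mathbb I_m$ term regularizes away any null-space issues of $\textsf{L}_{SC}$, so no additional assumptions on the $n_c$ are needed.
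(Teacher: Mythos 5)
Your proof is correct and follows essentially the same route as the paper: the paper's own argument simply sets the gradient $-2\rho(\bdt-\bdt^{\M}) + 2(1-\rho)\textsf{L}_{SC}\bdt^{\M}$ to zero and rearranges to $\bdt^{\M} = (\mathbb I_m + ((1-\rho)/\rho)\textsf{L}_{SC})^{-1}\bdt$. Your additional verification that $\boldsymbol{A}_\rho$ is symmetric positive definite (via Cauchy--Schwarz on each block $\textsf{L}_c$), which justifies both the inversion and the claim that the stationary point is the unique global minimizer, is a welcome supplement to the paper's terser argument but does not change the approach.
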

%

\begin{proof}  \small
The proof is in Appendix \ref{app:prop1}.
\end{proof}

\subsection{A new model with Spectral Clustering for SAE }
\label{subsec:The_FHSC}

In Definition \ref{def:FHSC_m1}, we introduce a new area-level model for SAE that incorporates the solution $\bdt^{\M\textsf{-SC}}$ obtained in Proposition \ref{prop:LapRLS} for model (\ref{eq:general_LMM}). 
 \begin{definition}[\footnotesize SAE model with Spectral Clustering ($\M$-SC)]  \footnotesize
 \label{def:FHSC_m1} 
The SAE  model $\M$ with Spectral Clustering ($\M$-SC) can be written as follows:
\begin{align}
\label{eq:model_FHSC_m}
\begin{split}
    \by & =  \bdt^{\textsf{$\M$-SC}} + \boldsymbol{e},\\
    \bdt^{\textsf{$\M$-SC}}  & =  \boldsymbol{A}_{\rho}^{-1}\bdt,\\
\bdt &= \boldsymbol{\mu} + \boldsymbol{Z}\bu,\\
\end{split}
\end{align}
 where $\boldsymbol{A}_{\rho} = ( \mathbb I_m + ((1-\rho)/\rho) \textsf{L}_{SC})$ is a symmetric and positive-definite matrix with $\boldsymbol{A}_{\rho} \in \mathbb{R}^{m \times m}$,  
 $\textsf{L}_{SC}= \text{blkdiag}(\{\textsf{L}_{c}\}_{c=1}^{C})$  is the  Laplacian matrix with $\textsf{L}_{c}=n_c\mathbb I_{n_{c}} - \boldsymbol{1}_{n_c}\boldsymbol{1}_{n_c}^\textsf{T}$, 
$\rho \in (0,1]$, and $\by$, $\bdt$, $\boldsymbol{Z}$, $\boldsymbol{\mu}$, $\boldsymbol{e}$ and $\bu$ are defined in Definition \ref{def:LMM}. The vectors $\boldsymbol{e}$ and $\bu$ are independently distributed with means $\boldsymbol{0}$ and covariance matrices $\boldsymbol{D}$ and $\boldsymbol{G}_{\boldsymbol{\varphi}}$. 
\end{definition}

Note that the $\M$-SC model in Definition \ref{def:FHSC_m1} is constructed without any distributional assumptions on $\boldsymbol{e}$ and $\bu$ and without imposing specific settings on the design matrix $\boldsymbol{Z}$ or the covariance matrices $\boldsymbol{D}$ and  $\boldsymbol{G}_{\boldsymbol{\varphi}}$. This flexible feature allows the incorporation of assumptions and settings of other existing area-level models in Definition  \ref{def:FHSC_m1}. Specifically, we incorporate the structure of the seminal Fay-Herriot model and define the Fay-Herriot model with Spectral Clustering (FH-SC). 



\begin{definition}[\footnotesize Fay-Herriot model with Spectral Clustering (FH-SC)] \footnotesize 
 \label{def:FHSC_m} 
The Fay-Herriot model with Spectral Clustering  (FH-SC) for cluster $c$, where $c=1,...,C$  and $C\leq m$ 
with  $j=1,...,n_c$ for $n_c$ denoting the number of small areas in the  $c$-th  cluster,
can be written as follows:
\begin{align}
\label{fay_herriot_s}
\begin{split}
    \by_c & =  \bdt^{\textsf{FH-SC}}_c + \boldsymbol{e}_c,\\
    \bdt^{\textsf{FH-SC}}_c  & =  A_{\rho,c}^{-1}\bdt_c,\\
\bdt_c &= \boldsymbol{X}_c \boldsymbol{\delta}_c  + \boldsymbol{Z}_c\bu_c,\\
\end{split}
\end{align}
where $ A_{\rho,c} = ( \mathbb I_{n_{c}} + ((1-\rho)/\rho) \textsf{L}_{c})$ is a symmetric and positive-definite matrix
with $ A_{\rho,c} \in \mathbb{R}^{n_c \times n_c}$, where $\textsf{L}_{c}=n_c\mathbb I_{n_{c}} - \boldsymbol{1}_{n_c}\boldsymbol{1}_{n_c}^\textsf{T}$ and $\rho \in (0,1]$.  In model (\ref{fay_herriot_s}), the sampling errors  $\boldsymbol{e}_c=(e_{1,c},...,e_{n_c,c})^\textsf{T}$ and the $h_c \times 1$ vector of random effects $\boldsymbol{u}_c$ are independent
with $\boldsymbol{e}_c \stackrel{ind}{\sim} \text{N}(\boldsymbol{0}, \boldsymbol{D}_c)$ and $\bu_c \stackrel{ind}{\sim} \text{N}(\boldsymbol{0},\boldsymbol{G}_{\boldsymbol{\varphi},c})$.
In addition, $\boldsymbol{X}_c$  and $\boldsymbol{Z}_c$ are known $n_c \times p$  and  $n_c \times h_c$ design matrices,
 $\boldsymbol{\delta}_c$ is a $p \times 1$ vector
of unknown regression coefficients 
 and  $\bdt_c=(\theta_{1,c},...,\theta_{n_c,c})^\textsf{T}$, $\by_c=(y_{1,c},...,y_{n_c,c})^\textsf{T}$,  and $\boldsymbol{D}_{c}=\text{diag}(D_{1,c},,...,D_{n_c,c})$ denote the small area parameter vector, direct estimates and direct variances  for cluster $c$, respectively.  
\end{definition}

Note that the cluster regularization penalty, $\rho$, plays a crucial role in the propose FH-SC model. Smaller (larger) values of $\rho$ make the  cluster classification of external covariates more (less) important.

\subsubsection{Connections with existing models}
\label{subsec:connectFHSC}

 As discussed in the previous section, \cite{maiti2014clustering} and \cite{torkashvand2017clustering}  use
clustering of small areas to improve precision in SAE.  After clustering, these approaches implement general linear mixed models with clustering effects and use the Empirical Best Linear Unbiased Predictor (EBLUP) to estimate small area means. To evaluate the uncertainty of the EBLUPs, 
different approximations for the Mean-Squared Prediction Error (MSPE) estimate are developed in \cite{maiti2014clustering} and \cite{torkashvand2017clustering}. Since these models follow specifications similar to those of the FH model but incorporate a clustering structure, we refer to them in this work as  Fay-Herriot models with Clustering (FH-C). Crucially, as mentioned, the settings for the fixed and
 random effects in the FH and FH-C models can be incorporate into our proposed FH-SC model. To illustrate this, we refer to Table \ref{tab:models}, which presents the specific settings for the FH and FH-C models, along with three related versions of the proposed FH-SC model.  
 
Table \ref{tab:models} shows that FH and FH-C can be specified by setting $\rho=1$ in Definition \ref{def:FHSC_m}. The first version of the proposed FH-SC model, FH-SC$_1$, considers the settings of the multivariate version of the FH$_{(\boldsymbol{\beta}, \sigma^{2})}$ model with common variance, $\sigma^2$, for all random effects. The second version, FH-SC$_2$, uses the specifications of the FH-C$_{1}$ model proposed by \cite{torkashvand2017clustering}. Here, the variances of the random effects, $\sigma^{2}_c$, are cluster-specific while the fixed effects remain the same across clusters, $\boldsymbol{\delta}_c=\bb$. The last model, FH-SC$_3$, follows the specifications of the FH-C$_{2}$ model of \cite{maiti2014clustering}, where both the variances of the random effects and the fixed effects, $\boldsymbol{\delta}_c=\bb_c$, are cluster specific. In addition, similar to \cite{maiti2014clustering}, the model includes cluster-specific random effects $u_c$  with $u_c \sim N(0, \sigma_c^{2})$ and small area random effects $\nu_{i}$ with $\nu_{i} \sim N(0, \hat{\gamma}\sigma_c^{2})$, where $\hat{\gamma}$ is known and estimated by analysis of variance \cite{maiti2014clustering,booth2008clustering}.

\begin{table} 
\begin{center}
\footnotesize
\setlength\extrarowheight{2mm}
\begin{tabular}{lcccccc}  \hline\hline
Model &   $\rho$ & $\boldsymbol{\delta}_c$   & $\boldsymbol{Z}_c$  & $\boldsymbol{G}_{\boldsymbol{\varphi},c}$  &  $\boldsymbol{u}_c$ &  Reference \\ \hline \hline
FH$_{(\boldsymbol{\beta}, \sigma^{2})}$ & 1 & $\boldsymbol{\beta}$ & $\mathbb I_{n_{c}}$  & $\sigma^{2}\mathbb I_{n_{c}}$   & $\boldsymbol{u}_c=(u_{1,c},\ldots,u_{n_c,c})^\textsf{T}$ & \citep{fay_1979}   \\
FH-C$_{1(\boldsymbol{\beta}, \sigma_c^{2})} $  & 1 & $\boldsymbol{\beta}$ & $\mathbb I_{n_{c}}$  & $\sigma_c^{2} \mathbb I_{n_{c}}$  &  $\boldsymbol{u}_c=(u_{c},\ldots,u_{c})^\textsf{T}$ &   \cite{torkashvand2017clustering}  \\
FH-C$_{2(\boldsymbol{\beta}_c, \sigma_c^{2}, \nu_{i})}$  & 1 & $\boldsymbol{\beta}_c$  & $[\boldsymbol{1}_{n_c}\; \mathbb I_{n_{c}}]$  & $\text{diag}_{n_c+1}(\hat{\gamma},\boldsymbol{1}_{n_c})\sigma_c^{2}$  & $\boldsymbol{u}_c=(u_{c},v_{1,c},\ldots,v_{n_c,c})^\textsf{T}$ &  \cite{maiti2014clustering}  \\ 
FH-SC$_{1(\boldsymbol{\beta}, \sigma^{2}, \rho)}$  & $(0,1]$ & $\boldsymbol{\beta}$ & $\mathbb I_{n_{c}}$  & $\sigma^{2} \mathbb I_{n_{c}}$  &   $\boldsymbol{u}_c=(u_{1,c},\ldots,u_{n_c,c})^\textsf{T}$ &  Def. \ref{def:FHSC_m} \\
FH-SC$_{2(\boldsymbol{\beta}, \sigma_c^{2}, \rho)}$   & $(0,1]$ & $\boldsymbol{\beta}$ & $\mathbb I_{n_{c}}$  & $\sigma_c^{2} \mathbb I_{n_{c}}$   & $\boldsymbol{u}_c=(u_{c},\ldots,u_{c})^\textsf{T}$ &  Def. \ref{def:FHSC_m}  \\ 
FH-SC$_{3(\boldsymbol{\beta}_c, \sigma_c^{2}, \nu_{i}, \rho)}$   & (0,1] & $\boldsymbol{\beta}_c$ & $[\boldsymbol{1}_{n_c}\; \mathbb I_{n_{c}}]$  & $\text{diag}_{n_c+1}(\hat{\gamma},\boldsymbol{1}_{n_c})\sigma_c^{2}$   &    $\boldsymbol{u}_c=(u_{c},v_{1,c},\ldots,v_{n_c,c})^\textsf{T}$  &  Def.  \ref{def:FHSC_m} \\
\hline\hline
\end{tabular}
\caption{\small  Settings for the FH, FH-C models and three different versions of the proposed FH-SC model according to Definition    \ref{def:FHSC_m}. The subscripts in the names of the models denote the settings of the fixed and random effects. Specifically, $\boldsymbol{\beta}_c$ and $\sigma_c^{2}$ denote fixed effects and variances of random effects changing across clusters,  $\nu_{i}$ represents small area random effects, and $\boldsymbol{\beta}$ and $\sigma^{2}$ denote common fixed effects and variances of the random effects across clusters. }
\label{tab:models}
\end{center}
\end{table} 

\subsubsection{Properties of small area parameters}
\label{subsec:properties}

Theorem \ref{expectation} shows expressions of the conditional posterior expectation and variance of the small area parameter vector
under the FH-SC model in (\ref{fay_herriot_s}). To demonstrate parts (i) and (ii) of Theorem \ref{expectation}, we take advantage of the hierarchical structure and assumptions of the proposed FH-SC model in Definition  \ref{def:FHSC_m}.  

\begin{thm} \footnotesize
\label{expectation}
Consider the FH-SC model in (\ref{fay_herriot_s}) and let  $\boldsymbol{\theta}^{\textsf{FH-SC}}_{c} = (\theta_{1,c}^{\textsf{FH-SC}},...,\theta_{n_{c},c}^{\textsf{FH-SC}})^\textsf{T}$   
be the small area parameter vector under the FH-SC model for cluster $c$ where $\boldsymbol{\theta}^{\textsf{FH-SC}}=(\boldsymbol{\theta}^{\textsf{FH-SC}}_{1},...,\boldsymbol{\theta}^{\textsf{FH-SC}}_{C})^\textsf{T}$.
 \begin{enumerate}[(i)]
 \item The conditional expectation of the posterior small area parameter vector under the FH-SC model for cluster $c$ can be written as follows,
\begin{align}
\label{smooth1}
E(\boldsymbol{\theta}^{\textsf{FH-SC}}_c \mid \boldsymbol{\delta}_c , \boldsymbol{G}_{\boldsymbol{\varphi},c}, \rho, \boldsymbol{Z}_{c}, \boldsymbol{X}_{c}, \by_c) &= \gamma_{c} E(\boldsymbol{\theta}_c\mid \boldsymbol{\delta}_c , \boldsymbol{G}_{\boldsymbol{\varphi},c}, \rho, \boldsymbol{Z}_{c}, \boldsymbol{X}_{c}, \by_c) \\
&\hspace{-0.5cm} 
  + (1-\gamma_{c}) \sum_{j=1}^{n_c}E(\theta_{j,c}\mid \boldsymbol{\delta}_c , \boldsymbol{G}_{\boldsymbol{\varphi},c}, \rho, \boldsymbol{Z}_{c}, \boldsymbol{X}_{c}, \by_c)/n_c, \notag 
\end{align}
with
\begin{align}
\label{smooth1_a}
E(\boldsymbol{\theta}_c  \mid \boldsymbol{\delta}_c,\boldsymbol{G}_{\boldsymbol{\varphi},c}, \rho, \boldsymbol{Z}_{c}, \boldsymbol{X}_{c}, \by_c) &=   V(\boldsymbol{\theta}_c\mid \boldsymbol{\delta}_c,\boldsymbol{G}_{\boldsymbol{\varphi},c}, \rho,\boldsymbol{Z}_{c}, \boldsymbol{X}_{c}, \by_c)\\  
& \hspace{1.0cm}\times (\boldsymbol{D}_c^{-1} A_{\rho,c}^{-1} \by_c   +(\boldsymbol{Z}_{c}\boldsymbol{G}_{\boldsymbol{\varphi},c}\boldsymbol{Z}_{c}^\textsf{T} )^{-1}\boldsymbol{X}_c^\textsf{T}\boldsymbol{\delta}_c), \notag
\end{align}
\begin{align}
\label{smooth1_b}
V(\boldsymbol{\theta}_c\mid \boldsymbol{\delta}_c,\boldsymbol{G}_{\boldsymbol{\varphi},c}, \rho, \boldsymbol{Z}_{c}, \boldsymbol{X}_{c}, \by_c) &=((A_{\rho,c} \boldsymbol{D}_c  A_{\rho,c}^\textsf{T})^{-1} +  (\boldsymbol{Z}_{c}\boldsymbol{G}_{\boldsymbol{\varphi},c}\boldsymbol{Z}_{c}^\textsf{T} )^{-1} )^{-1},
\end{align}
where $A_{\rho,c}^{-1} = \gamma_{c}  \mathbb I_{n_{c}} + ((1-\gamma_{c})/n_c)\boldsymbol{1}_{n_c}\boldsymbol{1}_{n_c}^\textsf{T}$ and 
$ \gamma_{c}=  \rho /((1-\rho)n_{c} + \rho)$ with $ \gamma_{c} \in (0,1]$. 

\item   The conditional variance of the posterior small area parameter vector under the FH-SC model for cluster $c$ is, 
\begin{align}
\label{var_smooth1_b}
V(\boldsymbol{\theta}^{\textsf{FH-SC}}_c \mid \boldsymbol{\delta}_c , \boldsymbol{G}_{\boldsymbol{\varphi},c}, \rho, \boldsymbol{Z}_{c}, \boldsymbol{X}_{c}, \by_c)=
\gamma_{c}  V(\boldsymbol{\theta}_c\mid \boldsymbol{\delta}_c,\boldsymbol{G}_{\boldsymbol{\varphi},c}, \rho, \boldsymbol{Z}_{c}, \boldsymbol{X}_{c}, \by_c) \\
&\hspace{-8.5cm}  +  (1-\gamma_{c})(1+\gamma_{c}) \sum_{j=1}^{n_c}V(\theta_{j,c}\mid \boldsymbol{\delta}_c , \boldsymbol{G}_{\boldsymbol{\varphi},c}, \rho, \boldsymbol{Z}_{c}, \boldsymbol{X}_{c}, \by_c)/n_c.\notag
\end{align}
\end{enumerate}
\end{thm}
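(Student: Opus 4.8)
The plan is to reduce the statement to two ingredients: a closed form for $A_{\rho,c}^{-1}$, and the conditional posterior moments of $\boldsymbol{\theta}_c$. The starting observation is that, once $\rho$ is fixed, $\boldsymbol{\theta}^{\textsf{FH-SC}}_c = A_{\rho,c}^{-1}\boldsymbol{\theta}_c$ is a deterministic invertible linear map, so conditioning on $\boldsymbol{\delta}_c,\boldsymbol{G}_{\boldsymbol{\varphi},c},\rho,\boldsymbol{Z}_c,\boldsymbol{X}_c,\by_c$ gives $E(\boldsymbol{\theta}^{\textsf{FH-SC}}_c\mid\cdots)=A_{\rho,c}^{-1}E(\boldsymbol{\theta}_c\mid\cdots)$ and, since $A_{\rho,c}$ (hence $A_{\rho,c}^{-1}$) is symmetric, $V(\boldsymbol{\theta}^{\textsf{FH-SC}}_c\mid\cdots)=A_{\rho,c}^{-1}V(\boldsymbol{\theta}_c\mid\cdots)A_{\rho,c}^{-1}$. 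Parts (i) and (ii) then follow by inserting the two ingredients and simplifying.

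For the inverse, I would write $A_{\rho,c}=\mathbb I_{n_c}+\tfrac{1-\rho}{\rho}\textsf{L}_c$ with $\textsf{L}_c=n_c\mathbb I_{n_c}-\boldsymbol 1_{n_c}\boldsymbol 1_{n_c}^\textsf{T}$, so that $A_{\rho,c}=a\,\mathbb I_{n_c}-b\,\boldsymbol 1_{n_c}\boldsymbol 1_{n_c}^\textsf{T}$ with $a=1+\tfrac{(1-\rho)n_c}{\rho}$ and $b=\tfrac{1-\rho}{\rho}$, and apply the Sherman--Morrison formula. The key simplification is that $a-b\,n_c=1$, which collapses the scalar denominator and yields $A_{\rho,c}^{-1}=a^{-1}\mathbb I_{n_c}+a^{-1}b\,\boldsymbol 1_{n_c}\boldsymbol 1_{n_c}^\textsf{T}$. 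Substituting $a^{-1}=\rho/((1-\rho)n_c+\rho)=\gamma_c$ and checking $a^{-1}b=(1-\gamma_c)/n_c$ gives the stated form $A_{\rho,c}^{-1}=\gamma_c\mathbb I_{n_c}+\tfrac{1-\gamma_c}{n_c}\boldsymbol 1_{n_c}\boldsymbol 1_{n_c}^\textsf{T}$; symmetry and positive-definiteness of $A_{\rho,c}$ (eigenvalues $1$, with eigenvector $\boldsymbol 1_{n_c}$, and $1+(1-\rho)n_c/\rho$ with multiplicity $n_c-1$, both positive) are immediate, as is $\gamma_c\in(0,1]$.

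For the conditional posterior moments of $\boldsymbol{\theta}_c$, I would use the conditionally-Gaussian hierarchy of \eqref{fay_herriot_s}: with $\boldsymbol{\delta}_c,\boldsymbol{G}_{\boldsymbol{\varphi},c},\rho$ held fixed, the prior is $\boldsymbol{\theta}_c\sim\text{N}\!\big(\boldsymbol{X}_c\boldsymbol{\delta}_c,\ \boldsymbol{Z}_c\boldsymbol{G}_{\boldsymbol{\varphi},c}\boldsymbol{Z}_c^\textsf{T}\big)$ and the likelihood is $\by_c\mid\boldsymbol{\theta}_c\sim\text{N}\!\big(A_{\rho,c}^{-1}\boldsymbol{\theta}_c,\ \boldsymbol{D}_c\big)$. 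Pre-multiplying the observation equation by $A_{\rho,c}$ rewrites it as the canonical location model $A_{\rho,c}\by_c\mid\boldsymbol{\theta}_c\sim\text{N}\!\big(\boldsymbol{\theta}_c,\ A_{\rho,c}\boldsymbol{D}_c A_{\rho,c}^\textsf{T}\big)$, and the standard normal--normal conjugate update (add precisions for the variance; precision-weighted combination for the mean), using $A_{\rho,c}^\textsf{T}=A_{\rho,c}$, yields expressions \eqref{smooth1_a} and \eqref{smooth1_b}; completing the square in the log-posterior is an equivalent route. This step only requires $\boldsymbol{Z}_c\boldsymbol{G}_{\boldsymbol{\varphi},c}\boldsymbol{Z}_c^\textsf{T}$ to be nonsingular, which holds for every model in Table~\ref{tab:models}.

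Finally I would assemble the two ingredients. For (i), substituting $A_{\rho,c}^{-1}=\gamma_c\mathbb I_{n_c}+\tfrac{1-\gamma_c}{n_c}\boldsymbol 1_{n_c}\boldsymbol 1_{n_c}^\textsf{T}$ into $A_{\rho,c}^{-1}E(\boldsymbol{\theta}_c\mid\cdots)$ and using $\boldsymbol 1_{n_c}^\textsf{T}E(\boldsymbol{\theta}_c\mid\cdots)=\sum_{j=1}^{n_c}E(\theta_{j,c}\mid\cdots)$ gives the convex combination in \eqref{smooth1}. For (ii), I would expand, writing $V=V(\boldsymbol{\theta}_c\mid\cdots)$,
\[
A_{\rho,c}^{-1}V A_{\rho,c}^{-1}=\gamma_c^{2}V+\gamma_c\tfrac{1-\gamma_c}{n_c}\big(V\boldsymbol 1_{n_c}\boldsymbol 1_{n_c}^\textsf{T}+\boldsymbol 1_{n_c}\boldsymbol 1_{n_c}^\textsf{T}V\big)+\Big(\tfrac{1-\gamma_c}{n_c}\Big)^{2}\big(\boldsymbol 1_{n_c}^\textsf{T}V\boldsymbol 1_{n_c}\big)\boldsymbol 1_{n_c}\boldsymbol 1_{n_c}^\textsf{T},
\]
and then regroup using $\gamma_c+\tfrac{1-\gamma_c}{n_c}n_c=1$, the factorization $(1-\gamma_c)(1+\gamma_c)=1-\gamma_c^{2}$, and the identity $A_{\rho,c}^{-2}=\gamma_c^{2}\mathbb I_{n_c}+\tfrac{1-\gamma_c^{2}}{n_c}\boldsymbol 1_{n_c}\boldsymbol 1_{n_c}^\textsf{T}$. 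I expect this last regrouping to be the main obstacle: the two rank-one cross terms and the $\boldsymbol 1_{n_c}^\textsf{T}V\boldsymbol 1_{n_c}$ term must be organized (using the within-cluster structure of $V(\boldsymbol{\theta}_c\mid\cdots)$ induced by the models of Table~\ref{tab:models}) so that the coefficient $(1-\gamma_c)(1+\gamma_c)$ and the per-cluster average $n_c^{-1}\sum_{j=1}^{n_c}V(\theta_{j,c}\mid\cdots)$ emerge in the compact form \eqref{var_smooth1_b}; everything else is routine Gaussian-conjugacy and rank-one matrix algebra.
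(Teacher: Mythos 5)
Your route is the same as the paper's: invert $A_{\rho,c}$ via Sherman--Morrison to get $A_{\rho,c}^{-1}=\gamma_c\mathbb I_{n_c}+\tfrac{1-\gamma_c}{n_c}\boldsymbol 1_{n_c}\boldsymbol 1_{n_c}^\textsf{T}$, obtain \eqref{smooth1_a}--\eqref{smooth1_b} from the normal--normal conjugate update for $\boldsymbol{\theta}_c$ given $A_{\rho,c}\by_c$, and then push the deterministic linear map $\boldsymbol{\theta}_c^{\textsf{FH-SC}}=A_{\rho,c}^{-1}\boldsymbol{\theta}_c$ through the conditional expectation and variance. Part (i) and the two ingredients are complete and correct, and in fact more explicit than the appendix, which simply cites Sherman--Morrison and then states that (i) and (ii) ``hold by taking the expectation and variance on both sides.'' The one step you leave open --- regrouping $A_{\rho,c}^{-1}VA_{\rho,c}^{-1}$ into the compact form \eqref{var_smooth1_b} --- is exactly the step the paper also omits, and your honest expansion makes clear why it is delicate: the exact conditional variance is $A_{\rho,c}^{-1}VA_{\rho,c}^{-1}$, whose leading term carries $\gamma_c^{2}V$ (not $\gamma_cV$) and whose rank-one terms involve the row/column sums $V\boldsymbol 1_{n_c}$ and the quadratic form $\boldsymbol 1_{n_c}^\textsf{T}V\boldsymbol 1_{n_c}$ rather than the diagonal average $n_c^{-1}\sum_jV_{jj}$. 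So \eqref{var_smooth1_b} does not follow from \eqref{smooth1_b} by matrix algebra alone for an arbitrary posterior covariance $V$; it requires either additional structure on $V$ under the specific models of Table~\ref{tab:models} or a looser reading of the statement. Your flagging of this as the main obstacle is warranted and is not a defect relative to the paper's own argument, which supplies no algebra for that step either.
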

\begin{proof}  \footnotesize
The proof is in Appendix \ref{app:prop}.
\end{proof}

In part (i) of Theorem \ref{expectation}, we observe that the posterior conditional expectation is a trade-off between the conditional expectation of the vector of posterior means and the average of  conditional expectations of posterior means for cluster $c$.  In parts (i) and (ii),  $\gamma_{c} \in (0,1]$ is the weight  for cluster $c$ and measures the importance of the spectral clustering classification. More formally, given the size of cluster $c$, $n_c$, if $\rho \to 1$ then $\gamma_{c} \to 1$. Consequently, the posterior conditional expectation and variance  converge to the conditional expectation and variance of $\boldsymbol{\theta}_c$. For instance, when spectral clustering with external covariates is not useful, the conditional expectation (or variance) under the FH-SC model converges to the conditional expectation (or variance) under the FH or FH-C models (see Table \ref{tab:models}). Conversely, if $\rho \to 0$ then $ \gamma_{c} \to 0$, such that $E(\boldsymbol{\theta}^{\textsf{FH-SC}}_c \mid \cdot)$ and $V(\boldsymbol{\theta}^{\textsf{FH-SC}}_c \mid \cdot)$ converge to the corresponding average of conditional expectation and variance under the FH or FH-C models.

\subsubsection{Prior specification and MCMC algorithms}
\label{subsec:The_MCMC}
We consider a joint prior distribution  $\pi(\boldsymbol{\delta}_c,\boldsymbol{G}_{\boldsymbol{\varphi},c},\rho_c) =  \prod_{c=1}^{C} \pi(\boldsymbol{G}_{\boldsymbol{\varphi},c})\pi(\boldsymbol{\delta}_c)\pi(\rho)$, where $\pi(\boldsymbol{G}_{\boldsymbol{\varphi},c})$ denotes the prior for the variance parameters  $\boldsymbol{\varphi}$ in the covariance matrix $\boldsymbol{G}_{\boldsymbol{\varphi},c}$, and $\pi(\boldsymbol{\delta}_c)$ and $\pi(\rho)$ the priors for $\boldsymbol{\delta}_c$ and $\rho$, respectively. Theorem \ref{prop2} establishes the conditions for posterior propriety under the FH-SC models.


\begin{thm}  \footnotesize
\label{prop2}
The posterior probability density $p(\boldsymbol{\theta}, \boldsymbol{\delta},\boldsymbol{G}_{\boldsymbol{\varphi}}, \rho \mid \boldsymbol{y}, \boldsymbol{X}, \boldsymbol{Z}, \boldsymbol{D})$ under the FH-SC$_{1}$, FH-SC$_{2}$  and
 FH-SC$_{3}$ models in Table \ref{tab:models} is proper if the priors
for the variance parameters, $\boldsymbol{\varphi}$, in the covariance matrix, $\boldsymbol{G}_{\boldsymbol{\varphi},c}$, and the prior for the cluster regularization penalty, $\rho$, are proper. 
\end{thm}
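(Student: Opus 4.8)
The plan is to verify propriety by direct integration of the unnormalized posterior, following the classical Fay--Herriot posterior-propriety argument. Since $\boldsymbol{\theta}^{\textsf{FH-SC}}=\boldsymbol{A}_{\rho}^{-1}(\boldsymbol{X}\boldsymbol{\delta}+\boldsymbol{Z}\boldsymbol{u})$ is a deterministic function of $(\boldsymbol{\delta},\boldsymbol{u},\rho)$, block by block across clusters, the posterior of $(\boldsymbol{\theta}^{\textsf{FH-SC}},\boldsymbol{\delta},\boldsymbol{G}_{\boldsymbol{\varphi}},\rho)$ is the pushforward of the posterior of $(\boldsymbol{u},\boldsymbol{\delta},\boldsymbol{G}_{\boldsymbol{\varphi}},\rho)$ under this map and has the same total mass; hence it suffices to show
\[
\int \prod_{c=1}^{C}\Big[\text{N}\big(\by_c;\,A_{\rho,c}^{-1}(\boldsymbol{X}_c\boldsymbol{\delta}_c+\boldsymbol{Z}_c\boldsymbol{u}_c),\,\boldsymbol{D}_c\big)\,\text{N}\big(\boldsymbol{u}_c;\,\boldsymbol{0},\,\boldsymbol{G}_{\boldsymbol{\varphi},c}\big)\,\pi(\boldsymbol{G}_{\boldsymbol{\varphi},c})\Big]\,\pi(\boldsymbol{\delta})\,\pi(\rho)\;d\boldsymbol{u}\,d\boldsymbol{\delta}\,d\boldsymbol{G}_{\boldsymbol{\varphi}}\,d\rho<\infty,
\]
where $\pi(\boldsymbol{\delta})$ is the flat prior on the regression coefficients ($\boldsymbol{\delta}_c\equiv\boldsymbol{\beta}$ for FH-SC$_1$, FH-SC$_2$, and $\pi(\boldsymbol{\delta})=\prod_c\pi(\boldsymbol{\beta}_c)$ with flat $\pi(\boldsymbol{\beta}_c)$ for FH-SC$_3$) --- this is the substantive case --- and $\pi(\boldsymbol{G}_{\boldsymbol{\varphi},c})$, $\pi(\rho)$ are proper. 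Because the model factorizes over clusters apart from the shared $\rho$, I would bound the bracketed product and integrate $\rho$ last.

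First I would integrate out the random effects. Gaussian marginalization gives $\by_c\mid\boldsymbol{\delta}_c,\boldsymbol{G}_{\boldsymbol{\varphi},c},\rho\sim\text{N}(A_{\rho,c}^{-1}\boldsymbol{X}_c\boldsymbol{\delta}_c,\boldsymbol{\Sigma}_c)$ with $\boldsymbol{\Sigma}_c=A_{\rho,c}^{-1}\boldsymbol{Z}_c\boldsymbol{G}_{\boldsymbol{\varphi},c}\boldsymbol{Z}_c^{\textsf{T}}A_{\rho,c}^{-\textsf{T}}+\boldsymbol{D}_c$ (the $\boldsymbol{u}_c$-integral is finite because in every specification of Table \ref{tab:models} the covariance of the random-effect contribution is positive definite); since $\boldsymbol{D}_c\succ 0$ and the first summand is positive semidefinite, $\boldsymbol{\Sigma}_c\succeq\boldsymbol{D}_c\succ 0$. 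Next I would integrate out $\boldsymbol{\delta}$ against the flat prior: the design entering the mean, $\boldsymbol{X}_c^{\star}:=A_{\rho,c}^{-1}\boldsymbol{X}_c$, has full column rank $p$ for every $\rho\in(0,1]$ (since $A_{\rho,c}$ is invertible and $\boldsymbol{X}_c$ is full column rank with $n_c\ge p$), so the $\boldsymbol{\delta}$-integral --- per cluster for FH-SC$_3$, once against the block-stacked design $\widetilde{\boldsymbol{X}}(\rho)$ for FH-SC$_1$, FH-SC$_2$ --- converges and equals, up to a constant, $|\boldsymbol{\Sigma}_c|^{-1/2}\,|\boldsymbol{X}_c^{\star\textsf{T}}\boldsymbol{\Sigma}_c^{-1}\boldsymbol{X}_c^{\star}|^{-1/2}\exp(-\tfrac12 R_c)$ with residual quadratic form $R_c\ge 0$.

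To eliminate the dependence on $\boldsymbol{G}_{\boldsymbol{\varphi}}$ I would apply the determinant identity $|\boldsymbol{\Sigma}_c|\,|\boldsymbol{X}_c^{\star\textsf{T}}\boldsymbol{\Sigma}_c^{-1}\boldsymbol{X}_c^{\star}| = |\boldsymbol{X}_c^{\star\textsf{T}}\boldsymbol{X}_c^{\star}|\,|\boldsymbol{K}_c\boldsymbol{\Sigma}_c\boldsymbol{K}_c^{\textsf{T}}|$, where $\boldsymbol{K}_c$ is any $(n_c-p)\times n_c$ matrix with orthonormal rows spanning the null space of $\boldsymbol{X}_c^{\star\textsf{T}}$. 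From $\boldsymbol{K}_c\boldsymbol{\Sigma}_c\boldsymbol{K}_c^{\textsf{T}}\succeq\boldsymbol{K}_c\boldsymbol{D}_c\boldsymbol{K}_c^{\textsf{T}}\succeq\lambda_{\min}(\boldsymbol{D}_c)\,\mathbb{I}_{n_c-p}\succ 0$ together with $R_c\ge 0$, the cluster-$c$ integrand is at most $\text{const}\cdot\lambda_{\min}(\boldsymbol{D}_c)^{-(n_c-p)/2}\,|\boldsymbol{X}_c^{\star\textsf{T}}\boldsymbol{X}_c^{\star}|^{-1/2}$, a bound free of $\boldsymbol{G}_{\boldsymbol{\varphi},c}$ (likewise globally with $\widetilde{\boldsymbol{X}}(\rho)$ and $\boldsymbol{D}=\text{blkdiag}(\boldsymbol{D}_c)$ in the common-$\boldsymbol{\beta}$ models). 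Thus each $\boldsymbol{G}_{\boldsymbol{\varphi}}$-integral contributes only $\int\pi(\boldsymbol{G}_{\boldsymbol{\varphi},c})=1$, and the claim reduces to $\int_0^1\prod_{c=1}^{C}|\boldsymbol{X}_c^{\star\textsf{T}}\boldsymbol{X}_c^{\star}|^{-1/2}\,\pi(\rho)\,d\rho<\infty$ (resp. $\int_0^1|\widetilde{\boldsymbol{X}}(\rho)^{\textsf{T}}\widetilde{\boldsymbol{X}}(\rho)|^{-1/2}\pi(\rho)\,d\rho<\infty$).

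The main obstacle is the behaviour of $|\boldsymbol{X}_c^{\star\textsf{T}}\boldsymbol{X}_c^{\star}|^{-1/2}=|\boldsymbol{X}_c^{\textsf{T}}(A_{\rho,c}^{-1})^{2}\boldsymbol{X}_c|^{-1/2}$ as $\rho\downarrow 0$. Using the closed form $A_{\rho,c}^{-1}=\gamma_c\,\mathbb{I}_{n_c}+\tfrac{1-\gamma_c}{n_c}\boldsymbol{1}_{n_c}\boldsymbol{1}_{n_c}^{\textsf{T}}$, $\gamma_c=\rho/((1-\rho)n_c+\rho)$, from Theorem \ref{expectation}, $A_{\rho,c}^{-1}$ fixes $\boldsymbol{1}_{n_c}$ and contracts $\boldsymbol{1}_{n_c}^{\perp}$ by the factor $\gamma_c\to 0$; in particular $A_{\rho,c}^{-1}\succeq\gamma_c\mathbb{I}_{n_c}$, so $\boldsymbol{X}_c^{\star\textsf{T}}\boldsymbol{X}_c^{\star}\succeq\gamma_c^{2}\boldsymbol{X}_c^{\textsf{T}}\boldsymbol{X}_c$ and $|\boldsymbol{X}_c^{\star\textsf{T}}\boldsymbol{X}_c^{\star}|^{-1/2}\le\gamma_c^{-p}|\boldsymbol{X}_c^{\textsf{T}}\boldsymbol{X}_c|^{-1/2}$ (a sharper exponent $\gamma_c^{-(p-1)}$, and in the common-$\boldsymbol{\beta}$ models even a bounded limit when $C\ge p$ and the within-cluster mean covariates are full rank, since the intercept-type direction $\boldsymbol{1}_{n_c}$ is left uncontracted). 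As $\gamma_c\asymp\rho$ near $0$ and $\rho\mapsto|\boldsymbol{X}_c^{\star\textsf{T}}\boldsymbol{X}_c^{\star}|^{-1/2}$ is continuous on $(0,1]$ with at worst this polynomial blow-up at the endpoint, the remaining one-dimensional integral against the proper density $\pi(\rho)$ is finite --- immediately when the support of $\pi(\rho)$ is bounded away from $0$ (or discrete, as for a grid prior), and otherwise once $\pi(\rho)=O(\rho^{\kappa})$ near $0$ with $\kappa$ exceeding the blow-up exponent. This endpoint integrability at $\rho\downarrow 0$ is the only delicate point, and the place where properness of $\pi(\rho)$ is genuinely used; everything else is routine Gaussian bookkeeping anchored on $\boldsymbol{\Sigma}_c\succeq\boldsymbol{D}_c$.
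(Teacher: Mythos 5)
Your route is genuinely different from the paper's. The paper's proof never marginalizes the random effects or tracks the transformed design $A_{\rho,c}^{-1}\boldsymbol{X}_c$: it completes the square in $\boldsymbol{\delta}_c$, bounds the residual prior exponential in $\bdt_c$ by $1$, integrates the Gaussian likelihood over $\bdt_c$ to pick up $|A_{\rho,c}\boldsymbol{D}_cA_{\rho,c}|^{1/2}$, and evaluates $|A_{\rho,c}|=((1-\rho)n_c/\rho+1)^{n_c-1}$ by the matrix determinant lemma. You instead marginalize $\bu_c$ and then $\boldsymbol{\delta}$ exactly and control the surviving determinants through the null-space identity and $\boldsymbol{\Sigma}_c\succeq\boldsymbol{D}_c$. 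What this buys is a far tighter $\rho$-dependence: your bound degenerates like $\gamma_c^{-(p-1)}\asymp\rho^{-(p-1)}$ as $\rho\downarrow 0$ (and stays bounded for the common-$\boldsymbol{\beta}$ models under a rank condition on the cluster-mean covariates), whereas the paper's bound carries $\rho^{-(n_c-1)}$ per cluster. Your formulation also sidesteps the inverse $(\boldsymbol{Z}_c\boldsymbol{G}_{\boldsymbol{\varphi},c}\boldsymbol{Z}_c^{\textsf{T}})^{-1}$ that the paper's completion-of-squares requires but which need not exist when the induced random-effect covariance is singular.

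The endpoint caveat you flag is the one genuine gap, and you are right to flag it: as written, your argument does not deliver the theorem verbatim, because properness of $\pi(\rho)$ alone does not make $\int_0^1\rho^{-(p-1)}\pi(\rho)\,d\rho$ finite. But note that the paper's own proof has exactly the same hole, and worse: its concluding sentence ("since $\rho\in(0,1]$ and $D_{j,c}>0$, the posterior is proper if the priors are proper") does not follow from its displayed bound, since $\int_0^1\rho^{-(n_c-1)}\pi(\rho)\,d\rho$ diverges for the $\text{Beta}(1.1,1.1)$ prior the paper actually uses whenever $n_c\ge 3$, whereas your $\rho^{-(p-1)}$ bound is integrable against that prior for the $p=2$ design of the case study. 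So your sharper bookkeeping is not merely cosmetic; it is what is needed to make the claim true under the priors employed. To close the argument you should promote your caveat to a hypothesis (e.g.\ $\pi(\rho)=O(\rho^{\kappa})$ near $0$ with $\kappa>p-2$, or support bounded away from $0$), or restrict to the common-$\boldsymbol{\beta}$ case with the rank condition under which no blow-up occurs.
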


\begin{proof} \footnotesize
The proof is in Appendix \ref{app:prop3}.
\end{proof}

Consequently, we consider proper prior distributions for the variance parameters, $\boldsymbol{\varphi}$ in the covariance matrix, $\boldsymbol{G}_{\boldsymbol{\varphi},c}$ and for the cluster regularization penalty $\rho$. Details on the prior specification and resulting posterior distribution, $p(\boldsymbol{\theta}, \boldsymbol{\delta},\boldsymbol{G}_{\boldsymbol{\varphi}}, \rho \mid \boldsymbol{y}, \boldsymbol{X}, \boldsymbol{Z}, \boldsymbol{D})$, are provided in Supplementary Section \ref{prior_spec}. We propose Algorithms \ref{alg:MCMC1} and \ref{alg:MCMC2} in Section \ref{MCM_algh1} of the supplement to generate posterior samples of the model parameters $\boldsymbol{\kappa}=(\boldsymbol{\theta}, \boldsymbol{\delta},\boldsymbol{G}_{\boldsymbol{\varphi}}, \rho )$ for the FH-SC models. 

\section{Estimation and Uncertainty Quantification}
\label{sec:all_estimation_methods}

In this section, we refer to the FH-SC model in a general sense, as the methods discussed also apply to the FH and FH-C models. To obtain small area estimates of PHIA, we can compute the following ergodic average using the posterior samples of $ \theta_{j,c}^{\textsf{FH-SC}(l)}$ for the $j$-th small area in cluster $c$, $c=1,...,C$ (provided by Algorithm \ref{alg:MCMC1}):
\begin{equation}\small
\label{ergodic_mean} 
\bar{\hat{\theta}}_{j,c}^{\textsf{FH-SC}} =\dfrac{1}{(L-T)} \sum_{l=T+1}^{L} \theta_{j,c}^{\textsf{FH-SC}(l)}, 
\end{equation}
where $\bar{\hat{\boldsymbol{\theta}}}^{\textsf{FH-SC}}=(\bar{\hat{\boldsymbol{\theta}}}^{\textsf{FH-SC}}_1,...,\bar{\hat{\boldsymbol{\theta}}}^{\textsf{FH-SC}}_C)^{T}$ denotes the vector of small area estimates of PHIA with $\bar{\hat{\boldsymbol{\theta}}}^{\textsf{FH-SC}}_c=(\bar{\hat{\theta}}_{1,c}^{\textsf{FH-SC}},...,\bar{\hat{\theta}}_{n_c,c}^{\textsf{FH-SC}})^{T}$, and $L$ and $T$ are the total number and the number of discarded MCMC samples, respectively.
However, small area estimators with lower variance can be obtained using a Rao-Blackwell (RB) argument under the conditional posterior mean \citep{Rao1945, blackwell1947conditional,rao2015small}. Note that for the posterior variance it holds that,
\begin{align}\small
\label{var}
V_{\theta_{j,c}^{\textsf{FH-SC}}}(\theta_{j,c}^{\textsf{FH-SC}} \mid  \boldsymbol{X}_{c}, \boldsymbol{Z}_{c}, \boldsymbol{y}_{c}) \geq  V_{\theta_{j,c}^{\textsf{FH-SC}}}(E_{\vartheta_{-\theta_{j,c}}^{\textsf{FH-SC}}}(\theta_{j,c}^{\textsf{FH-SC}}\mid   \boldsymbol{X}_{c}, \boldsymbol{Z}_{c},  \boldsymbol{y}_{c},  \vartheta_{-\theta_{j,c}}^{\textsf{FH-SC}})),
\end{align}
 where $\vartheta_{-\theta_{j,c}}^{\textsf{FH-SC}}$ is the set of parameters under the FH-SC model excluding the parameter $\theta_{j,c}$.
Importantly, as noted by \cite{rao2015small}, RB estimators require closed form expressions of the conditional posterior mean for computation. 
Since such expressions are available for all existing and proposed models in Table \ref{tab:models}, we can compute small area RB estimators of PHIA under the FH, FH-S, and FH-SC models using Definition \ref{def_RB}.
 \begin{definition}[\footnotesize Rao-Blackwell (RB) estimator]  \footnotesize
\label{def_RB}
The Rao-Blackwell (RB) estimator of the small area parameter vector under FH-SC model,  $\boldsymbol{\theta}^{\textsf{FH-SC}}$, 
 is given by
 \begin{align}
   \label{eq:Rao_Blackwell}
    \hat{\boldsymbol{\theta}}^{\textsf{FH-SC}}&= E_{\vartheta_{-\boldsymbol{\theta}}^{\textsf{FH-SC}}}(E(\boldsymbol{\theta}^{\textsf{FH-SC}}\mid  \boldsymbol{X},
     \boldsymbol{Z},  \boldsymbol{y}, \vartheta_{-\boldsymbol{\theta}}^{\textsf{FH-SC}})
) \\ & \approx \frac{1}{L-T} \sum_{\ell=T+1}^{L} E(\boldsymbol{\theta}^{\textsf{FH-SC}(l)} \mid \boldsymbol{X},
     \boldsymbol{Z},  \boldsymbol{y},\vartheta_{-\boldsymbol{\theta^{(l)}}}^{\textsf{FH-SC}(l)}), \notag
\end{align}
where $E(\boldsymbol{\theta}^{\textsf{FH-SC}(l)} \mid \boldsymbol{X},
     \boldsymbol{Z},  \boldsymbol{y},\vartheta_{-\boldsymbol{\theta^{(l)}}}^{\textsf{FH-SC}(l)})$ is the conditional expectation  of the small area parameter vector for the  $l$-th draw computed with the output of Algorithm \ref{alg:MCMC1} for $c=1,...,C$,  and $\vartheta_{-\boldsymbol{\theta}}^{\textsf{FH-SC}(l)}=(\boldsymbol{\delta}^{(l-1)},\boldsymbol{G}_{\boldsymbol{\varphi}^{(l-1)}}, \rho^{(l)})$ is the set of parameters for the $l$-th draw  excluding the small area parameter vector $\boldsymbol{\theta}^{(l)}=(\boldsymbol{\theta}_{1}^{(l)},...,\boldsymbol{\theta}_{C}^{(l)})^\textsf{T}$.
$L$  and $T$ are the total number and the number of discarded MCMC samples, respectively.
\end{definition}

\subsection{Benchmarking estimation using posterior projections}
\label{sec:pp_full}

The process of adjusting estimates while imposing constraints is referred to as benchmarking. Specifically, let $k$ be the number of linear equality constraints $\sum_{i=1}^m w_{j^{'}i} \hat{\theta}_i = p_i$ for $1 \leq j^{'} \leq k$. In matrix notation this can be written as $\boldsymbol{W}\hat{\boldsymbol{\theta}} = \boldsymbol{p}$, where $w_{j^{'}i}$ is the $(j^{'}, i)$-th element of $\boldsymbol{W} \in \mathbb{R}^{k \times m}$. Without loss of generality, we assume $k \leq m$ and the values of $\boldsymbol{W}$ and $\boldsymbol{p}$ are assumed to be obtained from an external data source (e.g., administrative data, survey, population census). However, it is possible for $\boldsymbol{W}$ and $\boldsymbol{p}$ to originate from the same survey, as is described in \cite{bell_2013}. In Colombia, official reports of PHIA at the national level are produced by the NSO \citep{internet}. Specifically, the PHIA was estimated to be 0.418 at the national level in 2015, and we will use this value as our external benchmark.

To enable benchmarking for the FH-SC model, we introduce Proposition \ref{prop:benchmarking}, where external linear benchmarking constraints of the form  $ \boldsymbol{W} \bdt^{\M} = \boldsymbol{p}$ are incorporated into the objective function (\ref{eq:LapRLS2}) of Proposition  \ref{prop:LapRLS}. In Proposition \ref{prop:benchmarking}, we derive the solution to the modified objective function (\ref{eq:LapRLS3}), which defines the small area benchmarked parameter vector under model $\M$. Specifically, since the Laplacian matrix $\textsf{L}_{SC}$ is symmetric and  positive semi-definite \citep{von2007tutorial}, and  $\boldsymbol{W}$ has full row rank $k \leq m$, we can construct a linear system using the Karush–Kuhn–Tucker (KKT)  conditions  \citep{karush1939minima, kuhn1951nonlinear} to solve for $\bdt^{\textsf{$\M$-SC-B}}$.

 \begin{prop}
\footnotesize
 \label{prop:benchmarking} 
Consider the convex differentiable objective function  given by,
\begin{align}
\label{eq:LapRLS3}
\bdt^{\M\textsf{-SC-B}} &=    \underset{\bdt^{\M}}{\text{minimize}} \quad    \rho(\bdt^{\M} -  \bdt)^\textsf{T}    (\bdt^{\M} -  \bdt) +  (1-\rho)(\bdt^{\M})^\textsf{T}L_{SC}(\bdt^{\M}), \\ &\hspace{2cm} \text{subject to} \quad \boldsymbol{W}\bdt^{\M} = \boldsymbol{p}. \notag
\end{align}
where $\boldsymbol{W} \in \mathbb{R}^{k \times m}$  has full row rank $k \leq m$ and  $\rho$,  $\bdt^{\M}$, $\bdt$ and $\textsf{L}_{SC}$ are defined as in Proposition \ref{prop:LapRLS}.
Under the Karush–Kuhn–Tucker (KKT) conditions the objective function (\ref{eq:LapRLS3}) leads to the following solution:
\begin{align}
\label{benchmark_PP_solution}
\bdt^{\M\textsf{-SC-B}}  &= \boldsymbol{\theta}^{\M\textsf{-SC}}  + \boldsymbol{A}_{\rho}^{-1}  \boldsymbol{W}^T ( \boldsymbol{W} \boldsymbol{A}_{\rho}^{-1}  \boldsymbol{W}^T)^{-1} (\boldsymbol{p} -  \boldsymbol{W} \boldsymbol{\theta}^{\M\textsf{-SC}} ).   
\end{align}

\end{prop}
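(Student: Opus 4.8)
The plan is to treat (\ref{eq:LapRLS3}) as an equality-constrained convex quadratic program and solve the associated Karush--Kuhn--Tucker (KKT) system, making essential use of the fact, established in Proposition \ref{prop:LapRLS}, that $\boldsymbol{A}_{\rho} = \mathbb{I}_m + ((1-\rho)/\rho)\textsf{L}_{SC}$ is symmetric and positive definite. First I would record that the objective in (\ref{eq:LapRLS3}) is strictly convex: its Hessian equals $2\rho\,\mathbb{I}_m + 2(1-\rho)\textsf{L}_{SC} = 2\rho\,\boldsymbol{A}_{\rho}$, which is positive definite because $\rho \in (0,1]$ and $\textsf{L}_{SC}$ is positive semidefinite \citep{von2007tutorial}; moreover, since $\boldsymbol{W}$ has full row rank $k \le m$, the feasible set $\{\bdt^{\M} : \boldsymbol{W}\bdt^{\M} = \boldsymbol{p}\}$ is a nonempty affine subspace. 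Hence the problem has a unique minimizer, and the KKT conditions (stationarity together with primal feasibility) are both necessary and sufficient.

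Next I would form the Lagrangian $\mathcal{L}(\bdt^{\M}, \boldsymbol{\lambda}) = \rho(\bdt^{\M} - \bdt)^\textsf{T}(\bdt^{\M} - \bdt) + (1-\rho)(\bdt^{\M})^\textsf{T}\textsf{L}_{SC}\bdt^{\M} + \boldsymbol{\lambda}^\textsf{T}(\boldsymbol{W}\bdt^{\M} - \boldsymbol{p})$ with multiplier $\boldsymbol{\lambda} \in \mathbb{R}^k$, and set $\nabla_{\bdt^{\M}}\mathcal{L} = \boldsymbol{0}$. This gives $2[\rho\,\mathbb{I}_m + (1-\rho)\textsf{L}_{SC}]\,\bdt^{\M} = 2\rho\,\bdt - \boldsymbol{W}^\textsf{T}\boldsymbol{\lambda}$. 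The key algebraic step is to recognize the identity $\rho\,\mathbb{I}_m + (1-\rho)\textsf{L}_{SC} = \rho\,\boldsymbol{A}_{\rho}$, so that multiplying through by $(2\rho)^{-1}\boldsymbol{A}_{\rho}^{-1}$ (legitimate by Proposition \ref{prop:LapRLS}) and invoking $\boldsymbol{\theta}^{\M\textsf{-SC}} = \boldsymbol{A}_{\rho}^{-1}\bdt$ from (\ref{eq:LapRLS_2}) yields $\bdt^{\M} = \boldsymbol{\theta}^{\M\textsf{-SC}} - (2\rho)^{-1}\boldsymbol{A}_{\rho}^{-1}\boldsymbol{W}^\textsf{T}\boldsymbol{\lambda}$.

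I would then enforce primal feasibility. Left-multiplying the last display by $\boldsymbol{W}$ and using $\boldsymbol{W}\bdt^{\M} = \boldsymbol{p}$ gives $(2\rho)^{-1}(\boldsymbol{W}\boldsymbol{A}_{\rho}^{-1}\boldsymbol{W}^\textsf{T})\boldsymbol{\lambda} = \boldsymbol{W}\boldsymbol{\theta}^{\M\textsf{-SC}} - \boldsymbol{p}$. Here the $k\times k$ matrix $\boldsymbol{W}\boldsymbol{A}_{\rho}^{-1}\boldsymbol{W}^\textsf{T}$ is invertible, because $\boldsymbol{A}_{\rho}^{-1}$ is symmetric positive definite (the inverse of an SPD matrix) and $\boldsymbol{W}$ has full row rank, so $\boldsymbol{v}^\textsf{T}\boldsymbol{W}\boldsymbol{A}_{\rho}^{-1}\boldsymbol{W}^\textsf{T}\boldsymbol{v} = (\boldsymbol{W}^\textsf{T}\boldsymbol{v})^\textsf{T}\boldsymbol{A}_{\rho}^{-1}(\boldsymbol{W}^\textsf{T}\boldsymbol{v}) > 0$ for every $\boldsymbol{v} \neq \boldsymbol{0}$. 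Solving gives $\boldsymbol{\lambda} = 2\rho(\boldsymbol{W}\boldsymbol{A}_{\rho}^{-1}\boldsymbol{W}^\textsf{T})^{-1}(\boldsymbol{W}\boldsymbol{\theta}^{\M\textsf{-SC}} - \boldsymbol{p})$, and substituting back into $\bdt^{\M} = \boldsymbol{\theta}^{\M\textsf{-SC}} - (2\rho)^{-1}\boldsymbol{A}_{\rho}^{-1}\boldsymbol{W}^\textsf{T}\boldsymbol{\lambda}$ the factors $2\rho$ cancel and a sign flip on $(\boldsymbol{W}\boldsymbol{\theta}^{\M\textsf{-SC}} - \boldsymbol{p})$ produces exactly (\ref{benchmark_PP_solution}).

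Since this is a standard equality-constrained quadratic program, I do not expect a genuine conceptual obstacle; the only real care is bookkeeping --- carrying the factor $2\rho$ consistently, spotting the identity $\rho\,\mathbb{I}_m + (1-\rho)\textsf{L}_{SC} = \rho\,\boldsymbol{A}_{\rho}$ so that the invertibility guaranteed by Proposition \ref{prop:LapRLS} can be invoked, and justifying invertibility of $\boldsymbol{W}\boldsymbol{A}_{\rho}^{-1}\boldsymbol{W}^\textsf{T}$ from the full-row-rank hypothesis on $\boldsymbol{W}$. The convexity observation at the start is what upgrades the stationary point found via the KKT conditions to the unique global minimizer.
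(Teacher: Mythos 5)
Your proof is correct and follows essentially the same route as the paper: both treat (\ref{eq:LapRLS3}) as an equality-constrained convex quadratic program, verify the KKT conditions are applicable via positive definiteness of $\rho\,\mathbb{I}_m + (1-\rho)\textsf{L}_{SC} = \rho\boldsymbol{A}_{\rho}$ and the full row rank of $\boldsymbol{W}$, and solve the resulting KKT system. The only cosmetic difference is that you eliminate the Lagrange multiplier by direct substitution, whereas the paper writes the KKT system as a $2\times 2$ block linear system and applies the block-matrix inverse formula; your bookkeeping of the $2\rho$ factors is in fact cleaner than the appendix's.
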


\begin{proof}\footnotesize
The proof is in Appendix \ref{app:prop4}.
\end{proof}

In practice, benchmarked estimates of PHIA are useful for ensuring comparability of PHIA estimates at higher levels of aggregation. These small area benchmarked estimates are obtained by replacing the small area parameter with its ergodic average estimate in the benchmarking equation. 
Specifically, we substitute the estimator based on the ergodic average, $\bar{\hat{\boldsymbol{\theta}}}^{\textsf{FH-SC}}$, from (\ref{ergodic_mean}) into the benchmarked solution in (\ref{benchmark_PP_solution}), as follows
\begin{align} \small
\label{benchmark_PP_est}
\bar{\hat{\bdt}}^{\textsf{FH-SC-B}}  = \bar{\hat{\boldsymbol{\theta}}}^{\textsf{FH-SC}}  + \boldsymbol{A}_{\hat{\rho}}^{-1}  \boldsymbol{W}^T ( \boldsymbol{W} \boldsymbol{A}_{\hat{\rho}}^{-1}  \boldsymbol{W}^T)^{-1} (\boldsymbol{p} - \bar{\hat{\boldsymbol{\theta}}}^{\textsf{FH-SC}}),   
\end{align}
where  $\bar{\hat{\bdt}}^{\textsf{FH-SC-B}} = (\bar{\hat{\bdt}}^{\textsf{FH-SC-B}}_1,...,
\bar{\hat{\bdt}}^{\textsf{FH-SC-B}}_C)^{T}$ is the vector benchmarked estimates with  $\bar{\hat{\bdt}}^{\textsf{FH-SC-B}}_c=(\bar{\hat{\theta}}^{\textsf{FH-SC-B}}_{1,c},...,\bar{\hat{\theta}}^{\textsf{FH-SC-B}}_{n_c,c})^{T}$, and $\hat{\rho}$ is a suitable estimator of $\rho$. However, credible intervals, quantiles or posterior probabilities of benchmarked estimates of PHIA represent a more rich source of information, particularly for policy making in NSOs. Crucially, to compute RB benchmarked estimates of PHIA and their uncertainty using Definition \ref{def_RB}, we require posterior samples of the posterior benchmarked distribution under the FH, FH-C and FH-SC models. To achieve this, we leverage the theory of posterior projections \citep{dunson2003bayesian2, patra2024constrained, patra2019constrained}  to construct a posterior distribution tailored for benchmarking estimation.

The theoretical framework of posterior projections was originally proposed by \citep{dunson2003bayesian2}, extended to arbitrary problems by \cite{patra2024constrained}  and adapted to the context of SAE area-level model in \cite{patra2019constrained}. The theory of projecting samples from the MCMC output into a feasible set to induce a posterior distribution whose support respects certain constraints is useful for obtaining a full posterior of the benchmarked estimator. Following \citep{patra2019constrained}, to derive a full posterior for benchmarking estimation under model $\M$, we need to project the conditional posterior MCMC samples, $\boldsymbol{\theta}^{\M\textsf{-SC}(l)}  \in \Theta^{\textsf{$\M$}}$ for $l=1,...,L$, onto a constrained parameter space $\widetilde{\Theta}^{\textsf{$\M$}} $ where $\Theta^{\textsf{$\M$}} \subset \widetilde{\Theta}^{\textsf{$\M$}}$, through a minimal distance mapping. 

Theorem \ref{thm:PP_problem} in the supplement provides three main results. Part (i) shows that the objective function (\ref{eq:LapRLS3}) in Proposition \ref{prop:benchmarking} leads to a projection problem. Part (ii) provides the following solution for the projection problem:
\begin{align} \small
\label{benchmark_PP}
 \bdt^{\textsf{FH-SC-B(l)}}   &= \bdt^{\textsf{FH-SC(l)}}  + \boldsymbol{A}_{\rho^{(l)}}^{-1} \boldsymbol{W}^T (\boldsymbol{W} \boldsymbol{A}_{\rho^{(l)}}^{-1} \boldsymbol{W}^T)^{-1} (\boldsymbol{p} - \boldsymbol{W}  \bdt^{\textsf{FH-SC(l)}} ),   
\end{align}
where $l=1,...,L$  are posterior samples under the FH-SC model. Notably, this result ensures the availability of posterior samples from the benchmarked posterior distribution. Finally, Part (iii) provides the conditional expectation of the small area benchmarked parameter vector, $\bdt^{\textsf{FH-SC-B}}$, in closed form:
\begin{align} \small
 \label{benchmark}
    E(\bdt^{\textsf{FH-SC-B(l)}}  \mid \boldsymbol{X},
     \boldsymbol{Z},  \boldsymbol{y},\vartheta_{-\boldsymbol{\theta^{(l)}}}^{\textsf{FH-SC}(l)})&=  E(\bdt^{\textsf{FH-SC(l)}}  \mid \boldsymbol{X},
     \boldsymbol{Z},  \boldsymbol{y},\vartheta_{-\boldsymbol{\theta^{(l)}}}^{\textsf{FH-SC}(l)}) \\  & \hspace{-2.5cm} + \boldsymbol{A}_{\rho^{(l)}}^{-1} \boldsymbol{W}^T (\boldsymbol{W} \boldsymbol{A}_{\rho^{(l)}}^{-1}\boldsymbol{W}^T)^{-1}(\boldsymbol{p} - W E(\bdt^{\textsf{FH-SC(l)}}  \mid \boldsymbol{X},
     \boldsymbol{Z},  \boldsymbol{y},\vartheta_{-\boldsymbol{\theta^{(l)}}}^{\textsf{FH-SC}(l)}) ), \notag
\end{align} 
where  $E(\bdt^{\textsf{FH-SC(l)}}  \mid \boldsymbol{X},
     \boldsymbol{Z},  \boldsymbol{y},\vartheta_{-\boldsymbol{\theta^{(l)}}}^{\textsf{FH-SC}(l)})$ is the conditional posterior expectation for the $l$-th posterior sample given in equation (\ref{smooth1}) of Theorem \ref{expectation}, and $\vartheta_{-\boldsymbol{\theta}}^{\textsf{FH-SC}(l)}=(\boldsymbol{\delta}^{(l-1)},\boldsymbol{G}_{\boldsymbol{\varphi}^{(l-1)}}, \rho^{(l)})$. 
To clarify our contribution, the objective function in \citep{patra2019constrained} differs from the objective function (\ref{eq:LapRLS3}) in our Proposition \ref{prop:benchmarking}. Consequently, part (i) of Theorem \ref{thm:PP_problem} is distinct but complementary to Lemma 2 in \citep{patra2019constrained}. However, to the best of our knowledge, parts (ii) and (iii) of Theorem \ref{thm:PP_problem} introduce completely new results. 

An estimator of the benchmarked parameters under the FH-SC model can be computed using the 
posterior projected samples in (\ref{benchmark_PP}), as follows:
\begin{equation} \small
\label{eq:Erg_bench_estimator}
\bar{\hat{\bdt}}^{\textsf{FH-SC-B}} =\dfrac{1}{L-T}\sum_{l=T+1}^{L}  \bdt^{\textsf{FH-SC-B(l)}}.
\end{equation}
However, instead of using the ergodic average of the MCMC samples in (\ref{eq:Erg_bench_estimator}), we propose the use of RB benchmarked estimators to produce small area benchmarked estimates of PHIA. To this end, in Definition \ref{def_RB_becnhmarked}, we use the closed form expression of the conditional expectation in (\ref{benchmark}).
\begin{definition}[\footnotesize Rao-Blackwell (RB) benchmarked estimator]  \footnotesize
\label{def_RB_becnhmarked}
The Rao-Blackwell (RB) benchmarked estimator of the small area benchmarked parameter vector under FH-SC model,  $\boldsymbol{\theta}^{\textsf{FH-SC-B}}$,  is given by
 \begin{align}
   \label{eq:Rao_Blackwell_benchmarked}
    \hat{\boldsymbol{\theta}}^{\textsf{FH-SC-B}}&= E_{\vartheta_{-\boldsymbol{\theta}}^{\textsf{FH-SC}}}(E(\boldsymbol{\theta}^{\textsf{FH-SC-B}}\mid  \boldsymbol{X},
     \boldsymbol{Z},  \boldsymbol{y}, \vartheta_{-\boldsymbol{\theta}}^{\textsf{FH-SC}})      
) \\ & 
\approx \frac{1}{L-T} \sum_{\ell=T+1}^{L} E(\bdt^{\textsf{FH-SC-B(l)}}  \mid \boldsymbol{X},
     \boldsymbol{Z},  \boldsymbol{y},\vartheta_{-\boldsymbol{\theta^{(l)}}}^{\textsf{FH-SC}(l)}), \notag
\end{align}
where $E(\bdt^{\textsf{FH-SC-B(l)}}  \mid \boldsymbol{X}, \boldsymbol{Z},  \boldsymbol{y},\vartheta_{-\boldsymbol{\theta^{(l)}}}^{\textsf{FH-SC}(l)})$ is the conditional expectation of the small area benchmarked parameter vector defined in (\ref{benchmark}), and $\vartheta_{-\boldsymbol{\theta}}^{\textsf{FH-SC}(l)}$, $L$ and $T$ are as defined in Definition \ref{def_RB}.
\end{definition}

\subsection{Uncertainty quantification for benchmarked estimators}
\label{sec:PMSE}

Benchmarked estimates of PHIA are crucial for production of official statistics at different levels of aggregation. Unquestionably, quantification of the  uncertainty associated with these estimates is imperative for sensible policy making. Significant efforts have been made in the SAE literature to measure the uncertainty of benchmarked estimators within a Bayesian framework. For instance,
\cite{erciulescu2018benchmarking} considers the variance of the benchmarked posterior samples to measure the uncertainty of the benchmarked estimator. Meanwhile, \cite{you2002benchmarking} proposes the Posterior Mean Square
Error (PMSE) as an approximation of the MSE for benchmarked estimators.

In the FH-SC model, for cluster $c$ and $j = 1, ..., n_{c}$, the PMSE of the benchmarked estimator proposed by \cite{you2002benchmarking} can be written as follows: 
\begin{align}
\label{eq:PMSE}
\text{PMSE}(\bar{\hat{\theta}}^{\textsf{FH-SC-B}}_{j,c} \mid \boldsymbol{X}_c,
     \boldsymbol{Z}_c, \boldsymbol{y}_{c}) &= E_{\theta_{j,c}^{\textsf{FH-SC}}}((\bar{\hat{\theta}}^{\textsf{FH-SC-B}}_{j,c}- \theta_{j,c}^{\textsf{FH-SC}} )^{2}\mid \boldsymbol{X}_c,
     \boldsymbol{Z}_c, \boldsymbol{y}_{c})\\
&= (\bar{\hat{\theta}}^{\textsf{FH-SC-B}}_{j,c} - \bar{\hat{\theta}}_{j,c}^{\textsf{FH-SC}})^2 + V_{\theta_{j,c}^{\textsf{FH-SC}}}(\theta_{j,c}^{\textsf{FH-SC}} \mid \boldsymbol{X}_c,
     \boldsymbol{Z}_c, \boldsymbol{y}_{c}), \notag
\end{align}
where  
 $\bar{\hat{\theta}}_{j,c}^{\textsf{FH-SC}}$ and  $\bar{\hat{\theta}}^{\textsf{FH-SC-B}}_{j,c}$ are the regular 
and benchmarked small area estimators of $\theta_{j,c}^{\textsf{FH-SC-B}}$ in (\ref{ergodic_mean}) and (\ref{benchmark_PP_est}), 
 and $V_{\theta_{j,c}^{\textsf{FH-SC}}}(\theta_{j,c}^{\textsf{FH-SC}} \mid \boldsymbol{X}_c,
     \boldsymbol{Z}_c, \boldsymbol{y}_{c})$ is the PMSE of $\hat{\theta}_{j,c}^{\textsf{FH-SC}}$. Importantly, to find the PMSE in (\ref{eq:PMSE}), it is assumed that $\bar{\hat{\theta}}_{j,c}^{\textsf{FH-SC}}=E(\theta_{j,c}^{\M} \mid  \boldsymbol{X}_c,
     \boldsymbol{Z}_c, \boldsymbol{y}_{c})$. The proof for deriving the PMSE is provided in the Appendix of \cite{you2002benchmarking}. Additionally, the PMSE as an approximation of the MSE for benchmarked estimators is discussed in \cite{bell_2013}.  As noted by \cite{you2002benchmarking},  the PMSE of the benchmarked estimator increases the posterior mean square error relative to the conditional posterior variance $V_{\theta_{j,c}^{\textsf{FH-SC}}}(\theta_{j,c}^{\textsf{FH-SC}} \mid \boldsymbol{X}_c,
     \boldsymbol{Z}_c, \boldsymbol{y}_{c})$.  As previously discussed, according to equation (\ref{var}), the RB estimator may produce small area estimates with reduced conditional posterior variances compared to those obtained using the ergodic average estimator. This motivates the use of the RB benchmarked estimator in Definition \ref{def_RB_becnhmarked} for estimating PHIA at the municipality level.

 \subsubsection{Conditional Posterior Mean Squared Error for benchmarked estimators }
\label{sec:CPMSE}

In this section, we introduce a new measure for uncertainty quantification of small area benchmarked estimates, called
the \textit{Conditional Posterior Mean Square Error (CPMSE)}. Using a RB argument \citep{Rao1945, blackwell1947conditional} within the PMSE formulation, we define the CPMSE in Proposition \ref{prop:def3}. This measure can be readily applied to other SAE estimators obtained under a Bayesian framework and can be computed using the output of Supplementary Algorithms \ref{alg:MCMC1} and  \ref{alg:MCMC2}.   

\clearpage

 \begin{prop}
  \footnotesize
\label{prop:def3}
The Conditional Posterior Mean Square Error (CPMSE) for the RB benchmarked estimator $\hat{\theta}_{j,c}^{\textsf{FH-SC-B}}$  for $c = 1, ..., C$ and $j = 1, ..., n_{c}$ 
is given by
\begin{align}
   \label{CPMSE_projection_estimator}
   \emph{CPMSE}(\hat{\theta}_{j,c}^{\textsf{FH-SC-B}}\mid \boldsymbol{X}_c,
     \boldsymbol{Z}_c, \boldsymbol{y}_{c}) &=  E_{\vartheta^{\textsf{FH-SC}}_{-\theta_{j,c}}}(E_{\theta_{j,c}^{\textsf{FH-SC}}}((\hat{\theta}_{j,c}^{\textsf{FH-SC-B}}- \theta_{j,c}^{\textsf{FH-SC}})^{2}\mid \boldsymbol{X}_c,
     \boldsymbol{Z}_c, \boldsymbol{y}_{c},\vartheta^{\textsf{FH-SC}}_{-\theta_{j,c}}))    \notag \\ \notag
     & \hspace{-3.5cm}=(\hat{\theta}_{j,c}^{\textsf{FH-SC-B}}-\hat{\theta}_{j,c}^{\textsf{FH-SC}})^{2} + \emph{CPMSE}(\hat{\theta}_{j,c}^{\textsf{FH-SC}}\mid \boldsymbol{X}_c,
     \boldsymbol{Z}_c, \boldsymbol{y}_{c}) \\   \notag
    &  \hspace{-3.5cm} \approx (\hat{\theta}_{j,c}^{\textsf{FH-SC-B}}-\hat{\theta}_{j,c}^{\textsf{FH-SC}})^{2}  + \dfrac{1}{L-T}\sum_{l=T+1}^{L-T} 
       \emph{V}(\theta_{j,c}^{\textsf{FH-SC}(l)}\mid \boldsymbol{X}_c,
     \boldsymbol{Z}_c, \boldsymbol{y}_{c},\vartheta^{\textsf{FH-SC}(l)}_{-\theta_{j,c}})
   \\   & \hspace{-0.5cm}   + 
 \dfrac{1}{L-T}\sum_{l=T+1}^{L-T} ( \emph{E}(\theta_{j,c}^{\textsf{FH-SC}(l)}\mid \boldsymbol{X}_c,
     \boldsymbol{Z}_c, \boldsymbol{y}_{c},\vartheta^{\textsf{FH-SC}(l)}_{-\theta_{j,c}})-\hat{\theta}_{j,c}^{\textsf{FH-SC}})^2,  
\end{align}
where $ \emph{E}(\theta_{j,c}^{\textsf{FH-SC}(l)}\mid \boldsymbol{X}_c,
     \boldsymbol{Z}_c, \boldsymbol{y}_{c},\vartheta^{\textsf{FH-SC}(l)}_{-\theta_{j,c}})$ and $ \emph{V}(\theta_{j,c}^{\textsf{FH-SC}(l)}\mid \boldsymbol{X}_c,
     \boldsymbol{Z}_c, \boldsymbol{y}_{c},\vartheta^{\textsf{FH-SC}(l)}_{-\theta_{j,c}})$ are as in  Theorem \ref{expectation} 
     with $\vartheta_{-\boldsymbol{\theta}}^{\textsf{FH-SC}(l)}=(\boldsymbol{\delta}^{(l-1)},\boldsymbol{G}_{\boldsymbol{\varphi}^{(l-1)}}, \rho^{(l)})$ and $L$  and $T$ are as in Definition \ref{def_RB}. 
 \end{prop}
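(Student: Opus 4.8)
The plan is to obtain the decomposition in (\ref{CPMSE_projection_estimator}) by two nested applications of the elementary identity $E\big((a-\theta)^{2}\big) = (a - E\theta)^{2} + V(\theta)$, valid whenever $a$ is a constant, followed by a Monte Carlo approximation of the outermost expectation using the MCMC output. Throughout, both $\hat{\theta}_{j,c}^{\textsf{FH-SC-B}}$ and $\hat{\theta}_{j,c}^{\textsf{FH-SC}}$ are treated as constants, since each is a measurable function of the data $(\boldsymbol{X}_c,\boldsymbol{Z}_c,\boldsymbol{y}_{c})$ only and hence factors out of every posterior expectation taken below.

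First I would record that, by the tower property applied to Definition \ref{def_RB}, the Rao--Blackwell estimator coincides with the full posterior mean, $\hat{\theta}_{j,c}^{\textsf{FH-SC}} = E_{\vartheta^{\textsf{FH-SC}}_{-\theta_{j,c}}}\!\big(E(\theta_{j,c}^{\textsf{FH-SC}}\mid \boldsymbol{X}_c,\boldsymbol{Z}_c,\boldsymbol{y}_{c},\vartheta^{\textsf{FH-SC}}_{-\theta_{j,c}})\big) = E(\theta_{j,c}^{\textsf{FH-SC}}\mid \boldsymbol{X}_c,\boldsymbol{Z}_c,\boldsymbol{y}_{c})$. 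Writing $\hat{\theta}_{j,c}^{\textsf{FH-SC-B}} - \theta_{j,c}^{\textsf{FH-SC}} = \big(\hat{\theta}_{j,c}^{\textsf{FH-SC-B}} - \hat{\theta}_{j,c}^{\textsf{FH-SC}}\big) + \big(\hat{\theta}_{j,c}^{\textsf{FH-SC}} - \theta_{j,c}^{\textsf{FH-SC}}\big)$, squaring, and applying the double expectation $E_{\vartheta^{\textsf{FH-SC}}_{-\theta_{j,c}}}\!\big(E_{\theta_{j,c}^{\textsf{FH-SC}}}(\cdot)\big)$: the first term is constant and passes through both expectations, the cross term vanishes because $E(\theta_{j,c}^{\textsf{FH-SC}}\mid \cdot) = \hat{\theta}_{j,c}^{\textsf{FH-SC}}$, and the remaining term is precisely $\text{CPMSE}(\hat{\theta}_{j,c}^{\textsf{FH-SC}}\mid \cdot)$, defined analogously with $\hat{\theta}_{j,c}^{\textsf{FH-SC}}$ in place of $\hat{\theta}_{j,c}^{\textsf{FH-SC-B}}$. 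This yields the first equality in (\ref{CPMSE_projection_estimator}).

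Next I would expand $\text{CPMSE}(\hat{\theta}_{j,c}^{\textsf{FH-SC}}\mid \cdot) = E_{\vartheta^{\textsf{FH-SC}}_{-\theta_{j,c}}}\!\big(E_{\theta_{j,c}^{\textsf{FH-SC}}}\big((\hat{\theta}_{j,c}^{\textsf{FH-SC}} - \theta_{j,c}^{\textsf{FH-SC}})^{2}\mid \cdot,\vartheta^{\textsf{FH-SC}}_{-\theta_{j,c}}\big)\big)$ and apply the same identity inside the inner conditional expectation, now centring at the conditional posterior mean $E(\theta_{j,c}^{\textsf{FH-SC}}\mid \cdot,\vartheta^{\textsf{FH-SC}}_{-\theta_{j,c}})$. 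The inner expectation of the squared deviation from its own conditional mean is the conditional posterior variance $V(\theta_{j,c}^{\textsf{FH-SC}}\mid \cdot,\vartheta^{\textsf{FH-SC}}_{-\theta_{j,c}})$; the cross term again vanishes; and the squared bias $\big(E(\theta_{j,c}^{\textsf{FH-SC}}\mid \cdot,\vartheta^{\textsf{FH-SC}}_{-\theta_{j,c}}) - \hat{\theta}_{j,c}^{\textsf{FH-SC}}\big)^{2}$ is constant under the inner expectation. Both $E(\theta_{j,c}^{\textsf{FH-SC}}\mid \cdot,\vartheta)$ and $V(\theta_{j,c}^{\textsf{FH-SC}}\mid \cdot,\vartheta)$ are available in closed form from Theorem \ref{expectation}, so the outer expectation $E_{\vartheta^{\textsf{FH-SC}}_{-\theta_{j,c}}}(\cdot)$ can be replaced by the ergodic average over the post-burn-in draws $\vartheta^{\textsf{FH-SC}(l)}_{-\theta_{j,c}} = (\boldsymbol{\delta}^{(l-1)},\boldsymbol{G}_{\boldsymbol{\varphi}^{(l-1)}},\rho^{(l)})$ from Algorithm \ref{alg:MCMC1}, producing the displayed approximation.

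I do not expect a substantive obstacle; the argument is essentially bookkeeping with iterated conditional expectations. The two points that require care are (a) confirming that $\hat{\theta}_{j,c}^{\textsf{FH-SC}}$ is the unconditional posterior mean, so that the first cross term is exactly zero rather than only approximately so, which follows at once from Definition \ref{def_RB} and the tower property, and (b) the Monte Carlo replacement of $E_{\vartheta^{\textsf{FH-SC}}_{-\theta_{j,c}}}(\cdot)$, which is the standard ergodic-average approximation whose validity rests on the convergence of the Markov chain generated by Algorithm \ref{alg:MCMC1} and on the availability of the closed-form conditional moments supplied by Theorem \ref{expectation}.
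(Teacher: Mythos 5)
Your proposal is correct and follows essentially the same route as the paper's proof: add and subtract $\hat{\theta}_{j,c}^{\textsf{FH-SC}}$, expand the square, kill the cross term using the fact that the RB estimator equals the full posterior mean by the tower property, and decompose the remaining term into the conditional posterior variance plus squared bias before replacing the outer expectation by the ergodic average. Your version is, if anything, slightly more explicit than the paper's about why the cross term vanishes exactly.
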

\begin{proof} \footnotesize
The proof is in Appendix \ref{app:prop5}.
\end{proof}

Our proposed CPMSE in Proposition \ref{prop:def3} shares some similarities with frequentist approaches in terms of motivation and development. For instance, the Conditional Mean Squared Error of Prediction (CMSEP) in \citep{booth1998standard} is useful for measuring the variance of predictions in small domains under Generalized Linear Mixed Models, while the Conditional Mean Squared Error (CMSE) in \citep{rivest2000conditional} serves to evaluate the accuracy of small area estimators. Both the CMSEP and CMSE are constructed using conditional expectations within an Unconditional Mean Squared Error (UMSE) framework. However, in CMSEP, the expectation is conditional on the model parameters, whereas in CMSE, the expectation is conditional on the distribution of the data. Similar to the approach used in \cite{booth1998standard} for computing CMSEP, we implement an RB argument \citep{Rao1945, blackwell1947conditional} to derive the CPMSE in Proposition \ref{prop:def3}. Additionally, in computing the CPMSE, we consider the first and second moments of the conditional distribution of $\theta_{j,c}^{\textsf{FH-SC}}$. Therefore, the same principle of incorporating uncertainty through a distribution is also adopted in our proposed CPMSE.

\clearpage

\subsection{Model selection and practical guidance}
\label{sec:model_criteria}

 For model selection and practical implementation, our proposal follows a general three-step process. In the first 
step, small areas are clustered according to the output of Algorithm \ref{alg:SC}, incorporating external covariates. As discussed, to select the number of covariates to be used in Algorithm \ref{alg:SC}, the total within-cluster sums of squares is evaluated for different values of the number of clusters and external covariate combinations.  After defining the clusters, the second step involves producing small area estimates using the existing and proposed models in Table \ref{tab:models}. To this end, Algorithms \ref{alg:MCMC1} and \ref{alg:MCMC2} are used to compute small area RB estimates and benchmarked estimates of PHIA (Definitions \ref{def_RB} and \ref{def_RB_becnhmarked}), along with their associated uncertainty using the estimator of the CPMSE in (\ref{CPMSE_projection_estimator}). Consequently, the last step focuses on evaluating the different models for the purpose of model selection.

Among the different model selection criteria, we consider the Deviance Information Criterion (DIC) \citep{spiegelhalter2002bayesian} recently studied  by \cite{tang2018modeling} in the SAE context and the Expected Predictive Deviance (EPD) proposed in \cite{rao2015small} for model comparison. The DIC relies on samples from the posterior distribution, whereas the EPD utilizes samples from the posterior predictive distribution. Typically, model selection and comparison methods, such as DIC and EPD, are not implemented in SAE when benchmarking is required. However, the posterior projection theory discussed in this work enables the generation of samples from the benchmarked posterior distribution using equation (\ref{benchmark_PP}).  Moreover, posterior predictive benchmarked samples are also available. For each draw $\theta_{j,c}^{\textsf{FH-SC-B}(l)}$ in (\ref{benchmark_PP}), we can generate a draw, $\tilde{y}_{j,c}^{\textsf{FH-SC-B}(l)}$, from the posterior predictive benchmarking distribution. Therefore, we recommend DIC and EPD to evaluate the models in Table \ref{tab:models}. Details for DIC and EPD can be found in Supplementary Section \ref{sub_dic}.

\section{Case Study: Estimating Internet Connectivity in the Municipalities of Colombia}
\label{sec:application}

In this section, we present the estimation results of PHIA using our proposed methodology. According to the 2015 National Quality of Life Survey (NQLS) in Colombia \citep{internet}, the proportion of homes with internet access (PHIA) was estimated at 0.418 at the national level. Benchmarking constraints require that the sum of PHIA estimates at the municipality level aligns with the national estimate provided by the NQLS. More formally, the benchmarking constraints assume that $\boldsymbol{w}^T \hat{\bdt}=0.418$,
where $ \boldsymbol{w}$ are the sample weights. 

First, as discussed in Section \ref{sec:FHSC}, we use the estimators proposed in \cite{hajek1971comment} to compute the direct estimates $y_{i}$ and $D_{i}$, and the GVF function \citep{wolter1985introduction} to smooth the direct variances, $D_{i}$. In the GVF method, we explore two different models and select the most appropriate by performing a sensitivity analysis. Details of the sensitivity analysis are provided in Supplementary Section \ref{sub_applied2}. We consider the existing FH and FH-C models and the proposed FH-SC models in Table \ref{tab:models}. The assumptions on the error and random effects are provided in Supplementary Section \ref{sub_applied3}. The two covariates 
are a vector of ones and the index of illiteracy obtained from the 2014 Census of Agriculture \citep{CNA2014}.  As discussed in Section \ref{sec:FHSC} and illustrated in Figure \ref{fig:maps1}, the matrix $\boldsymbol{A}_{\rho}$ is obtained using Supplementary Algorithm \ref{alg:SC} with the direct estimates of PHIA and the MPI as external covariate. We implement Supplementary Algorithms \ref{alg:MCMC1} and \ref{alg:MCMC2} to obtain posterior samples under the various models. Specifically, for each Algorithm, we simulate two chains with $L=50000$ values, discard $T=10000$ and thin the chains by taking one out of every 4 sampled values. Supplementary Section \ref{sub_applied4} is dedicated to discuss the convergence of  Algorithms \ref{alg:MCMC1} and  \ref{alg:MCMC2}. In addition to this case study, Supplementary Section \ref{sec:simula} presents a simulation study. 

Table \ref{tab:model_criteria} presents the DIC and EPD values for the different models applied to PHIA. We observe that the FH-SC$_{2}$ model exhibits the lowest DIC and EPD values. Therefore, the FH-SC model with cluster-specific variance of the random effects is the most suitable for generating both small area RB and RB benchmarked estimates of PHIA. However, it is important to note that the best model for RB estimates and RB benchmarked estimates is not always the same. Given that benchmarking is a priority in many practical applications, the most appropriate benchmarking model should be selected in such cases.

\begin{table}[t] 
\begin{center}
\footnotesize
\setlength\extrarowheight{1.5mm}
\begin{tabular}{lclccc}
  \hline
Model  &   Benchmarking   & Estimator    & $\text{EPD}^{\text{ASD}}$  & $\text{EPD}^{\text{ADD}}$ & DIC \\ 
  \hline
FH$_{(\boldsymbol{\beta}, \sigma^{2})}$ & No & $\hat{\theta}_{j,c}^{\textsf{FH}}$   & 1.16 & 8.49 & -45415.80 \\ 
FH-C$_{1(\boldsymbol{\beta}, \sigma_c^{2})}$  & No & $\hat{\theta}_{j,c}^{\textsf{FH-C}_{1}}$  & 1.20 & 8.57 & -45709.71 \\ 
FH-C$_{2(\boldsymbol{\beta}_c, \sigma_c^{2}, \nu_{i})}$  & No & $\hat{\theta}_{j,c}^{\textsf{FH-C}_{2}}$  & 1.13 & 8.38 & -46960.10 \\ 
FH-SC$_{1(\boldsymbol{\beta}, \sigma^{2}, \rho)}$ & No & $\hat{\theta}_{j,c}^{\textsf{FH-SC}_{1}}$  & 1.23 & 8.68 &  -56082.42 \\  \rowcolor{gray!20} 
FH-SC$_{2(\boldsymbol{\beta}, \sigma_c^{2}, \rho)}$  & No  & $\hat{\theta}_{j,c}^{\textsf{FH-SC}_{2}}$   & 1.11 & 8.33 & -56285.74 \\ 
FH-SC$_{3(\boldsymbol{\beta}_c, \sigma_c^{2}, \nu_{i}, \rho)}$   &  No & $\hat{\theta}_{j,c}^{\textsf{FH-SC}_{3}}$  & 1.24 & 8.75 & -55607.58 \\ 
FH$_{(\boldsymbol{\beta}, \sigma^{2})}$ & Yes & $\hat{\theta}_{j,c}^{\textsf{FH\textsf{-B}}}$  & 1.17 & 8.54 & -41130.35 \\ 
FH-C$_{1(\boldsymbol{\beta}, \sigma_c^{2})}$  & Yes & $\hat{\theta}_{j,c}^{\textsf{FH-C$_1$-B}}$  & 1.36 & 9.24 & -42322.47 \\ 
FH-C$_{2(\boldsymbol{\beta}_c, \sigma_c^{2}, \nu_{i})}$  & Yes & $\hat{\theta}_{j,c}^{\textsf{FH-C$_2$-B}}$  & 1.14 & 8.44 & -43860.19 \\ 
FH-SC$_{1(\boldsymbol{\beta}, \sigma^{2}, \rho)}$ & Yes & $\hat{\theta}_{j,c}^{\textsf{FH-SC$_1$-B}}$  & 1.37 & 9.29 & -46411.60 \\  \rowcolor{gray!20} 
FH-SC$_{2(\boldsymbol{\beta}, \sigma_c^{2}, \rho)}$  & Yes  & $\hat{\theta}_{j,c}^{\textsf{FH-SC$_2$-B}}$   & 1.13 & 8.39 & -46975.47 \\ 
FH-SC$_{3(\boldsymbol{\beta}_c, \sigma_c^{2}, \nu_{i}, \rho)}$   &  Yes & $\hat{\theta}_{j,c}^{\textsf{FH-SC$_3$-B}}$  & 1.39 & 9.35 & -46521.49 \\   \hline
\end{tabular}
\caption{\small EPD and DIC measures for PHIA under different models with and without benchmarking. In both cases, the smallest EPD and DIC values are observed for the FH-SC$_{2}$ model.} \label{tab:model_criteria}
\end{center}
\end{table}

To assess the performance of the estimators in producing PHIA small area estimates at the municipality level, we consider the small area MSE estimates and the Coefficient of Variation (CV) obtained under the various models. Specifically, to compare the uncertainty produced by the various estimators, we consider the direct variances, $D$, and the CV of direct estimates, {\small $\text{CV}(y_{j,c})= \sqrt{D_{j,c}}/y_{j,c}$}. We compare these values with the proposed CPMSE and CV of the RB and RB benchmarked estimators: {\small CPMSE$(\hat{\boldsymbol{\theta}}^{\M})$, CPMSE$(\hat{\boldsymbol{\theta}}^{\M\textsf{-B}})$, $\text{CV}(\hat{\boldsymbol{\theta}}^{\M})=\sqrt{\text{CPMSE}(\hat{\boldsymbol{\theta}}^{\M})}/\hat{\boldsymbol{\theta}}^{\M}$} and {\small $\text{CV}(\hat{\boldsymbol{\theta}}^{\M\textsf{-B}})=\sqrt{\text{CPMSE}(\hat{\boldsymbol{\theta}}^{\M\textsf{-B}})}/\hat{\boldsymbol{\theta}}^{\M\textsf{-B}}$}, where $\M$ refers a the specific model in Table \ref{tab:models}. We also compute the EBLUP and MPSE for the FH-SC$_{1}$ and FH-SC$_{2}$ models using the estimators proposed by \cite{maiti2014clustering} and \cite{torkashvand2017clustering}, and compare them with the posterior RB estimates of PHIA and their corresponding CPMSE. According to Proposition \ref{prop:def3}, benchmarking increases the CPMSE relative to the CPMSE of the RB estimator, {\small CPMSE$(\hat{\theta}_{j,c}^{\M}\mid \boldsymbol{X}_c, \boldsymbol{Z}_c, \boldsymbol{y}_{c})$}. This was also noted by \cite{you2002benchmarking} for the PMSE.

\begin{figure}[t]
\begin{center}
\vspace{0,3cm}
\begin{tabular}{ccc}
 a)  $\sqrt{D}$,  $\sqrt{\text{CPMSE}}$ and  $\sqrt{\text{MPSE}}$ & \hspace{1cm} b)  $\text{CV}$\% \\
 \hspace{-0.2cm} \includegraphics[width=0.50\textwidth]{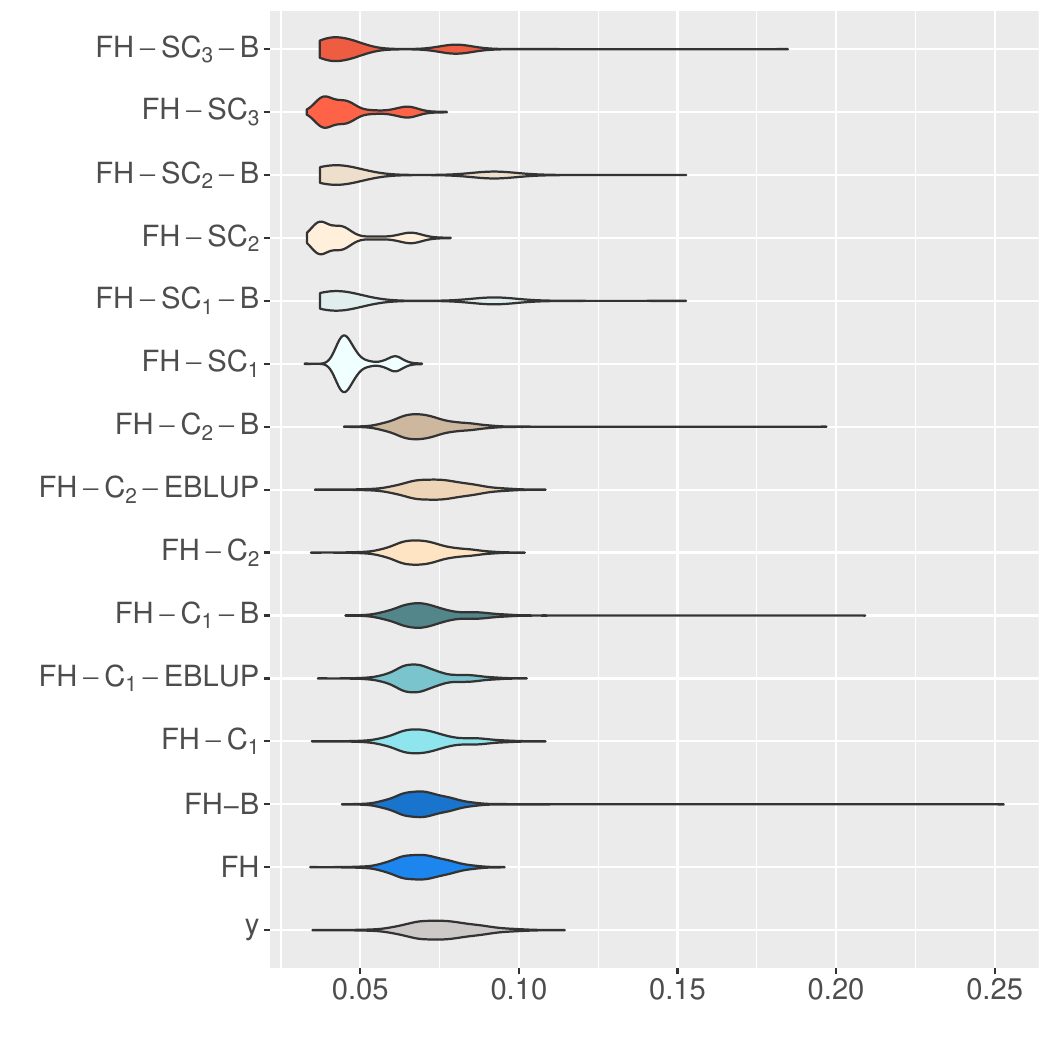} &  \hspace{-0.5cm}
\includegraphics[width=0.50\textwidth]{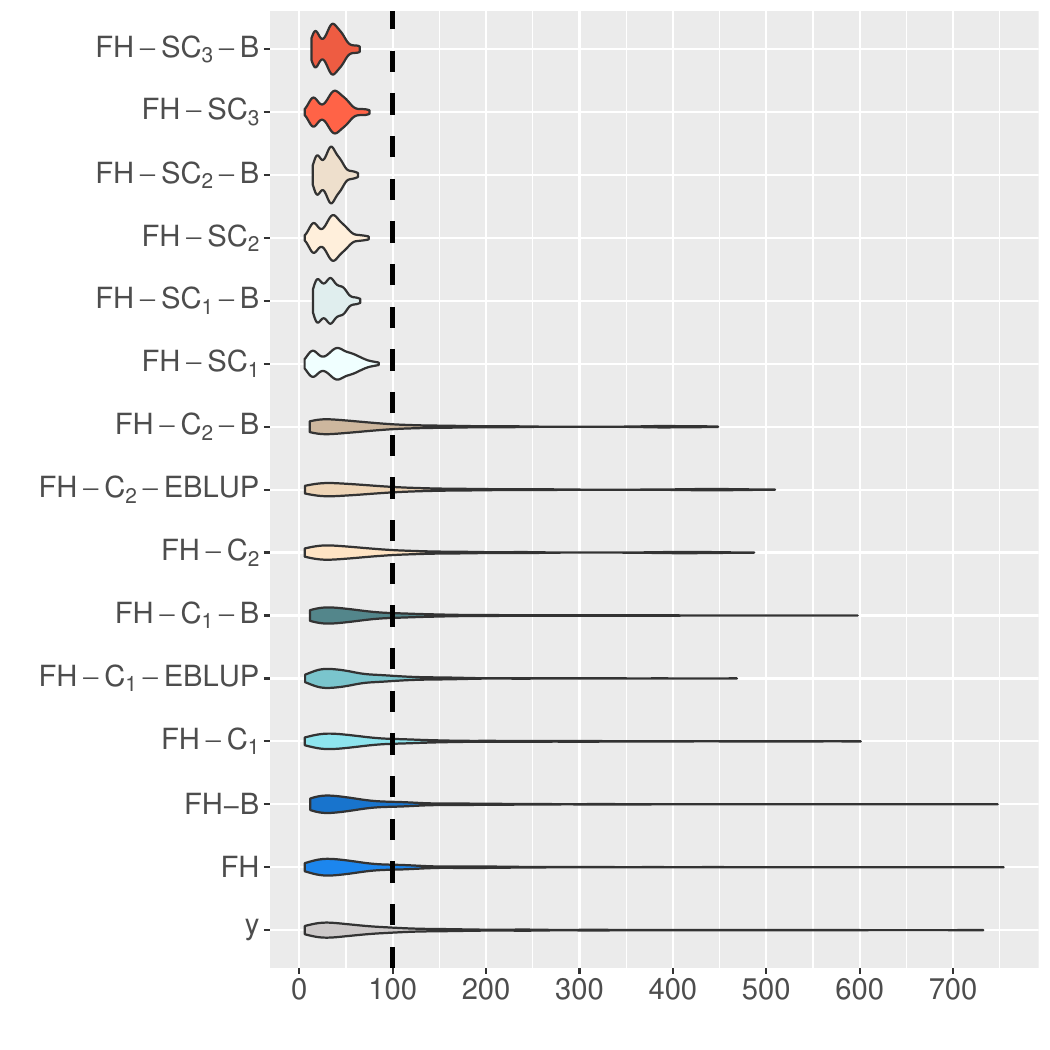} 
\end{tabular}
\end{center}
\caption{a) Standard deviation of direct estimates, square root of MSPE of the EBLUPs for FH-C models, and square root of CPMSE for all estimators presented  in Table \ref{tab:model_criteria}. b) Coefficients of Variation (CV) in percentage. The FH-SC models lead to significant reductions in the CVs compared to existing models.}.
\label{fig:figr2}
\end{figure}

Figure \ref{fig:figr2} (left) illustrates this behavior. Under FH, FH-C and FH-SC models the RB benchmarked estimates of PHIA produce larger CPMSE estimates compared to those without benchmarking. While introducing benchmarking generally leads to higher MSE estimates, the RB benchmarked estimator under the FH-SC model achieves significant reductions in MSE estimates and coefficients of variation compared to the existing FH and FH-C models, as shown in Figure \ref{fig:figr2}. As expected, Figure \ref{fig:figr2} also illustrates that the MSPE and CV values under FH-C$_{1}$ and FH-C$_{2}$, obtained with the frequentist estimators proposed in \cite{maiti2014clustering} and \cite{torkashvand2017clustering}, are similar to those obtained under a Bayesian framework. Overall, we found that the FH-SC models produce smaller coefficients of variation even when sample sizes are small. Notably, under the FH-SC$_{2}$ model, more precise PHIA estimates are obtained in around 92.17\% of the 294 municipalities. These findings are illustrated in Supplementary Figure \ref{fig:figcve_ratio}.

Supplementary Table \ref{table:conf_cve} presents PHIA estimates for major capital cities and other relevant municipalities known to have higher poverty and/or education deficit index values in 2018. According to these results, the direct and RB estimates under the FH model are more conservative for some municipalities compared to those produced under the proposed FH-SC$_{2}$ model. Importantly, the PHIA estimates in Supplementary Table  \ref{table:conf_cve} are in accordance with the poverty levels in 2018 at the subnational level.  A key advantage of our proposal is being able to provide the posterior distribution of RB estimators including those with benchmarking. Except for the capital city of Bogot\'a, the posterior distributions of the RB benchmarked estimates under the FH-C$_{2}$ model are closer to the direct estimates and
within the 95\% confidence interval constructed with the direct estimates and the direct variances (see Supplementary Figures \ref{fig:posterior_estimates1} and 
\ref{fig:posterior_estimates2}). 

\begin{figure}[t!]
\begin{center}
\begin{tabular}{ccc}
FH &  \hspace{2.5cm}  FH-SC$_{2}$ \vspace{-1.0cm}  \\
\hspace{-1.5cm}  \includegraphics[width=0.57\textwidth]{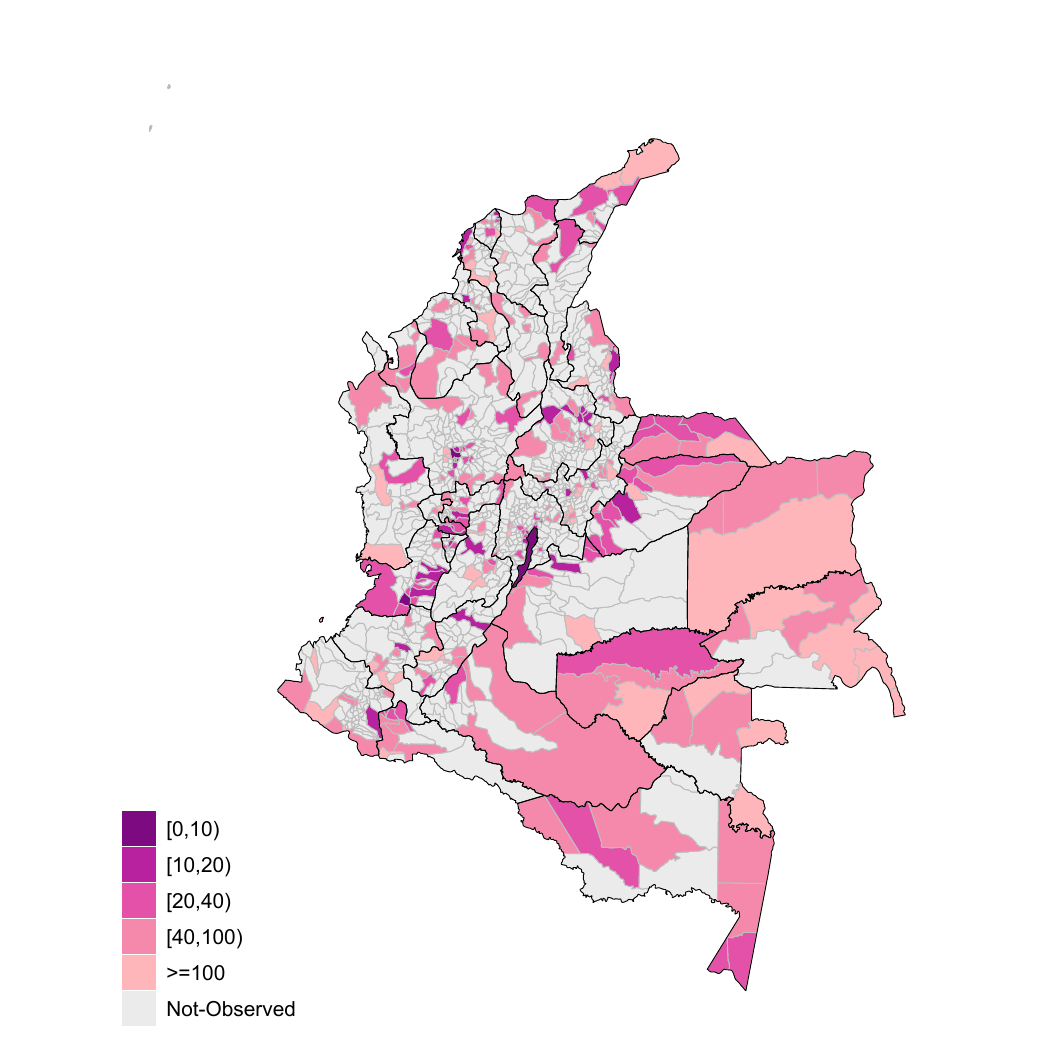} \hspace{-1cm} & \includegraphics[width=0.57\textwidth]{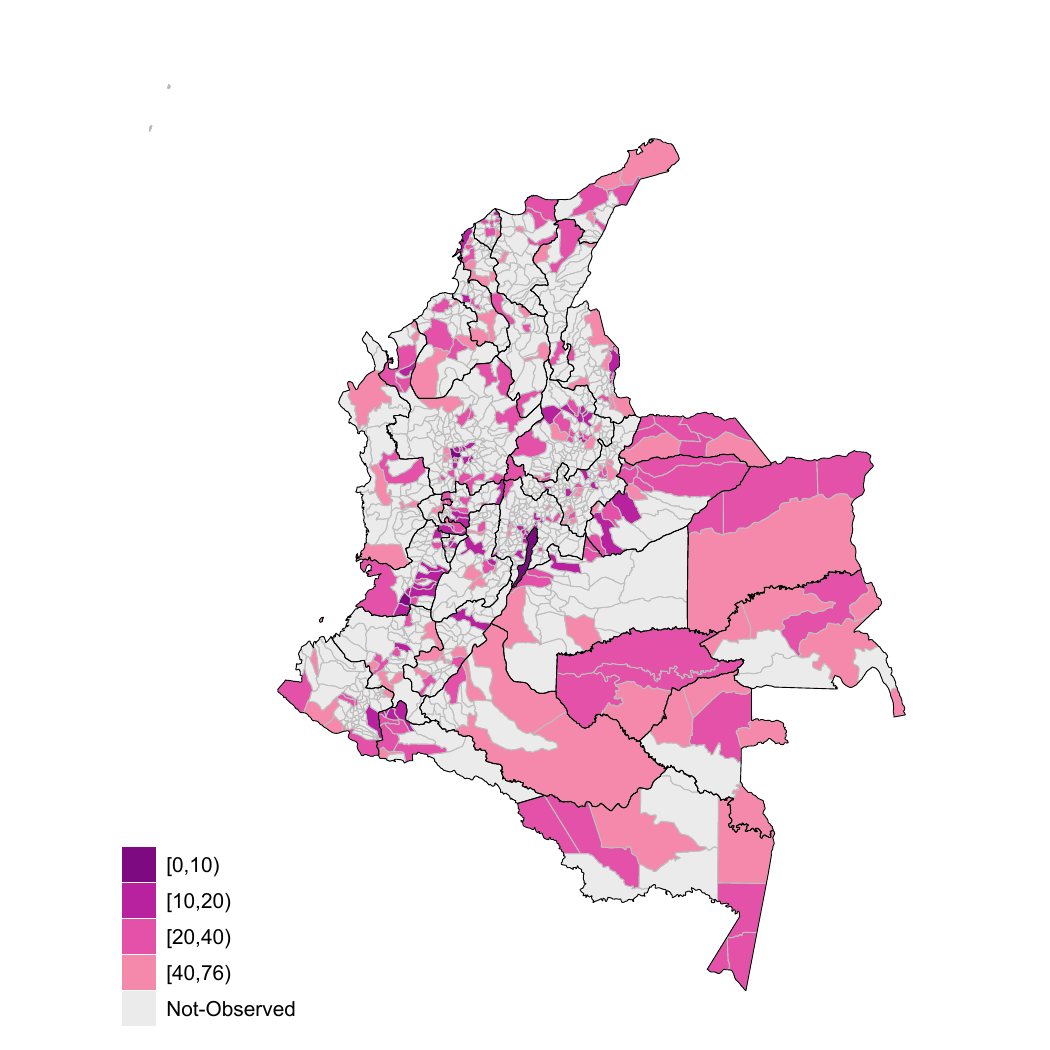}  \hspace{-2cm}  \\
FH-B &  \hspace{2.5cm}  FH-SC$_{2}$-B \vspace{-1.0cm}  \\
\hspace{-1.5cm}  \includegraphics[width=0.57\textwidth]{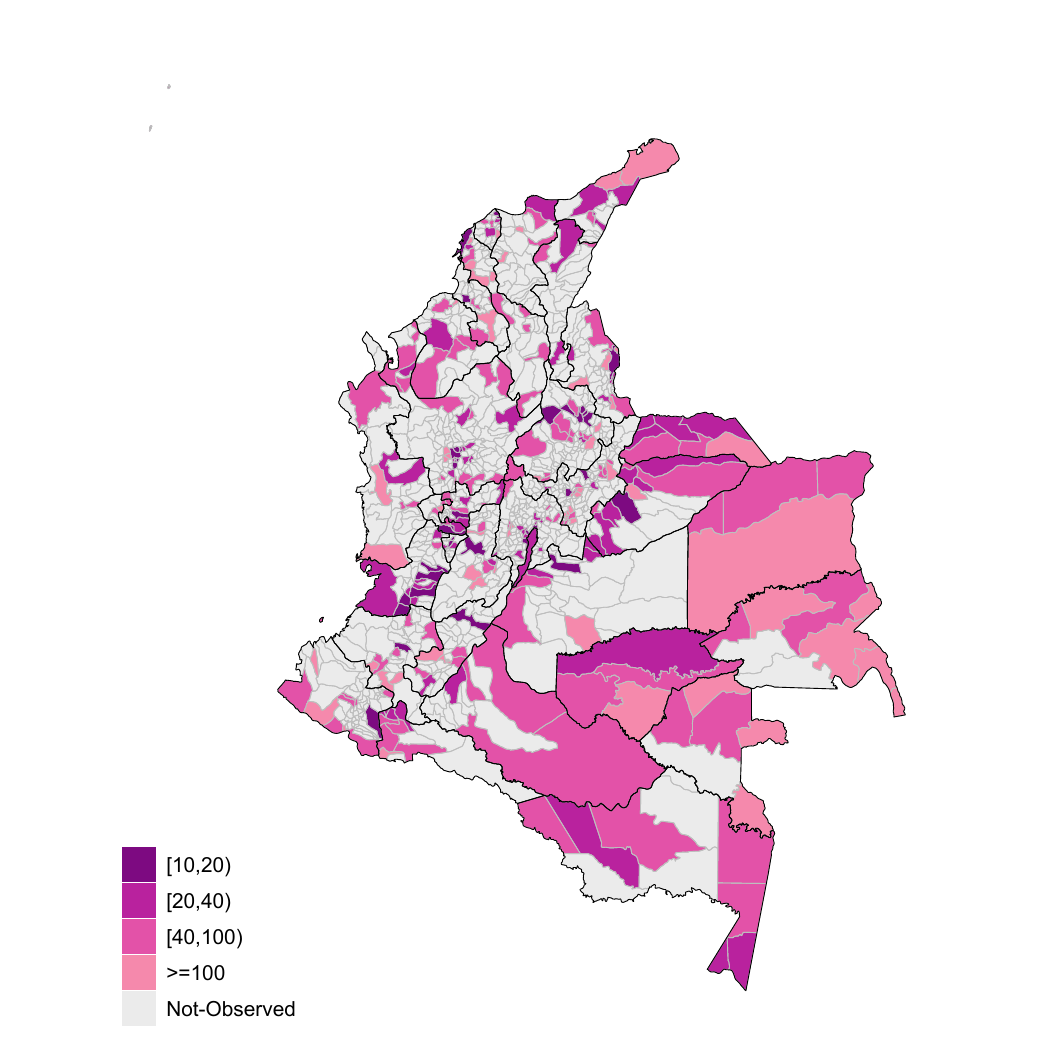} \hspace{-1cm} & \includegraphics[width=0.57\textwidth]{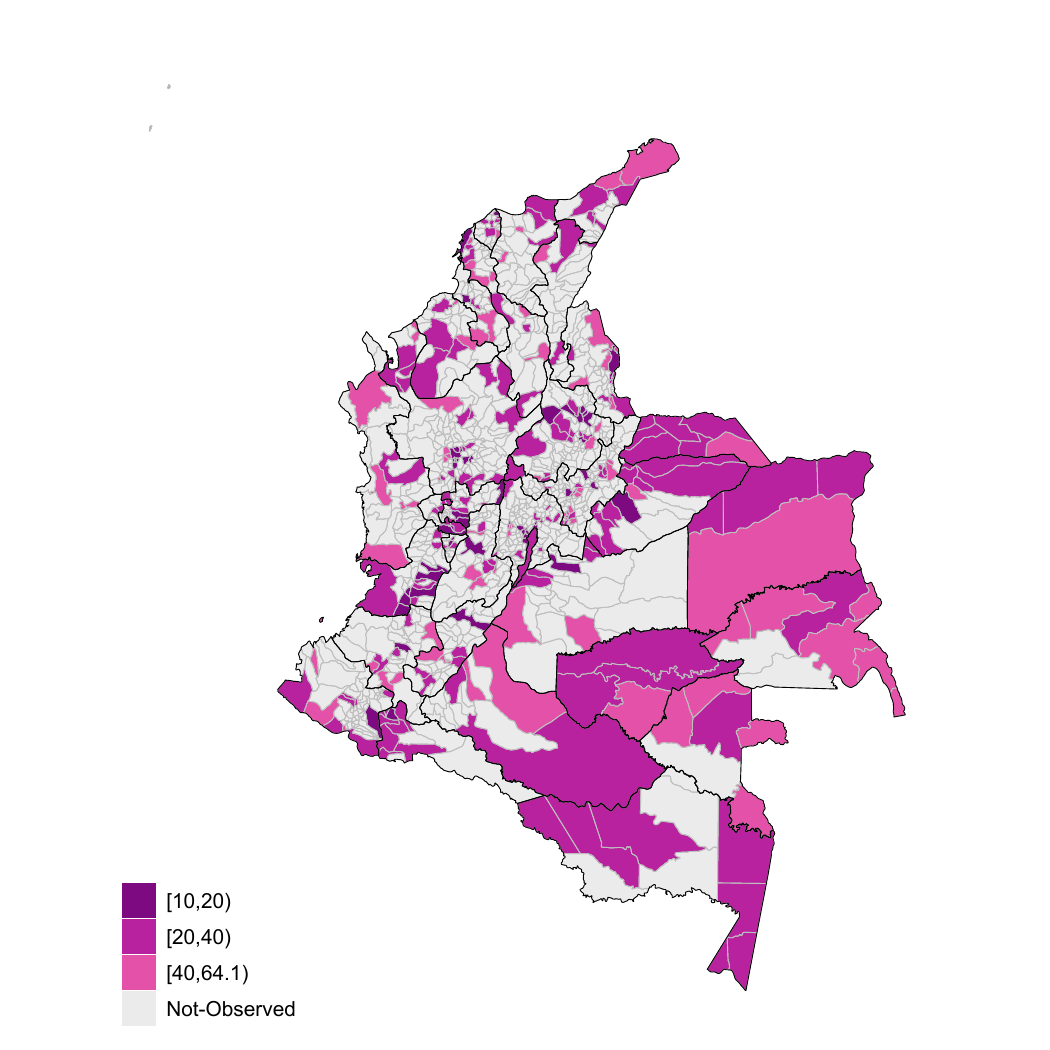}  \hspace{-2cm}  \\
\end{tabular}
\end{center}
\vspace{-0.5cm}
\caption{
Estimates of coefficient of variations of RB estimates produced by a) $\hat{\theta}_{j,c}^{\textsf{FH}}$, b) $\hat{\theta}_{j,c}^{\textsf{FH-SC$_{2}$}}$, c) $\hat{\theta}_{j,c}^{\textsf{FH}\text{-B}}$  and, d) $\hat{\theta}_{j,c}^{\textsf{FH-SC$_{2}$}\text{-B}}$. 
$\hat{\theta}_{j,c}^{\textsf{FH-SC$_{2}$}\text{-B}}$ produces the smallest coefficients of variation of RB benchmarked estimates of  PHIA.}
\label{fig:post2}
\end{figure}

As expected, the main capital cities -- Bogotá, D.C., Medellín, and Cali -- exhibit the highest values of internet connectivity. In contrast, the capital cities of Leticia, San José Del Guaviare, and Riohacha have the lowest posterior estimates of PHIA. These low PHIA estimates are likely correlated with high poverty levels and challenges affecting the education system in these municipalities. Jamundí and Quibdó are representative of many municipalities in Colombia due to their diverse ethnic populations, poverty levels, and education system issues. Notably, Quibdó exhibits lower RB posterior estimates of PHIA compared to Jamundí, possibly reflecting its higher poverty levels, as indicated by the MPI in 2018 (see Supplementary Table \ref{table:conf_cve}). An illustration of the spatial patterns of the different RB estimates of PHIA is presented in Supplementary Figure \ref{fig:post1}. 
Figure \ref{fig:post2} displays the estimated coefficient of variations for the RB estimates. Overall, the coefficients of variation under the selected models, FH-SC$_{2}$ and FH-SC$_{2}$-B, are lower than 20\% and generally smaller compared to those produced by the other RB estimators (see Supplementary Table~\ref{table:conf_cve}). Hence, these models are reliable for generating official statistical reports on PHIA at the municipal level in Colombia. 

\section{Discussion}
\label{sec:discussion}


Our contributions span methodological, computational, and applied domains.  
Methodologically, we propose the Fay-Herriot model with Spectral Clustering (FH-SC), which classifies small areas based on covariates rather than geographical or administrative criteria.  This approach enhances the precision of estimates by leveraging structural similarities among areas. Additionally, we introduce a novel measure of uncertainty, the Conditional Posterior Mean Square Error (CPMSE), specifically designed for benchmarked estimators under a Bayesian framework. This measure provides a more informative assessment of estimator reliability compared to traditional approaches. 
In practical terms, our framework is employed to estimate the Proportion of Households with Internet Access (PHIA) at the municipal level in Colombia. Given the policy importance of PHIA, particularly in low- and middle-income countries, our approach provides a viable alternative for estimating this key indicator in intercensal periods.  
Moreover, the availability of posterior distributions for RB benchmarked estimates allows practitioners to derive comprehensive uncertainty assessments, thereby facilitating data-driven decision-making. Crucially, the results of the case study highlight significant reductions in coefficients of variation and improvements in precision, suggesting that the proposed methodology is a valuable contribution for official statistics. 
Future research directions include extending the FH-SC model to unit-level data, broadening its applicability beyond area-level estimation. Additionally, our methodology could be applied to other key socioeconomic indicators where clustering structures can be informed by auxiliary administrative data.

\clearpage



\spacingset{1.0} 

\bibliographystyle{agsm}






\if1\blind
{
    {\LARGE\bf Supplementary Material for ``Fully Bayesian  Spectral Clustering and Benchmarking with  Uncertainty Quantification  for Small Area Estimation''}
}
} \fi

\if0\blind
{
  \bigskip
  \bigskip
  \bigskip
  \begin{center}
    {\LARGE\bf Supplementary Material for ``Fully Bayesian  Spectral Clustering and Benchmarking with  Uncertainty Quantification  for Small Area Estimation''}
\end{center}
  \medskip
} \fi

\bigskip


\renewcommand{\thesection}{\Alph{section}}
\setcounter{section}{0}
\counterwithin{figure}{section}
\counterwithin{table}{section}
\setcounter{algocf}{0}
\renewcommand{\thealgocf}{A\arabic{algocf}}

The following sections provide the supplementary material of our work. Section \ref{theory} contains the proofs of 
Propositions \ref{prop:LapRLS}, \ref{prop:benchmarking} and  \ref{prop:def3} and Theorems \ref{expectation}, \ref{prop2} and \ref{thm:PP_problem}. 
Section \ref{MCMC_supp}, provides details of the computational Algorithms. Specifically, Section \ref{SC_algh} describes the proposed Spectral Clustering Algorithm  \ref{alg:SC} and Section \ref{MCM_algh1} contains Algorithms \ref{alg:MCMC1} and \ref{alg:MCMC2}, used to obtain posterior samples of the model parameters $\boldsymbol{\kappa}=\{\boldsymbol{\theta}, \boldsymbol{\delta},\boldsymbol{G}_{\boldsymbol{\varphi}}, \rho\}$ for the FH-SC models presented in Table \ref{tab:models}. To evaluate our methodological proposal, we perform two simulation studies described in Section \ref{sec:simula}. Section \ref{applied} contains the supplementary material for the PHIA case study in the main document.

\section{Proofs of Theorems and Propositions}
\label{theory}


\subsection{Proof of Proposition \ref{prop:LapRLS} }
\label{app:prop1}

\begin{proof}
The derivative of the objective function (\ref{eq:LapRLS2}) vanishes at the minimizer 
\begin{align}
-2\rho(\bdt-\boldsymbol{\theta}^{\M})  + 2 (1-\rho)\textit{L}_{SC}\boldsymbol{\theta}^{\M} = \boldsymbol{0}
\end{align}
which leads to the following solution
\begin{align}
\boldsymbol{\theta}^{\M} = ( \mathbb I_m +((1-\rho)/\rho)  \textsf{L}_{SC})^{-1}\bdt.
\end{align}

\end{proof}

\subsection{Proof of Theorem \ref{prop2}}
\label{app:prop3}

\begin{proof}
To prove  Theorem \ref{prop2} we need to show that the integral in the right-hand side of  (\ref{eq:post1}) 
with respect to $\boldsymbol{\kappa}_c$, is finite. Since $\boldsymbol{X}_c$ and $\boldsymbol{Z}_c$  are full rank matrices then  $\boldsymbol{X}_c^\textsf{T}
( \boldsymbol{Z}_c\boldsymbol{G}_{\boldsymbol{\varphi},c}\boldsymbol{Z}_c^\textsf{T})^{-1}\boldsymbol{X}_c$ is nonsingular. We start with the following identity,
\begin{align}\label{eq:betas}
\small
  \exp\left\{-\frac{1}{2}(\boldsymbol{\theta}_c - \boldsymbol{X}_{c}\boldsymbol{\delta}_{c})^\textsf{T}( \boldsymbol{Z}_c\boldsymbol{G}_{\boldsymbol{\varphi},c}\boldsymbol{Z}_c^\textsf{T})^{-1}(\boldsymbol{\theta}_c - \boldsymbol{X}_{c}\boldsymbol{\delta}_{c})\right\} &
 \\  & \hspace{-8cm}  \small  =  \exp\left\{-\frac{1}{2}(\boldsymbol{\delta}_{c}-\hat{\boldsymbol{\delta}}_{c})^\textsf{T}\boldsymbol{X}_c^\textsf{T}( \boldsymbol{Z}_c\boldsymbol{G}_{\boldsymbol{\varphi},c}\boldsymbol{Z}_c^\textsf{T})^{-1}\boldsymbol{X}_c(\boldsymbol{\delta}_{c}-\hat{\boldsymbol{\delta}}_{c}) -\frac{1}{2} (\boldsymbol{\theta}_c -  \boldsymbol{X}_c\hat{\boldsymbol{\delta}}_{c})^\textsf{T}( \boldsymbol{Z}_c\boldsymbol{G}_{\boldsymbol{\varphi},c}\boldsymbol{Z}_c^\textsf{T})^{-1}
(\boldsymbol{\theta}_c -  \boldsymbol{X}_c\hat{\boldsymbol{\delta}}_{c}) \notag
\right\},  
\end{align}
where $\hat{\boldsymbol{\delta}}_{c}=(\boldsymbol{X}_c^\textsf{T}( \boldsymbol{Z}_c\boldsymbol{G}_{\boldsymbol{\varphi},c}\boldsymbol{Z}_c^\textsf{T})^{-1}\boldsymbol{X}_c)^{-1}\boldsymbol{X}_c
^\textsf{T}( \boldsymbol{Z}_c\boldsymbol{G}_{\boldsymbol{\varphi},c}\boldsymbol{Z}_c^\textsf{T})^{-1}\boldsymbol{\theta}_c$. Integrating (\ref{eq:betas}) with respect to $\boldsymbol{\delta}_c$ and noting that
$\exp\left\{-\frac{1}{2} (\boldsymbol{\theta}_c -  \boldsymbol{X}_c\hat{\boldsymbol{\delta}}_{c})^\textsf{T}( \boldsymbol{Z}_c\boldsymbol{G}_{\boldsymbol{\varphi},c}\boldsymbol{Z}_c^\textsf{T})^{-1}
(\boldsymbol{\theta}_c -  \boldsymbol{X}_c\hat{\boldsymbol{\delta}}_{c})
 \right\} \leq 1$, under the FH-SC$_{1}$, FH-SC$_{2}$  and
 FH-SC$_{3}$ models, we
 have 
\begin{align} \small 
 \label{eq:posterior2}
p(\boldsymbol{\theta}_c, \boldsymbol{G}_{\boldsymbol{\varphi},c}, \rho \mid \boldsymbol{y}_c, \boldsymbol{X}_c, \boldsymbol{Z}_c, \boldsymbol{D}_c)  & \leq K \\
& \hspace{-1.5cm} \times \exp\left\{-\frac{1}{2}(A_{\rho,c}\boldsymbol{y}_{c} - \boldsymbol{\theta}_{c})^\textsf{T}A_{\rho,c}^{-1}\boldsymbol{D}_{c}^{-1}A_{\rho,c}^{-1}(A_{\rho,c}\boldsymbol{y}_{c} - \boldsymbol{\theta}_{c})\right\} 
\pi(\boldsymbol{G}_{\boldsymbol{\varphi},c}) \pi(\rho), \notag
\end{align}
where $K$ is a generic positive constant.
Integrating (\ref{eq:posterior2}) with respect to $\bdt_c$, we have 
\begin{align*}
p(\boldsymbol{G}_{\boldsymbol{\varphi},c}, \rho \mid \boldsymbol{y}_c)  \leq K  |A_{\rho,c}\boldsymbol{D}_{c} A_{\rho,c}|^{1/2}  
\pi(\boldsymbol{G}_{\boldsymbol{\varphi},c}) \pi(\rho),
\end{align*}
where $ A_{\rho,c} = ( \mathbb I_{n_{c}} + ((1-\rho)/\rho) \textsf{L}_{c})$ is symmetric and  positive-definite 
with  $ A_{\rho,c} \in \mathbb{R}^{n_c \times n_c}$ where $\textsf{L}_{c}=n_c\mathbb I_{n_{c}} - \boldsymbol{1}_{n_c}\boldsymbol{1}_{n_c}^\textsf{T}$  and $\rho \in (0,1]$, therefore 
\begin{align}
A_{\rho,c} = \left(\dfrac{(1-\rho)n_c}{\rho}
+1\right)\mathbb I_{n_{c}} - \dfrac{(1-\rho)\boldsymbol{1}_{n_c}\boldsymbol{1}_{n_c}^\textsf{T}}{\rho}.
\label{eq:blocks}
\end{align}
We consider the matrix determinant lemma to compute $|A_{\rho,c}|$ as follows,
\begin{align}
|A_{\rho,c}|&=\left|\left(\dfrac{(1-\rho)n_c}{\rho}
+1\right)\mathbb I_{n_{c}} - \dfrac{(1-\rho)\boldsymbol{1}_{n_c}\boldsymbol{1}_{n_c}^\textsf{T}}{\rho}\right|\\
            &=\left(1- \dfrac{(1-\rho)n_c}{(1-\rho)n_c+\rho}\right)\left|\left(\dfrac{(1-\rho)n_c}{\rho}+1\right)\mathbb I_{n_{c}} \right|\\                                        
             &=\left(\dfrac{(1-\rho)n_c}{\rho}+1\right)^{n_c-1},
\label{eq:posterior4}
\end{align}
and therefore, for $j=1,...,n_c$, we have
\begin{align}
p(\boldsymbol{G}_{\boldsymbol{\varphi},c}, \rho \mid \boldsymbol{y}_c)  &\leq K  \pi(\boldsymbol{G}_{\boldsymbol{\varphi},c}) \pi(\rho) \left(\dfrac{(1-\rho)n_c}{\rho}+1\right)^{n_c-1}\prod_{j=1}^{n_c}  D_{j,c}^{1/2}\\
\end{align}
since $\rho$ $\in$ $(0,1]$ and $D_{j,c}>0$, then the posterior distribution under the FH-SC$_{1}$, FH-SC$_{2}$  and
 FH-SC$_{3}$ models in (\ref{eq:post1}) is proper if the priors
for the variance parameters $\boldsymbol{\varphi}$ in the covariance matrix $\boldsymbol{G}_{\boldsymbol{\varphi},c}$ and the prior for the cluster regularization penalty $\rho$ are proper.

\end{proof}

\subsection{Proof of Theorem \ref{expectation}}
\label{app:prop}

\begin{proof}



Consider $A_{\rho,c}$ as in (\ref{eq:blocks}) with $((1-\rho)/\rho)n_c+1)\mathbb I_{n_{c}}$  invertible and $\rho/((1-\rho)n_c+\rho) > 0$. Therefore, we can use the Sherman-Morrison expression \cite{sherman1950adjustment2} to compute
 $A_{\rho}^{-1}= \text{blkdiag}(\{A_{\rho,c}^{-1}\}_{c=1}^{C})$ where $A_{\rho,c}^{-1} = \gamma_{c}  \mathbb I_{n_{c}} + ((1-\gamma_{c})/n_c)\boldsymbol{1}_{n_c}\boldsymbol{1}_{n_c}^\textsf{T}$,  and $ \gamma_{c}=  \rho /((1-\rho)n_{c} + \rho)$. Consider the prior  distribution $p(\bdt_c | \boldsymbol{\delta}_c, \boldsymbol{G}_{\boldsymbol{\varphi},c}, \boldsymbol{X}_c, \boldsymbol{Z}_c)$  and likelihood distribution $p( \by_c | \bdt_c, \rho, \boldsymbol{D}_c)$ 
with $c=1,...,C$  as follows,

\begin{align}\small
\label{fay_herriot2_here}
\begin{split}
    \by_c & \sim \text{Normal}(A_{\rho,c}^{-1}\bdt_c,  \boldsymbol{D}_c ),\\ 
\bdt_c &\sim  \text{Normal}\left(\boldsymbol{X}_c\boldsymbol{\delta}_c,  \boldsymbol{Z}_c\boldsymbol{G}_{\boldsymbol{\varphi},c}\boldsymbol{Z}_c^\textsf{T} \right),\\
\end{split}
\end{align}

with  $\bdt_c=(\theta_{1,c},...,\theta_{n_c,c})^\textsf{T}$,  $\by_c=(y_{1,c},...,y_{n_c,c})^\textsf{T}$, $\boldsymbol{D}_{c}=\text{diag}(D_{1,c},,...,D_{n_c,c})$
and  $C\leq m$. Compute the first and second moments of the conditional posterior distribution \linebreak $p(\bdt_c | \boldsymbol{\delta}_c,\boldsymbol{G}_{\boldsymbol{\varphi},c}, \rho, \boldsymbol{Z}_{c}, \boldsymbol{X}_{c}, \by_c)$ 
using (\ref{fay_herriot2_here}) given by

\begin{align}
\label{smooth1_a_here}
E(\boldsymbol{\theta}_c  \mid \boldsymbol{\delta}_c,\boldsymbol{G}_{\boldsymbol{\varphi},c}, \rho, \boldsymbol{Z}_{c}, \boldsymbol{X}_{c}, \by_c) &=   V(\boldsymbol{\theta}_c\mid \boldsymbol{\delta}_c,\boldsymbol{G}_{\boldsymbol{\varphi},c}, \rho,\boldsymbol{Z}_{c}, \boldsymbol{X}_{c}, \by_c)\\  
& \hspace{1.0cm}\times (\boldsymbol{D}_c^{-1} A_{\rho,c}^{-1} \by_c   +(\boldsymbol{Z}_{c}\boldsymbol{G}_{\boldsymbol{\varphi},c}\boldsymbol{Z}_{c}^\textsf{T} )^{-1}\boldsymbol{X}_c^\textsf{T}\boldsymbol{\delta}_c), \notag
\end{align}
\begin{align}
\label{smooth1_b}
V(\boldsymbol{\theta}_c\mid \boldsymbol{\delta}_c,\boldsymbol{G}_{\boldsymbol{\varphi},c}, \rho, \boldsymbol{Z}_{c}, \boldsymbol{X}_{c}, \by_c) &=((A_{\rho,c} \boldsymbol{D}_c  A_{\rho,c}^\textsf{T})^{-1} +  (\boldsymbol{Z}_{c}\boldsymbol{G}_{\boldsymbol{\varphi},c}\boldsymbol{Z}_{c}^\textsf{T} )^{-1} )^{-1}.
\end{align}

Therefore, part $(i)$  and $(ii)$ hold, by taking the expectation and variance on both sides of $\bdt_c^{\textsf{FH-SC}} =  A_{\rho,c}^{-1}\bdt_c$, conditional on $\boldsymbol{\delta}_c,\boldsymbol{G}_{\boldsymbol{\varphi},c}, \rho, \boldsymbol{Z}_{c}, \boldsymbol{X}_{c}$ and $\by_c$.  

\end{proof}


\subsection{Proof of Proposition  \ref{prop:benchmarking} }
\label{app:prop4}

\begin{proof}

To prove Proposition \ref{prop:benchmarking}, we need to find the optimal solution for  $\bdt^{\M\textsf{-SC-B}}$ using the objective function in (\ref{eq:LapRLS2}).
Note that  (\ref{eq:LapRLS3}) in the main document  is equivalent to minimizing the convex differentiable objective function  given by
\begin{align}
\label{eq:LapRLS3_here}
\bdt^{\M\textsf{-SC-B}} &=    \underset{\boldsymbol{\theta}^{\M}}{\text{minimize}} \quad    -2\rho(\boldsymbol{\theta}^{\M})^\textsf{T}\boldsymbol{\theta}+  (\boldsymbol{\theta}^{\M})^\textsf{T}(\rho\mathbb I_m +  (1-\rho)L_{SC})\boldsymbol{\theta}^{\M}, \\ &\hspace{2cm} \text{subject to} \quad \boldsymbol{W}\boldsymbol{\theta}^{\M} = \boldsymbol{p}, \notag
\end{align}
where $\boldsymbol{W} \in \mathbb{R}^{k \times m}$  has full row rank $k \leq m$ and  $\rho$,  $\boldsymbol{\theta}^{\M}$, $\bdt$ and $\textsf{L}_{SC}$ are as defined in Proposition \ref{prop:LapRLS}. We follow similar steps to those  in the proof of Theorem 14 in \cite{patra2019constrained}, and therefore, we require
the two Karush–Kuhn–Tucker (KKT) conditions \citep{karush1939minima, kuhn1951nonlinear} in \cite{patra2019constrained} to hold for the Constrained Quadratic Optimization Problem (CQOP) in (\ref{eq:LapRLS3_here}).
Since the Laplacian matrix $\textsf{L}_{SC}$ is symmetric and  positive semi-definite \citep{von2007tutorial}, $ \rho \mathbb I_m + (1-\rho) \textsf{L}_{SC}$ is positive definite
and the first KKT condition holds. For the second KKT condition,  we assume the equality benchmarking constraints $\boldsymbol{W} \in \mathbb{R}^{k \times m}$, $\boldsymbol{W}$ has full row rank $k \leq m$ as defined in \cite{patra2019constrained}. 

The KKT conditions for the solution $\bdt^{\M\textsf{-SC-B}} \in \mathbb{R}^{m}$ of the CQOP lead to the following linear system: 

\begin{equation}
\label{matrix1}
\begin{pmatrix}
2 \rho \boldsymbol{A}_{\rho} & \boldsymbol{W}^T\\
\boldsymbol{W} & \boldsymbol{0}
\end{pmatrix}
\begin{pmatrix}
\bdt^{\M\textsf{-SC-B}}\\
\boldsymbol{\hat{\lambda}}^{*}
\end{pmatrix} = 
 \begin{pmatrix}
2\bdt\\
\boldsymbol{p}
\end{pmatrix},
\end{equation}

where $\boldsymbol{\hat{\lambda}}^{*} \in \mathbb{R}^{k}$ is the associated Lagrange multiplier. Using the inverse of the $2 \times 2$
block  matrix in (\ref{matrix1}), we have 
\begin{equation*}
\begin{pmatrix}
\bdt^{\M\textsf{-SC-B}}\\
\boldsymbol{\hat{\lambda}}^{*}
\end{pmatrix} = 
\begin{pmatrix}
(2 \rho \boldsymbol{A}_{\rho})^{-1}   + (2 \rho \boldsymbol{A}_{\rho})^{-1}  \boldsymbol{W}^T \boldsymbol{Z} \boldsymbol{W} (2 \rho \boldsymbol{A}_{\rho})^{-1}  & -(2 \rho \boldsymbol{A}_{\rho})^{-1}  \boldsymbol{W}^T \boldsymbol{Z}\\
-\boldsymbol{Z}\boldsymbol{W} (2 \rho \boldsymbol{A}_{\rho})^{-1}  & \boldsymbol{Z}
\end{pmatrix} \begin{pmatrix}
2\bdt\\
\boldsymbol{p}
\end{pmatrix},
\end{equation*}
where $\boldsymbol{Z} = - ( \boldsymbol{W} (2 \rho \boldsymbol{A}_{\rho})^{-1} \boldsymbol{W}^T)^{-1}$ and therefore 
\begin{align}
\label{sol}
\notag \bdt^{\M\textsf{-SC-B}}= (\rho \boldsymbol{A}_{\rho})^{-1}\bdt& \\
& \hspace{-1.0cm}
+ (2\rho \boldsymbol{A}_{\rho})^{-1} \boldsymbol{W}^T \boldsymbol{Z} \boldsymbol{W} ( \rho \boldsymbol{A}_{\rho})^{-1}\bdt -( 2\rho \boldsymbol{A}_{\rho})^{-1}\boldsymbol{W}^T \boldsymbol{Z} \boldsymbol{p}
\end{align}

\end{proof}

\subsection{Theorem  \ref{thm:PP_problem}: Benchmarking estimation using posterior projections}
\label{app:prop6}

\begin{thm}
 \footnotesize
 \label{thm:PP_problem} 
Consider  $\bdt^{\textsf{$\M$-SC(l)}} \in \Theta^{\textsf{$\M$}}$ and $\widetilde{\bdt}^{\textsf{$\M$} } \in \widetilde{\Theta}^{\textsf{$\M$}}$ 
where $\widetilde{\Theta}^{\textsf{$\M$}} = \{ \widetilde{\bdt}^{\textsf{$\M$} } \in \Theta^{\textsf{$\M$}} \mid \boldsymbol{W}\widetilde{\bdt}^{\textsf{$\M$} } = \boldsymbol{p} \}$, we have the following:
\begin{enumerate}[(i)] 
\item The convex differentiable objective function in Proposition  \ref{prop:benchmarking} is a projection problem, where minimizing (\ref{eq:LapRLS3})  is equivalent to 
\begin{align}
\label{eq:LapRLS_PP}  \underset{\widetilde{\bdt}^{\textsf{$\M$} }}{\text{minimize}} \quad  \{ \norm{\bdt^{\textsf{$\M$-SC(l)}} - \widetilde{\bdt}^{\textsf{$\M$} }}_{\boldsymbol{A}_{\rho^{(l)}}^{-1}}: \widetilde{\bdt}^{\textsf{$\M$} } \in \widetilde{\Theta}^{\textsf{$\M$}} \},
\end{align}
where $\widetilde{\bdt}^{\textsf{$\M$}}$ is the projection of $\bdt^{\textsf{$\M$-SC(l)}}$  via the weighted inner product defined by $\boldsymbol{A}_{\rho^{(l)}}^{-1}$. 
\item  When model $\M$ in part (i) is the FH-SC model in Definition \ref{def:FHSC_m}, the solution of (\ref{eq:LapRLS_PP}) is given by the projected samples
\begin{align}
\label{benchmark_PP_sup}
 \bdt^{\textsf{FH-SC-B(l)}}   &= \bdt^{\textsf{FH-SC(l)}}  + \boldsymbol{A}_{\rho^{(l)}}^{-1} \boldsymbol{W}^T (\boldsymbol{W} \boldsymbol{A}_{\rho^{(l)}}^{-1} \boldsymbol{W}^T)^{-1} (\boldsymbol{p} - \boldsymbol{W}  \bdt^{\textsf{FH-SC(l)}} ),   
\end{align}
where $l=1,...,L$  are posterior samples under the FH-SC model.
\item  The conditional expectation of  the small area benchmarked parameter  vector $\bdt^{\textsf{FH-SC-B}}$ for the 
$l$-th posterior sample is given by,
\begin{align} 
 \label{benchmark_sup}
    E(\bdt^{\textsf{FH-SC-B(l)}}  \mid \boldsymbol{X},
     \boldsymbol{Z},  \boldsymbol{y},\vartheta_{-\boldsymbol{\theta^{(l)}}}^{\textsf{FH-SC}(l)})&=  E(\bdt^{\textsf{FH-SC(l)}}  \mid \boldsymbol{X},
     \boldsymbol{Z},  \boldsymbol{y},\vartheta_{-\boldsymbol{\theta^{(l)}}}^{\textsf{FH-SC}(l)}) \\  & \hspace{-2.5cm} + \boldsymbol{A}_{\rho^{(l)}}^{-1} \boldsymbol{W}^T (\boldsymbol{W} A_{\rho^{(l)}}^{-1} \boldsymbol{W}^T)^{-1}(\boldsymbol{p} - W E(\bdt^{\textsf{FH-SC(l)}}  \mid \boldsymbol{X},
     \boldsymbol{Z},  \boldsymbol{y},\vartheta_{-\boldsymbol{\theta^{(l)}}}^{\textsf{FH-SC}(l)}) ), \notag
\end{align} 
where  $E(\bdt^{\textsf{FH-SC(l)}}  \mid \boldsymbol{X},
     \boldsymbol{Z},  \boldsymbol{y},\vartheta_{-\boldsymbol{\theta^{(l)}}}^{\textsf{FH-SC}(l)})$ is the conditional posterior expectation for the $l$-th posterior sample given in (\ref{smooth1}) of Theorem \ref{expectation}, and $\vartheta_{-\boldsymbol{\theta}}^{\textsf{FH-SC}(l)}=(\boldsymbol{\delta}^{(l-1)},\boldsymbol{G}_{\boldsymbol{\varphi}^{(l-1)}}, \rho^{(l)})$. 

When the benchmarking constraints are given by  $\sum_{c=1}^{C} \sum_{j=1}^{n_c} w_{j,c} \theta_{j,c}^{\textsf{FH-SC}(l)} = p$,
the conditional expectation of the small area benchmarked parameter vector under the FH-SC model for
cluster $c$, $c=1,...,C$,  is as follows:
\begin{align}
\label{benchmark2}
    E(\bdt^{\textsf{FH-SC-B(l)}}_c  \mid \boldsymbol{X}_c,
     \boldsymbol{Z}_c,  \boldsymbol{y}_c,\vartheta_{-\boldsymbol{\theta^{(l)}}_c}^{\textsf{FH-SC}(l)})&=  E(\bdt^{\textsf{FH-SC-(l)}}_c  \mid \boldsymbol{X}_c,
     \boldsymbol{Z}_c,  \boldsymbol{y}_c,\vartheta_{-\boldsymbol{\theta^{(l)}}_c}^{\textsf{FH-SC}(l)}) \\  & \hspace{-0.5cm} +  a_{\rho^{(l)},c}(p - \boldsymbol{w}_c  E(\bdt^{\textsf{FH-SC-(l)}}_c  \mid \boldsymbol{X}_c,
     \boldsymbol{Z}_c,  \boldsymbol{y}_c,\vartheta_{-\boldsymbol{\theta^{(l)}}_c}^{\textsf{FH-SC}(l)})),   \notag        
 \end{align}
where  $\,\vartheta_{-\boldsymbol{\theta^{(l)}}_c}^{\textsf{FH-SC}(l)}=(\boldsymbol{\delta}^{(l-1)}_c,\boldsymbol{G}_{\boldsymbol{\varphi}^{(l-1)},c}, \rho^{(l)})$ and $\boldsymbol{w}_c=(w_{1,c},...,w_{n_c,c})$ denotes the benchmarking weights for cluster $c$, 
and $a_{\rho,c}^{(l)}= (\gamma_{c}^{(l)}\boldsymbol{w}_c^\textsf{T} +  (1-\gamma_{c}^{(l)})\bar{\boldsymbol{w}}_c\boldsymbol{1}_c)/(\gamma_{c}^{(l)}\sum_{j=1}^{n_c}w_{j,c}^{2} +  (1-\gamma_{c}^{(l)})n_c\bar{\boldsymbol{w}}_c^{2})$ with $\bar{\boldsymbol{w}}_c=\sum_{j=1}^{n_c}w_{j,c}/n_{c}$ and $\gamma_{c}^{(l)}=  \rho^{(l)} /((1-\rho^{(l)})n_{c} + \rho^{(l)})$.

\end{enumerate}

\end{thm}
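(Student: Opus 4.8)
The plan is to prove the three parts in sequence, using Proposition~\ref{prop:LapRLS} (the unconstrained minimizer $\bdt^{\textsf{$\M$-SC}}=\boldsymbol{A}_{\rho}^{-1}\bdt$ and the fact that $\boldsymbol{A}_{\rho}$ is symmetric positive definite), Proposition~\ref{prop:benchmarking} (the closed form of the constrained minimizer obtained from the KKT system), and Theorem~\ref{expectation} (the explicit rank-one-plus-scalar form of $A_{\rho,c}^{-1}$ together with the closed form of the conditional posterior mean).

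For part~(i), I would expand the quadratic objective in~(\ref{eq:LapRLS3}) and use $\rho^{(l)}\mathbb I_m+(1-\rho^{(l)})\textsf{L}_{SC}=\rho^{(l)}\boldsymbol{A}_{\rho^{(l)}}$. Writing $\bdt^{\textsf{FH-SC(l)}}=\boldsymbol{A}_{\rho^{(l)}}^{-1}\bdt^{(l)}$ for the unconstrained minimizer of Proposition~\ref{prop:LapRLS} at the $l$-th draw and substituting $\bdt^{(l)}=\boldsymbol{A}_{\rho^{(l)}}\bdt^{\textsf{FH-SC(l)}}$, completing the square gives that the objective equals
\[
\rho^{(l)}\Big[(\bdt^{\M}-\bdt^{\textsf{$\M$-SC(l)}})^\textsf{T}\boldsymbol{A}_{\rho^{(l)}}(\bdt^{\M}-\bdt^{\textsf{$\M$-SC(l)}})+c\Big],
\]
where $c$ is independent of $\bdt^{\M}$. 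Since $\boldsymbol{A}_{\rho^{(l)}}$ is symmetric positive definite, the bracketed term is, up to the additive constant, precisely the squared distance to $\bdt^{\textsf{$\M$-SC(l)}}$ in the inner product induced by $\boldsymbol{A}_{\rho^{(l)}}$, so minimizing~(\ref{eq:LapRLS3}) over $\{\boldsymbol{W}\bdt^{\M}=\boldsymbol{p}\}$ coincides with the metric projection~(\ref{eq:LapRLS_PP}) of $\bdt^{\textsf{$\M$-SC(l)}}$ onto $\widetilde{\Theta}^{\M}$.

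Parts~(ii) and~(iii) are then bookkeeping. For~(ii): by part~(i) the projection problem is exactly the constrained quadratic program of Proposition~\ref{prop:benchmarking} with $\rho,\bdt$ replaced by $\rho^{(l)},\bdt^{(l)}$ and unconstrained optimum $\bdt^{\textsf{FH-SC(l)}}$, so~(\ref{benchmark_PP_sup}) is that proposition applied draw by draw (equivalently, solve the Lagrangian stationarity equation $2\boldsymbol{A}_{\rho^{(l)}}(\bdt^{\M}-\bdt^{\textsf{FH-SC(l)}})+\boldsymbol{W}^\textsf{T}\boldsymbol{\lambda}=\boldsymbol{0}$ and impose $\boldsymbol{W}\bdt^{\M}=\boldsymbol{p}$), with $\boldsymbol{W}\boldsymbol{A}_{\rho^{(l)}}^{-1}\boldsymbol{W}^\textsf{T}$ invertible because $\boldsymbol{W}$ has full row rank $k\le m$ and $\boldsymbol{A}_{\rho^{(l)}}$ is positive definite. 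For the general identity in~(iii), I would take $E(\,\cdot\mid\boldsymbol{X},\boldsymbol{Z},\by,\vartheta_{-\boldsymbol{\theta^{(l)}}}^{\textsf{FH-SC}(l)})$ through~(\ref{benchmark_PP_sup}); conditional on $\vartheta_{-\boldsymbol{\theta^{(l)}}}^{\textsf{FH-SC}(l)}=(\boldsymbol{\delta}^{(l-1)},\boldsymbol{G}_{\boldsymbol{\varphi}^{(l-1)}},\rho^{(l)})$ the matrices $\boldsymbol{A}_{\rho^{(l)}},\boldsymbol{W}$ and the vector $\boldsymbol{p}$ are non-random, so linearity of expectation yields~(\ref{benchmark_sup}) with $E(\bdt^{\textsf{FH-SC(l)}}\mid\cdot)$ the closed form~(\ref{smooth1}) of Theorem~\ref{expectation}. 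For the scalar benchmark I set $k=1$, $\boldsymbol{W}=\boldsymbol{w}^\textsf{T}$, and use $\boldsymbol{A}_{\rho^{(l)}}^{-1}=\text{blkdiag}(\{A_{\rho^{(l)},c}^{-1}\}_{c=1}^{C})$ to read off the correction term cluster by cluster; substituting $A_{\rho^{(l)},c}^{-1}=\gamma_{c}^{(l)}\mathbb I_{n_c}+((1-\gamma_{c}^{(l)})/n_c)\boldsymbol{1}_{n_c}\boldsymbol{1}_{n_c}^\textsf{T}$ gives $A_{\rho^{(l)},c}^{-1}\boldsymbol{w}_c^\textsf{T}=\gamma_{c}^{(l)}\boldsymbol{w}_c^\textsf{T}+(1-\gamma_{c}^{(l)})\bar{\boldsymbol{w}}_c\boldsymbol{1}_{n_c}$ and $\boldsymbol{w}_cA_{\rho^{(l)},c}^{-1}\boldsymbol{w}_c^\textsf{T}=\gamma_{c}^{(l)}\sum_{j}w_{j,c}^2+(1-\gamma_{c}^{(l)})n_c\bar{\boldsymbol{w}}_c^2$, which assemble into the stated $a_{\rho^{(l)},c}$ after routine simplification.

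The one genuinely non-mechanical step is the completing-the-square reformulation in part~(i): recognizing that the penalized objective of~(\ref{eq:LapRLS3}) equals, up to a $\bdt^{\M}$-free constant, a squared $\boldsymbol{A}_{\rho^{(l)}}$-norm centered at $\bdt^{\textsf{$\M$-SC(l)}}$ is what turns the constrained optimization into a posterior projection and thereby licenses reusing Proposition~\ref{prop:benchmarking} on each MCMC draw. After that, (ii) is a pointwise invocation of an already-proved result and (iii) is linearity of expectation followed by a Sherman--Morrison-type simplification; the only care needed is to note that the conditioning set fixes $\rho^{(l)}$, hence $\boldsymbol{A}_{\rho^{(l)}}$, so these matrices pass outside the conditional expectation.
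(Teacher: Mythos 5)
Your proposal is correct and follows essentially the same route as the paper's proof: complete the square to recast (\ref{eq:LapRLS3}) as a metric projection centered at $\bdt^{\textsf{$\M$-SC(l)}}$, solve the resulting constrained quadratic program via the Lagrangian stationarity conditions to get (\ref{benchmark_PP_sup}), and push the conditional expectation through the affine map (noting that conditioning fixes $\rho^{(l)}$ and hence $\boldsymbol{A}_{\rho^{(l)}}$), with the cluster-level formula read off from the block-diagonal Sherman--Morrison form of $A_{\rho^{(l)},c}^{-1}$. If anything, your identification of the projection weight as the matrix $\boldsymbol{A}_{\rho^{(l)}}$ itself (so that the solution involves $\boldsymbol{A}_{\rho^{(l)}}^{-1}\boldsymbol{W}^{\textsf{T}}(\boldsymbol{W}\boldsymbol{A}_{\rho^{(l)}}^{-1}\boldsymbol{W}^{\textsf{T}})^{-1}$) is stated more cleanly than the paper's subscript notation, and your explicit verification of the $a_{\rho^{(l)},c}$ constant is slightly more detailed than the paper's.
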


\begin{proof}

Note that minimizing the objective function   in Proposition  \ref{prop:benchmarking}  is equivalent to 
\begin{align}
\label{eq:LapRLS3_proof}
\bdt^{\M\textsf{-SC-B}} & =
   \underset{\boldsymbol{\theta}^{\M}}{\text{minimize}} \quad     
   \rho(\boldsymbol{\theta}^{\M})^\textsf{T}( \mathbb I_m +((1-\rho)/\rho)\boldsymbol{\theta}^{\M} -2\rho(\boldsymbol{\theta}^{\M})^\textsf{T}\boldsymbol{\theta} , \\ &\hspace{2cm} \text{subject to} \quad \boldsymbol{W}\boldsymbol{\theta}^{\M} = \boldsymbol{p}, \notag
\end{align}
where $\boldsymbol{W} \in \mathbb{R}^{k \times m}$  has full row rank $k \leq m$ and  $\rho$,  $\boldsymbol{\theta}^{\M}$, $\bdt$ and $\textsf{L}_{SC}$  are as in Proposition  \ref{prop:LapRLS}. Replace  $\bdt$ with $\bdt^{(l)}$  and $\rho$ with $\rho^{(l)}$ for 
$l=1,...,L$ posterior samples, where $L$ is the number of posterior samples, and note that   (\ref{eq:LapRLS3_proof}) is equivalent to 

\begin{align}
\label{eq:LapRLS3_proof2} \notag
\bdt^{\M\textsf{-SC-B}} & =
   \underset{\boldsymbol{\theta}^{\M}}{\text{minimize}} \quad     
   \rho^{(l)}(\boldsymbol{\theta}^{\M}-      \bdt^{\textsf{$\M$-SC}(l)})^\textsf{T} \boldsymbol{A}_{\rho^{(l)}}(\boldsymbol{\theta}^{\M}- \bdt^{\textsf{$\M$-SC}(l)}) -  \rho^{(l)} (\boldsymbol{\theta}^{(l)})^\textsf{T} \boldsymbol{A}_{\rho^{(l)}}^{-1}\boldsymbol{\theta}^{(l)}
  \\ &\hspace{2cm} \text{subject to} \quad \boldsymbol{W}\boldsymbol{\theta}^{\M} = \boldsymbol{p},
\end{align}
where $\boldsymbol{A}_{\rho^{(l)}}=( \mathbb I_m +((1-\rho^{(l)})/\rho^{(l)})  \textit{L}_{SC})$ and  $\bdt^{\textsf{$\M$-SC}(l)}= \boldsymbol{A}_{\rho^{(l)}}^{-1} \boldsymbol{\theta}^{(l)}$. Therefore, $(i)$ follows by noting that
minimizing (\ref{eq:LapRLS3_proof2}) is equivalent to minimizing 
\begin{align}
\label{eq:LapRLS_PP_proof}  \underset{\widetilde{\bdt}^{\textsf{$\M$} }}{\text{minimize}} \quad  \{ \norm{\bdt^{\textsf{$\M$-SC(l)}} - \widetilde{\bdt}^{\textsf{$\M$} }}_{\boldsymbol{A}_{\rho^{(l)}}^{-1}}: \widetilde{\bdt}^{\textsf{$\M$} } \in \widetilde{\Theta}^{\textsf{$\M$}} \},
\end{align}
where $\widetilde{\bdt}^{\textsf{$\M$}}$ is the projection of $\bdt^{\textsf{$\M$-SC}(l)}$  via the weighted inner product defined by $\boldsymbol{A}_{\rho^{(l)}}^{-1}$ where $\widetilde{\Theta}^{\textsf{$\M$}} = \{ \widetilde{\bdt}^{\textsf{$\M$} } \in \Theta^{\textsf{$\M$}} \mid \boldsymbol{W}\widetilde{\bdt}^{\textsf{$\M$} } = \boldsymbol{p} \}$. To find  $\bdt^{\textsf{FH-SC-B(l)}}$ in part $(ii)$ under the FH-SC model, we consider the Lagrangian as follow 
\begin{align}
\mathcal{L}(\widetilde{\bdt}^{\M}, \boldsymbol{\lambda}) =(\widetilde{\bdt}^{\M}-\bdt^{\textsf{FH-SC}(l)})^\textsf{T} \boldsymbol{A}_{\rho^{(l)}}(\widetilde{\bdt}^{\M} - \bdt^{\textsf{FH-SC}(l)})  + \boldsymbol{\lambda}^T ( \boldsymbol{W}\widetilde{\bdt}^{\M} - \boldsymbol{p}),
\end{align}
where  $\boldsymbol{\lambda}$ is the Lagrange multiplier and solve the following equation system for $\widetilde{\bdt}^{\M }$:

\[
    \frac{\partial \mathcal{L}(\widetilde{\bdt}^{\M}, \boldsymbol{\lambda})}{\partial \widetilde{\bdt}^{\M}} = 2\boldsymbol{A}_{\rho^{(l)}}(\widetilde{\bdt}^{\M }-      \bdt^{\textsf{FH-SC}(l)}) + \boldsymbol{W}^T  \boldsymbol{\lambda} = \boldsymbol{0},
\]
\[
    \frac{\partial \mathcal{L}(\widetilde{\bdt}^{\M }, \boldsymbol{\lambda})}{\partial  \boldsymbol{\lambda}} = \boldsymbol{W}\widetilde{\bdt}^{\M} - \boldsymbol{p} = \boldsymbol{0}.
\]


Part $(iii)$ follows by computing the conditional expectation of  $\bdt^{\textsf{FH-SC-B(l)}}$ in $(ii)$ given $\boldsymbol{X}$,
     $\boldsymbol{Z}$,  $\boldsymbol{y}$ and $\vartheta_{-\boldsymbol{\theta^{(l)}}}^{\textsf{FH-SC}(l)}$
and,  $    E(\bdt^{\textsf{FH-SC-B(l)}}_c  \mid \boldsymbol{X}_c,
     \boldsymbol{Z}_c,  \boldsymbol{y}_c,\vartheta_{-\boldsymbol{\theta^{(l)}_c}}^{\textsf{FH-SC}(l)})$, by assuming that $\boldsymbol{W}=\boldsymbol{w}=(\boldsymbol{w}_1,...,\boldsymbol{w}_C)$  with $\boldsymbol{w}_c=(w_{1,c},...,w_{n_c,c})$ in $    E(\bdt^{\textsf{FH-SC-B(l)}}  \mid \boldsymbol{X},
     \boldsymbol{Z},  \boldsymbol{y},\vartheta_{-\boldsymbol{\theta^{(l)}}}^{\textsf{FH-SC}(l)})$.

\end{proof}

\subsection{Proof of Proposition \ref{prop:def3}}
\label{app:prop5}

\begin{proof}
Consider the posterior PMSE of the RB benchmarked estimator given by
\begin{align}
   \label{eq:CPMSE_projection_estimator1}
\text{PMSE}(\hat{\theta}_{j,c}^{\textsf{FH-SC-B}}\mid \boldsymbol{X}_c,
     \boldsymbol{Z}_c, \boldsymbol{y}_{c}) &= E_{\theta_{j,c}^{\textsf{FH-SC}}}((\hat{\theta}_{j,c}^{\textsf{FH-SC-B}}- \theta_{j,c}^{\textsf{FH-SC}} )^{2}\mid \boldsymbol{X}_c,
     \boldsymbol{Z}_c, \boldsymbol{y}_{c})
\end{align}
Using a RB argument  \citep{Rao1945, blackwell1947conditional} under the PMSE in  (\ref{eq:CPMSE_projection_estimator1}) we define the CPMSE  as follows,
\begin{align}
   \label{eq:CPMSE_projection_estimator2} 
\text{CPMSE}(\hat{\theta}_{j,c}^{\textsf{FH-SC-B}}\mid \boldsymbol{X}_c,
     \boldsymbol{Z}_c, \boldsymbol{y}_{c}) &=  \\ & \hspace{-1cm} 
      E_{\vartheta^{\textsf{FH-SC}}_{-\theta_{j,c}}}(E_{\theta_{j,c}^{\textsf{FH-SC}}}((\hat{\theta}_{j,c}^{\textsf{FH-SC-B}}- \theta_{j,c}^{\textsf{FH-SC}} )^{2}\mid \boldsymbol{X}_c,
     \boldsymbol{Z}_c, \boldsymbol{y}_{c},\vartheta^{\textsf{FH-SC}}_{-\theta_{j,c}})),   \notag
\end{align}
 where the first expectation is under a set of parameters denoted $\vartheta^{\textsf{FH-SC}}_{-\theta_{j,c}}$,
which includes the parameters under the FH-SC model except for $\theta_{j,c}$. 
We add and subtract  the RB estimator
of $\theta_{j,c}$ under the FH-SC model,  $\hat{\theta}_{j,c}^{\textsf{FH-SC}}$, on equation  (\ref{eq:CPMSE_projection_estimator2})  to establish the following equality:

\vspace{0.3mm}

\begin{align}
   \label{eq:CPMSE_projection_estimator3}
    E_{\vartheta^{\textsf{FH-SC}}_{-\theta_{j,c}}}(E_{\theta_{j,c}^{\textsf{FH-SC}}}((\hat{\theta}^{\textsf{FH-SC-B}}_{j,c}- \hat{\theta}_{j,c}^{\textsf{FH-SC}} + \hat{\theta}_{j,c}^{\textsf{FH-SC}}-\theta_{j,c}^{\textsf{FH-SC}} )^{2}\mid \boldsymbol{X}_c,
     \boldsymbol{Z}_c, \boldsymbol{y}_{c},\vartheta^{\textsf{FH-SC}}_{-\theta_{j,c}})) &=  \\ &\hspace{-12cm} 
         E_{\vartheta^{\textsf{FH-SC}}_{-\theta_{j,c}}}(E_{\theta_{j,c}^{\textsf{FH-SC}}}((\hat{\theta}^{\textsf{FH-SC-B}}_{j,c}- \hat{\theta}_{j,c}^{\textsf{FH-SC}} )^{2}\mid \boldsymbol{X}_c,
     \boldsymbol{Z}_c, \boldsymbol{y}_{c},\vartheta^{\textsf{FH-SC}}_{-\theta_{j,c}})) \notag \\ &\hspace{-11cm}  -2  
             E_{\vartheta^{\textsf{FH-SC}}_{-\theta_{j,c}}}(E_{\theta_{j,c}^{\textsf{FH-SC}}}((\hat{\theta}^{\textsf{FH-SC-B}}_{j,c}- \hat{\theta}_{j,c}^{\textsf{FH-SC}})(\hat{\theta}_{j,c}^{\textsf{FH-SC}}-\theta_{j,c}^{\textsf{FH-SC}} )\mid \boldsymbol{X}_c,
     \boldsymbol{Z}_c, \boldsymbol{y}_{c},\vartheta^{\textsf{FH-SC}}_{-\theta_{j,c}}))     \notag 
      \\ \notag \ \hspace{-2cm} 
       +   E_{\vartheta^{\textsf{FH-SC}}_{-\theta_{j,c}}}(E_{\theta_{j,c}^{\textsf{FH-SC}}}(( \hat{\theta}_{j,c}^{\textsf{FH-SC}}-\theta_{j,c}^{\textsf{FH-SC}} )^{2}\mid \boldsymbol{X}_c,
     \boldsymbol{Z}_c, \boldsymbol{y}_{c},\vartheta^{\textsf{FH-SC}}_{-\theta_{j,c}}) &=  \\ &\hspace{-12cm} 
     (\hat{\theta}_{j,c}^{\textsf{FH-SC-B}}-\hat{\theta}_{j,c}^{\textsf{FH-SC}})^{2}  \notag \\ &\hspace{-11cm}  -2  
             E_{\vartheta^{\textsf{FH-SC}}_{-\theta_{j,c}}}(E_{\theta_{j,c}^{\textsf{FH-SC}}}((\hat{\theta}^{\textsf{FH-SC-B}}_{j,c}- \hat{\theta}_{j,c}^{\textsf{FH-SC}})(\hat{\theta}_{j,c}^{\textsf{FH-SC}}-\theta_{j,c}^{\textsf{FH-SC}} )\mid \boldsymbol{X}_c,
     \boldsymbol{Z}_c, \boldsymbol{y}_{c},\vartheta^{\textsf{FH-SC}}_{-\theta_{j,c}}))     \notag 
      \\ \notag &  \hspace{-7cm} 
       +   E_{\vartheta^{\textsf{FH-SC}}_{-\theta_{j,c}}}(E_{\theta_{j,c}^{\textsf{FH-SC}}}(( \hat{\theta}_{j,c}^{\textsf{FH-SC}}-\theta_{j,c}^{\textsf{FH-SC}} )^{2}\mid \boldsymbol{X}_c,
     \boldsymbol{Z}_c, \boldsymbol{y}_{c},\vartheta^{\textsf{FH-SC}}_{-\theta_{j,c}})).          
       \end{align}
According to Definition \ref{def_RB}, we have  $\hat{\theta}_{j,c}^{\textsf{FH-SC}}= E_{\vartheta_{-\boldsymbol{\theta}}^{\textsf{FH-SC}}}(E(\theta_{j,c}^{\textsf{FH-SC}}\mid  \boldsymbol{X}_c,
     \boldsymbol{Z}_c,  \boldsymbol{y}_c, \vartheta_{-\boldsymbol{\theta}}^{\textsf{FH-SC}}))$ and noting that
\begin{align}
   \label{eq:CPMSE_projection_estimator4}
\notag E_{\vartheta^{\textsf{FH-SC}}_{-\theta_{j,c}}}(E_{\theta_{j,c}^{\textsf{FH-SC}}}(( \hat{\theta}_{j,c}^{\textsf{FH-SC}}-\theta_{j,c}^{\textsf{FH-SC}} )^{2}\mid \boldsymbol{X}_c,
     \boldsymbol{Z}_c, \boldsymbol{y}_{c},\vartheta^{\textsf{FH-SC}}_{-\theta_{j,c}}))&=
\text{CPMSE}( \hat{\theta}_{j,c}^{\textsf{FH-SC}} \mid \boldsymbol{X}_c,
     \boldsymbol{Z}_c, \boldsymbol{y}_{c})=  \\ &\hspace{-8cm} 
E_{\vartheta^{\textsf{FH-SC}}_{-\theta_{j,c}}}(\text{V}_{\theta_{j,c}^{\textsf{FH-SC}}}(\theta_{j,c}^{\textsf{FH-SC}}\mid \boldsymbol{X}_c,
     \boldsymbol{Z}_c, \boldsymbol{y}_{c}, \vartheta^{\textsf{FH-SC}}_{-\theta_{j,c}}))  \\ &\hspace{-6cm}  +E_{\vartheta^{\textsf{FH-SC}}_{-\theta_{j,c}}}((E_{\theta_{j,c}^{\textsf{FH-SC}}}( (\hat{\theta}_{j,c}^{\textsf{FH-SC}}-\theta_{j,c}^{\textsf{FH-SC}}) \mid\boldsymbol{X}_c,
     \boldsymbol{Z}_c, \boldsymbol{y}_{c},\vartheta^{\textsf{FH-SC}}_{-\theta_{j,c}}))^{2}), \notag
\end{align}
by considering RB estimators in 
(\ref{eq:CPMSE_projection_estimator4}), we obtain Proposition \ref{prop:def3}. 
\end{proof}


\section{Supplementary Algorithms}
\label{MCMC_supp}

To provide the cluster classification of the municipalities in our case study into $C$ clusters, $C\leq m$, we propose the spectral clustering Algorithm \ref{alg:SC} in Section \ref{SC_algh}. Algorithm  \ref{alg:SC} follows similar steps to those in Algorithm of \cite{von2007tutorial} but it incorporates more than two variables and a 
method to select the number of external covariates. The MCMC  Algorithms  \ref{alg:MCMC1} and \ref{alg:MCMC2} describe the  
steps to obtain posterior samples of the vector of model parameters $\boldsymbol{\kappa}=\{\boldsymbol{\theta}, \boldsymbol{\delta},\boldsymbol{G}_{\boldsymbol{\varphi}}, \rho\}$ under the different models in Table \ref{tab:models}.

\subsection{The spectral clustering Algorithm \ref{alg:SC}}
\label{SC_algh}

The proposed spectral clustering Algorithm \ref{alg:SC} is useful to provide a cluster classification of the external covariates and the PHIA at the municipality level and
the simple graph Laplacian matrix  $\textsf{L}_{SC}= \text{blkdiag}(\{\textsf{L}_{c}\}_{c=1}^{C})$. In  the input, Algorithm \ref{alg:SC} considers the direct estimates of PHIA, $y_{i}$, and  $k=1,...,p^{*}$ external covariates where $\boldsymbol{x}^{*}_{k}=(x^{*}_{1,k},...,x^{*}_{m,k})$. In our application  $p^{*}=2$ and $x^{*}_{i,1}$ and $x^{*}_{i,2}$ are the observed values of Educational Index and MPI in the $i$-th municipality.  To build the similarity graph in  step (1) of Algorithm \ref{alg:SC},  we use  the radial-kernel gram matrix  \citep{gonen2011multiple} as the similarity matrix. The similarity function is $s_{i,j}^{k}=s(x^{*}_{i,k},y_{j})=\exp\{-(y_{j}-x^{*}_{i,k})^{2}/(2\sigma^2_{s})\}$ where $\sigma^2_{s}=1$ according to the specifications in the Manifold Regularization Algorithm proposed by \cite{belkin2006manifold} and the data applications in \cite{von2007tutorial}.  By using $s_{i,j}^{k}$, we construct a similarity graph in step (2) of Algorithm \ref{alg:SC}.

More formally, let $\text{G} = \langle \text{V} , \text{E}  \rangle$ be an undirected graph with vertex set $\text{V} = \{\nu_{1}, . . . , \nu_{m}\}$ where pairs of vertices are connected by an edge $E$, if their similarity is positive or exceeds some threshold (See for instance \cite{hastie2009elements}).  The edges are weighted by using the similarity values $s_{i,j}^{k}$. 
Several alternatives to define the similarity matrix and its associated similarity graph are described in detail in \cite{von2007tutorial}. In particular, to construct the similarity graph in step (2) of Algorithm \ref{alg:SC}, we employ the $C-$nearest graph method where a symmetric set of nearby pairs of points $\Upsilon_C$ is assumed. More specifically, a pair $(i,j)$ is in the set $\Upsilon_C$ if point $i$ is among the $C$-nearest neighbors of $j$, or vice-versa. Then, all symmetric nearest neighbors have edge weight $\eta_{i,j} >0$ computed using  $s_{i,j}^{k}$ 
when $(i,j)$ are connected, otherwise the edge weight is zero. Since the proposed Algorithm \ref{alg:SC}  is designed to consider more than one external covariate, $p^{*}>1$,  we follow \cite{gonen2011multiple} to compute the weights for each pair $(i,j)$ using  the weighted sum of the similarities computed for each covariate $k$,  i.e., $\eta_{i,j}= \sum_{k=1}^{p^{*}} \alpha_{k} s_{i,j}^{k}$. 
  
In our application $ \alpha_{1} = \alpha_{2} =0.5$ since both the Educational Index and MPI are computed from the same data source - the 2014 Census of Agriculture \citep{CNA2014} - and may have the same level of accuracy. In step (3) of Algorithm \ref{alg:SC}, we compute the unnormalized weighted Laplacian matrix  \ref{alg:SC}, $\textsf{L}_u=\text{D}_u-\text{W}_{A}$, which is symmetric and positive semi-definite (see for instance Proposition 1 of \cite{von2007tutorial}). To compute $\textsf{L}_{u}$, we need to calculate the degree matrix $\text{D}_u$ and the weighted adjacency matrix $\text{W}_{A}$ using the weights $\eta_{i,j}$ obtained in step (2), where $\eta_{i,j} \ge 0$ and $\eta_{i,j}=\eta_{j,i}$.  In steps (4a)-(4c), we build the matrix $V_{C}$ containing the eigenvectors $v_{1} , . . . , v_{C}$ of $\textsf{L}_{u}$ as columns and cluster the rows of $V_{C}$ to yield a clustering of the direct estimates of PHIA.  
Lastly, we compute the simple graph Laplacian matrix  $\textsf{L}_{SC}= \text{blkdiag}(\{\textsf{L}_{c}\}_{c=1}^{C})$ to be included in the FH-SC model for SAE. To choose the number of clusters and external covariates, we consider the combination that minimizes the total within-cluster sums of squares as described in \cite{ward1963hierarchical}.


\begin{algorithm}[h] 
\footnotesize
  \begin{algorithmic}
    \State \textbf{Input: direct estimate $y_{i}$ with $i=1,...,m$, external covariates $\boldsymbol{x}^{*}_{k}=(x_{1,k}^{*},...,x_{m,k}^{*})$ for $k=1,...,p^{*}$,   
    and the number of clusters $C$ to be constructed.}
\SetAlgoLined  
\begin{enumerate}[(1)]
        \item Compute  the similarity matrix for each external covariate $\boldsymbol{x}^{*}_{k}$,  $\textsf{S}_{k}=(s_{i,j}^{k}=s(x^{*}_{i,k},y_{j}))_{i,j=1,...,m} \in \mathbb{R}^{m \times m}$ where $s_{i,j}^{k}=s(x^{*}_{i,k},y_{j})=\exp\{-(y_{j}-x^{*}_{i,k})^{2}/(2\sigma^2_{s})\}$.  
    \item Construct the similarity graph using the $C-$nearest graph method  and compute the  weights $\eta_{i,j}= \sum_{k=1}^{p^{*}} \alpha_{k} s_{i,j}^{k}$.
     \item 
      Compute the unnormalized Laplacian $\textsf{L}_{u}=\textsf{D}_u-\text{W}_{A}$. The weighted adjacency matrix of the graph, $\text{W}_{A}$, is calculated using the  weights $\eta_{i,j}$ with $\eta_{i,j} \ge 0$ and $\eta_{i,j}=\eta_{j,i}$. When two vertices $\nu_{i}$ and $\nu_{j}$ are connected $\eta_{i,j}>0$, otherwise $\eta_{i,j}=0$. The diagonal degree matrix, $D_u=\text{diag}(d_1,\dots,d_m)$, uses $d_{i}= \sum_{j=1}^{m} \eta_{i,j}$.  
      \end{enumerate}
      \begin{enumerate}[(4a)]
     \item Compute the first $c$ eigenvectors $v_{1} , . . . , v_{C}$ of $\textsf{L}_{u}$.
     \item Let $V_{C} \in \mathbb{R}^{m \times C}$ be the matrix containing the vectors $v_{1},...,v_{C}$ as columns. 
     \item For $i=1,...,m$, let  $r_{i} \in \mathbb{R}^{C}$ be the vector corresponding to the $i$-th row of $V_{C}$.
     \item Cluster the points $(r_{i})_{i=1,...,m} \in \mathbb{R}^{C}$ with the $k$-means algorithm into clusters $c=1,...,C$.
     \item Create $D_{c} =\{j | r_{j} \in c\}$ clusters with $c=1,...,C$.     
     \item Compute the Laplacian matrix  $\textsf{L}_{SC}$, where the $(i,j)$-th element of $\textsf{L}_{SC}$ is given by 
$$\textsf{L}_{SC}(i,j):=\begin{cases}
n_c-1, & \text{ if }\;\; i=j \;\; \text{ and} \;\;  i \in D_{c},\\
-1, & \text{ if }\;\; i \neq j \;\; \text{ and} \;\;  i \in D_{c},\\
0 & \text{Otherwise}.
\end{cases}
$$
      \end{enumerate}
                \State \textbf{Output: Laplacian matrix $\textsf{L}_{SC}= \text{blkdiag}(\{\textsf{L}_{c}\}_{c=1}^{C})$, where $\textsf{L}_{c}=n_c\mathbb I_{n_{c}} - \boldsymbol{1}_{n_c}\boldsymbol{1}_{n_c}^\textsf{T}$ and   $n_c=\mid \{j | r_{j} \in c\}\mid$. The total within-cluster sum of squares given by $\sum_{c=1}^{C} \sum_{j=1}^{n_c} (y_{j,c}-\bar{y}_c)^2$ where $\bar{y}_c=\sum_{j=1}^{n_c}y_{j,c}$.}  
  \end{algorithmic}
    \caption{\footnotesize Unnormalized spectral clustering \citep{von2007tutorial}  adapted to build the cluster classification in the case study.}
     \label{alg:SC}
\end{algorithm}

\clearpage

\subsection{Prior specification and Posterior distribution}
\label{prior_spec}
For the FH-SC models in Table \ref{tab:models},  we consider a joint prior distribution  $\pi(\boldsymbol{\delta}_c,\boldsymbol{G}_{\boldsymbol{\varphi},c},\rho_c) =  \prod_{c=1}^{C} \pi(\boldsymbol{G}_{\boldsymbol{\varphi},c})\pi(\boldsymbol{\delta}_c)\pi(\rho)$, where $\pi(\boldsymbol{G}_{\boldsymbol{\varphi},c})$ denotes the prior for the variance parameters  $\boldsymbol{\varphi}$ in the covariance matrix $\boldsymbol{G}_{\boldsymbol{\varphi},c}$, and $\pi(\boldsymbol{\delta}_c)$ and $\pi(\rho)$ the priors for $\boldsymbol{\delta}_c$ and $\rho$, respectively. We consider improper Uniform priors for $\boldsymbol{\delta}_c^\textsf{T}=(\delta_{1,c}...,\delta_{p,c})$ for  $c=1,...,C$, 
and a $\text{Beta}(a,b)$ prior distribution on $\rho$ with hyperparameters $a$ and $b$. Our proposed FH-SC model in (\ref{fay_herriot_s}) with these  prior specifications can be written in hierarchical form as follows:
\begin{align}\small
\label{fay_herriot2}
\begin{split}
    \by_c & \sim \text{Normal}(A_{\rho,c}^{-1}\bdt_c,  \boldsymbol{D}_c ),\\ 
\bdt_c &\sim  \text{Normal}\left(\boldsymbol{X}_c\boldsymbol{\delta}_c,  \boldsymbol{Z}_c\boldsymbol{G}_{\boldsymbol{\varphi},c}\boldsymbol{Z}_c^\textsf{T} \right),\\
\rho &\sim \text{Beta}( a,b),\\
\pi(\boldsymbol{\delta}_c,\boldsymbol{G}_c ,\rho) &= \prod_{c=1}^{C} \pi(\boldsymbol{G}_{\boldsymbol{\varphi},c})\pi(\rho),
\end{split}
\end{align}
with   $\bdt^{\textsf{FH-SC}}_c   =  A_{\rho,c}^{-1}  \bdt_c$ and  $A_{\rho,c}^{-1}$ and $\rho \in (0,1]$ as in Definition  \ref{def:FHSC_m}. 

Therefore, we can write the joint posterior density
for $\boldsymbol{\kappa}=\{\boldsymbol{\theta}, \boldsymbol{\delta},\boldsymbol{G}_{\boldsymbol{\varphi}}, \rho\}$ with $\boldsymbol{G}_{\boldsymbol{\varphi}}= \text{blkdiag}(\{\boldsymbol{G}_{\boldsymbol{\varphi},c}\}_{c=1}^{C})$  and $\boldsymbol{\delta}^\textsf{T}=(\boldsymbol{\delta}_1,...,\boldsymbol{\delta}_C)$ as,
\begin{align}\small
\label{eq:post1} 
p(\boldsymbol{\kappa}\mid \boldsymbol{y}, \boldsymbol{X}, \boldsymbol{Z}, \boldsymbol{D}) =  \prod_{c=1}^{C}  p(\boldsymbol{\kappa}_c \mid \boldsymbol{y}_c, \boldsymbol{X}_c, \boldsymbol{Z}_c, \boldsymbol{D}_c),
\end{align}
where $\boldsymbol{\kappa}_c=\{\boldsymbol{\theta}_c, \boldsymbol{\delta}_c,\boldsymbol{G}_{\boldsymbol{\varphi}, c}, \rho\}$ and  
{\small
\begin{align*}
p(\boldsymbol{\kappa}_c \mid \boldsymbol{y}_c, \boldsymbol{X}_c, \boldsymbol{Z}_c, \boldsymbol{D}_c)& \propto   |\boldsymbol{Z}_{c}\boldsymbol{G}_{\boldsymbol{\varphi},c}\boldsymbol{Z}_{c}^\textsf{T}|^{-1/2} 
 \exp\left\{-\frac{1}{2}(A_{\rho,c}\boldsymbol{y}_{c} - \boldsymbol{\theta}_{c})^\textsf{T}A_{\rho,c}^{-1}\boldsymbol{D}_{c}^{-1}A_{\rho,c}^{-1}(A_{\rho,c}\boldsymbol{y}_{c} - \boldsymbol{\theta}_{c})\right\} \\   &  \hspace{-0.5cm}  \times  \exp\left\{-\frac{1}{2}(\boldsymbol{\theta}_{c} - \boldsymbol{X}_{c}\boldsymbol{\delta}_{c})^\textsf{T}( \boldsymbol{Z}_{c}\boldsymbol{G}_{\boldsymbol{\varphi},c}\boldsymbol{Z}_{c}^\textsf{T})^{-1}(\boldsymbol{\theta}_{c} - \boldsymbol{X}_{c}\boldsymbol{\delta}_{c})\right\}
\pi(\boldsymbol{G}_{\boldsymbol{\varphi},c}) \pi(\rho).
\end{align*}
}

Algorithms \ref{alg:MCMC1} and \ref{alg:MCMC2} in Section \ref{MCM_algh1} are used to generate posterior samples of the model parameters $\boldsymbol{\kappa}$ for the FH-SC models presented in Table \ref{tab:models}, according to the joint posterior density in (\ref{eq:post1}). 

\subsection{Supplementary Markov chain Monte Carlo Algorithms \ref{alg:MCMC1} and \ref{alg:MCMC2}}
\label{MCM_algh1}

 Since the conditional distribution for $\rho$  does not have a known form, we use an adaptive Metropolis within Gibbs step in Algorithm \ref{alg:MCMC1}. 
To obtain acceptable rejection rates  we use similar procedures to those implemented for the dependent
parameter in 
Bayesian spatial  models for econometrics \citep{lesage2009introduction, wilhelm2013estimating}. More specifically, to achieve moves over the  entire conditional distribution of $\rho$ in step (1) of Algorithm \ref{alg:MCMC1}, we consider a random-walk procedure where the proposal distribution is Normal and the tuning parameter $\kappa^{\textsf{new}}$ is adjusted to hold acceptance rates between 40\% and 60\%. We implemented  this procedure in our data-based simulation  study in supplementary Section \ref{subsec:simulations} and the motivating application in Section \ref{sec:application}. We observed good mixing for all parameters by adjusting the tuning parameter, multiplying by a factor $\nu$, $\kappa^{\textsf{new}}=\kappa^{\textsf{old}}\nu$. Specifically, we consider $\nu = 1/1.1$ or $\nu = 1.1$ when the acceptance rate during the MCMC falls below 40\% or rises above 60\%, respectively. As in \cite{lesage2009introduction} and \cite{wilhelm2013estimating} for  Bayesian spatial  models, we use a non-informative Beta prior for $\rho$ with $a = b = 1.1$, which induces near zero probability mass on the end points of the interval $(0, 1)$. In addition, by considering $\rho \sim$ Beta(1.1, 1.1) with $\rho \in$ (0,1), we avoid $\rho=1$ in the FH-SC model and facilitate the comparison with FH and FH-C models. The conditional distribution of $\boldsymbol{\theta}_c$  in closed form is provided by Theorem \ref{expectation}  and therefore, we can easily implement two Gibbs sampling steps \citep{gelfand1990sampling2} in  step (2-a) of Algorithm \ref{alg:MCMC2}.
 
Note that  Algorithm  \ref{alg:MCMC1} can be implemented to obtain posterior samples of $\boldsymbol{\theta}^{\textsf{FH-SC}}_{c}$ and $\rho$, and to compute the conditional expectation and variance of $\boldsymbol{\theta}_c$ for the FH-SC models in Table \ref{tab:models}. Importantly, by using $\rho^{(l)}=1$ in Algorithm \ref{alg:MCMC1}, we can obtain posterior samples from the FH and FH-C models in Table \ref{tab:models}. Due to the specific settings for $\boldsymbol{\delta}_c$
 and $\boldsymbol{G}_{\boldsymbol{\varphi},c}$  in Table \ref{tab:models}, we designed Supplementary Algorithm  \ref{alg:MCMC2} 
 to obtain posterior samples for the regression parameters, $\boldsymbol{\delta}_c$, and variance components,  $\boldsymbol{\varphi}_c$,  under the different models in Table \ref{tab:models}. According to Theorem \ref{prop2}, the posterior distribution in (\ref{eq:post1}) is proper if the prior for the cluster regularization penalty $\rho$ and the prior for the variance parameters $\boldsymbol{\varphi}_c$ are proper. We consider independent Gamma($a_{1/\boldsymbol{\varphi}_c}$, $b_{1/\boldsymbol{\varphi}_c}$) priors to sample the scales $1/\boldsymbol{\varphi}_c$ for each cluster $c$. We found that using small hyperparameter values, such as $a_{(1/\boldsymbol{\varphi}_c)}=1$ and $b_{(1/\boldsymbol{\varphi}_c)}=1$, is effective in achieving good mixing for the model parameters, particularly for the parameter $\rho$.

%

\begin{algorithm}[h] 
\footnotesize
  \begin{algorithmic}
    \State  \textbf{Input: $(\boldsymbol{\theta}^{(l-1)}, \boldsymbol{\delta}^{(l-1)}, \boldsymbol{G}_{\boldsymbol{\varphi}^{(l-1)}}= \text{blkdiag}(\{\boldsymbol{G}_{\boldsymbol{\varphi}^{(l-1)},c}\}_{c=1}^{C}), \rho^{(l-1)})$ and compute $A_{\rho^{(l-1)},c}$,  $\boldsymbol{A}_{\rho^{(l-1)}}= \text{blkdiag}(\{A_{\rho^{(l-1)},c}\}_{c=1}^{C})$ and $\boldsymbol{\theta}^{\textsf{FH-SC}(l-1)}=\boldsymbol{A}_{\rho^{(l-1)}}^{-1}\boldsymbol{\theta}^{(l-1)}$.}
\SetAlgoLined
\For{$l=1,...,L$}{ 
\begin{enumerate}[(1)] 
\item  \noindent Generate  
$\log(\rho^{*}) \sim \text{Normal}(\log(\rho^{*}); \log(\rho^{(l-1)}), \kappa^{\textsf{new}})$
and draw $\rho^{(l)}$ with acceptance probability
\begin{align*}
\min\left\{1,
\dfrac{ \text{N}\left(\bdt^{(l-1)},  \boldsymbol{A}_{\rho^{*}}^{-1} \boldsymbol{X}\boldsymbol{\delta}^{(l-1)},  \boldsymbol{A}_{\rho^{*}}^{-1} \boldsymbol{Z}\boldsymbol{G}_{\boldsymbol{\varphi}^{(l-1)}}\boldsymbol{Z}^\textsf{T} \boldsymbol{A}_{\rho^{*}}^{-1} \right) \times \text{Beta}(\rho^{*}, a,b)}{ \text{N}\left(\bdt^{(l-1)},  \boldsymbol{A}_{\rho^{(l-1)}}^{-1} \boldsymbol{X}\boldsymbol{\delta}^{(l-1)},  \boldsymbol{A}_{\rho^{(l-1)}}^{-1} \boldsymbol{Z}\boldsymbol{G}_{\boldsymbol{\varphi}^{(l-1)}}\boldsymbol{Z}^\textsf{T} \boldsymbol{A}_{\rho^{(l-1)}}^{-1} \right) \times \text{Beta}(\rho^{(l-1)}, a,b)} \times
\dfrac{\rho^{*}}{
\rho^{(l-1)}}\right\},
\end{align*}
and update $\boldsymbol{A}_{\rho^{(l)}}= \text{blkdiag}(\{A_{\rho^{(l)},c}\}_{c=1}^{C})$.
\end{enumerate}
\For{$c=1,...,C$}{  
\begin{enumerate}[(2-a)] 
\item   \noindent Draw $\boldsymbol{\theta}^{(l)}_c\mid \boldsymbol{\delta}^{(l-1)}_c,\boldsymbol{G}_{\boldsymbol{\varphi}^{(l-1)},c}, \rho^{(l)}, \by_c$ using the Normal distribution,
\noindent \begin{equation*}
\boldsymbol{\theta}^{(l)}_c \sim \text{Normal}\left(E(\boldsymbol{\theta}_c^{(l)} \mid \boldsymbol{\delta}^{(l-1)}_c,\boldsymbol{G}_{\boldsymbol{\varphi}^{(l-1)},c}, \rho^{(l)}, \boldsymbol{Z}_{c}, \boldsymbol{X}_{c}, \by_c), V(\boldsymbol{\theta}^{(l)}_c\mid \boldsymbol{\delta}^{(l-1)}_c,\boldsymbol{G}_{\boldsymbol{\varphi}^{(l-1)},c}, \rho^{(l)}, \boldsymbol{Z}_{c}, \boldsymbol{X}_{c}, \by_c) \right),
\end{equation*}
where $E(\boldsymbol{\theta}_c^{(l)} \mid \boldsymbol{\delta}^{(l-1)}_c,\boldsymbol{G}_{\boldsymbol{\varphi}^{(l-1)},c}, \rho^{(l)}, \boldsymbol{Z}_{c}, \boldsymbol{X}_{c}, \by_c)$ and $V(\boldsymbol{\theta}_c^{(l)} \mid \boldsymbol{\delta}^{(l-1)}_c,\boldsymbol{G}_{\boldsymbol{\varphi}^{(l-1)},c}, \rho^{(l)}, \boldsymbol{Z}_{c}, \boldsymbol{X}_{c}, \by_c)$ are computed using (\ref{smooth1_a}) and (\ref{smooth1_b}) in Theorem \ref{expectation}, respectively.
\item  Update $ \boldsymbol{\theta}_{c}^{\textsf{FH-SC}(l)}=  A_{\rho^{(l)},c}^{-1} \boldsymbol{\theta}_{c}^{\textsf{FH-SC}(l)}$
\end{enumerate}
}
\begin{enumerate}[(3)] 
\item  Update $\boldsymbol{\theta}^{\textsf{FH-SC}(l)} = (\boldsymbol{\theta}_{1}^{\textsf{FH-SC}(l)},...,\boldsymbol{\theta}_{C}^{\textsf{FH-SC}(l)})^\textsf{T}$ 
\end{enumerate}
}    
\State \textbf{Output: $E(\boldsymbol{\theta}_c^{(l)} \mid \boldsymbol{\delta}^{(l-1)}_c,\boldsymbol{G}_{\boldsymbol{\varphi}^{(l-1)},c}, \rho^{(l)}, \boldsymbol{Z}_{c}, \boldsymbol{X}_{c}, \by_c)$, $V(\boldsymbol{\theta}_c^{(l)} \mid \boldsymbol{\delta}^{(l-1)}_c,\boldsymbol{G}_{\boldsymbol{\varphi}^{(l-1)},c}, \rho^{(l)}, \boldsymbol{Z}_{c}, \boldsymbol{X}_{c}, \by_c)$,   $\rho^{(l)}$,  $\boldsymbol{\theta}^{\textsf{FH-SC}(l)} = (\boldsymbol{\theta}_{1}^{\textsf{FH-SC}(l)},...,\boldsymbol{\theta}_{C}^{\textsf{FH-SC}(l)})^\textsf{T}$  with $\boldsymbol{\theta}_{c}^{\textsf{FH-SC}(l)}=(\theta_{1,c}^{\textsf{FH-SC}(l)},...,\theta_{n_c,c}^{\textsf{FH-SC}(l)})^\textsf{T}$  and  $\boldsymbol{\theta}^{(l)} = (\boldsymbol{\theta}_{1}^{(l)},...,\boldsymbol{\theta}_{C}^{(l)})^\textsf{T}$ with $\boldsymbol{\theta}_{c}^{(l)}=(\theta_{1,c}^{(l)},...,\theta_{n_c,c}^{(l)})^\textsf{T}$.
}      
\end{algorithmic}
 \caption{ \footnotesize Adaptive Metropolis-within-Gibbs sampling  for $\boldsymbol{\theta}^{(l)}_c$, $\boldsymbol{\theta}^{\textsf{FH-SC}(l)}_c$ and $\rho^{(l)}$ for $c=1,...,C$ under the FH-SC model  (\ref{fay_herriot2}). The conditional posterior expectation and variance of  $\boldsymbol{\theta}_c^{(l)}$ and $\boldsymbol{A}_{\rho^{(l)}}$ are also produced in the output.}
 \label{alg:MCMC1}
\end{algorithm}

 

 \begin{algorithm}[h]
\footnotesize
  \begin{algorithmic}
    \State \textbf{Input: ($\boldsymbol{\theta}^{\textsf{FH-SC}(l)}$,  $\boldsymbol{\theta}^{(l)}$, $\boldsymbol{G}_{\boldsymbol{\varphi}^{(l-1)}}= \text{blkdiag}(\{\boldsymbol{G}_{\boldsymbol{\varphi}^{(l-1)},c}\}_{c=1}^{C})$) and compute $A_{\rho^{(l)},c}$,  $\boldsymbol{A}_{\rho^{(l)}}= \text{blkdiag}(\{A_{\rho^{(l)},c}\}_{c=1}^{C})$.}
\begin{enumerate}[(1)] 
\item \eIf{$\boldsymbol{\delta}_c=\boldsymbol{\beta}$}{
Draw 
\begin{equation*}
\boldsymbol{\beta}^{(l)}  \mid \boldsymbol{G}_{\boldsymbol{\varphi}^{(l-1)}}, \boldsymbol{A}_{\rho^{(l)}}, \boldsymbol{\theta}^{\textsf{FH-SC}(l)}, \by \sim \text{Normal} \left(M_{\bb}^{(l)} , V_{\boldsymbol{\bb}}^{(l)}\right),
\end{equation*}
\begin{align*}
V_{\boldsymbol{\bb}}^{(l)}&=(\boldsymbol{X}^\textsf{T} (\boldsymbol{Z}\boldsymbol{G}_{\boldsymbol{\varphi}^{(l-1)}}\boldsymbol{Z}^\textsf{T})^{-1}\boldsymbol{X})^{-1} , &
M_{\bb}^{(l)} &=V_{\boldsymbol{\bb}}^{(l)} ( \boldsymbol{X}^\textsf{T}\boldsymbol{A}_{\rho^{(l)}}(\boldsymbol{Z}\boldsymbol{G}_{\boldsymbol{\varphi}^{(l-1)}}\boldsymbol{Z}^\textsf{T})^{-1}\boldsymbol{\theta}^{\textsf{FH-SC}(l)}),
\end{align*}
and update $\boldsymbol{\delta}^{(l)}_c= \boldsymbol{\beta}^{(l)}$. 
}{
\For{$c=1,...,C$}{  \noindent   Draw 
\begin{equation*}
 \boldsymbol{\beta}^{(l)}_c  \mid \boldsymbol{G}_{\boldsymbol{\varphi}^{(l-1)},c}, A_{\rho^{(l)},c}, \boldsymbol{\theta}^{\textsf{FH-SC}(l)}_{c}, \by \sim \text{Normal} \left(M_{\bb_c^{(l)}}, V_{\boldsymbol{\bb}_c^{(l)}}\right),
\end{equation*}
\begin{align*}
V_{\boldsymbol{\bb}_c^{(l)}}&=(\boldsymbol{X}_c^\textsf{T} (\boldsymbol{Z}_c\boldsymbol{G}_{\boldsymbol{\varphi}^{(l-1)},c}\boldsymbol{Z}_c^\textsf{T})^{-1}\boldsymbol{X}_c)^{-1} , &
M_{\bb_c^{(l)}} &=V_{\boldsymbol{\bb}_c^{(l)}} ( \boldsymbol{X}_c^\textsf{T}A_{\rho^{(l)},c}(\boldsymbol{Z}_c\boldsymbol{G}_{\boldsymbol{\varphi}^{(l-1)},c}\boldsymbol{Z}_c^\textsf{T})^{-1}\boldsymbol{\theta}^{\textsf{FH-SC}(l)}_c),
\end{align*}
and update $\boldsymbol{\delta}^{(l)}_c= \boldsymbol{\beta}^{(l)}_c$. 
}
}
\end{enumerate}
\begin{enumerate}[(2)] 
\item \eIf{$\boldsymbol{G}_{\boldsymbol{\varphi},c}=\sigma^{2} \mathbb I_{n_{c}}$}{
Draw
\begin{equation*}
1/\sigma^{2(l)} \mid  \boldsymbol{G}_{\boldsymbol{\varphi}^{(l-1)},c}, \boldsymbol{A}_{\rho^{(l)}}, \boldsymbol{\delta}^{(l)}_c, \boldsymbol{\theta}^{\textsf{FH-SC}(l)}, \by \sim \text{Gamma} \left(\frac{1}{2}m + a_{(1/\sigma^{2})}, \frac{1}{2}\text{SSB}_{u}^{(l)}+ b_{(1/\sigma^{2})}  \right) ,
\end{equation*}
 where 
$ \text{SSB}_{u}^{(l)} = (A_{\rho^{(l)}}\boldsymbol{\theta}^{\textsf{FH-SC}(l)} - \boldsymbol{X}\boldsymbol{\delta}^{(l)}_c)^\textsf{T}(\boldsymbol{Z}\boldsymbol{Z}^\textsf{T})^{-1}(A_{\rho^{(l)}}\boldsymbol{\theta}^{\textsf{FH-SC}(l)} - \boldsymbol{X}\boldsymbol{\delta}^{(l)}_c)$ and update $\boldsymbol{G}_{\boldsymbol{\varphi}^{(l)},c}=\sigma^{2(l)}\mathbb I_{n_{c}}$. 
}{
\eIf{$\boldsymbol{G}_{\boldsymbol{\varphi},c}=\sigma^{2}_c\mathbb I_{n_{c}}$}{
\For{$c=1,...,C$}{  \noindent  
Draw
\begin{equation*}
1/\sigma_c^{2(l)} \mid  \boldsymbol{G}_{\boldsymbol{\varphi}^{(l-1)},c}, A_{\rho^{(l)},c}, \boldsymbol{\delta}^{(l)}_c, \boldsymbol{\theta}_c^{\textsf{FH-SC}(l)}, \by \sim \text{Gamma} \left(\frac{1}{2}n_c + a_{(1/\sigma^{2}_c)}, \frac{1}{2}\text{SSB}_{u,c}^{(l)}+ b_{(1/\sigma^{2}_c)} \right),
\end{equation*}
 where 
$ \text{SSB}_{u,c}^{(l)} = (A_{\rho^{(l)},c}\boldsymbol{\theta}_c^{\textsf{FH-SC}(l)} - \boldsymbol{X}_c\boldsymbol{\delta}^{(l)}_c)^\textsf{T}(\boldsymbol{Z}_c\boldsymbol{Z}_c^\textsf{T})^{-1}(A_{\rho^{(l)}}\boldsymbol{\theta}_c^{\textsf{FH-SC}(l)} - \boldsymbol{X}_c\boldsymbol{\delta}^{(l)}_c)$ and update $\boldsymbol{G}_{\boldsymbol{\varphi}^{(l)},c}=\sigma_c^{2(l)}\mathbb I_{n_{c}}$. }
}{
\If{$\boldsymbol{G}_{\boldsymbol{\varphi},c}=\text{diag}_{n_c+1}(\hat{\gamma},\boldsymbol{1}_{n_c})\sigma_c^{2}$}{
\For{$c=1,...,C$}{  \noindent   
Draw
\begin{equation*}
1/\sigma_c^{2(l)} \mid  \boldsymbol{G}_{\boldsymbol{\varphi}^{(l-1)},c}, A_{\rho^{(l)},c}, \boldsymbol{\delta}^{(l)}_c, \boldsymbol{\theta}_c^{\textsf{FH-SC}(l)}, \by \sim \text{Gamma} \left(\frac{1}{2}n_c + \frac{1}{2}+  a_{(1/\sigma^{2}_c)}, \frac{1}{2}\text{SSB}_{u,c}^{(l)}+  b_{(1/\sigma^{2}_c)}\right),
\end{equation*}
where 
$ \text{SSB}_{u,c}^{(l)} = (A_{\rho^{(l)},c}\boldsymbol{\theta}_c^{\textsf{FH-SC}(l)} - \boldsymbol{X}_c\boldsymbol{\delta}^{(l)}_c)^\textsf{T}(\boldsymbol{Z}_c
\boldsymbol{H}_{\hat{\gamma}}
\boldsymbol{Z}_c^\textsf{T})^{-1}(A_{\rho^{(l)},c}\boldsymbol{\theta}_c^{\textsf{FH-SC}(l)} - \boldsymbol{X}_c\boldsymbol{\delta}^{(l)}_c)$ 
where $\boldsymbol{H}_{\hat{\gamma}}= \text{diag}_{n_c+1}(\hat{\gamma},1,...,1)$ and update $\boldsymbol{G}_{\boldsymbol{\varphi}^{(l)},c}=\text{diag}_{n_c+1}(\hat{\gamma},\boldsymbol{1}_{n_c})\sigma_c^{2(l)}$. }}}}
\end{enumerate}
\State \textbf{Output: ($\boldsymbol{G}_{\boldsymbol{\varphi}^{(l)}}= \text{blkdiag}(\{\boldsymbol{G}_{\boldsymbol{\varphi}^{(l)},c}\}_{c=1}^{C})$, $\boldsymbol{\delta}^{(l)}_c)$}      
\end{algorithmic}
 \caption{\small Gibbs sampling steps for $\boldsymbol{G}_{\boldsymbol{\varphi}^{(l)},c}$ and $\boldsymbol{\delta}^{(l)}_c$  for the FH-SC models in Table \ref{tab:models}.}
 \label{alg:MCMC2}
\end{algorithm}

\clearpage


\section{Simulation studies}
\label{sec:simula}

To illustrate the performance of the proposed FH-SC model given in   (\ref{fay_herriot_s}) and the CPMSE given in Proposition  \ref{prop:def3}, we implement two simulation studies. In Section \ref{sec:simulations_CPMSE},  we consider
a model-based simulation study using the 
the FH model
\citep{fay_1979}, and compare the results using different values of correlations 
between direct estimates and covariates. In Section \ref{subsec:simulations}, we consider a data-based
simulation study using the same covariates from our motivating case study where the generating model is the proposed FH-SC$_{1}$  model. 
In Section \ref{sec:simulations_CPMSE}, we show that the proposed measure of uncertainty for benchmarked estimators, CPMSE, works well under the FH model and its precision increases with the values of correlation between
direct estimates and covariates. Importantly, in Section \ref{subsec:simulations}, we illustrate situations where 
the FH-SC$_{1}$ model performs better than the FH model and the proposed CPMSE is useful to approximate the MSE of the RB benchmarked
estimator under the FH-SC$_{1}$ model. 


To evaluate the performance of the CPMSE  given in Proposition \ref{prop:def3}  in both simulation studies,  
we calculate the average of the CPMSE  of $\hat{\theta}_i^{\mathcal{M}\textsf{-B}}$ 
 across the simulated data sets for  each small area, as follows
 \begin{align}
   \label{eq:CPMSE}
 \widehat{\text{CPMSE}}(\hat{\theta}_i^{\mathcal{M}\textsf{-B}}) &= \dfrac{1}{T}\sum_{t=1}^{T}(\text{CPMSE}( \hat{\theta}_i^{\mathcal{M}\textsf{-B}}))^{(t)}, 
\end{align}
and the MSE of $\hat{\theta}_i^{\mathcal{M}\textsf{-B}}$ given by 
 \begin{align}
   \label{eq:MSE}
 \widehat{\text{MSE}}(\hat{\theta}_i^{\mathcal{M}\textsf{-B}}) &= \dfrac{1}{T}\sum_{t=1}^{T}(\hat{\theta}_i^{\mathcal{M}\textsf{-B}}-  \theta_{i}^{(t)})^{2},
\end{align}
where $t=1,...,T$ are the simulated data sets. Also, to measure the quality of the approximation of the MSE provided by our proposed CPMSE, we consider the average  across the small areas of  (\ref{eq:CPMSE}) and (\ref{eq:MSE}), given by  
 \begin{align}
   \label{eq:CPMSE_MSE}
\widehat{\text{CPMSE}}^{\textsf{Avg}-\mathcal{M}} &= \dfrac{1}{m}\sum_{i=1}^{m}  \widehat{\text{CPMSE}}(\hat{\theta}_i^{\mathcal{M}\textsf{-B}}), &
\widehat{\text{MSE}}^{\textsf{Avg}-\mathcal{M}} &= \dfrac{1}{m}\sum_{i=1}^{m} \widehat{\text{MSE}}(\hat{\theta}_i^{\mathcal{M}\textsf{-B}}).
\end{align}


\subsection{Model-based simulation study under the Fay-Herriot model}

 \label{sec:simulations_CPMSE} 
 
In this section, we consider the following \cite{fay_1979} model for our simulations:
\begin{align}
   \label{eq:FH_model}
   y_{i} &= \theta_{i} + e_{i}, \\
   \theta_{i} &= \beta_{0} + \beta_{1}x_{i} + u_{i}, \notag
\end{align}
where $u_{i} \overset{\mathrm{iid}}{\sim}  N(0,\sigma^2)$ and $e_{i} \overset{\mathrm{iid}}{\sim} N(0,D_{i})$ for $i=1,...,m$. We consider different numbers of small areas
$m = \{50, 100, 500\}$ and the design matrix $\boldsymbol{X}$ has a column of ones and
one explanatory variable, $x_{i}  \overset{\mathrm{iid}}{\sim}  \text{Uniform}(0,1)$. We set equally spaced values of $D_{i}$ from 0.1 to 1. We specify $\beta_{0} =1$ and 
$\sigma^2=0.25$, and consider different values of correlation between $\boldsymbol{x}$ and $\boldsymbol{y}$, $\text{cor}(\boldsymbol{x}, \boldsymbol{y})=\{0.2, 0.4, 0.8 \}$, by setting $\beta_{1} = \sqrt{(12(\bar{D}+\sigma^2))/(1-\text{cor}(\boldsymbol{x}, \boldsymbol{y})^2)}$ with $\bar{D}=(1/m)\sum_{i=1}^{m}D_{i}$. We simulate $T=100$ data sets generating $ \theta_{i}$ and $ y_{i}$ from model  (\ref{eq:FH_model}) 
with the same  specification of $\boldsymbol{X}$ for a given $m$. 
 
 \begin{figure}[ht]
\begin{center}
\begin{tabular}{ccc}
 \hspace{-0.5cm} $\text{cor}(\boldsymbol{x}, \boldsymbol{y})=0.2$  &   \hspace{0.5cm} $\text{cor}(\boldsymbol{x}, \boldsymbol{y})=0.4$ &  \hspace{0.5cm}  $\text{cor}(\boldsymbol{x}, \boldsymbol{y})=0.8$ \\
  \hspace{-1.0cm}  \includegraphics[width=0.35\textwidth]{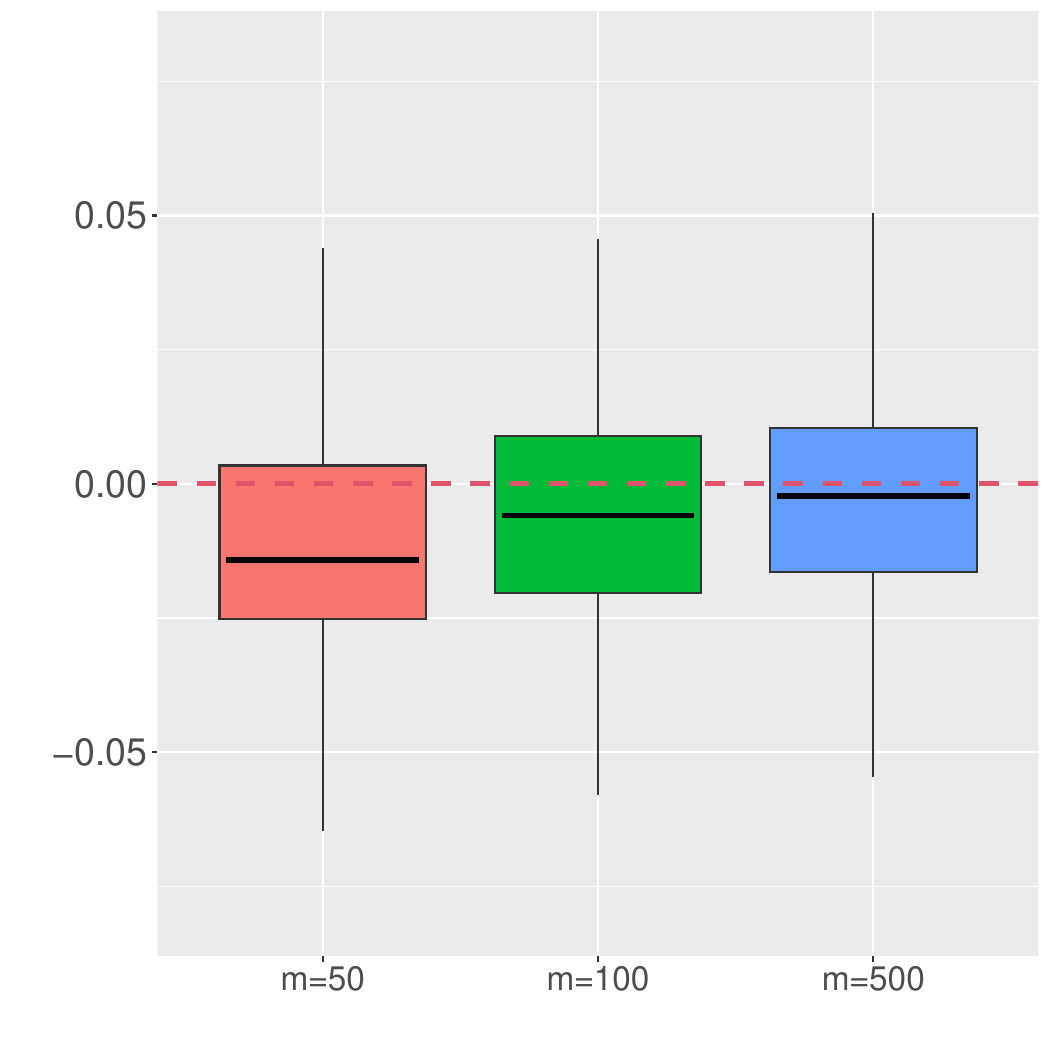}   & \includegraphics[width=0.35\textwidth]{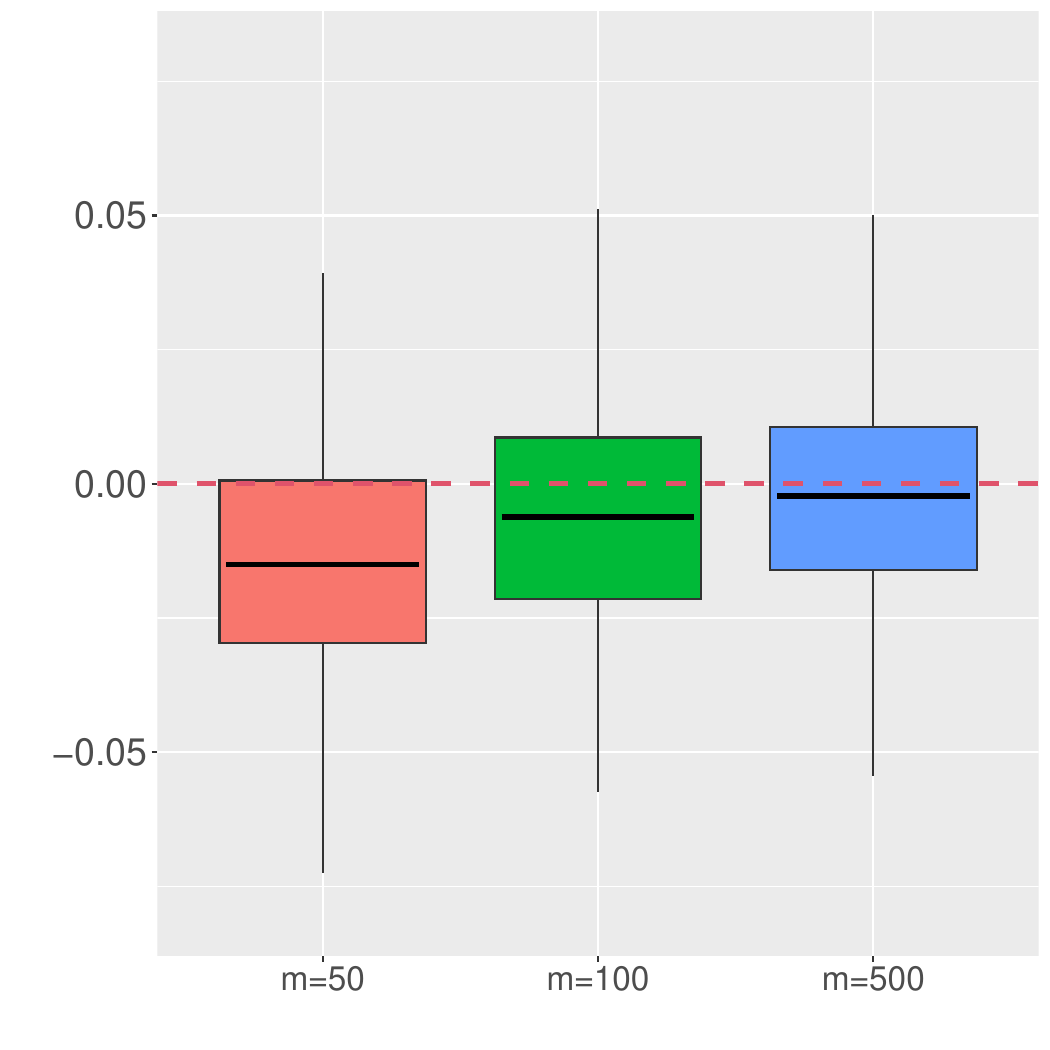}   &
\includegraphics[width=0.35\textwidth]{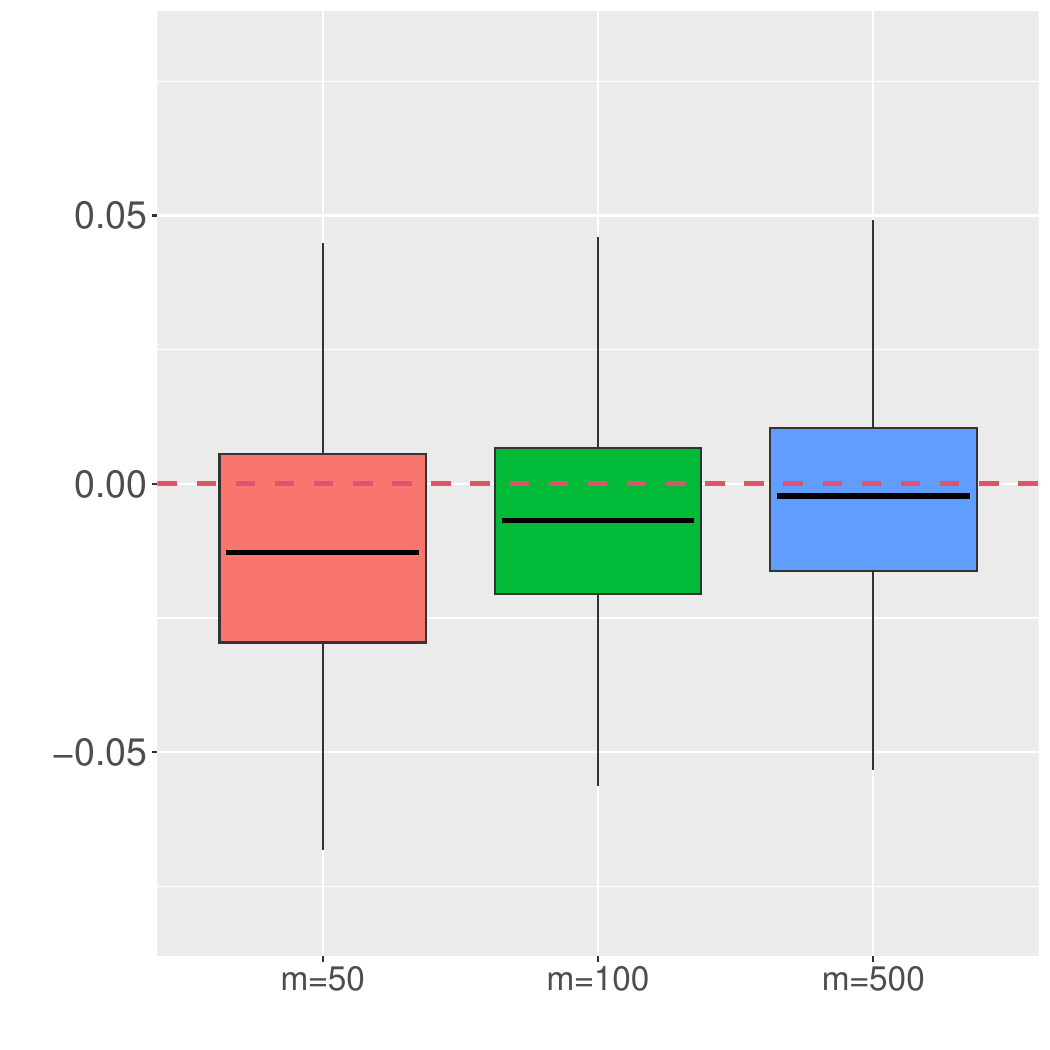} \\
\end{tabular}
\end{center}
\vspace{-0.5cm}
\textit{\caption{
 $\widehat{ \emph{CPMSE}}(\hat{\theta}_{i}^{\emph{FH-B}}) - \widehat{\emph{MSE}}(\hat{\theta}_{i}^{\emph{FH-B}})$  for correlation values $\text{cor}(\boldsymbol{x}, \boldsymbol{y})=\{0.2, 0.4, 0.8 \}$ and number of small areas $m = \{50, 100, 500\}$. The red lines display the average of the differences,  $\widehat{\emph{CPMSE}}(\hat{\theta}_{i}^{\emph{FH-B}}) - \widehat{\emph{MSE}}(\hat{\theta}_{i}^{\emph{FH-B}})$, across the small areas.
\label{fig:CPMSE}}}
\end{figure}

{\renewcommand{\arraystretch}{4}
\begin{table}[h]  
\resizebox{1.1\textwidth}{!}{
\begin{tabular}{||cc||c||c|c|c||c||c|c|c||c||c|c|c||ccc} 
  \hline 
 & $\text{cor}(\boldsymbol{x}, \boldsymbol{y})$ & $m$ & $\widehat{\text{MSE}}^{\textsf{Avg}}$ & $\widehat{\text{CPMSE}}^{\textsf{Avg}}$ & $\mid$  $\text{Diff}^{\textsf{Avg}}$ $\mid$ & $m$ &  $\widehat{\text{MSE}}^{\textsf{Avg}}$ & $\widehat{\text{CPMSE}}^{\textsf{Avg}}$ & $\mid$  $\text{Diff}^{\textsf{Avg}}$ $\mid$ & $m$ &  $\widehat{\text{MSE}}^{\textsf{Avg}}$ & $\widehat{\text{CPMSE}}^{\textsf{Avg}}$ & $\mid$  $\text{Diff}^{\textsf{Avg}}$ $\mid$ \\ 
  \hline \hline 
 & 0.2 & 50 & 0.1852 & 0.1994 & 0.0142 & 100 & 0.1727 & 0.1793 & 0.0066 & 500 & 0.1635 & 0.1658 & 0.0023 \\   \hline
   & 0.4 & 50 & 0.1836 & 0.1986 & 0.0150 & 100 & 0.1722 & 0.1792 & 0.0070 & 500 & 0.1636 & 0.1659 & 0.0023 \\   \hline
   & 0.8 & 50 & 0.1863 & 0.1991 & 0.0128 & 100 & 0.1721 & 0.1798 & 0.0077 & 500 & 0.1636 & 0.1658 & 0.0022 \\ 
  \hline
  \end{tabular}
}
\caption{$\widehat{\text{CPMSE}}^{\textsf{Avg}}$, $\widehat{\text{MSE}}^{\textsf{Avg}}$ and the differences, $\text{Diff}^{\textsf{Avg}}=\widehat{\text{CPMSE}}^{\textsf{Avg}} - \widehat{\text{MSE}}^{\textsf{Avg}}$,  in absolute value under the FH model of the benchmarked estimator,  for  $\text{cor}(\boldsymbol{x}, \boldsymbol{y})=\{0.2, 0.4, 0.8 \}$ and $m = \{50, 100, 500\}$.} 
\label{table:CPMSE1}
\end{table}
}

 Figure \ref{fig:CPMSE} illustrates the difference between $\widehat{ \text{CPMSE}}(\hat{\theta}_{i}^{\text{FH-B}})$ and $ \widehat{\text{MSE}}(\hat{\theta}_{i}^{\text{FH-B}})$ for different values of the number of small areas (50, 100 and 500) and correlations (0.2, 0.4 and 0.8). We observe that $\widehat{ \text{CPMSE}}(\hat{\theta}_{i}^{\text{FH-B}})$ provides a good approximation of the MSE for the benchmarked estimators under FH model, as the difference between $\widehat{ \text{CPMSE}}(\hat{\theta}_{i}^{\text{FH-B}})$  and $ \widehat{\text{MSE}}(\hat{\theta}_{i}^{\text{FH-B}})$ 
is small for $i=1,...,m$. More importantly, the quality of the  approximation increases with the number of small areas and remains unaffected for the different values of correlation.  Table \ref{table:CPMSE1}  
displays, for the different number of small areas and  values of correlation, the average across the small areas of  (\ref{eq:CPMSE}) and (\ref{eq:MSE}) under FH model, $\widehat{\text{CPMSE}}^{\textsf{Avg-}\text{FH-B}}$ and $\widehat{\text{MSE}}^{\textsf{Avg-}\text{FH-B}}$, and the difference between those measures in absolute value.
The results in Table \ref{table:CPMSE1} support the results in Figure \ref{fig:CPMSE},  showing that CPMSE produces accurate estimates of the MSE and their precision improves as the number of small areas increases.

\subsection{Data-based simulation study under the FHSC$_{1}$ model}

\label{subsec:simulations}


In this section, we consider the data-based simulation study. We use the proposed FH-SC$_{1}$ given in  Table \ref{tab:models}  with the prior specification discussed in Section \ref{subsec:The_MCMC}. The generating model is given by 

\begin{align}
\label{fay_herriot2_sim}
\begin{split}
    \by_c & \sim \text{Normal}(A_{\rho,c}^{-1}\bdt_c,  \boldsymbol{D}_c ),\\ 
\bdt_c &\sim  \text{Normal}\left(\boldsymbol{X}_c\boldsymbol{\delta}_c,  \boldsymbol{Z}_c\boldsymbol{G}_{\boldsymbol{\varphi},c}\boldsymbol{Z}_c^\textsf{T} \right),\\
\rho &\sim \text{Beta}( a,b),\\
\pi(\boldsymbol{\delta}_c,\boldsymbol{G}_c ,\rho) &= \prod_{c=1}^{C} \pi(\boldsymbol{G}_{\boldsymbol{\varphi},c})\pi(\rho).
\end{split}
\end{align}

We use the same covariates of the case study, a vector of ones and  the index of illiteracy obtained from the 2014 Census of Agriculture \citep{CNA2014}.
We consider the matrix  $A_{\rho}$ as defined in Section \ref{sec:application}.
To set the parameters in this simulation study, we consider the posterior mean estimates of the regression parameters obtained
under the FH model. Specifically, we set $\beta_{0}=0.5$ , $\beta_{1}=-0.01$ and  $\sigma^2_{u}=7$. Similar to our case study, we assume that 
the proportion of households with internet connection at municipality-levels needs to be agregated with the national level value, according to  \cite{internet}. More formally, the benchmarking constraints are given by $\sum_{c=1}^{C}\sum_{j=1}^{n_c} w_{j,c} \hat{\theta}_{j,c} = 0.418$.  We use three different values for the cluster regularization penalty given by $\rho=\{0.01,0.1,0.2\}$. To evaluate the performance of the FH-SC$_{1}$ and  FH models, we simulate 100 data sets and compute the average absolute deviation (AAD) and average square deviation (ASD) as follows,
\begin{align}
\text{AAD} &= \dfrac{1}{m} \sum_{c=1}^{C} \sum_{j=1}^{n_c} \mid \hat{\theta}_{j,c}^{\M}
 - \theta_{j,c} \mid, 
& \text{ASD} &= \dfrac{1}{m} \sum_{c=1}^{C} \sum_{j=1}^{n_c} ( \hat{\theta}_{j,c}^{\M}
 - \theta_{j,c} )^{2},
\end{align}
where $\theta_{j,c}$ is estimated by its posterior mean $\hat{\theta}_{j,c}^{\M}$.
\begin{figure}[ht]
\begin{center}
\begin{tabular}{ccc}
\includegraphics[width=0.6\textwidth]{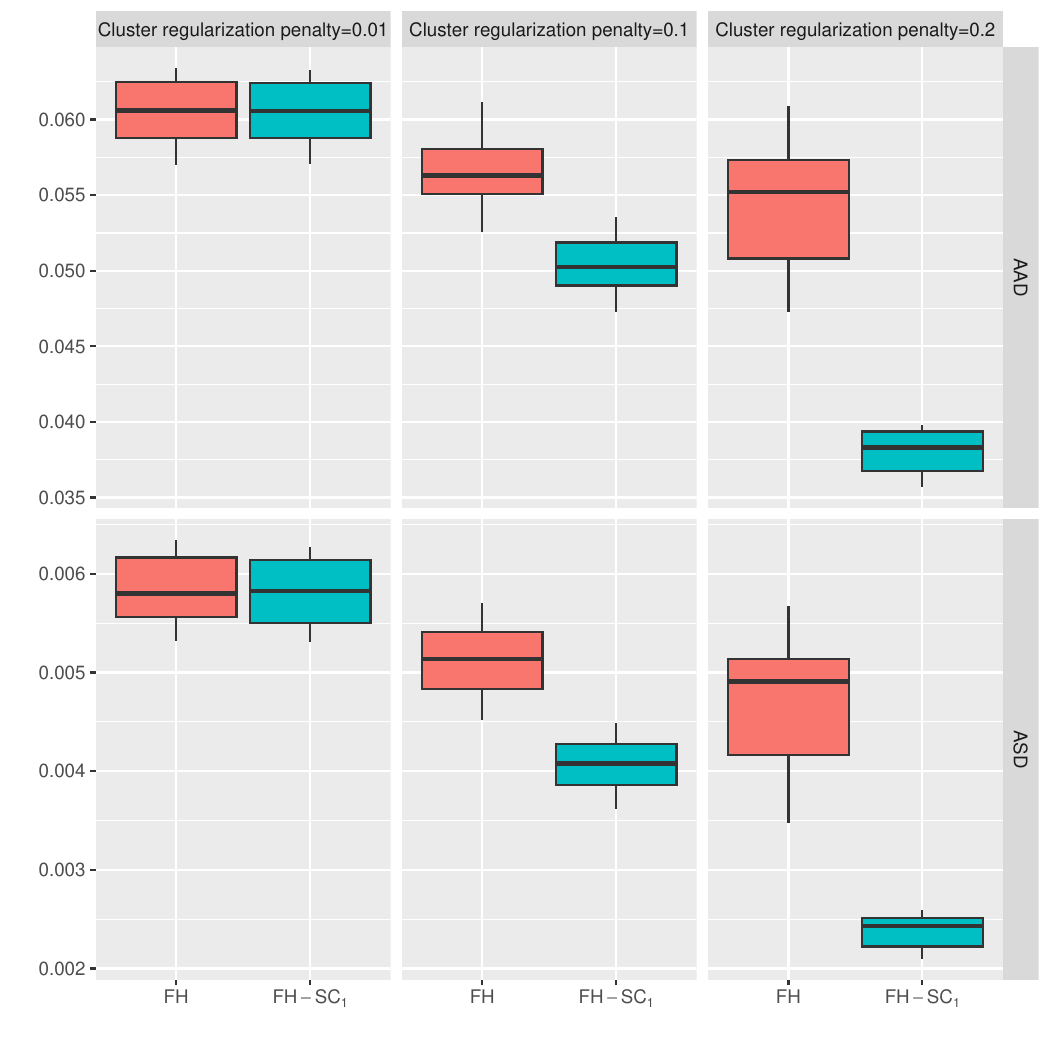} 
\end{tabular}
\end{center}
\caption{AAD and ASD measures  for the RB estimators under the FH  and FH-SC$_{1}$ models.  \label{fig:fig1}}
\end{figure}

Figure \ref{fig:fig1} illustrates that the posterior  estimates under FH-SC$_{1}$ produce smaller AAD and  ASD values compared to those produced under the FH model. 
More specifically, the efficacy of the FH-SC$_{1}$ model increases with  $\rho$, yielding RB estimates
that are closer to the true mean, $\theta_{i}$. In other words, when the cluster classification of covariates is important, the RB estimates under FH-SC$_1$ model outperform the RB estimates under the FH model.  To assess the performance of  the CPMSE, we consider  $ \widehat{\text{CPMSE}}(\hat{\theta}_{j,c}^{\mathcal{M}\textsf{-B}})$,  $\widehat{\text{MSE}}(\hat{\theta}_{j,c}^{\mathcal{M}\textsf{-B}})$ as in (\ref{eq:CPMSE}) and (\ref{eq:MSE}), and $\widehat{\text{CPMSE}}^{\textsf{Avg}-\mathcal{M}}$ and $\widehat{\text{CPMSE}}^{\textsf{Avg}-\mathcal{M}}$  as in  (\ref{eq:CPMSE_MSE}).  When the generating model is the proposed FH-SC$_{1}$, according to Figure \ref{fig:CPMSE1},
the CPMSE provides a good approximation of the MSE of the RB estimator under the FH-SC$_{1}$ model and the MSE
of the RB benchmarked estimator under the FH and FH-SC$_{1}$ models, for $i=1,...,m$. The good approximations of the MSE given by the proposed CPMSE are also confirmed with the results of Table  \ref{table:CPMSE2}, where the average across the small areas of the differences in absolute value are very small and consistent  for the different cluster regularization penalty values, $\rho=\{0.01,0.1,0.2\}$.

{\renewcommand{\arraystretch}{4.5}
\begin{table}[h]  
\resizebox{1.0\textwidth}{!}{
\begin{tabular}{||cc||c||ccc||c||ccc||c||ccc||ccc} 
  \hline  \hline
 & $\rho$ & Estimator & $\widehat{MSE}^{\textsf{Avg}}$ & $\widehat{CPMSE}^{\textsf{Avg}}$ & $\mid$ $\text{Diff}^{\textsf{Avg}}$ $\mid$ & Estimator & $\widehat{MSE}^{\textsf{Avg}}$ & $\widehat{CPMSE}^{\textsf{Avg}}$ & $\mid$ $\text{Diff}^{\textsf{Avg}}$ $\mid$ & Estimator & $\widehat{MSE}^{\textsf{Avg}}$ & $\widehat{CPMSE}^{\textsf{Avg}}$ & $\mid$ $\text{Diff}^{\textsf{Avg}}$ $\mid$ \\ 
  \hline 
 & 0.01& $\hat{\theta}_{j,c}^{\textsf{FH-SC}_{1}}$ & 0.0058 & 0.0060 & 0.0002 & $\hat{\theta}_{j,c}^{\textsf{FH-B}}$ & 0.0111 & 0.0113 & 0.0002 & $\hat{\theta}_{j,c}^{\textsf{FH-SC}_{1}-\textsf{B}}$ & 0.0126 & 0.0129 & 0.0003 \\ 
 & 0.1 & $\hat{\theta}_{j,c}^{\textsf{FH-SC}_{1}}$ & 0.0041 & 0.0042 & 0.0001 & $\hat{\theta}_{j,c}^{\textsf{FH-B}}$ & 0.0083 & 0.0086 & 0.0003 & $\hat{\theta}_{j,c}^{\textsf{FH-SC}_{1}-\textsf{B}}$ & 0.0177 & 0.0181 & 0.0004 \\ 
 & 0.2 & $\hat{\theta}_{j,c}^{\textsf{FH-SC}_{1}}$ & 0.0023 & 0.0025 & 0.0002 & $\hat{\theta}_{j,c}^{\textsf{FH-B}}$ & 0.0078 & 0.0082 & 0.0004 & $\hat{\theta}_{j,c}^{\textsf{FH-SC}_{1}-\textsf{B}}$ & 0.0198 & 0.0203 & 0.0005 \\ 
    \hline
  \end{tabular}
}
\caption{$\widehat{\text{CPMSE}}^{\textsf{Avg}}$ and $\widehat{\text{MSE}}^{\textsf{Avg}}$  and the differences, $\text{Diff}^{\textsf{Avg}}=\widehat{\text{CPMSE}}^{\textsf{Avg}}-\widehat{\text{MSE}}^{\textsf{Avg}}$, in absolute value  for $\rho=\{0.01,0.1,0.2\}$.}
\label{table:CPMSE2}
\end{table}
}
  
\begin{figure}[t]
\begin{center}
\begin{tabular}{ccc}
 \hspace{-0.5cm} $\rho=0.01$  &   \hspace{0.5cm} $\rho=0.1$ &  \hspace{0.5cm}  $\rho=0.2$ \\
  \hspace{-1.0cm}  \includegraphics[width=0.35\textwidth]{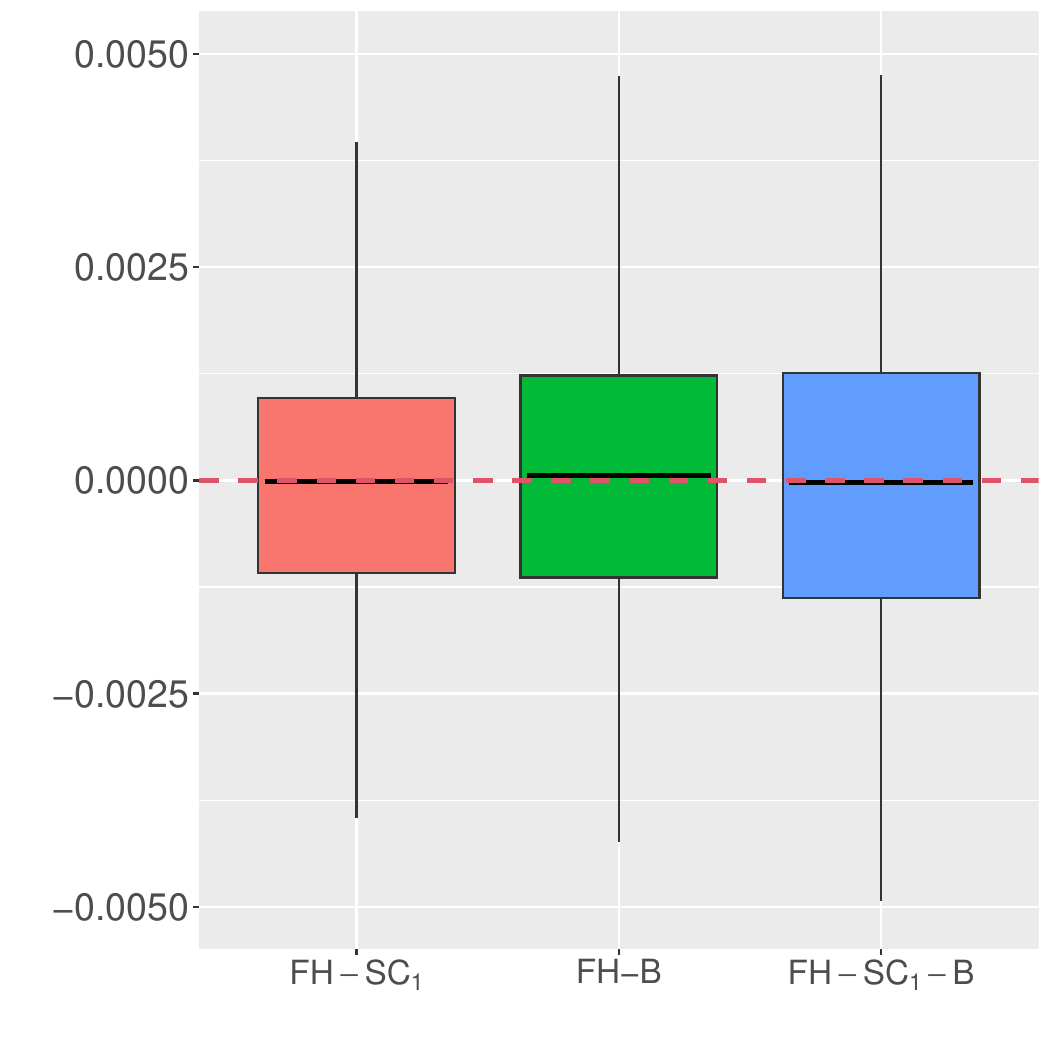}   & \includegraphics[width=0.35\textwidth]{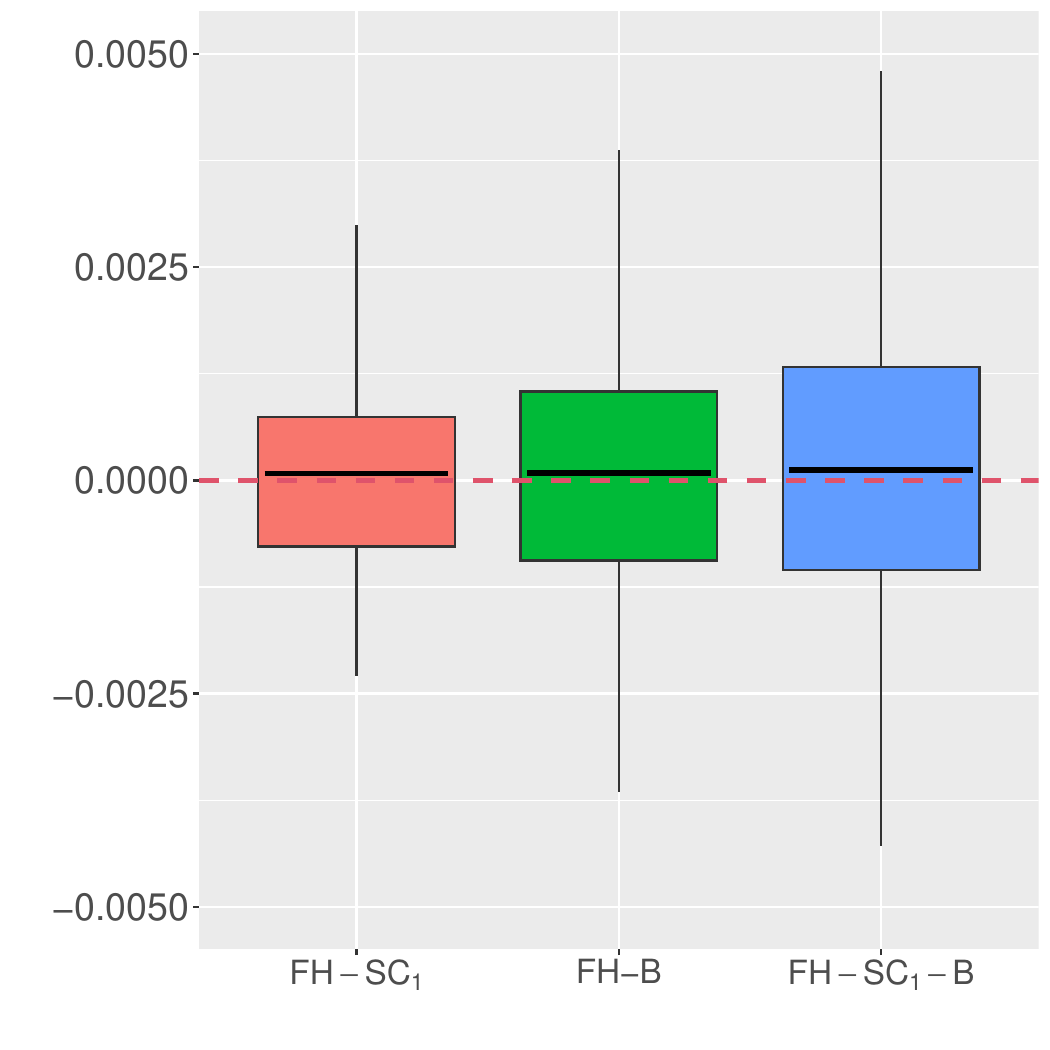}   &
\includegraphics[width=0.35\textwidth]{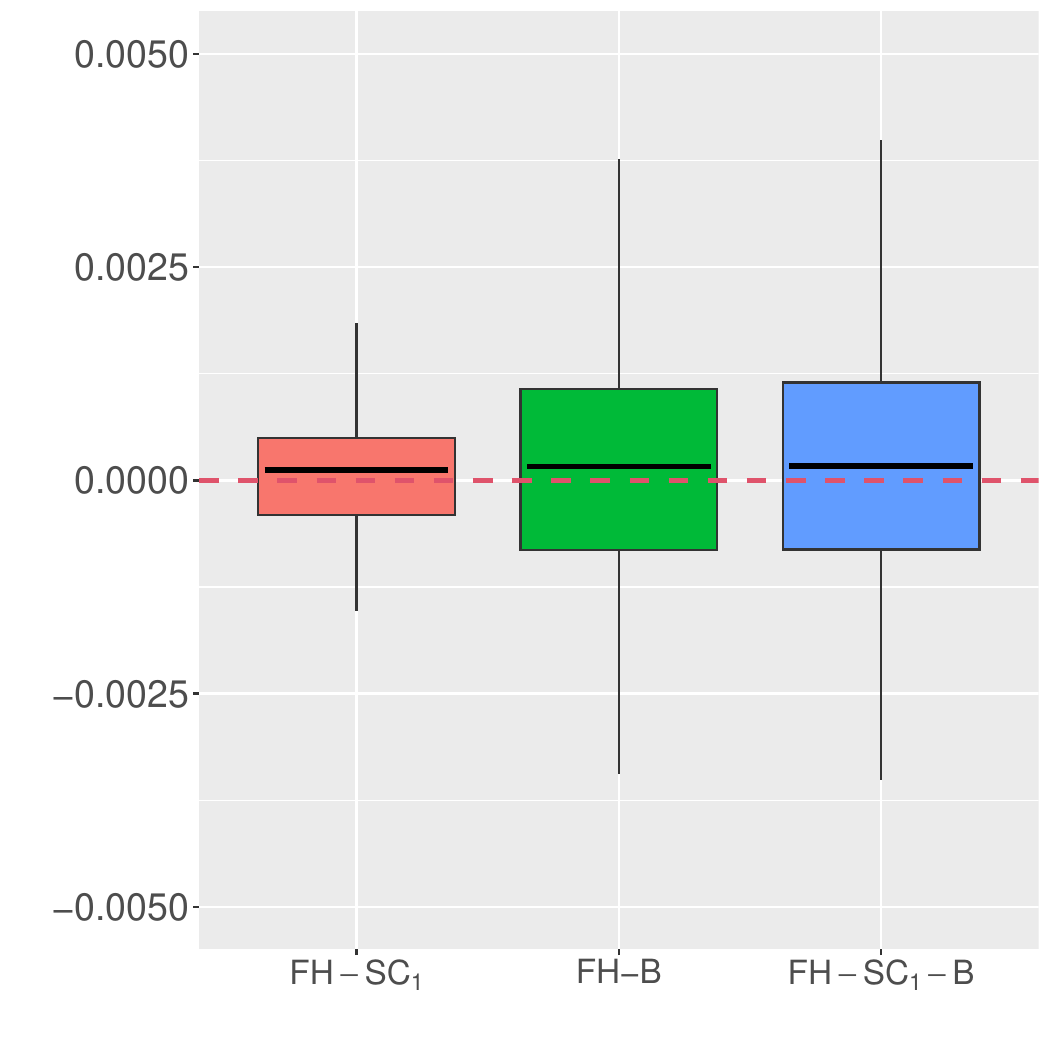} \\
\end{tabular}
\end{center}
\vspace{-0.5cm}
\caption{$\widehat{\text{CPMSE}}(\hat{\theta}_{j,c}^{\M\textsf{-B}}) - \widehat{\text{MSE}}(\hat{\theta}_{j,c}^{\M\textsf{-B}}) $
where  $\rho=\{0.01,0.1,0.2\}$. The red lines display the average of the differences, $\widehat{\text{CPMSE}}(\hat{\theta}_{j,c}^{\M\textsf{-B}}) - \widehat{\text{MSE}}(\hat{\theta}_{j,c}^{\M\textsf{-B}}) $, across the small areas.
\label{fig:CPMSE1}}
\end{figure}

\clearpage

\section{Supplemental material for the Case Study}
\label{applied}

In this section, we present the additional results of the Case Study.  Section \ref{sub_fig} contains the Supplemental Figures and Tables. Figure \ref{fig:figcve_ratio} shows the ratios between the coefficients of variation
produced by RB estimators under FH and  FH-SC$_{2}$  models and RB benchmarked estimators under FH and  FH-SC$_{2}$ models, respectively.  Table \ref{table:conf_cve} shows the results for the main capital cities and some relevant municipalities with high values of poverty and/or education deficit indexes in 2018 at the municipality level in Colombia. Figures \ref{fig:posterior_estimates1} and \ref{fig:posterior_estimates2} show  the posterior distributions of the RB  estimates
and the  95\% confidence intervals  constructed with the direct estimates and the direct variances. Figure \ref{fig:post1} displays  the spatial patterns of the different RB estimates of PHIA.  
Details of the sensitivity analysis are provided in Supplementary Section \ref{sub_applied2}.  Supplementary Section \ref{sub_applied4} is dedicated to discuss the convergence of  Algorithms \ref{alg:MCMC1} and  \ref{alg:MCMC2}. The assumptions on the error and random effects are provided in Supplementary Section \ref{sub_applied3}.

\subsection{Model Selection Criteria and Deviance Measures}
\label{sub_dic}

When benchmarking is incorporated in the FH-SC model, the DIC for $c = 1, ..., C$ and $j = 1, ..., n_{c}$ is obtained as follows:
{\small
\begin{equation}\label{eq:dic}
\text{DIC} =  \dfrac{2}{L-T} \sum_{l=T+1}^{L}\sum_{c=1}^{C}\sum_{j=1}^{n_c} \dfrac{(y_{j,c}-\theta_{j,c}^{\textsf{FH-SC-B}(l)})^2}{D_{j,c}} -\sum_{c=1}^{C}\sum_{j=1}^{n_c}  \dfrac{(y_{j,c}-\hat{\theta}_{j,c}^{\textsf{FH-SC-B}})^2}{D_{j,c}}.
\end{equation}
}
The DIC for the model without benchmarking can be computed using posterior samples from the FH-SC model, $\theta_{j,c}^{\textsf{FH-SC}(l)}$, and the RB estimates, $\hat{\theta}_{j,c}^{\textsf{FH-SC}}$.

The EPD is computed using the average of a discrepancy measure between the direct estimates of PHIA ($y_{j,c}$)
and the posterior benchmarked draws ($\tilde{y}_{j,c}^{\textsf{FH-SC-B}}$) obtained from the posterior predictive distribution.  More specifically, 
{\small
\begin{align*}
\text{EPD}(\boldsymbol{y}, \tilde{\boldsymbol{y}}^{\textsf{FH-SC-B}}) = \dfrac{1}{L-T} \sum_{l=T+1}^{L}  \Delta(\boldsymbol{y}, \tilde{\boldsymbol{y}}^{\textsf{FH-SC-B}(l)}),
\end{align*}}
where $\Delta(\boldsymbol{y}, \tilde{\boldsymbol{y}}^{\textsf{FH-SC-B}(t)})$ is a deviance measure calculated for $l=T+1,...,L$ posterior samples. In this work, we consider the Average Absolute Deviation (AAD) and Average Square Deviation (ASD) as two options for the deviance measure:
{\small
\begin{align}
\Delta^{\text{AAD}} &= \sum_{c=1}^{C}\sum_{j=1}^{n_c}  | y_{j,c} - \tilde{y}_{j,c}^{\textsf{FH-SC-B}(l)}|,&
\Delta^{\text{ASD}} &=\sum_{c=1}^{C}\sum_{j=1}^{n_c}   (y_{j,c} - \tilde{y}_{j,c}^{\textsf{FH-SC-B}(l)})^{2}. \notag
\end{align}
In a similar fashion, the EPD can be applied to posterior predictive samples generated using RB small area estimates.
}

\clearpage

\subsection{Supplementary Figures and Tables}
\label{sub_fig}

\begin{figure}[ht]
\begin{center}
\begin{tabular}{ccc}
a)  $(\text{CV}(\hat{\theta}_{j,c}^{\textsf{FH-SC$_{2}$}})/\text{CV}(\hat{\theta}_{j,c}^{\textsf{FH}})<1)$\%=92.17 & \hspace{1cm} b)  $(\text{CV}(\hat{\theta}_{j,c}^{\textsf{FH-SC$_{2}$}\text{-B}})/\text{CV}(\hat{\theta}_{j,c}^{\textsf{FH}\text{-B}})<1)$\%= 85.03 \\
\includegraphics[width=0.50\textwidth]{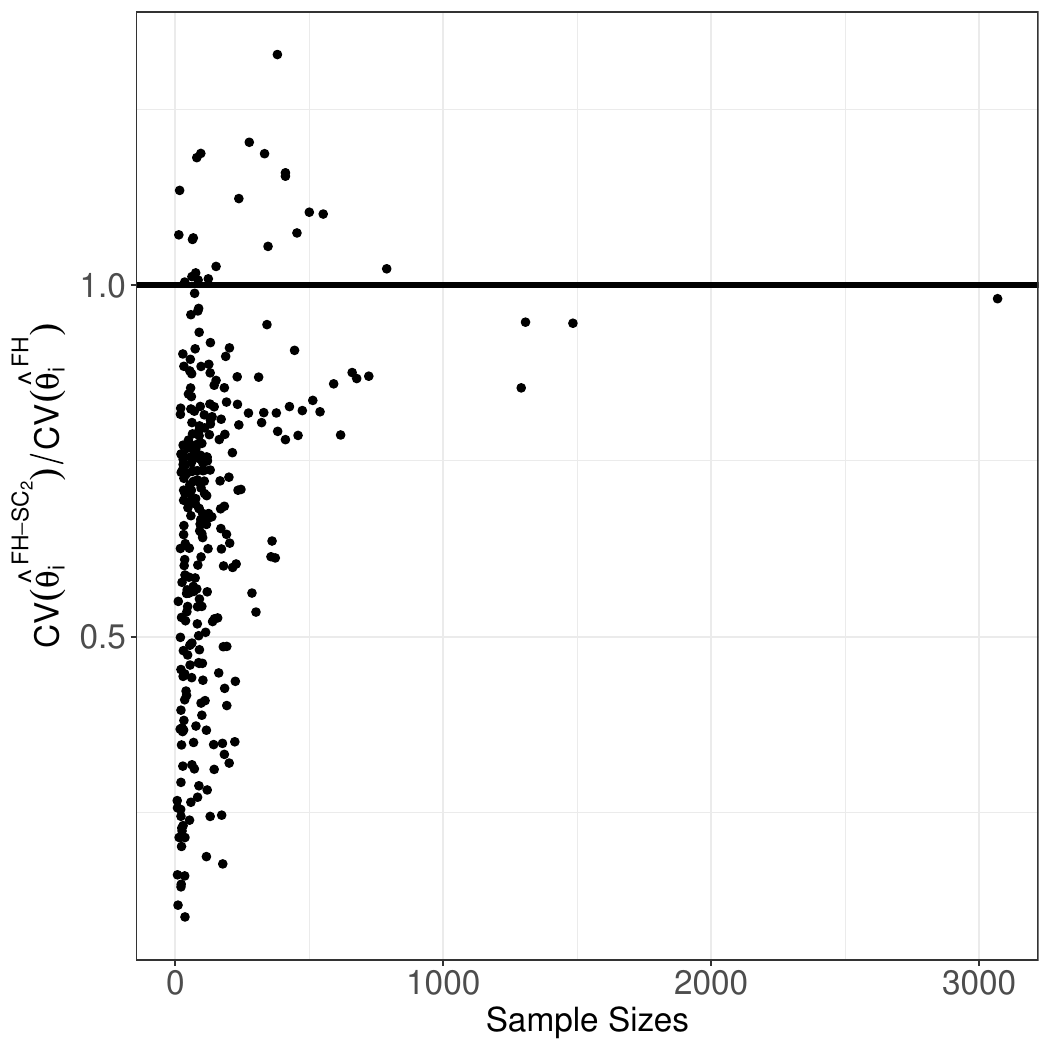} & \hspace{0.2cm}
\includegraphics[width=0.50\textwidth]{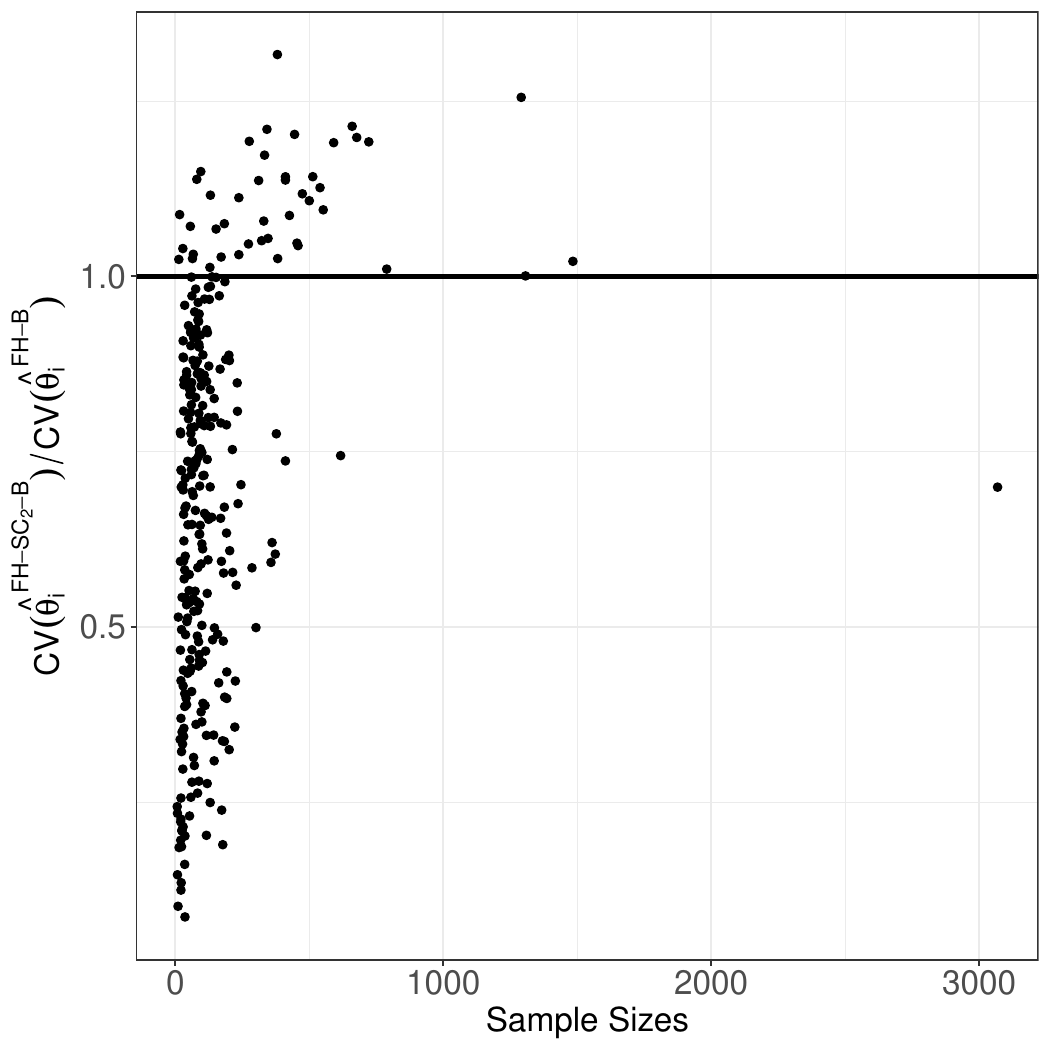} 
\end{tabular}
\end{center}
\caption{Ratios of Coefficient of Variations (CVs) produced by RB estimators under FH-SC$_{2}$ and FH models, respectively, and 
Ratios of CVs produced by RB benchmarked estimators  under FH-SC$_{2}$ and FH models, respectively, where, a)  $\text{CV}(\hat{\theta}_{j,c}^{\textsf{FH-SC$_{2}$}})/\text{CV}(\hat{\theta}_{j,c}^{\textsf{FH}})$ and b)  $\text{CV}(\hat{\theta}_{j,c}^{\textsf{FH-SC$_{2}$}\text{-B}})/\text{CV}(\hat{\theta}_{j,c}^{\textsf{FH}\text{-B}})$.}
\label{fig:figcve_ratio}
\end{figure}

\newpage

{\renewcommand{\arraystretch}{2.8}
\begin{table}[h]  
\begin{center}
\resizebox{1.0\textwidth}{!}{
\begin{tabular}{cccc>{\columncolor[gray]{0.6}}cc>{\columncolor[gray]{0.6}}cc>{\columncolor[gray]{0.9}}cc>{\columncolor[gray]{0.6}}cc>{\columncolor[gray]{0.9}}ccccc} 
    \hline
 & Deparment & Municipality & $\hat{\theta}_{j,c}^{\textsf{Direct}}$ &  $\text{CV}(\hat{\theta}_{j,c}^{\textsf{Direct}})$ & $\hat{\theta}_{j,c}^{\textsf{FH}}$  &  $\text{CV}(\hat{\theta}_{j,c}^{\textsf{FH}})$ & $\hat{\theta}_{j,c}^{\textsf{FH}\text{-B}}$ & $\text{CV}(\hat{\theta}_{j,c}^{\textsf{FH}\text{-B}})$ &  $\hat{\theta}_{j,c}^{\textsf{FH-SC$_{2}$}}$  & $\text{CV}(\hat{\theta}_{j,c}^{\textsf{FH-SC$_{2}$}})$ &  $\hat{\theta}_{j,c}^{\textsf{FH-SC$_{2}$}\text{-B}}$ & $\text{CV}(\hat{\theta}_{j,c}^{\textsf{FH-SC$_{2}$}\text{-B}})$ & MPI & EDI\\
  \hline  
   & Antioquia & Medellín & 0.60 & 8.98 & 0.56 & 9.09 & 0.66 & 16.40 & 0.56 & 8.61 & 0.65 & 16.41 & 12.80 & 35.20 \\ 
   & Bogotá, D.C. & Bogotá, D.C. &  0.59 & 5.89 & 0.58 & 5.90 & 0.83 & 30.47 & 0.57 & 5.79 & 0.72 & 21.31 & 9.00 & 26.20 \\ 
   & Bolívar & Cartagena & 0.48 & 13.38 & 0.44 & 13.51 & 0.47 & 14.18 & 0.46 & 11.71 & 0.53 & 16.98 & 19.90 & 33.50 \\  
   & La Guajira & Riohacha & 0.36 & 20.11 & 0.31 & 21.72 & 0.31 & 21.33 & 0.19 & 25.78 & 0.20 & 25.02 & 45.10 & 48.30 \\ 
   & Amazonas & Leticia & 0.26 & 24.40 & 0.26 & 23.28 & 0.26 & 23.21 & 0.18 & 25.01 & 0.19 & 24.31 & 48.40 & 45.10 \\ 
   & Valle del Cauca & Cali & 0.57 & 8.88 & 0.54 & 8.91 & 0.63 & 16.17 & 0.54 & 8.42 & 0.63 & 16.52 & 11.90 & 33.00 \\
   & Guaviare & San José Del Guaviare & 0.22 & 24.30 & 0.22 & 23.24 & 0.22 & 23.05 & 0.17 & 23.78 & 0.18 & 23.29 & 42.10 & 55.80 \\ 
   & Amazonas & Leticia & 0.26 & 24.40 & 0.26 & 23.28 & 0.26 & 23.21 & 0.18 & 25.01 & 0.19 & 24.31 & 48.40 & 45.10 \\ 
   & Valle del Cauca & Jamundí & 0.45 & 20.45 & 0.40 & 20.14 & 0.41 & 19.97 & 0.44 & 15.46 & 0.50 & 18.47 & 14.90 & 38.60 \\
   & Caquetá & Florencia &  0.26 & 23.79 & 0.25 & 22.52 & 0.26 & 22.08 & 0.16 & 24.80 & 0.17 & 24.18 & 29.60 & 47.80 \\ 
   & Cesar & Valledupar & 0.29 & 21.93 & 0.27 & 22.46 & 0.28 & 21.91 & 0.18 & 24.79 & 0.20 & 24.28 & 30.50 & 37.80 \\ 
   & Valle del Cauca & Buenaventura & 0.32 & 23.82 & 0.29 & 23.61 & 0.30 & 23.10 & 0.18 & 26.51 & 0.20 & 25.69 & 41.00 & 48.30 \\
   & Antioquia & Apartadó & 0.31 & 27.24 & 0.28 & 27.34 & 0.28 & 26.89 & 0.35 & 18.46 & 0.41 & 21.93 & 28.00 & 47.10 \\ 
   & Boyacá & Chiquinquirá & 0.31 & 28.19 & 0.29 & 26.46 & 0.29 & 26.20 & 0.36 & 18.42 & 0.42 & 21.67 & 17.30 & 44.70 \\ 
   & Caldas & Chinchiná & 0.37 & 24.14 & 0.33 & 23.50 & 0.34 & 23.37 & 0.39 & 16.99 & 0.45 & 20.12 & 21.80 & 56.90 \\ 
   & Chocó & Quibdó & 0.33 & 20.50 & 0.29 & 21.55 & 0.30 & 21.29 & 0.19 & 25.00 & 0.20 & 24.31 & 44.40 & 39.80 \\ 
   & Quindío & Calarca & 0.29 & 25.96 & 0.28 & 24.58 & 0.28 & 24.38 & 0.34 & 17.86 & 0.41 & 21.64 & 19.90 & 50.50 \\ 
     \hline     \hline
       \end{tabular}
}   \caption{Direct estimates, $\hat{\theta}_{j,c}^{\textsf{Direct}}$, and RB  estimates of PHIA according to $\hat{\theta}_{j,c}^{\textsf{FH}}$, $\hat{\theta}_{j,c}^{\textsf{FH}\text{-B}}$, $\hat{\theta}_{j,c}^{\textsf{FH-SC$_{2}$}}$ and $\hat{\theta}_{j,c}^{\textsf{FH-SC$_{2}$}\text{-B}}$, and the estimates of Coefficient of Variations (CVs) for each estimator. 
The Table displays large capital cities and some municipalities in Colombia with high levels of poverty and deficit of education according to the Multidimensional Poverty Index 2018 (MPI) and the
Education Deficit Index  2018 (EDI), respectively.}
\label{table:conf_cve}
\end{center}
\end{table}

\clearpage

\begin{figure}[ht]{
\begin{center}
\begin{tabular}{ccc}
 \hspace{-0.6cm} \includegraphics[width=0.48\textwidth]{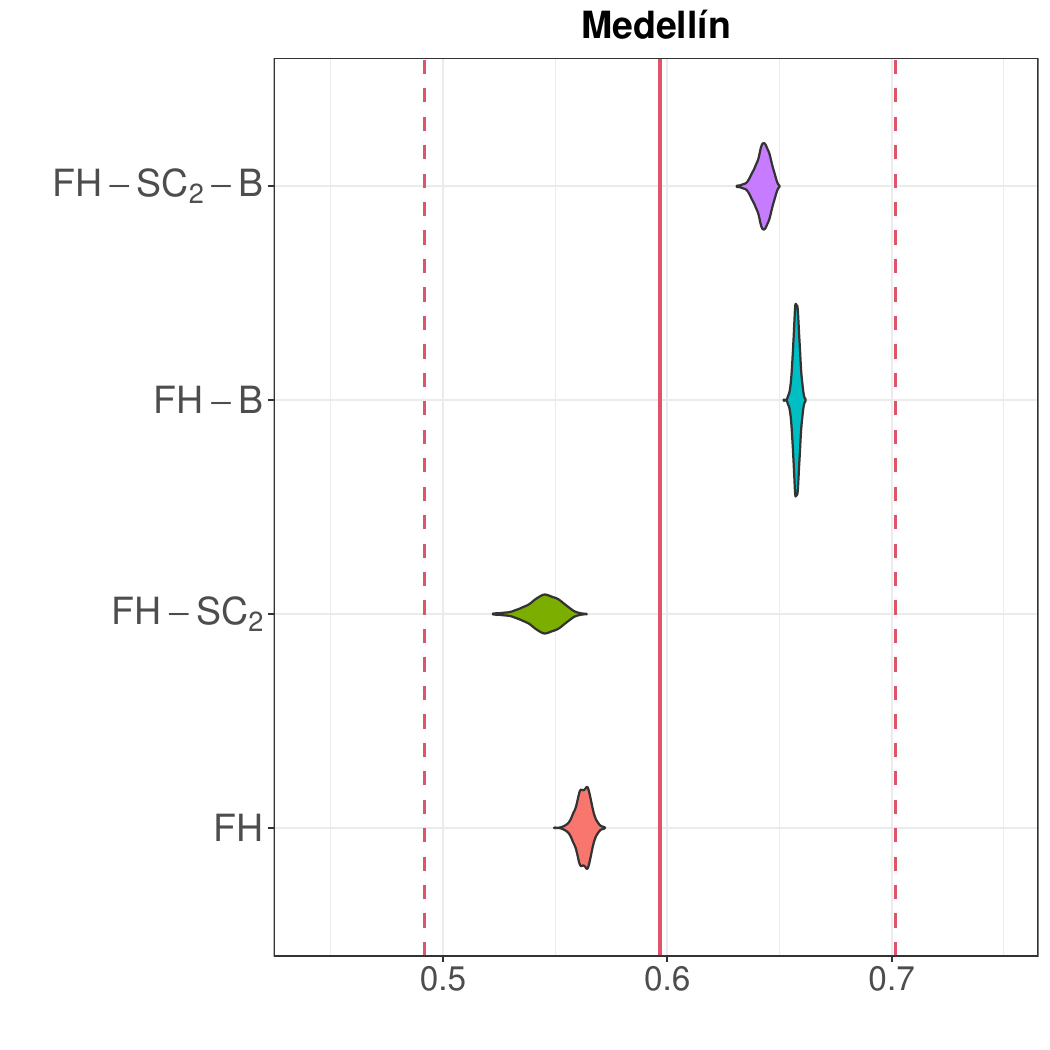} & \hspace{0.2cm}
\includegraphics[width=0.48\textwidth]{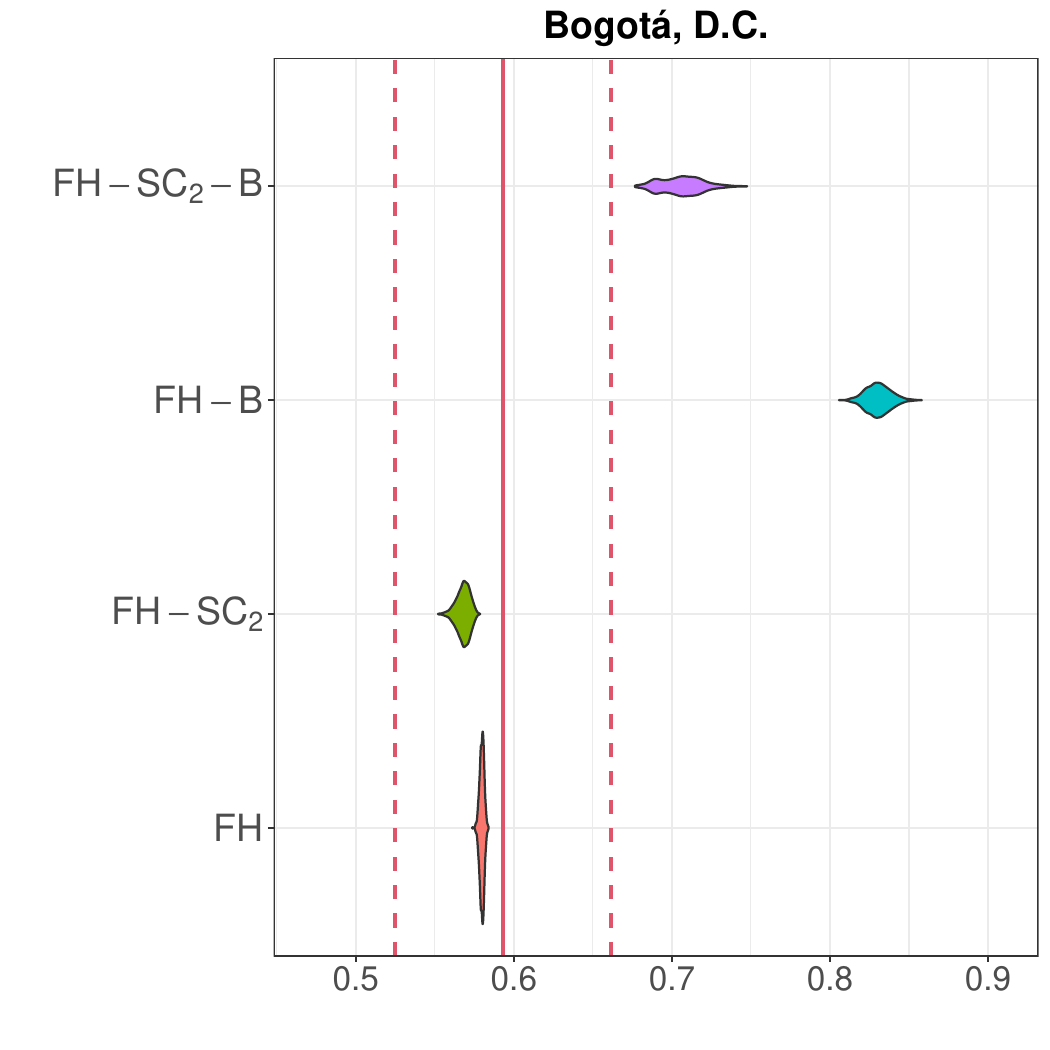} \\  \vspace{0.2cm}
 \hspace{-0.6cm} \includegraphics[width=0.48\textwidth]{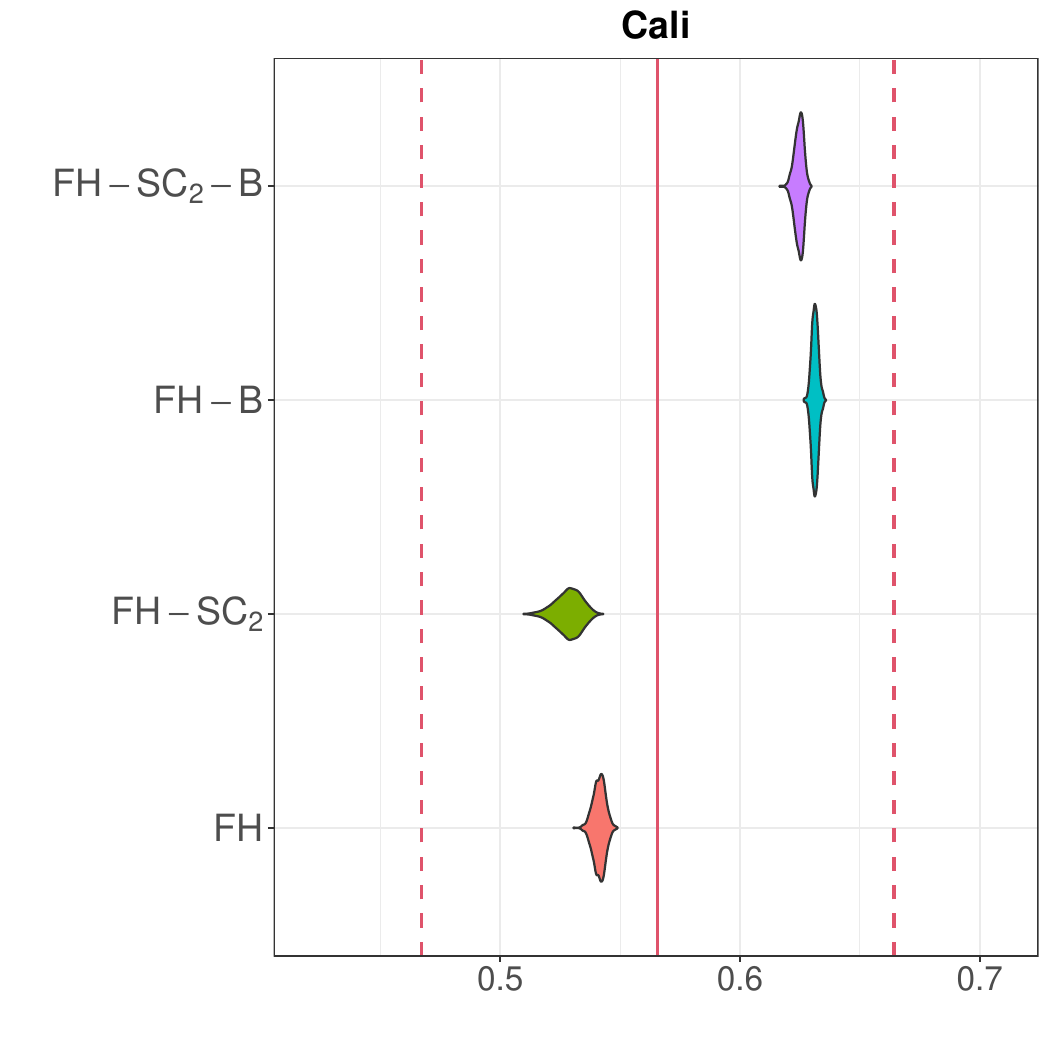} & \hspace{0.2cm}
\includegraphics[width=0.48\textwidth]{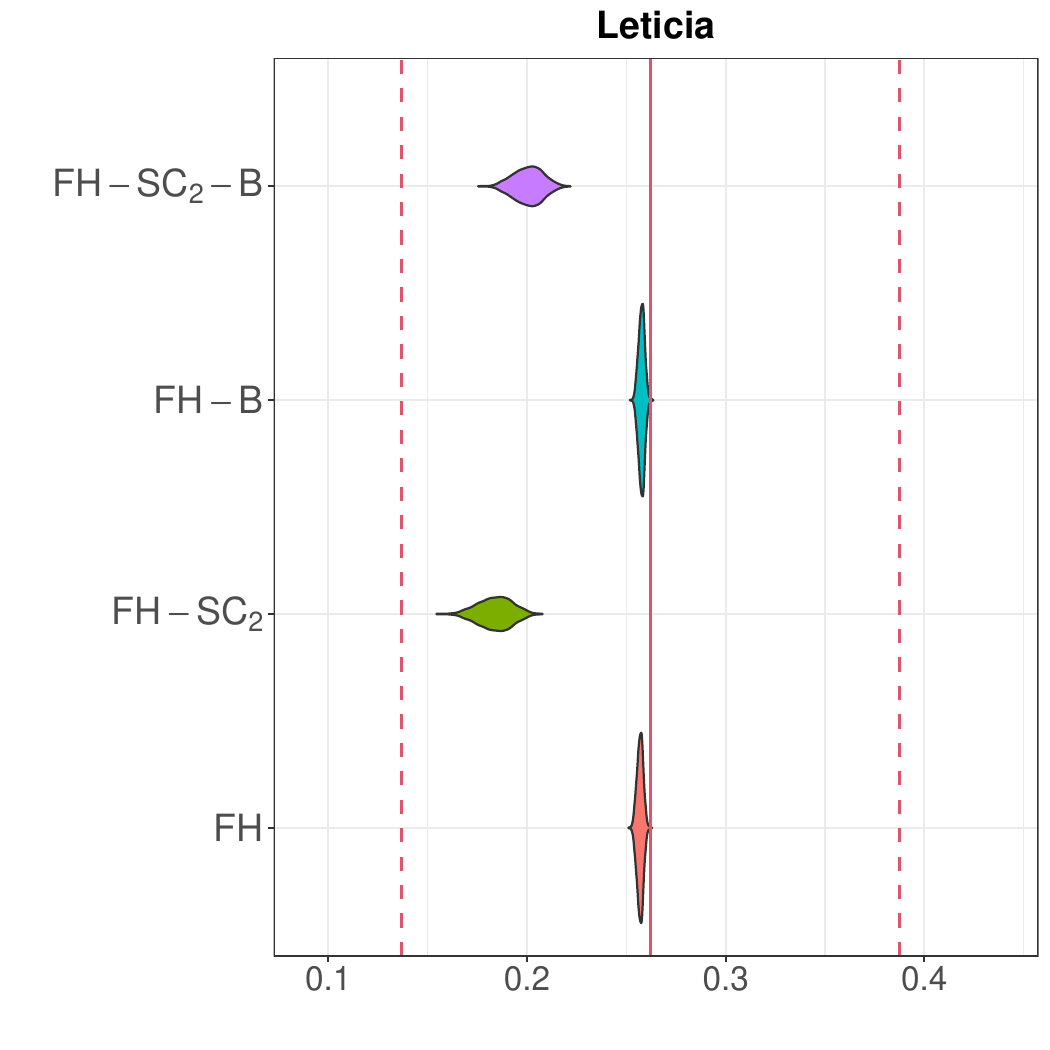} \\  
\end{tabular}
\end{center}
\caption{Posterior distribution of  RB  estimates and RB benchmarked estimates of PHIA for the capital cities of Medellín, Bogotá, D.C.  and Cali, and the municipality of Leticia under the different estimators, $\hat{\theta}_{j,c}^{\textsf{FH}}$,  $\hat{\theta}_{j,c}^{\textsf{FH-SC$_{2}$}}$, $\hat{\theta}_{j,c}^{\textsf{FH}\text{-B}}$ and $\hat{\theta}_{j,c}^{\textsf{FH-SC$_{2}$}\text{-B}}$, denoted as FH, FH-SC$_{2}$, FH-B and FH-SC$_{2}$-B, respectively. The solid red line illustrates the corresponding direct estimate of PHIA  and the dashed red lines display the 95\% confidence interval
considering the direct estimate and direct variance of PHIA.}
\label{fig:posterior_estimates1}}
\end{figure}

\clearpage

\begin{figure}[ht]
\begin{center}
\begin{tabular}{ccc}
 \hspace{-0.6cm} \includegraphics[width=0.48\textwidth]{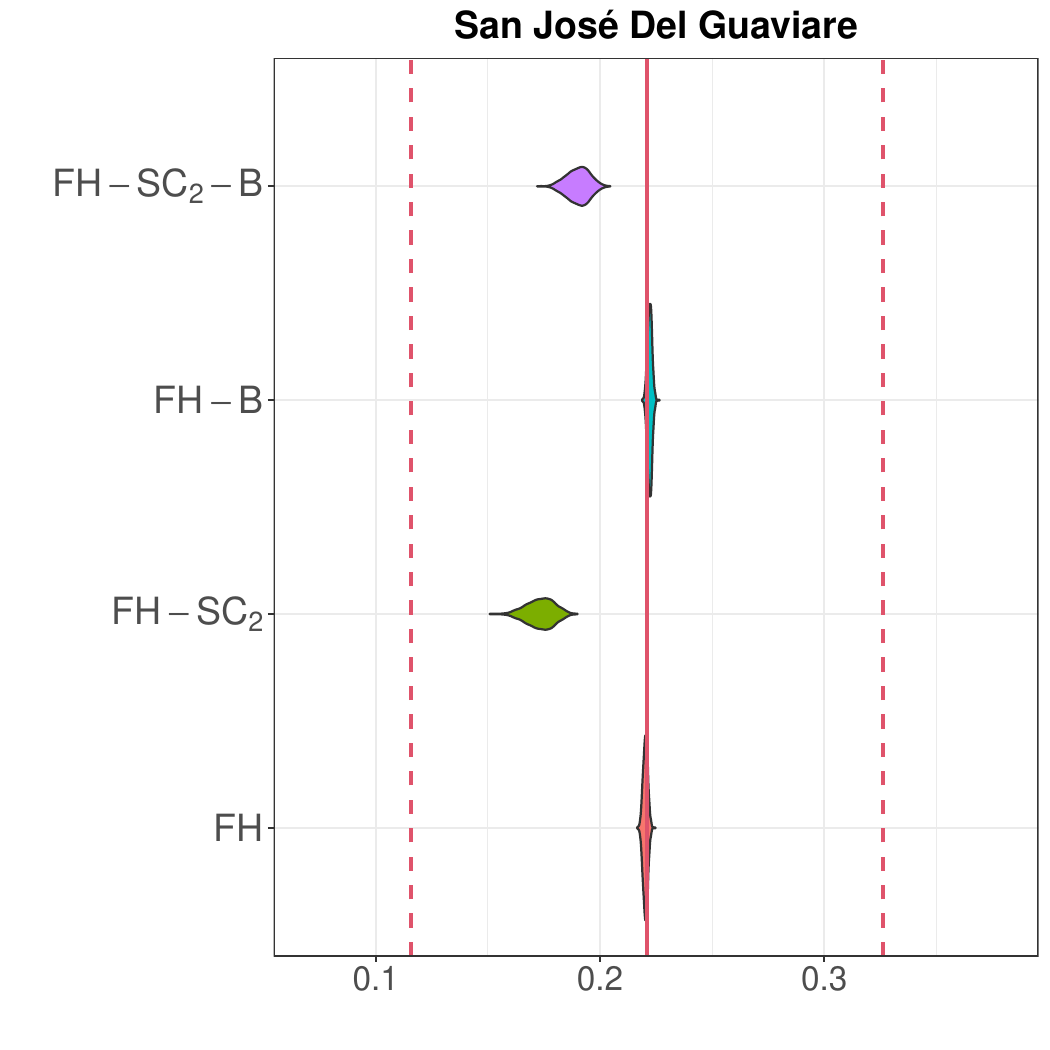} & \hspace{0.2cm}
\includegraphics[width=0.48\textwidth]{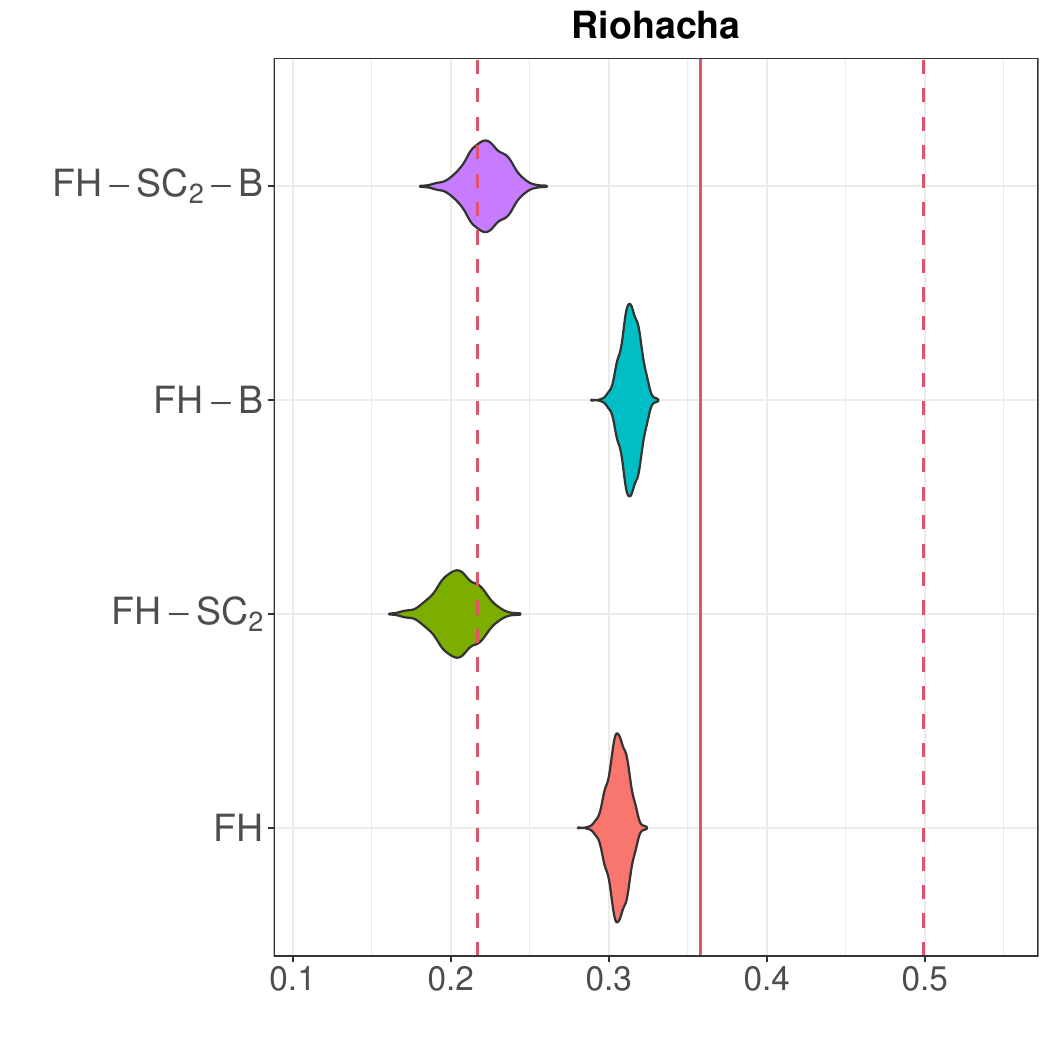} \\  \vspace{0.2cm}
 \hspace{-0.6cm} \includegraphics[width=0.48\textwidth]{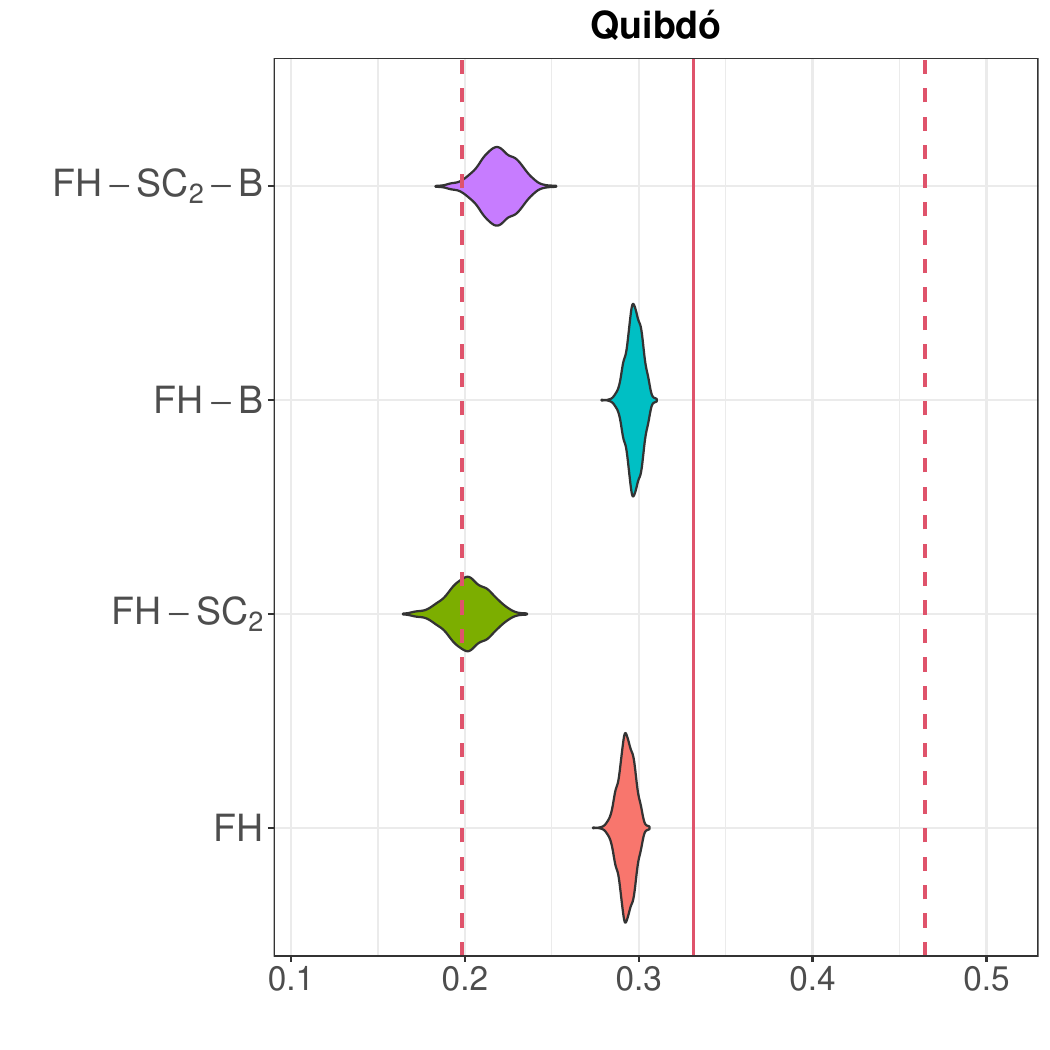} & \hspace{0.2cm}
\includegraphics[width=0.48\textwidth]{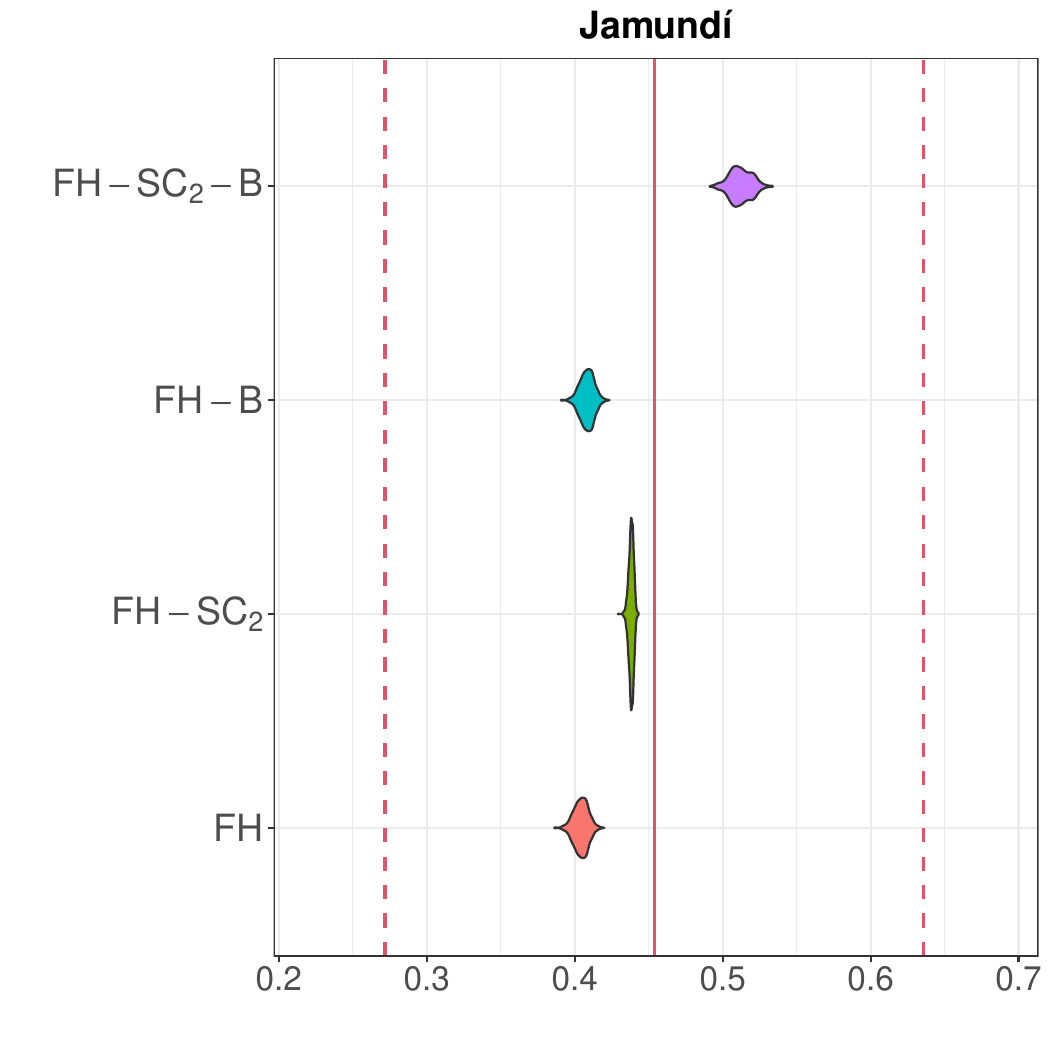} \\  
\end{tabular}
\end{center}
\caption{Posterior distribution of  RB  estimates and RB benchmarked estimates of PHIA  for the municipalities of San José Del Guaviare, Riohacha, Quibdó and Jamundí in Colombia under the different estimators, $\hat{\theta}_{j,c}^{\textsf{FH}}$,  $\hat{\theta}_{j,c}^{\textsf{FH-SC$_{2}$}}$, $\hat{\theta}_{j,c}^{\textsf{FH}\text{-B}}$ and $\hat{\theta}_{j,c}^{\textsf{FH-SC$_{2}$}\text{-B}}$, denoted as FH, FH-SC$_{2}$, FH-B and FH-SC$_{2}$-B, respectively. The solid red line illustrates the corresponding direct estimate of PHIA  and the dashed red lines display the 95\% confidence interval
considering the direct estimate and direct variance of PHIA.}
\label{fig:posterior_estimates2}
\end{figure}

\clearpage

\begin{figure}[h!]
\begin{center}
\begin{tabular}{ccc}
FH &  \hspace{2.5cm}  FH-SC$_{2}$ \vspace{-1.0cm}  \\
\hspace{-1cm}  \includegraphics[width=0.6\textwidth]{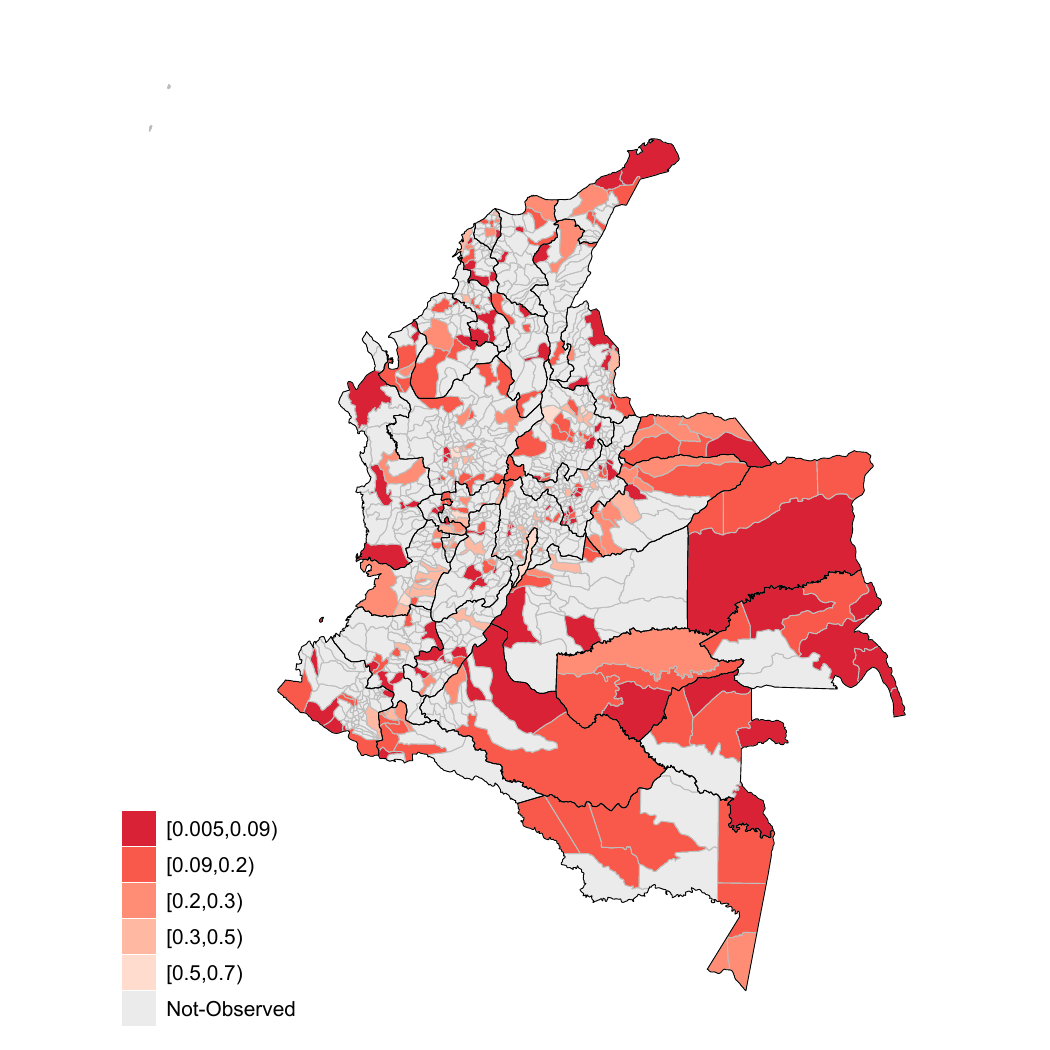} \hspace{-2cm} & \includegraphics[width=0.6\textwidth]{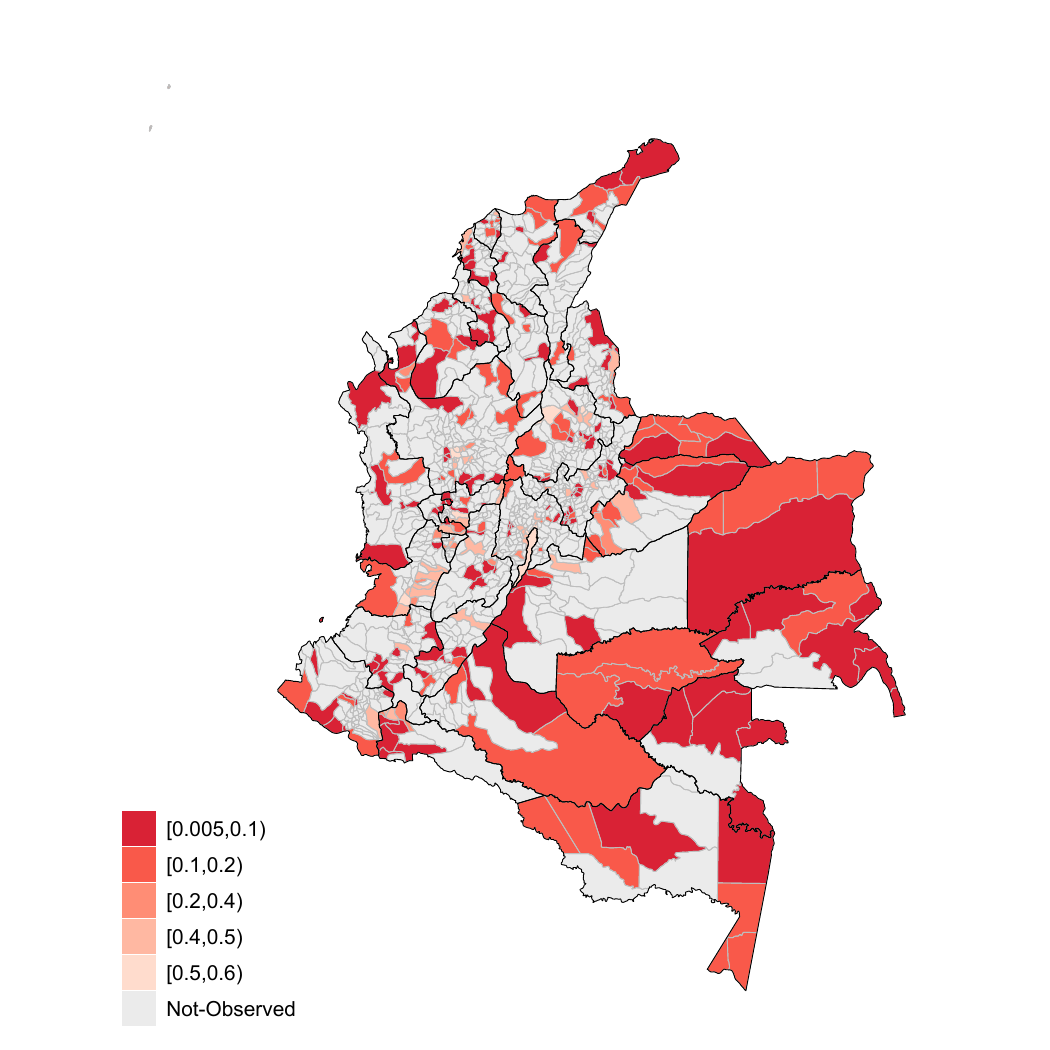}  \hspace{-2cm}  \\
FH-B &  \hspace{2.5cm}  FH-SC$_{2}$-B \vspace{-1.0cm}  \\
\hspace{-1cm}  \includegraphics[width=0.6\textwidth]{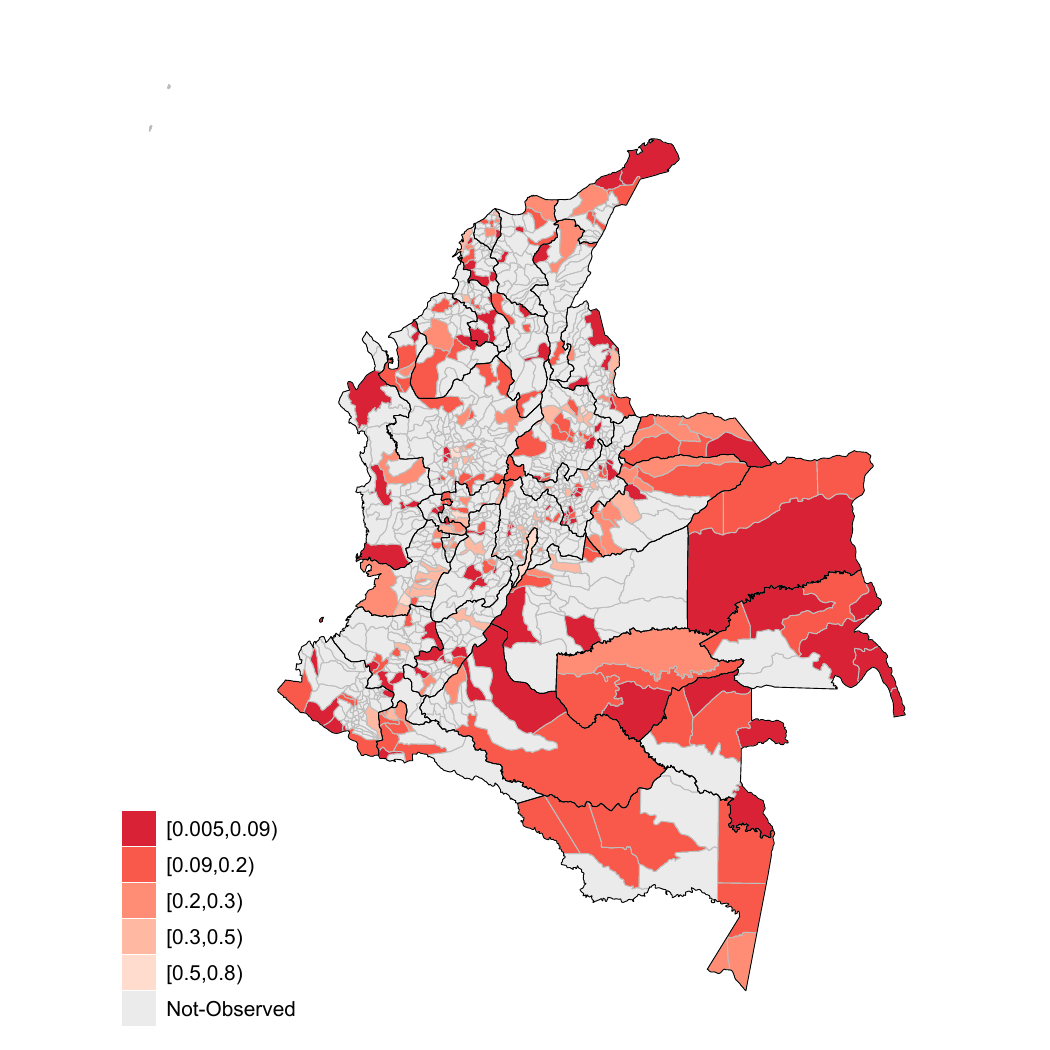} \hspace{-2cm} & \includegraphics[width=0.6\textwidth]{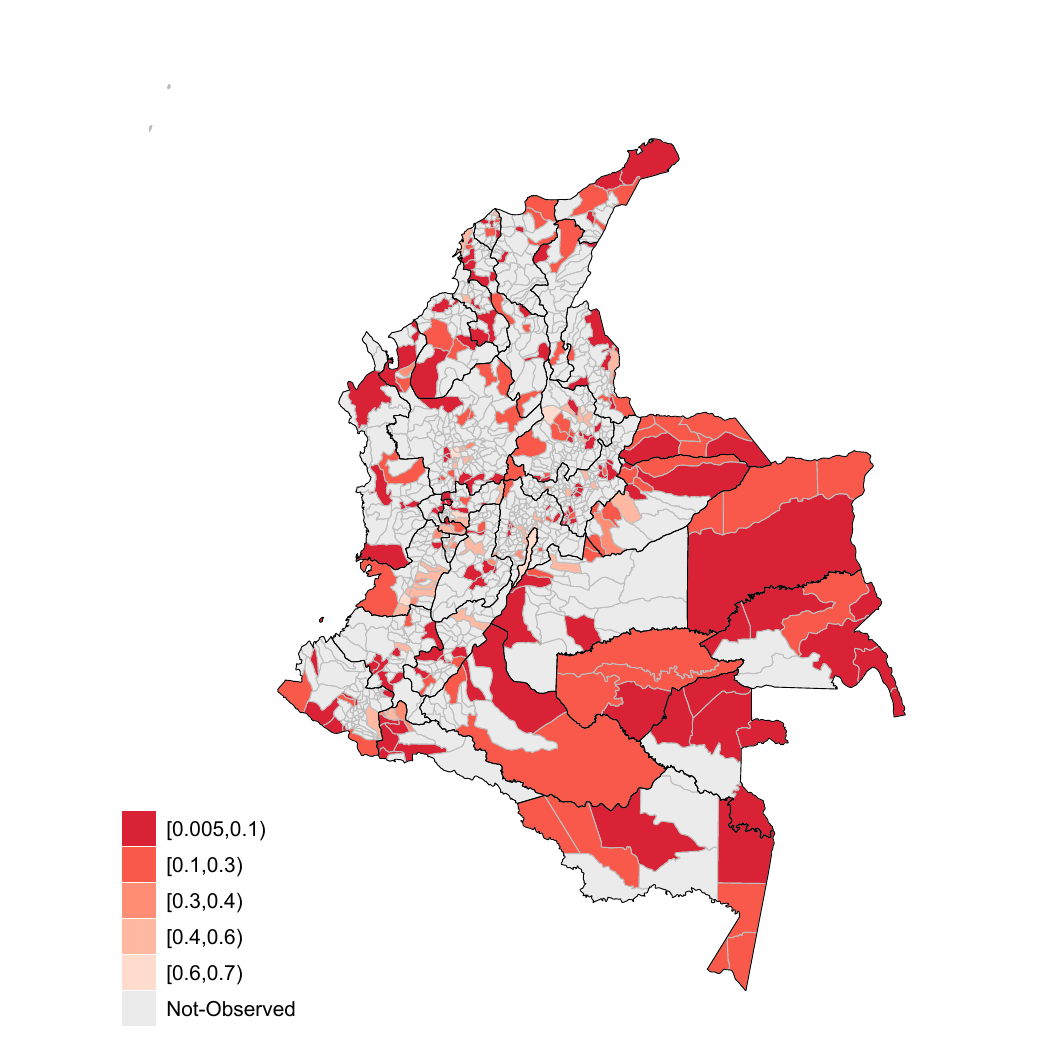}  \hspace{-2cm}  \\
\end{tabular}
\end{center}
\vspace{-0.5cm}
\caption{
RB estimates of PHIA  produced by  a) $\hat{\theta}_{j,c}^{\textsf{FH}}$, b) $\hat{\theta}_{j,c}^{\textsf{FH-SC$_{2}$}}$, c) $\hat{\theta}_{j,c}^{\textsf{FH}\text{-B}}$  and, d) $\hat{\theta}_{j,c}^{\textsf{FH-SC$_{2}$}\text{-B}}$. 
The larger values of RB estimates and RB benchmarked estimates of PHIA  are  concentrated in some of  the main capital cities, such as Bogotá, D.C., Medellín, and Cali. The RB estimator and RB benchmarked estimator under the FHSC model produces
the most conservative estimates of PHIA.
\label{fig:post1}}
\end{figure}

\clearpage

\subsection{Sensitivity of results for different amounts of smoothing via GVF}
\label{sub_applied2}

We compute the direct estimates and their variances using the estimators given by \cite{hajek1971comment}. Specifically, the variable of interest is defined as $y_{ih}=1$ if household $h$ in municipality $i$  has internet access, and $y_{ih}=0$ otherwise.

\begin{figure}[ht]
\begin{center}
\begin{tabular}{ccc}
\includegraphics[width=0.75\textwidth]{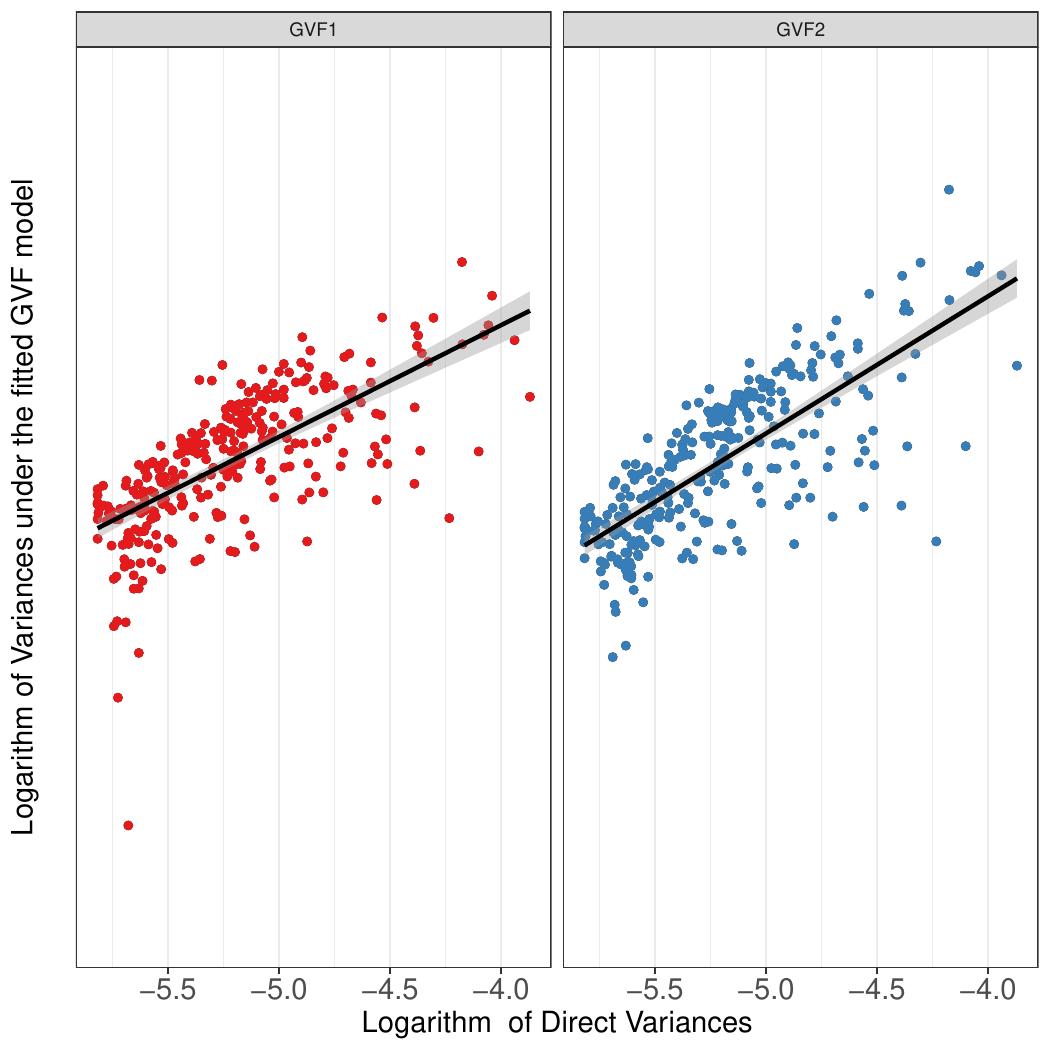} & \hspace{0.2cm}
\end{tabular}
\end{center}
\caption{GVF models to smooth the variance estimates. Left: 
Logarithm  of  variance estimates, $\log(\text{var}(y_{i}))$, versus logarithm of the obtained variances under the GVF$_{1}$  model. Right: 
Logarithm  of  variance estimates, $\log(\text{var}(y_{i}))$, versus logarithm of the obtained variances  under the GVF$_{2}$ model.}
\label{fig:GVF_models}
\end{figure}

We use $y_{ih}$ and its corresponding sample weights $w_{ih}$ for municipality $i$ and household $h$ to compute the direct estimator $y_{i}$ and the variance estimates for PHIA in municipality $i$, as follows:
\begin{align}
y_{i}&=\dfrac{1}{\hat{n}_{i}}\sum_{h=1}^{n_{i}}w_{ih} y_{ih}, & \text{var}(y_{i})&=\dfrac{1}{\hat{n}_{i}^{2}}\sum_{h=1}^{n_{i}}w_{ih}(w_{ih}-1)(y_{ih}-y_{i})^{2},   & i=1, \ldots, m,
\label{eq:estimator}
\end{align}
where $n_{i}$ denotes the sample size of the $i$-th municipality, and $\hat{n}_{i}=\sum_{j=1}^{n_{i}}w_{ih} $ is the direct estimator of the population size in municipality $i$.  The variances of the direct estimates are smoothed using the Generalized Variance Function (GVF)  \citep{wolter1985introduction}.
Specifically, we consider two GVF models to smooth the variance estimates,  $\text{var}(y_{i})$  in (\ref{eq:estimator}), to obtain $D_i$. The covariates in the first GVF model (GVF$_{1}$)  are the PHIA direct estimates, the square root of small area sample sizes and the interaction between them, whereas in the second GVF model   (GVF$_{2}$) the covariates are the direct estimates and the square root of small area sample sizes. 
Figure \ref{fig:GVF_models} displays the logarithm of the variance estimates versus the logarithm of the smoothed variances, $D_i$, 
obtained with the GVF$_{1}$ and GVF$_{2}$ models, respectively. 
 We found that GVF$_{2}$ provides a better fit of the logarithm of the variance estimates, and the variances under the GVF$_{2}$ lead to a smaller mean square error (0.075) compared to GVF$_{1}$ (0.094).  
 In order to conduct a sensitivity analysis  of the different levels of smoothing via GVF$_{1}$ and GVF$_{2}$, 
 we compare the small area posterior estimates obtained under the FH and FH-SC$_{2}$ models. As is illustrated in Figure \ref{fig:differences}, the  posterior estimates using the variances produced by GVF$_{1}$ and GVF$_{2}$ are similar. However, we consider the smoothed variances obtained with GVF$_{2}$ for the case study and simulations based on its lower MSE value. 

\begin{figure}[ht]
\begin{center}
\begin{tabular}{ccc}
\hspace{-2.0cm} \includegraphics[width=0.70\textwidth]{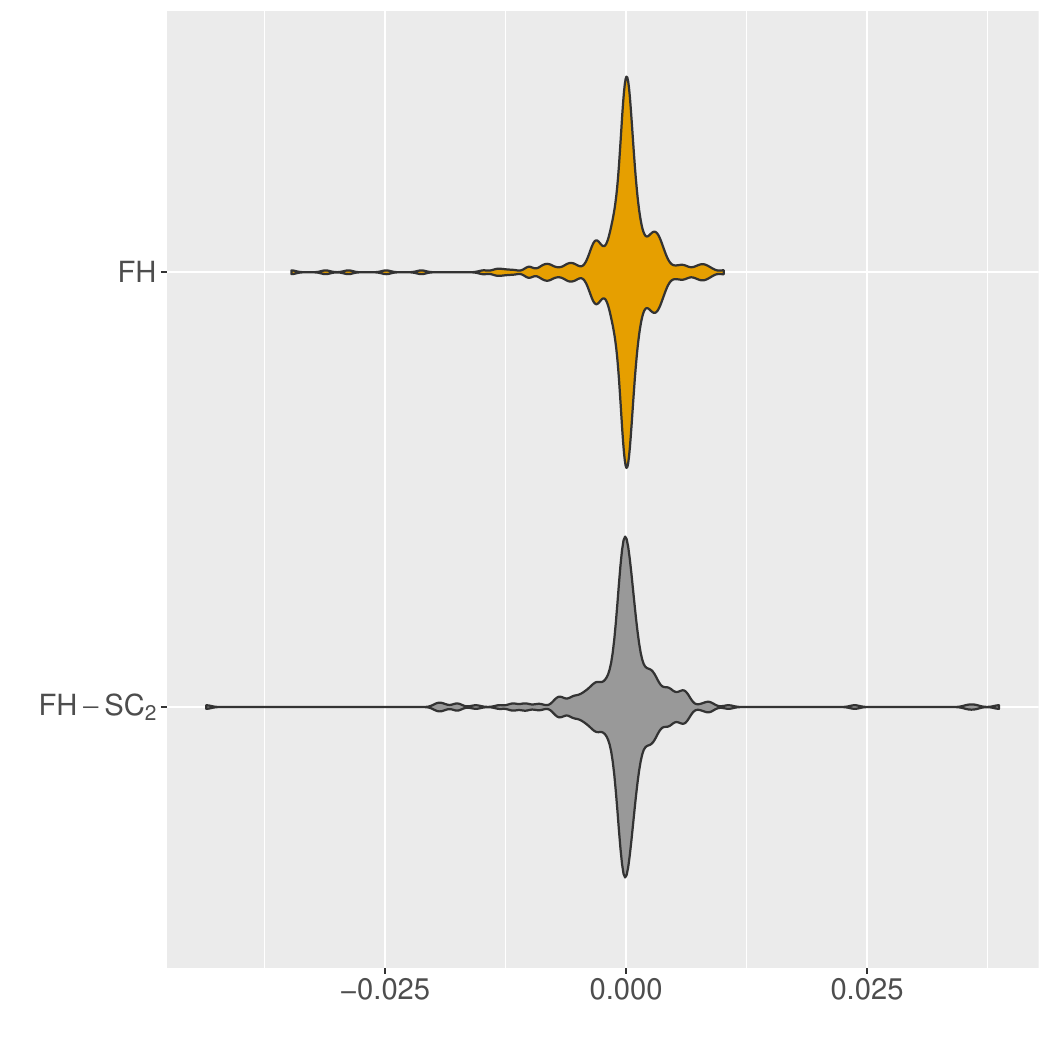} 
\end{tabular}
\end{center}
\caption{Differences of the small area posterior estimates under the FH and FH-SC$_{2}$ models using the smoothed  variances computed with the GVF$_{1}$ and GVF$_{2}$ models.}
\label{fig:differences}
\end{figure}


\clearpage

\subsection{Convergence of the MCMC algorithms}
\label{sub_applied4}

We found convergence of the model parameters under the existing FH and FH-C and the proposed FH-SC models when Algorithms  \ref{alg:MCMC1} and \ref{alg:MCMC2} are implemented. We consider the results under FH-SC$_{2}$ in our case study to illustrate that convergence is achieved for the model parameters. Figures \ref{fig:convergence_fixed_effects}-\ref{fig:penalty} illustrate the posterior samples in sequential order using traceplots and ergodic mean plots for the regression parameters, variances and cluster regularization penalty, respectively. As indicated by Figures \ref{fig:convergence_fixed_effects}-\ref{fig:penalty} convergence and a good mixing is achieved. Importantly, as mentioned 
in Section \ref{MCM_algh1}, we use an adaptive proposal to sample the cluster regularization penalty adjusted to hold acceptance rates between 40\% and 60\%. We found that the acceptance rates for sampling $\rho$ under the FH-SC$_{1}$,  FH-SC$_{2}$ and FH-SC$_{3}$ models were 51.41\%, 50.85\% and 50.03\%, respectively, showing a good performance of the adaptive proposal in the Metropolis-Hasting algorithm 
for $\rho$.

 \vspace{0.1cm}

\begin{figure}[ht]
\begin{center}
\begin{tabular}{ccc}
Traceplot for $\beta_{0}$ & \hspace{1cm} Ergodic mean plot for $\beta_{0}$ \\
\hspace{-0.5cm} \includegraphics[width=0.40\textwidth]{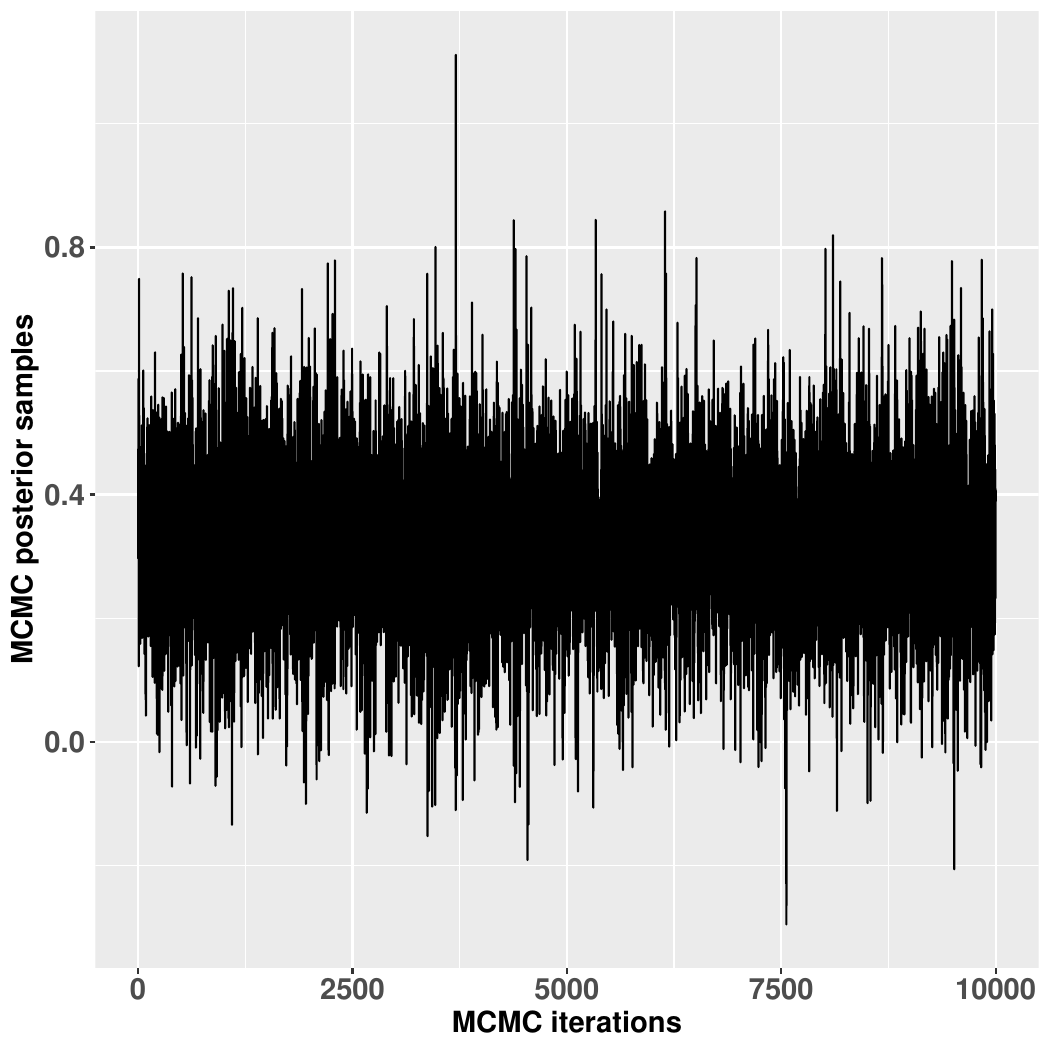} & \hspace{-0.2cm}
\includegraphics[width=0.40\textwidth]{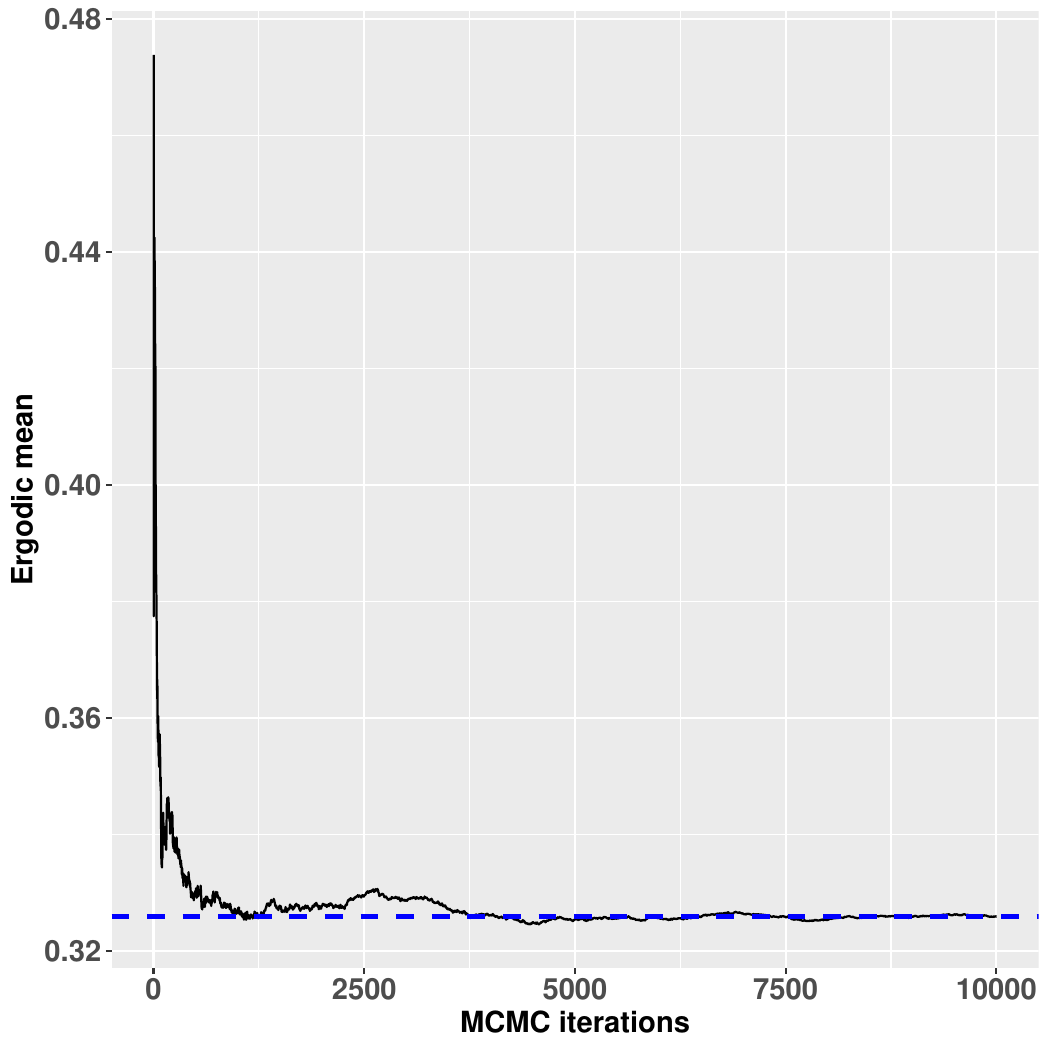} 
\end{tabular}
\end{center}
\begin{center}
\begin{tabular}{ccc}
Traceplot for $\beta_{1}$ & \hspace{1cm} Ergodic mean plot for $\beta_{1}$ \\
\hspace{-1.0cm} \includegraphics[width=0.40\textwidth]{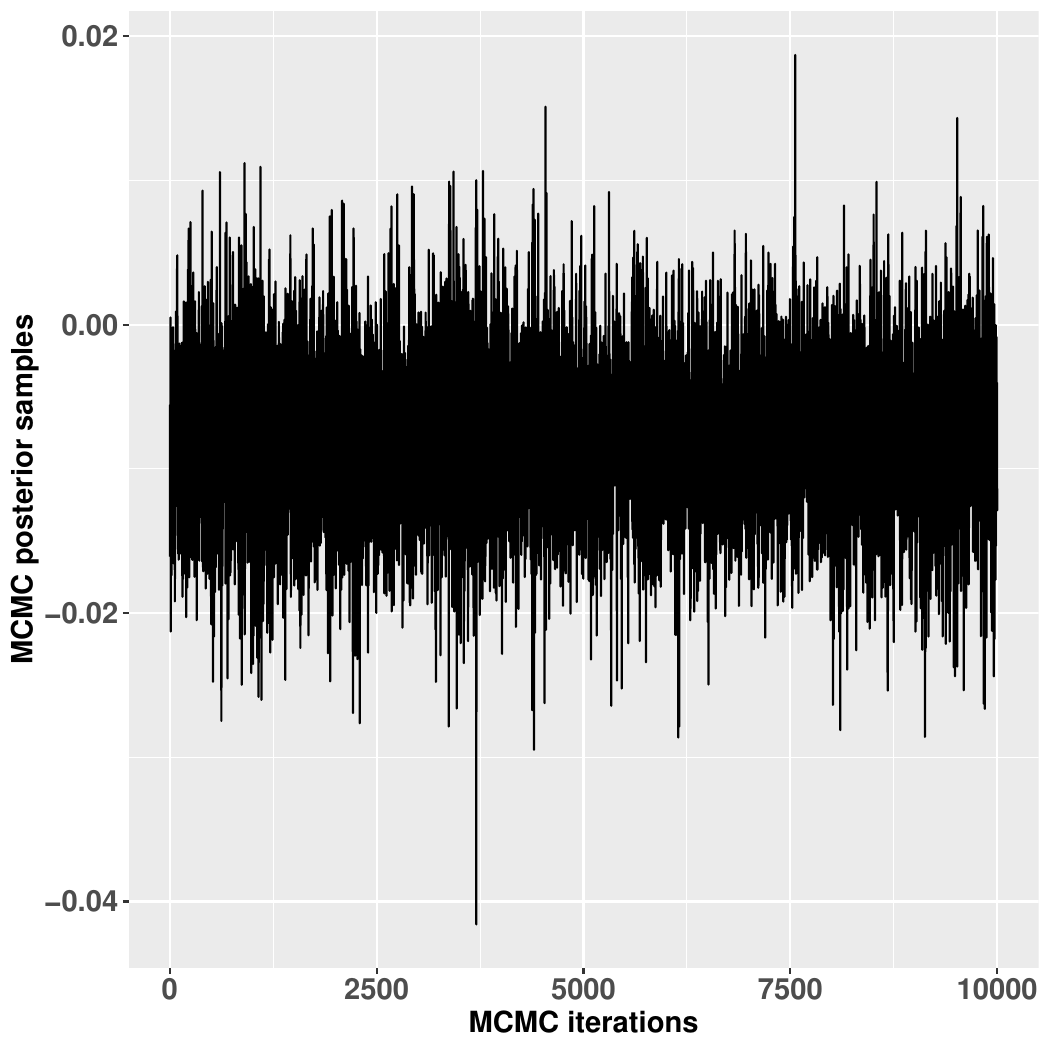} & \hspace{-0.2cm}
\includegraphics[width=0.40\textwidth]{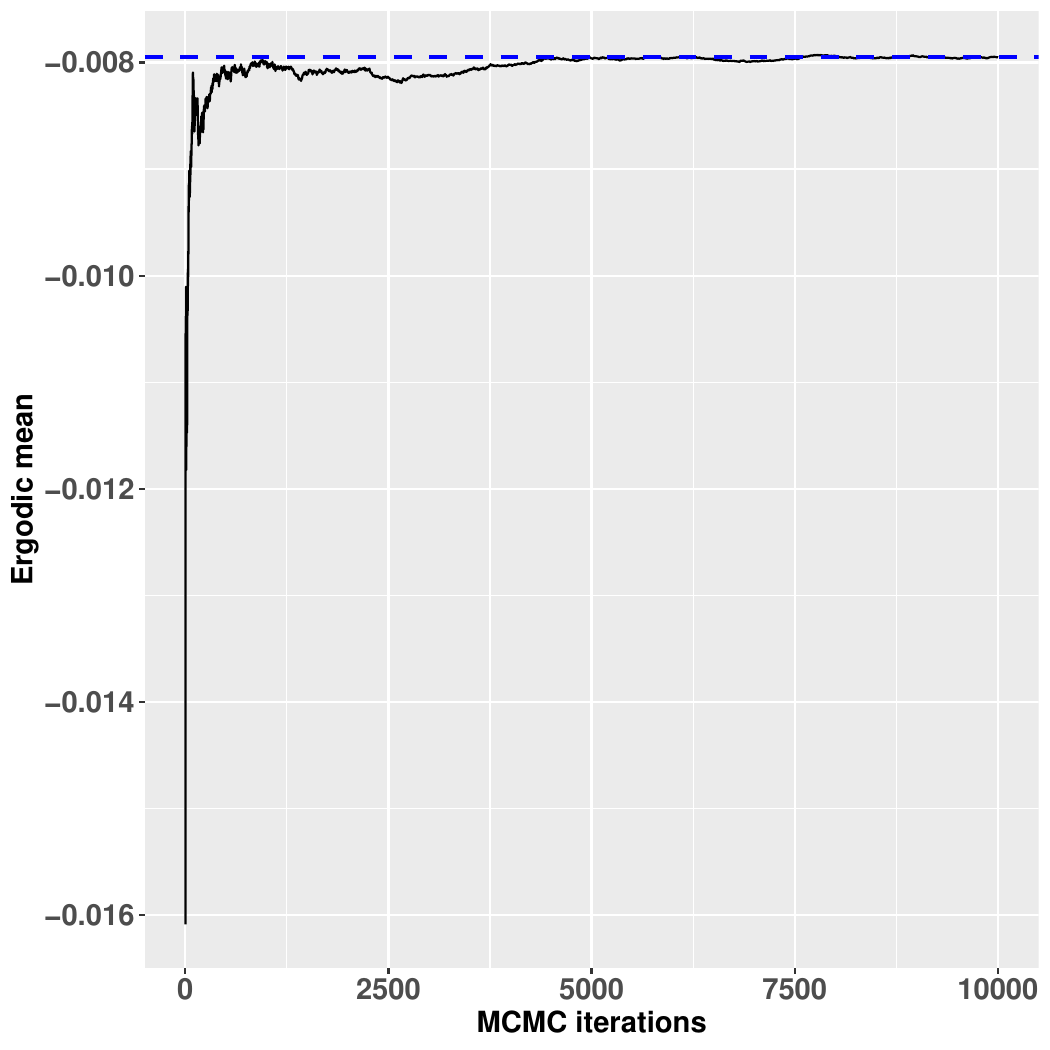} 
\end{tabular}
\end{center}
\caption{Traceplots  and ergodic mean plots for the regression parameter $\beta_{0}$ and $\beta_{1}$ under the FH-SC$_{2}$ model.}
\label{fig:convergence_fixed_effects}
\end{figure}

\clearpage

\begin{figure}[ht]
\begin{center}
\begin{tabular}{ccc}
Traceplot for $\sigma^2_{1}$ & Traceplot for $\sigma^2_{2}$  & Traceplot for $\sigma^2_{3}$ \\
\hspace{-1.9cm} \includegraphics[width=0.40\textwidth]{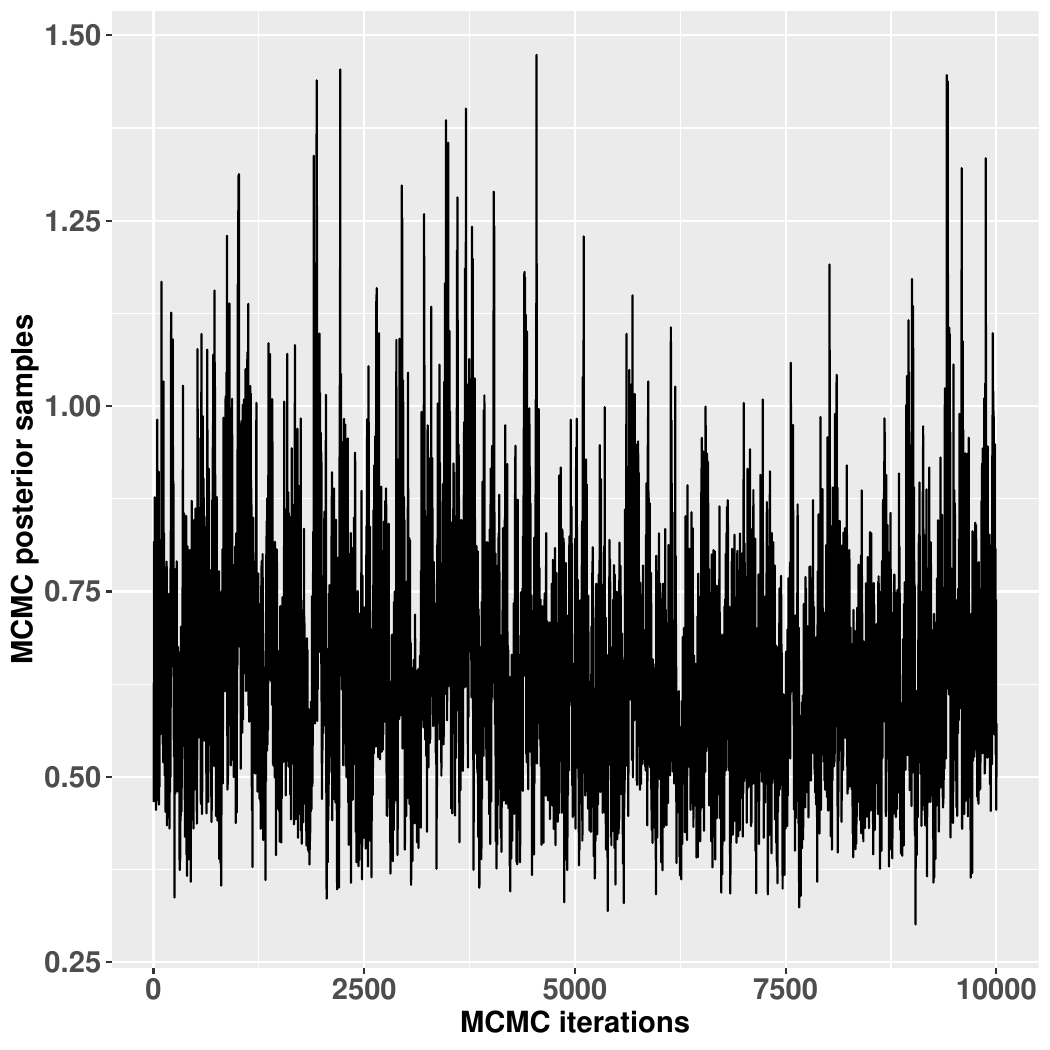}  & \hspace{-0.2cm}
\includegraphics[width=0.40\textwidth]{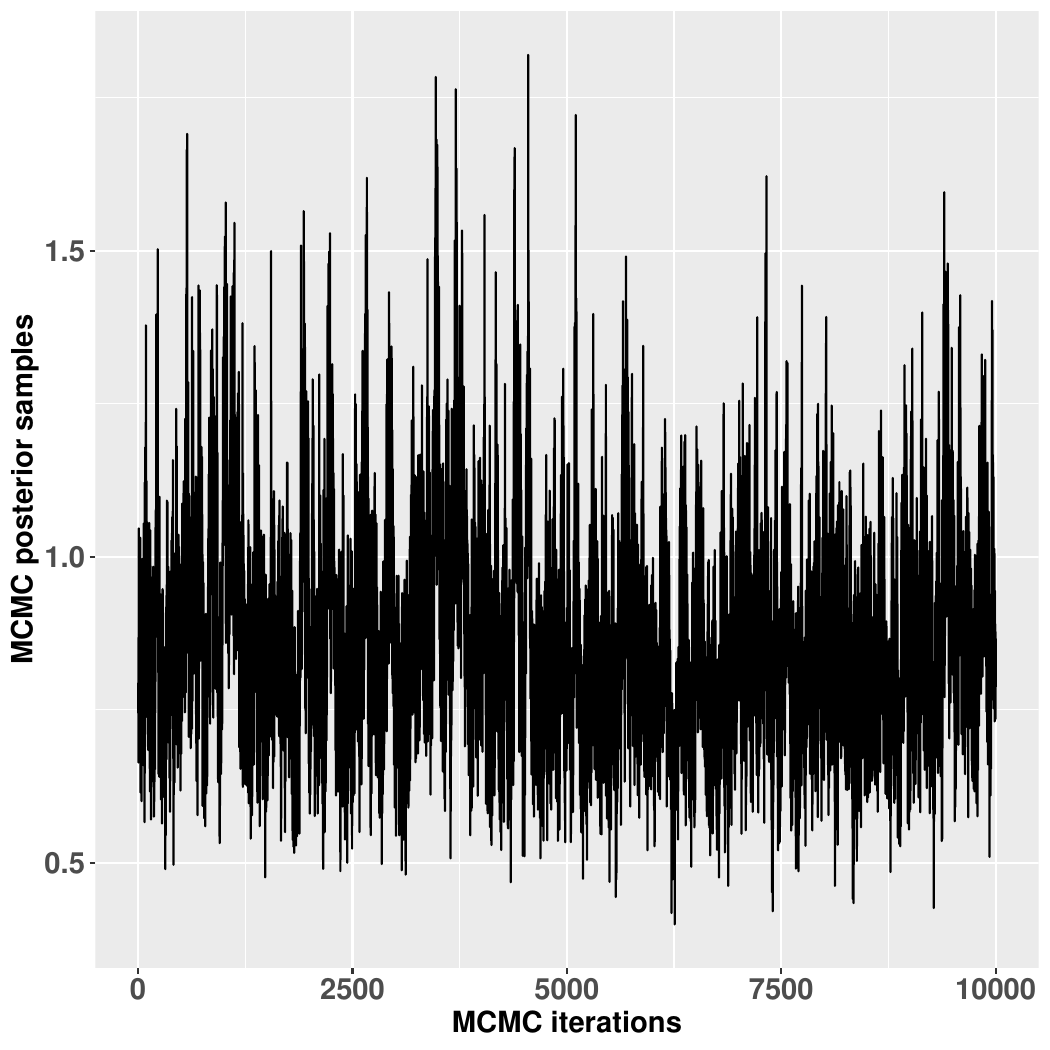} & \hspace{-0.2cm}
\includegraphics[width=0.40\textwidth]{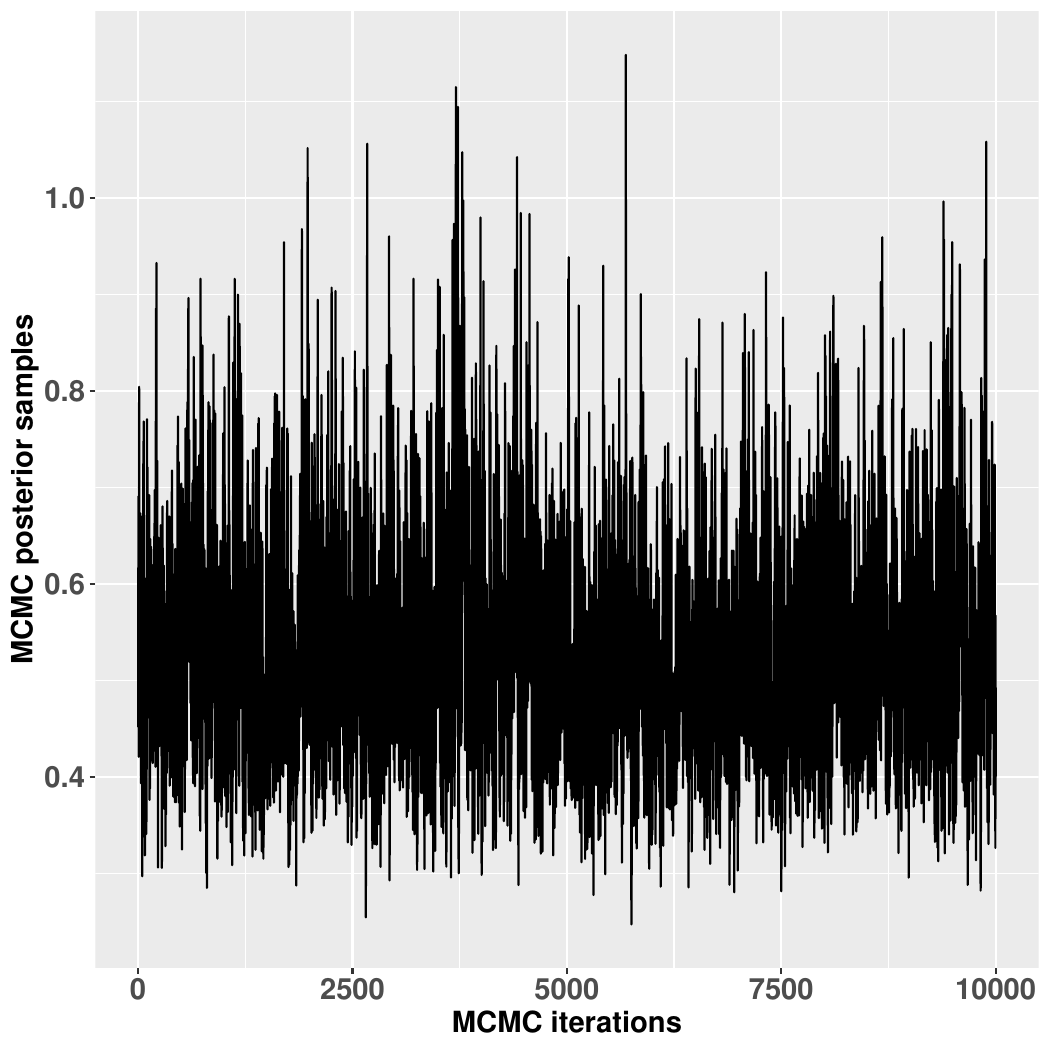}  \vspace{1.5cm} \\
Ergodic mean plot for $\sigma^2_{1}$ & Ergodic mean plot  for $\sigma^2_{2}$  & Ergodic mean plot for $\sigma^2_{3}$ \\
\hspace{-1.9cm} \includegraphics[width=0.40\textwidth]{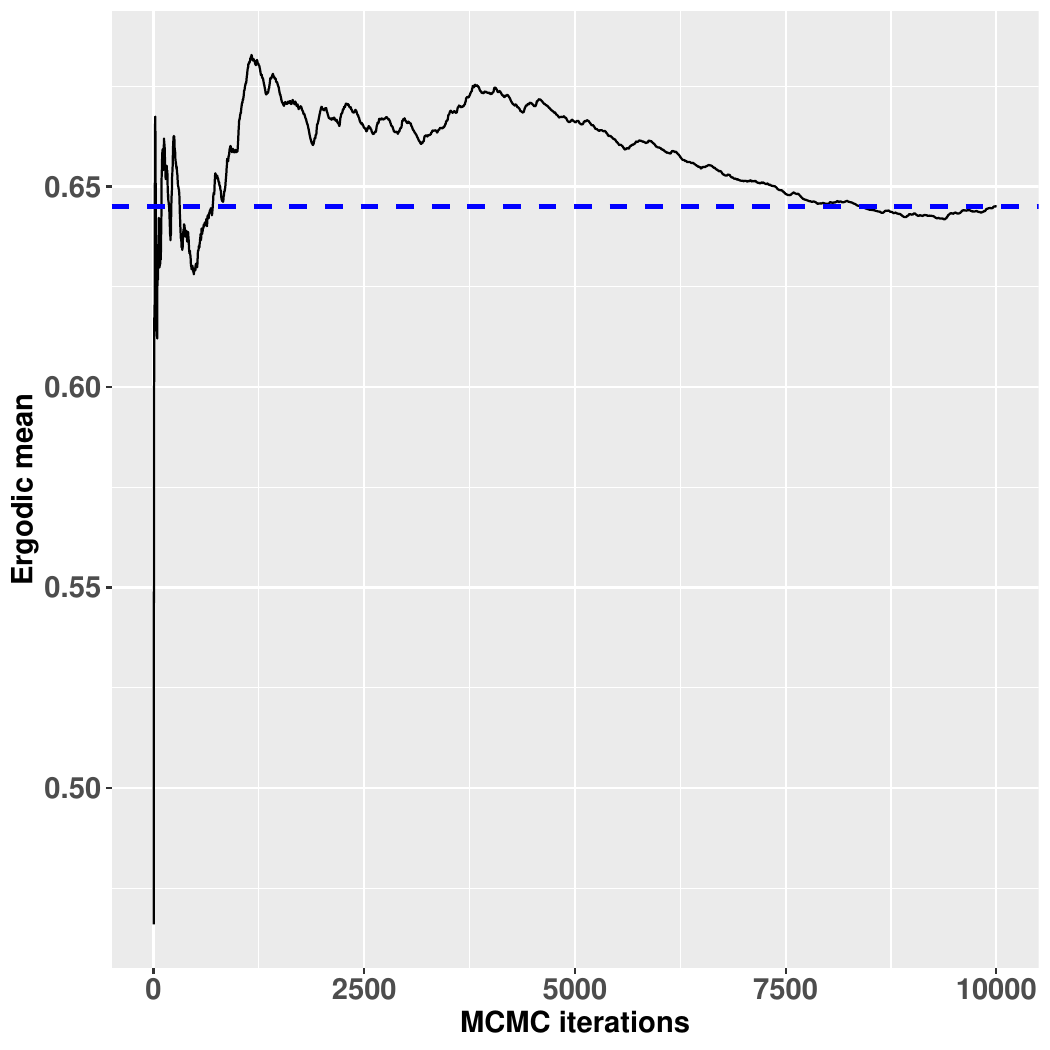}  & \hspace{-0.2cm}
\includegraphics[width=0.40\textwidth]{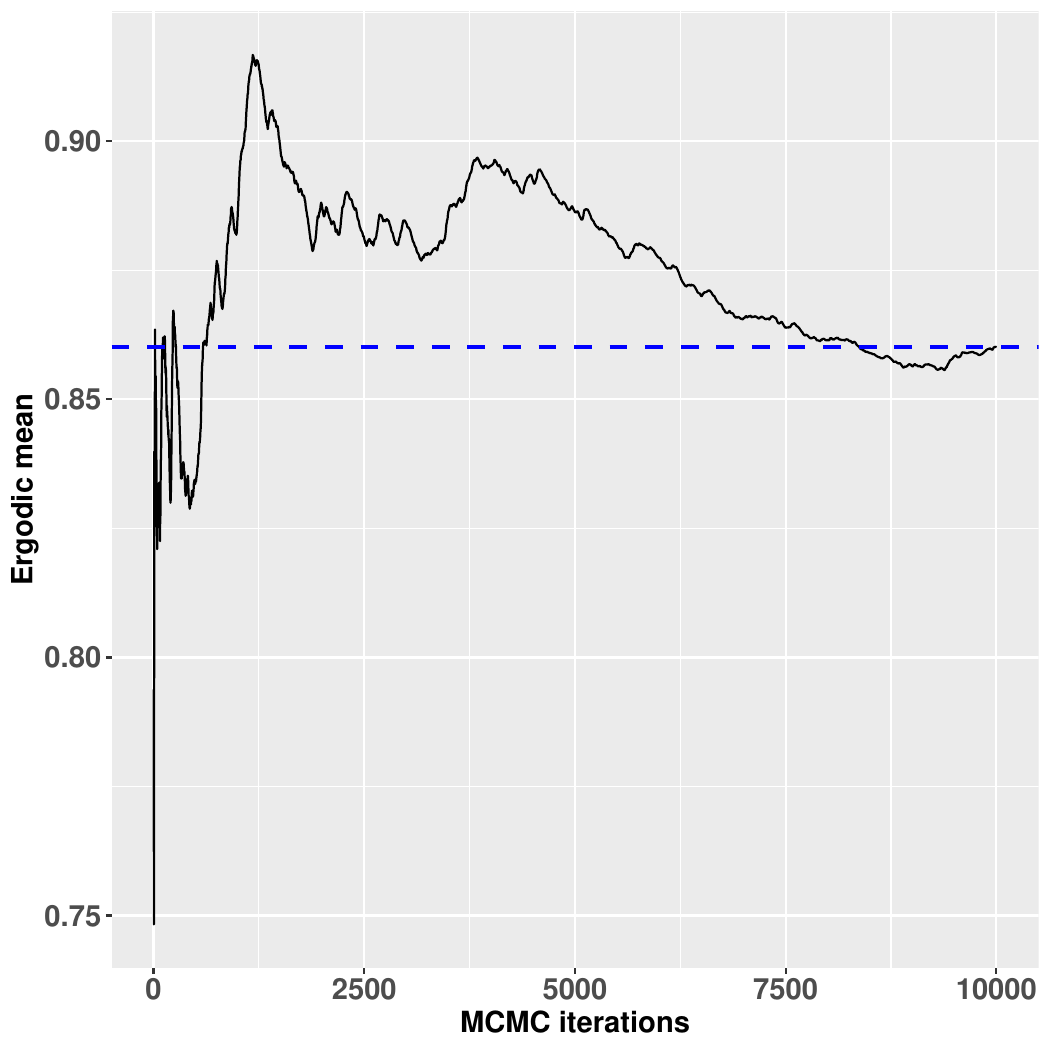} & \hspace{-0.2cm}
\includegraphics[width=0.40\textwidth]{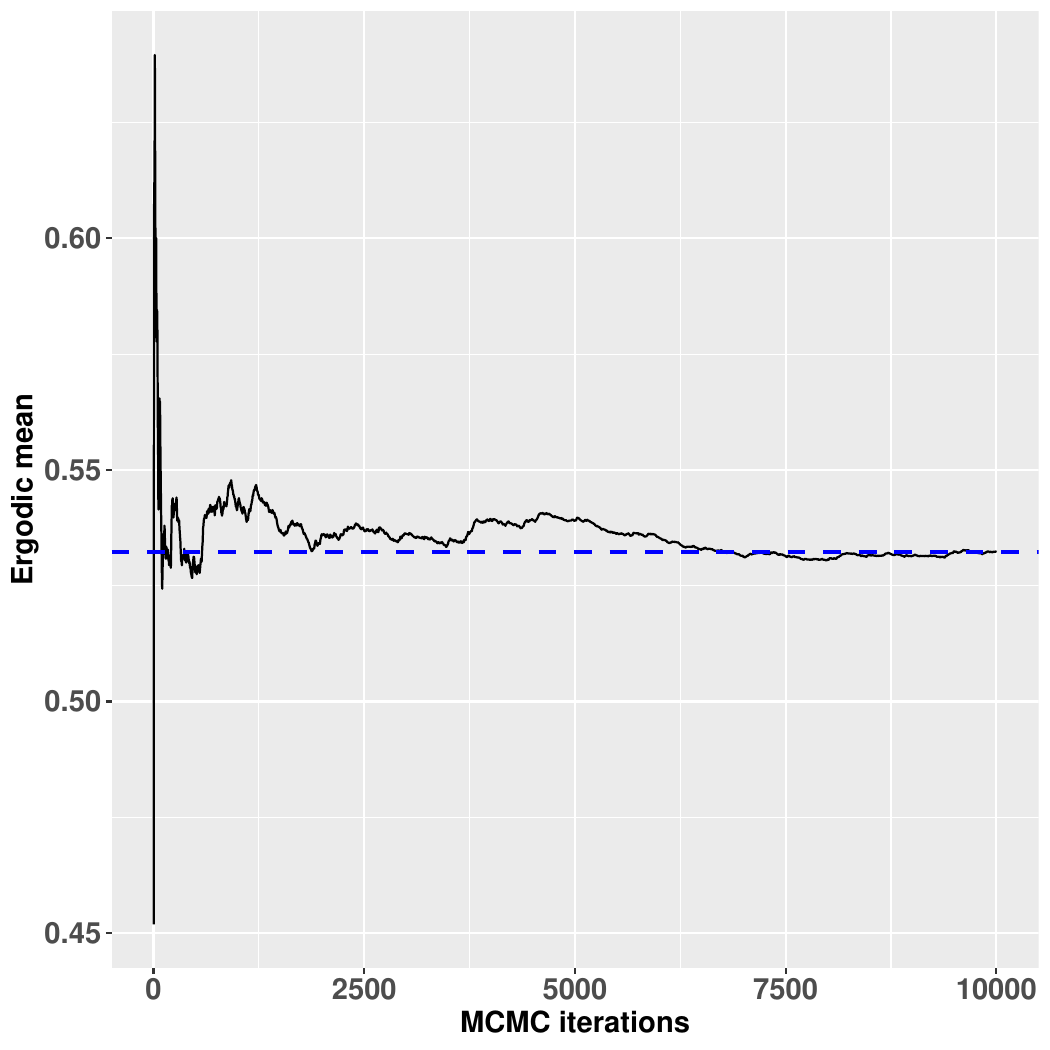} \\
\end{tabular}
\end{center}
\caption{Traceplots  and ergodic mean plots for the variance  of the random effects in cluster $c$,   $\sigma^{2}_c$, under the FH-SC$_{2}$ model.}
\label{fig:variances}
\end{figure}

\clearpage

\begin{figure}[ht]
\begin{center}
\begin{tabular}{ccc}
Traceplot for $\rho$ & \hspace{1cm} Ergodic mean plot for $\rho$ \\
\hspace{-1.0cm} \includegraphics[width=0.50\textwidth]{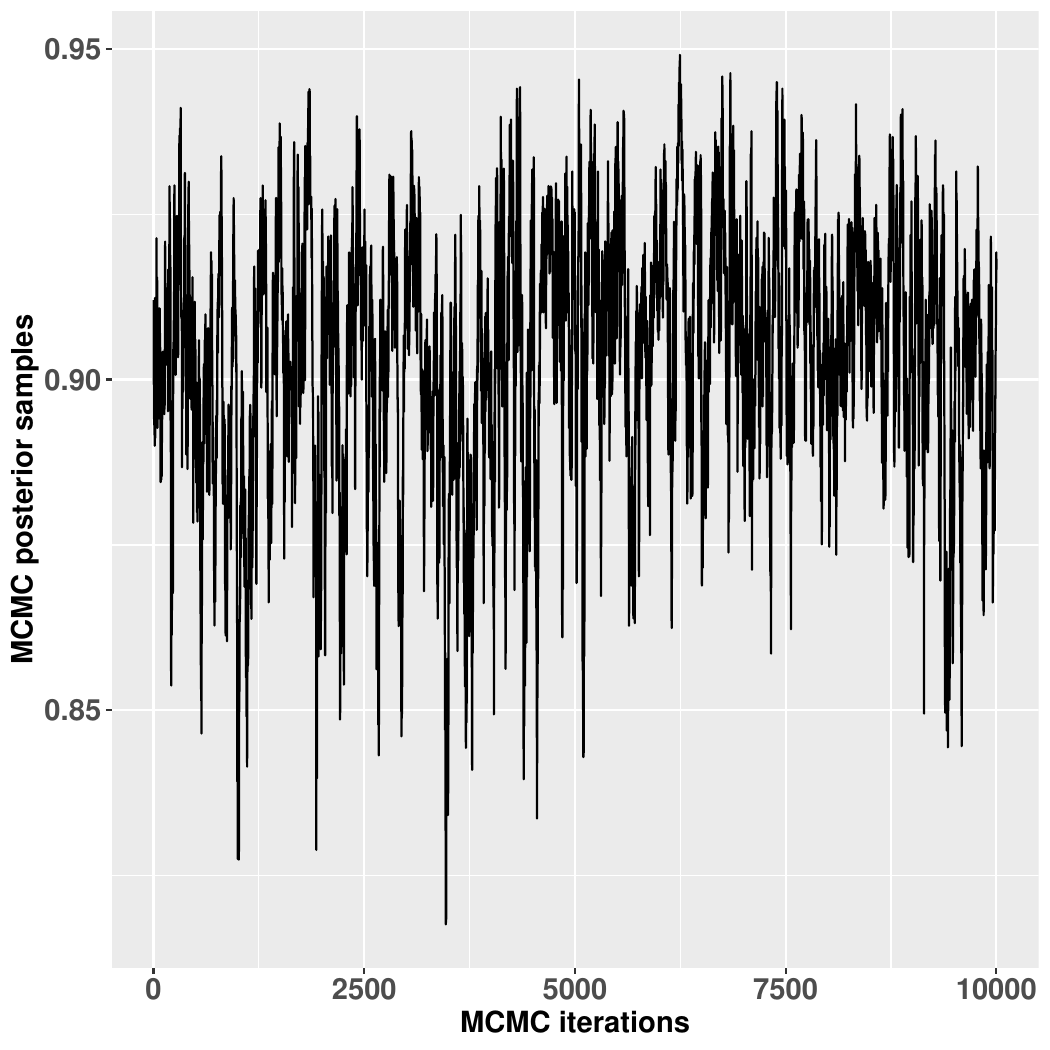} & \hspace{-0.2cm}
\includegraphics[width=0.50\textwidth]{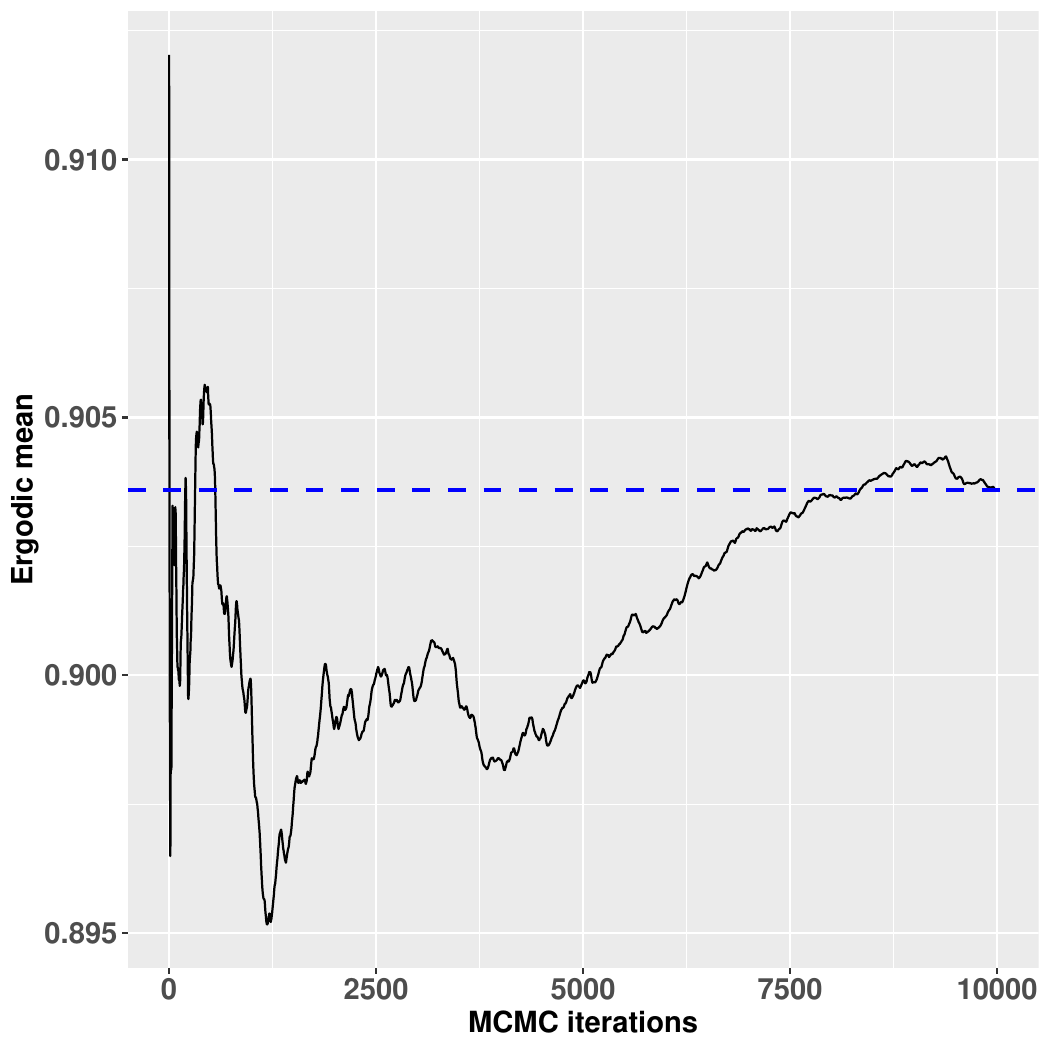} 
\end{tabular}
\end{center}
\caption{Traceplots  and ergodic mean plots for the  cluster regularization penalty,  $\rho$, under the FH-SC$_{2}$ model.}
\label{fig:penalty}
\end{figure}

\clearpage

\subsection{Diagnostic  plots to evaluate assumptions on the errors and random effects }
\label{sub_applied3}

To evaluate the assumptions imposed on the errors and random effects for the existing and proposed models, we consider the 
posterior samples of the standardized random effects and compute the standardized realized residuals produced under the FH and selected FH-SC$_{2}$ models.  
The posterior samples of the random effects produced under the FH and FH-SC$_{2}$ models are obtained using the output of Algorithms  \ref{alg:MCMC1} and \ref{alg:MCMC2}. 
The standardized realized residuals, $r_{j,c}$, under the FH and FH-SC$_{2}$ are computed using $r_{j,c}^{\textsf{FH}}= ( y_{j,c} - \hat{\theta}_{j,c}^{\textsf{FH}} )/\sqrt{D_{j,c}}$ and $r_{j,c}^{\textsf{FH-SC$_{2}$}}= ( y_{j,c} -  \hat{\theta}_{j,c}^{\textsf{FH-SC$_{2}$}} )/\sqrt{D_{j,c}}$, respectively. 

The FH model and FH-SC$_{2}$ account for clustering and non-clustering effects, respectively.  Figure \ref{fig:random_effects}  (left) illustrates that the posterior samples of the standardized random effects under FH-SC$_{2}$ are close to zero  and that the posterior densities for each cluster  are symmetric. In contrast, Figure \ref{fig:random_effects}  (right)  also shows that the density of the posterior means across small areas under the FH model exhibits three distinct peaks suggesting the presence of three distinct group patterns (clusters) that should be considered when estimating the PHIA. 

Furthermore, the residuals under FH-SC$_{2}$ are close to zero, with only a few outliers, as displayed in Figure \ref{fig:residuals}. In comparison, FH produces residuals with both positive and negative deviations from zero. This behavior of the standardized random effects and the standardized realized residuals indicate that the proposed FH-SC$_{2}$ may be more appropriate for estimating the PHIA
at the municipality level.

\clearpage

\begin{figure}[ht]
\begin{center}
\begin{tabular}{ccc}
\hspace{-0.6cm} \includegraphics[width=0.50\textwidth]{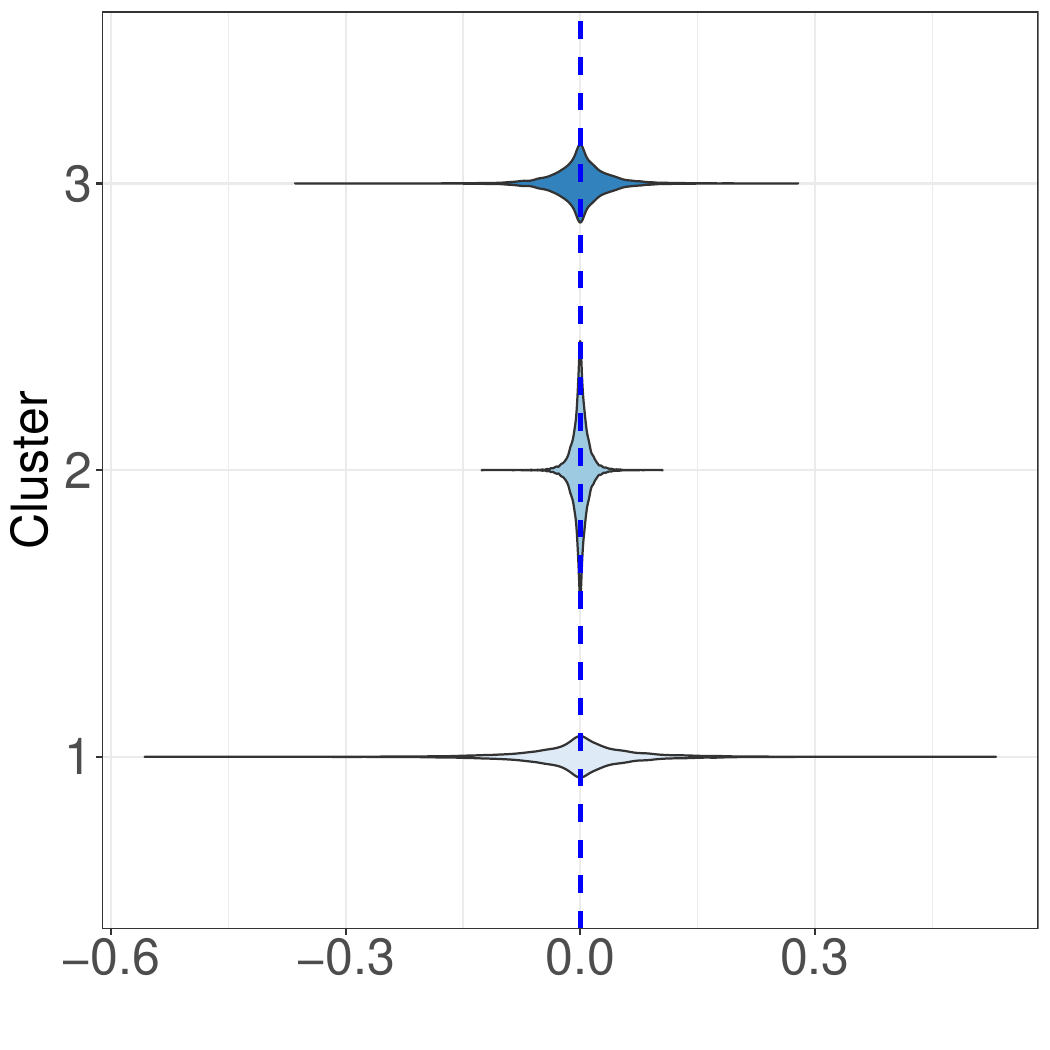} 
& \hspace{-0.2cm}
\includegraphics[width=0.50\textwidth]{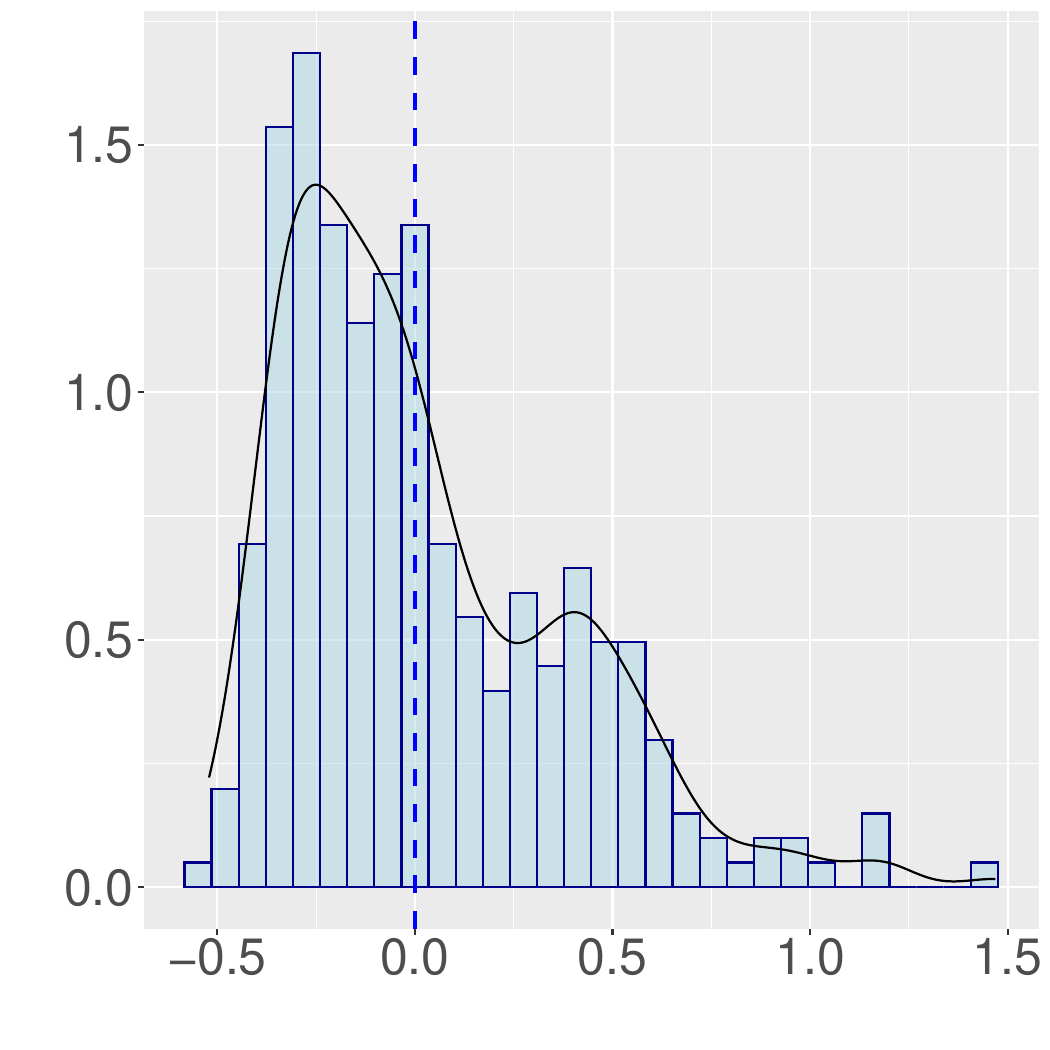} 
\end{tabular}
\end{center}
\caption{Standardized random effects under the FH and FH-SC$_{2}$ models.  Left: Posterior samples of the standardized random effects under FH-SC$_{2}$  for each cluster $c$. Right: Posterior means of standardized random effects across small areas under the FH. }
\label{fig:random_effects}
\begin{center}
\begin{tabular}{ccc}
(a)  Cluster 1 & \hspace{1cm} b)  Cluster 2  & \hspace{1cm}  (c)  Cluster 3  \\
\hspace{-1.9cm} \includegraphics[width=0.40\textwidth]{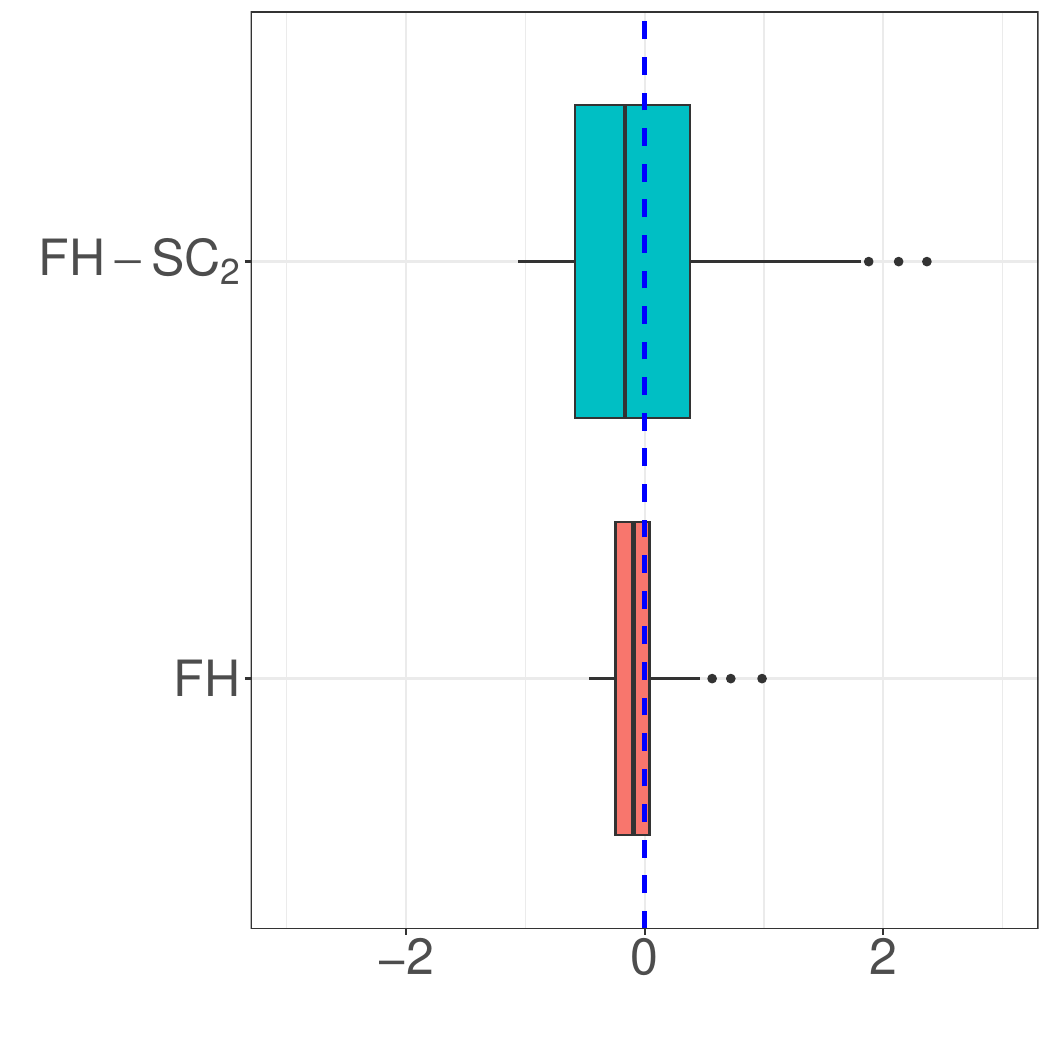}  & \hspace{-0.6cm}
\includegraphics[width=0.40\textwidth]{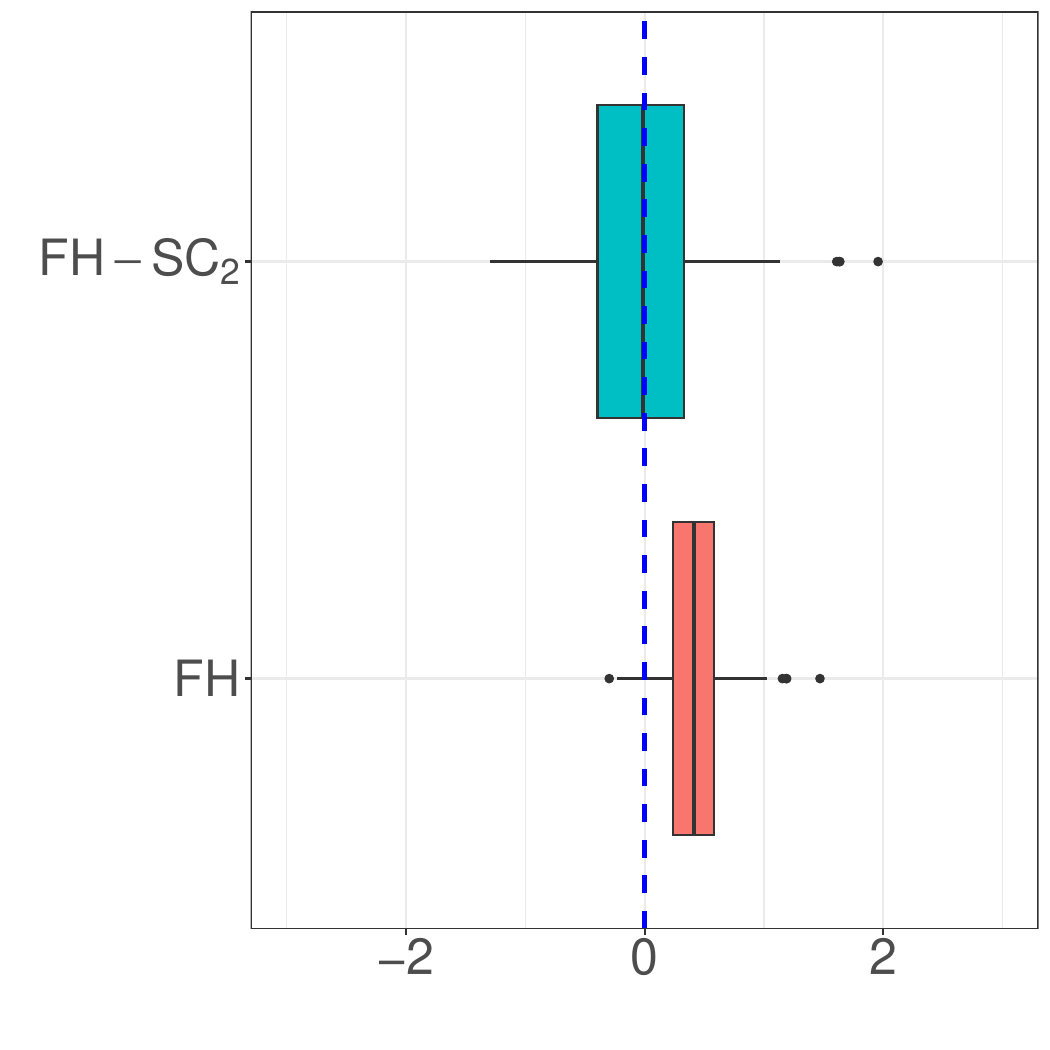} & \hspace{-0.6cm}
\includegraphics[width=0.40\textwidth]{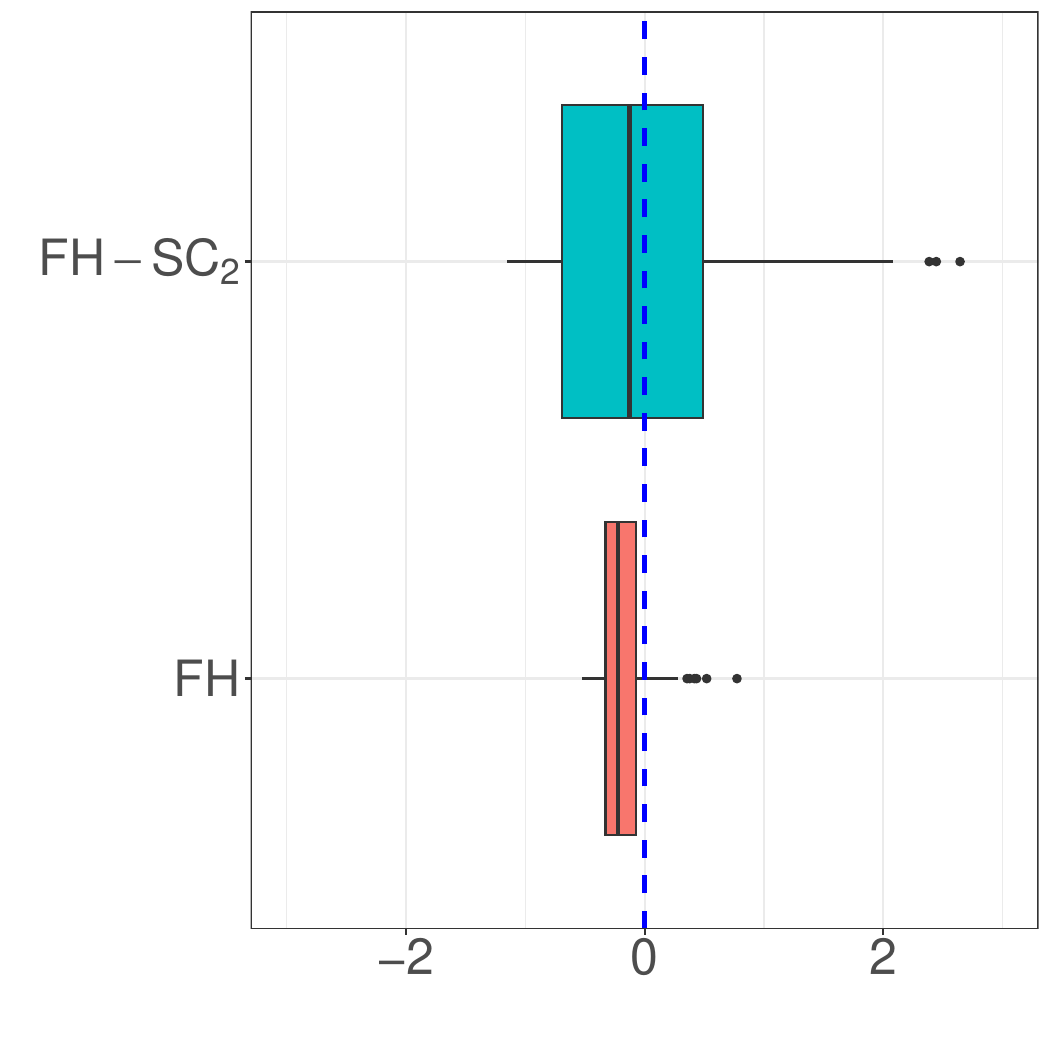} \\
\end{tabular}
\end{center}
\caption{Standardized realized residuals for the FH and FH-SC$_{2}$ models in each cluster.}
\label{fig:residuals}
\end{figure}


}

\clearpage

\bibliographystyle{agsm}


\bibliographystyle{imsart-nameyear} 


\bibliography{new_bibliof2.bib}


\end{document}